\documentclass[11pt]{article}
\usepackage{amsmath,amssymb,amsfonts,amsthm,epsfig}
\usepackage[usenames,dvipsnames]{xcolor}
\usepackage{bm,xspace}
\usepackage{tcolorbox}
\usepackage{cancel}
\usepackage{fullpage}
\usepackage{liyang}
\usepackage{framed}
\usepackage{verbatim}
\usepackage{enumitem}
\usepackage{array}
\usepackage{multirow}
\usepackage{afterpage}
\usepackage{mathrsfs}
\usepackage{pifont} 
\usepackage{chngpage}
\usepackage[normalem]{ulem}
\usepackage{boxedminipage}
\usepackage{caption}
\usepackage{subcaption}
\usepackage{bold-extra}
\usepackage{forest}
\usepackage{etoolbox}
\usepackage{thm-restate}
\usepackage{microtype}

\usepackage{algorithm}
\usepackage{algorithmicx}
\usepackage{algpseudocode}

\usepackage{tikz}
\usetikzlibrary{calc,through,backgrounds,decorations.pathreplacing, calligraphy,arrows.meta}
\usetikzlibrary{positioning,chains,fit,shapes}
\usetikzlibrary{patterns}

\usepackage{tikz-cd}

\usepackage{pgfplots}
\pgfplotsset{width=8cm,compat=newest}

\def\colorful{0}

\ifnum\colorful=1
\newcommand{\violet}[1]{{\color{violet}{#1}}}

\fi
\ifnum\colorful=0
\newcommand{\violet}[1]{{{#1}}}

\fi

\newcommand{\YES}{\mathsf{YES}}
\newcommand{\NO}{\mathsf{NO}}

\newcommand{\Bad}{\mathsf{Bad}}
\newcommand{\Good}{\mathsf{Good}}
\newcommand{\Acc}{\mathsf{Acc}}
\newcommand{\Rej}{\mathsf{Rej}}
\newcommand{\pparagraph}[1]{\bigskip \noindent {\bf {#1}}}
\newcommand{\quads}{\mathrm{quad}}
\newcommand{\dt}{d_{\mathrm{Tal}}}
\newcommand{\dtal}{\dt}
\newcommand{\BQP}{\mathsf{BQP}}
\newcommand{\PromiseBQP}{\mathsf{PromiseBQP}}
\newcommand{\BPP}{\mathsf{BPP}}
\newcommand{\PromiseBPP}{\mathsf{PromiseBPP}}
\newcommand{\PromiseQNC}{\mathsf{PromiseQNC}}
\newcommand{\QNC}{\mathsf{QNC}}
\newcommand{\PromiseQuasi}{\mathsf{PromiseBPTIME}(n^{\polylog(n)})}

\newcommand{\PromiseHeurBPP}{\mathsf{PromiseHeurBPP}}
\newcommand{\HeurBPP}{\mathsf{HeurBPP}}

\newcommand{\FindHeavy}{\mathsf{FindHeavy}}
\newcommand{\ApproxExpectation}{\mathsf{ApproxExpectation}}
\newcommand{\ApproxQueryWeights}{\mathsf{ApproxQueryWeights}}

\DeclareMathOperator*{\argmax}{arg\,max}

\Crefname{theorem}{Theorem}{Theorems}
\Crefname{thmenumi}{Theorem}{Theorems}
\AtBeginEnvironment{theorem}{%
    \crefalias{enumi}{thmenumi}%
    \setlist[enumerate,1]{
        label={\textit{(\roman*)}},
        ref={\thetheorem(\roman*)}
    }%
}

\begin{document}

\title{

Quantum Speedups Require Structure or Depth
\vspace{15pt}
}

\author{ 
Guy Blanc \vspace{6pt} \\ 
\hspace{-7pt} {\sl Stanford} \and 
Jordan Docter \vspace{6pt} \\ 
\hspace{-10pt} { {\sl Stanford}} \and 
 Carmen Strassle \vspace{6pt} \\
 \hspace{-5pt} { {\sl Stanford}} \vspace{15pt}
 \and 
Li-Yang Tan \vspace{6pt}  \\
\hspace{-10pt} {{\sl Stanford}}
}

\date{\small{\today}}

 \maketitle

\newcommand{\Simconj}{{Simulation conjecture}}
\newcommand{\simconj}{{simulation conjecture}}

\begin{abstract}
One of the most basic conjectures in quantum complexity theory states that every $t$-query quantum  algorithm can be simulated on most inputs by a  $\poly(t)$-query classical algorithm.  If true, this would provide broad justification for the need for structure in quantum speedups.

  We settle this conjecture for {\sl parallel} quantum algorithms, showing that every $t$-query $d$-round quantum algorithm can be simulated on most inputs with $t^{O(d^2)}$ classical queries. %
    This suggests that for unstructured problems, superpolynomial speedups would require quantum circuits of superconstant depth, and exponential speedups would further require polynomial depth.  In contrast, most known speedups for structured problems are achieved by highly parallel, low-depth algorithms.

Our techniques also carry new implications for the status of $\BPP$ vs.~$\mathsf{BQP}$ relative to a random oracle, a similarly longstanding problem.

\end{abstract}

\thispagestyle{empty}

\newpage
 \thispagestyle{empty}
 \setcounter{tocdepth}{2}
 
      \tableofcontents

 \thispagestyle{empty}
 \newpage

 \thispagestyle{empty}
 \newpage 
 \setcounter{page}{1}
\section{Introduction}

A major goal of quantum computing research is to understand the nature of problems that admit superpolynomial speedups over classical computation. Prototypical examples of problems admitting such dramatic quantum advantage---Simon's problem~\cite{Sim97}, {\sc Period-Finding}~\cite{Sho97}, {\sc Forrelation}~\cite{Aar10,AA18}, among others---are all highly structured. In Simon's problem and {\sc Period-Finding},  structure comes in the form of a hidden subgroup; in {\sc Forrelation},  structure comes in the form of a promise that one function is correlated with the Fourier transform of another. 

As Aaronson writes in~\cite{Aar22}, three decades of quantum algorithms research suggest a ``law of conservation of weirdness": Superpolynomial quantum speedups appear limited to settings where quantum algorithms can exploit some form of global structure to concentrate amplitudes on the correct answers.  Understanding the extent to which such a law holds is central to understanding the power and limitations of quantum computing. For example, it would further explain why superpolynomial speedups have been prevalent in number-theoretic cryptography but not for $\mathsf{NP}$-complete problems.

A particularly elegant formalization of such a law, in the query model where most quantum algorithms operate, is given by the following conjecture:

\newtheorem*{conjecture*}{Conjecture}

\begin{conjecture*}[The simulation conjecture]
Let $\mcA$ be a $t$-query quantum algorithm. For every $\eps,\delta > 0$, there is a classical algorithm that makes $\poly(t,1/\eps,1/\delta)$ queries and approximates $\mcA$'s acceptance probabilities to within an additive $\eps$ on $1-\delta$ fraction of inputs. 
\end{conjecture*}

This conjecture first appeared in~\cite{Aar05,Aar08,AA14} and has been attributed to folklore dating back to 1999.  To see how it relates to the ``need for structure", note that it implies that $F : \zo^N \to \{0,1,*\}$ can admit a superpolynomial quantum-classical query separation only if $F^{-1}(\{0,1\})$ is a very small set, corresponding to a highly specific promise on $F$'s inputs.\footnote{In more detail, if $|S|\ge \delta 2^N$ where $S\coloneqq F^{-1}(\{0,1\})$, the simulation conjecture implies the existence of a classical algorithm that makes $\poly(t,1/\delta,1/\alpha)$ queries and computes $F$ on  $1-\alpha$ fraction of $S$. For total functions ($S = \zo^N$), it has long been known that their quantum and classical query complexities are polynomially related~\cite{BBCMdW01}.} Relatedly, this conjecture, if true, would rule out the standard query-complexity approach to proving $\mathsf{BPP} \ne \mathsf{BQP}$ relative to a random oracle---a similarly  longstanding  open problem~\cite{FR99}. Both the simulation conjecture and the status of $\mathsf{BPP}$ vs.~$\mathsf{BQP}$ relative to a random oracle appear in Aaronson's list of ``Ten semi-grand challenges for quantum computing theory"~\cite{Aar05} and Fortnow's list of ``Open oracle  questions for the 21st century"~\cite{For21}.

\paragraph{The Aaronson--Ambainis Conjecture.} In~\cite{Aar08,AA14}, Aaronson and Ambainis reduced the simulation conjecture to a basic question about low-degree polynomials:

\begin{conjecture*}[Bounded low-degree polynomials have influential variables]
Let $p : \zo^N \to [0,1]$ be a bounded degree-$t$ polynomial. There is a variable $i\in [N]$ such that 
\begin{equation*} \Inf_i(p) %
\ge \poly\left(\frac{\Var(p)}{t}\right).
\end{equation*}
\end{conjecture*}

This Aaronson--Ambainis conjecture has since become the predominant approach to tackling the simulation conjecture. It is appealing because it is a natural, ``quantum-free" statement about polynomials, making it possible for the powerful toolkit of discrete Fourier analysis to be brought to bear on the simulation conjecture. %
 Indeed, \cite{AA14} noted that a weaker lower bound with $\exp(t)$ in place of $\poly(t)$ already follows from a Fourier-analytic result of Dinur, Friedgut, Kindler, and O'Donnell~\cite{DFKO07}, and this in turn implies a weaker version of the simulation conjecture with an $\exp(t)$ dependence. 

Despite almost two decades of effort, this remains the best bound for the Aaronson--Ambainis conjecture---and the simulation conjecture. The same bound has been rederived a couple of times~\cite{OZ16,DMP19}, but all known proofs hit a technical barrier at $\exp(t)$. (Briefly, this is because they all rely crucially on the hypercontractive inequality~\cite{Bon70}, a useful technique for analyzing low-degree polynomials, but one for which an exponential dependence on degree is inherent~\cite{ODBook}.) Other works on the conjecture include~\cite{Mon12,BB14,FHKL16,BSdW22,LZ23,Gut24,Bha25}.

\section{This work}

While the Aaronson--Ambainis conjecture has established itself as a problem of independent interest and import in discrete Fourier analysis~\cite{ODo12,Sim14}, it is natural to ask if we have made the simulation conjecture substantially more difficult by viewing quantum algorithms abstractly as polynomials. Indeed, Aaronson and Ambainis themselves concluded their paper with the suggestion that one could conceivably resolve the simulation conjecture without going through their conjecture, by reasoning directly about quantum algorithms. 

\paragraph{Query weights.} In this work we propose and explore one such approach, based on one of the earliest lower bound techniques in quantum computing. Introduced in the seminal work of Bennett, Bernstein, Brassard, and Vazirani~\cite{BBBV97}, the {\sl query weights} of a $t$-query quantum algorithm $\mcA$ are a way of tracking how it allocates its query budget among the $N$ variables. For each input $x \in \{0,1\}^N$, query $t' \in [t]$, and variable $i\in [N]$, let $\lvert \psi^{(t')}(x)\rangle$ denote the state immediately before the $t'$-th query and $\Pi_i$ denote the projector onto the subspace where the query register equals $i$. We define the quantities:    
\[
w_i^{(t')}(x) := \big\langle\psi^{(t')}(x)\big\lvert \Pi_i \big\rvert \psi^{(t')}(x)\big\rangle,
\qquad
W_i(x) := \sum_{t'=1}^t w_i^{(t')}(x),
\]
and note that $\sum_{i=1}^N w_i^{(t')}(x) = 1$ whereas $\sum_{i=1}^N W_i(x) = t$. 
Roughly speaking, $w_i^{(t')}(x)$ is the quantum analogue of the probability that $\mcA$'s $t'$-th query is to the $i$-th coordinate, and $(W_1(x),\ldots,W_N(x))$ is a summary of how $\mcA$ allocates its $t$ queries among the $N$ coordinates. While these quantities are defined for a specific input $x$, we will be mostly concerned with the aggregate quantities $\E[w_i^{(t')}(\bx)]$ and $\E[W_i(\bx)]$ for a uniform random $\bx\sim\zo^N$. 

\subsection{A new conjecture}  Now suppose $\mcA$ is an algorithm for a function $F : \zo^N \to \{0,1,*\}$. We can assume that neither $\Pr[F(\bx)=1]$ nor $\Pr[F(\bx)=0]$ is too small, since otherwise $F$ is well-approximated by a constant function.  We conjecture that there must then be a ``heavy variable", one for which $\mcA$ allocates a lot of its query weight:

\medskip 

 \begin{tcolorbox}[colback = white,arc=1mm, boxrule=0.25mm]

\begin{conjecture}[Query-efficient quantum algorithms have heavy variables]
\label{conj:our conjecture}
    Let $\mcA$ be a $t$-query quantum algorithm and $f : \zo^N \to [0,1]$ denote its acceptance probability. There is a variable $i \in [N]$ such that
    \begin{equation*}
         \Ex_{\bx\sim\zo^N}[W_i(\bx)] \ge \poly\paren*{\frac{\delta}{t}} \quad\quad\text{ where }\delta \coloneqq \min\big\{{\Pr[f(\bx) \ge \lfrac{2}{3}]}, \Pr[f(\bx) \le \lfrac{1}{3}]\big\}.
    \end{equation*}%
\end{conjecture}

\end{tcolorbox}
\medskip

For simplicity, suppose $\delta = 0.1$ and so the conjectured lower bound is $\E[W_i(\bx)] \ge \poly(1/t)$. Since the average coordinate has weight $t/N$, this $i$-th coordinate can be seen to receive a disproportionately large amount of weight. \Cref{conj:our conjecture} therefore formalizes a sense in which query-efficient quantum algorithms for balanced problems cannot ``evenly distribute their query budget" among all~$N$ coordinates.

We show that the Aaronson--Ambainis conjecture implies~\Cref{conj:our conjecture} (\Cref{claim:AA-to-us}). This follows from the fact that query weights upper bound influences after suitable normalization. On the other hand, we will also show that~\Cref{conj:our conjecture} is already sufficient to yield the simulation conjecture (\Cref{claim:our conjecture implies simulation conjecture}). %

\begin{remark}[Strong versions of~\Cref{conj:our conjecture} and the simulation conjecture]%
        We will also consider a strong version of~\Cref{conj:our conjecture} where the $\poly(\delta)$ dependence is replaced with a $1/\polylog(1/\delta)$ dependence (see~\Cref{conj:our conjecture strong}). This version implies a strong version of the simulation conjecture with a $\polylog(1/\delta)$ dependence.  As we discuss in~\Cref{sec:implications for random oracle separations}, these strong versions carry added implications for random-oracle separations.%
\end{remark}

\subsubsection{Comparison with the Aaronson--Ambainis conjecture}

\paragraph{Query weights vs.~influences.} Related to the fact that~\Cref{conj:our conjecture} is specific to quantum algorithms is that it concerns query weights rather than influences. Both are measures of a variable's ``importance". However, query weights are a property of a quantum algorithm $\mcA$, whereas influences are a property of its acceptance probability $f : \zo^N\to [0,1]$. Two algorithms with different query weights can share the same acceptance probability and the same influences. Through~\Cref{conj:our conjecture}, we are therefore proposing a ``syntactic"/``whitebox" approach to the simulation conjecture, which contrasts with Aaronson and Ambainis's ``semantic"/``blackbox" approach. 

While both types of approaches have their advantages, our work highlights two advantages of query weights. The first, as already mentioned, is that~\Cref{conj:our conjecture} can only be easier to prove than the Aaronson--Ambainis conjecture. As we will soon discuss in~\Cref{sec:intro-parallel}, this enables us to make progress on the simulation conjecture. The second is that query weights are, in a formal sense, algorithmically much easier to estimate than influences. As we will discuss in~\Cref{sec:implications for random oracle separations}, we use this to derive new implications for the status of $\BPP$ vs.~$\BQP$ relative to a random oracle.

\paragraph{The hybrid method vs.~the polynomial method.} As a related distinction, Aaronson and Ambainis's approach to the simulation conjecture is based on the polynomial method for quantum query lower bounds~\cite{BBCMdW01}. Ours, as we detail in~\Cref{sec:technical overview}, is instead based on the hybrid method~\cite{BBBV97}, which uses query weights to quantify the progress a quantum algorithm is able to make towards distinguishing nearby inputs. In their paper,~\cite{BBBV97} used this method to lower bound the query complexity of unstructured search (and notably, established the optimality of Grover's algorithm even before its  discovery). The hybrid method can be viewed as a precursor to the adversary method~\cite{Amb02}, which has in turn grown in sophistication and power over the years, culminating in proofs that the general adversary bound characterizes quantum query complexity~\cite{HLS07,Rei09,LMRSS11}. Our work shows that the basic hybrid method is already useful for the simulation conjecture. An avenue for future work is to see if more can be gained by combining our techniques with the full generality of the adversary method.

\subsection{Parallel quantum algorithms} \label{sec:intro-parallel}

As evidence of the viability of our approach, we make progress on~\Cref{conj:our conjecture}, and hence the simulation conjecture, by taking the {\sl parallelism} of quantum algorithms into account. While a sequential query algorithm makes a single query after another, a parallel algorithm is able to make multiple queries at once:

\begin{figure}[h]
  \centering
  \begin{quantikz}
    \lstick{Query register: $\ket{0}$} &[-2mm] \gate[wires=2]{U_1}  & \gate{\mcO_x^{\otimes t_1}} & \gate[wires=2]{U_2} & \ \ldots\ \qw & \gate[wires=2]{U_d} & \gate{\mcO_x^{\otimes t_d}}  & \meter{}  \rstick{\text{Output}} \\
    \lstick{Ancilla register: $\ket{0}$} & & \qw & \qw & \ \ldots\ \qw & \qw & \qw & \meter{}
  \end{quantikz}
    \captionsetup{width=.9\linewidth}

\caption{A $t$-query $d$-round quantum algorithm. Each round $r\in [d]$ consists of a unitary operator $U_r$ followed by $t_r$ parallel queries. The total number of queries across all rounds satisfies $\sum t_r = t$.}
  \label{fig:RTQuantumAlgo intro}
\end{figure}

Round complexity is the query model's abstraction of  {\sl circuit depth}, while the number of parallel queries per round is analogous to circuit width. Round complexity is also synonymous with {\sl adaptivity}, with $1$-round algorithms being most commonly called nonadaptive.

In quantum computing, as in classical computing, parallelism is a standard way to trade hardware for time. Small-depth quantum circuits are especially desirable because they also reduce exposure to decoherence, making them valuable on NISQ devices where full error correction is not yet available~\cite{Pre18}. %
Their importance persists even in the fault-tolerant setting, since  error correction and fault-tolerant implementations incur substantial overheads in practice~\cite{Got10,Aar22}. For these reasons, parallel quantum algorithms have received significant attention in the query model~\cite{vD98,Zal99,GR04,NY04,Mon10,MJM15,KLPY10,JMdW17,Bur19,GSTW24,CGV25}.

We defer the statement of our result regarding~\Cref{conj:our conjecture} to the body of the paper and state only its implication for the simulation conjecture: 

    \medskip 

 \begin{tcolorbox}[colback = white,arc=1mm, boxrule=0.25mm]

\begin{theorem}
\label{thm:d squared intro}
    Let $\mathcal{A}$ be a $t$-query $d$-round quantum algorithm. For every $\eps,\delta > 0$, there is a classical algorithm that makes 
    \[ T \coloneqq 2^{O(d^2)}\cdot (t\log(1/\delta)/\eps)^{O(d)}\]
    queries and  approximates $\mathcal{A}$'s acceptance probabilities to within an additive $\eps$ on $1-\delta$ fraction of inputs. 
\end{theorem}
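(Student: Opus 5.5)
The plan is to go through query weights and the hybrid method, in three steps. First, prove a heavy-variable statement for $d$-round algorithms, i.e.\ a round-sensitive version of \Cref{conj:our conjecture}. Second, turn it into a classical simulation by a restriction process: query the heavy variables, fix them, recurse. Third, organize that recursion round by round so the total query count is $t^{O(d)}$ rather than exponential in $t$.

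\textbf{Step 1: hybrid argument on the last round.} Fix a threshold $\tau$ and call $i$ \emph{heavy} if $\E_{\bx}[W_i(\bx)] \ge \tau$. Suppose no variable is heavy in the last round. By the BBBV hybrid argument, flipping a random coordinate $i$ of $\bx$ perturbs the final state by at most roughly $\sqrt{t_d\, w_i^{(d)}(\bx)}$ in $\ell_2$, with the $t_d$ parallel queries treated as one query to $\mcO_x^{\otimes t_d}$. Averaging over $i$, the acceptance probability $f$ has small total influence coming from the last round. In a parallel round, though, the query weights are functions of the state after round $d-1$. That state is itself a $(d-1)$-round algorithm's output, so its weights can be estimated classically by recursion. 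The intended induction hypothesis, on $d$, is:
\begin{quote}
every $d$-round algorithm has a classical simulator making $T_d$ queries that estimates $f$ and also estimates every query weight $w_i^{(r)}(x)$ that exceeds the threshold.
\end{quote}
The key point is that estimating query weights is ``easy'': $w_i^{(r)}(x)$ is the acceptance probability of a $(r-1)$-round algorithm with one extra measurement, and measuring the query register lets us sample indices $i$ with probability proportional to the weight.

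\textbf{Step 2: the simulator.} On input $x$, the simulator first recursively simulates rounds $1,\dots,r-1$. It then samples $O(t\log(1/\delta)/\eps)^{O(1)}$ indices from the (estimated) query-weight distribution of round $r$ and queries those coordinates. All remaining, unqueried coordinates have small weight. By the hybrid argument, averaged over $\bx$, rerandomizing them changes the state by at most $\eps$ except on a $\delta$-fraction of inputs; Markov's inequality on $\E_{\bx}[\sum_{\text{light } i} W_i(\bx)]$ combined with a random-restriction argument gives this. The simulator can therefore replace the unqueried coordinates by fresh uniform bits and compute the resulting acceptance probability by brute-force averaging, which needs no further queries.

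\textbf{Step 3: counting, and the main obstacle.} Each round multiplies the number of queries by $(t\log(1/\delta)/\eps)^{O(1)}$. The accuracy must also shrink geometrically across levels, say $\eps_r = \eps/2^{O(d)}$, which produces the $2^{O(d^2)}$ factor. I expect the main obstacle in Step 2 to be the following. The weights of round $r$ depend on the true input, but we only know them under partially restricted inputs. The simulator therefore needs the weights computed on a hybrid input (true bits on the queried set, random bits elsewhere) to be close to the true weights for most $x$. The plan is to prove this closeness by applying the hybrid bound of Step 1 to the earlier rounds themselves, inductively, using $\|\,|\psi\rangle-|\psi'\rangle\| \le \eta \Rightarrow |w_i-w_i'|\le 2\eta$ summed over $i$. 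Error compounding across this induction is exactly where the $d^2$ in the exponent arises.
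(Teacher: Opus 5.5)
\paragraph{There is a genuine gap in Step 2.} You claim that once the heavy coordinates are queried, the hybrid argument plus Markov shows that rerandomizing all remaining light coordinates moves the state by at most $\eps$. This fails. The hybrid bound controls the perturbation by roughly $\sum_r\sqrt{\sum_{i\in\Delta}w_i^{(r)}(x)}$, where $\Delta$ is the set of flipped coordinates. Rerandomizing all light coordinates flips about half of them, so $\sum_{i\in\Delta}w_i^{(r)}(x)$ is of order $t$ however small each individual weight is, and the bound is vacuous. Markov applied to $\E_{\bx}[\sum_{\text{light } i}W_i(\bx)]$, which can be as large as $dt$, yields nothing.

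The conclusion is in fact false at the level of states. Take the one-query algorithm that prepares $\frac{1}{\sqrt N}\sum_i(-1)^{x_i}|i\rangle$. Every weight equals $1/N$, yet after rerandomizing all coordinates the new state is nearly orthogonal to the old one. Its acceptance probability is still nearly constant, but that is because of concentration of measure, not closeness of states. Establishing that concentration is exactly the content of the theorem.

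The same example defeats your plan for the ``main obstacle''. Apply a Hadamard transform on the index register before a second round: the round-$2$ weight vectors on the hybrid input and on the true input are then $\Omega(1)$ apart in $\ell_1$. Step 1 does not rescue this either. Deducing near-constancy from individually small influences is the Aaronson--Ambainis problem, where the best known bound loses $\exp(t)$.

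\paragraph{What the paper does instead.} The hybrid method shows that every $x\in\Rej$ is at distance $\Omega(1/d^2)$ from $\Acc$ in the input-dependent metric $\dist_{\overline W(x)}$. Talagrand's convex-distance inequality, applied after normalizing $\overline W(x)$ by $\sqrt{\gamma t}$, then shows that $\Pr[\Acc]\cdot\Pr[\Rej \text{ and } \|\overline W(\bx)\|_\infty\le\gamma]$ is exponentially small in $1/(\gamma d^4 t)$.

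Regularity only bounds $\E_{\bx}[W_i(\bx)]$, so a second, round-by-round induction, again via Talagrand, upgrades this to a high-probability statement. It uses the fact that $\Pr_{\bS\sim\mcD_x^{(r)}}[\bS\cap T\neq\emptyset]$ is itself the acceptance probability of an $(r-1)$-round algorithm. Working with $(m,\gamma)$-spreadness of $\mcD_x^{(r)}$ rather than $\|w^{(r)}(x)\|_\infty$ improves the threshold recurrence from $\gamma_{r-1}\ll\gamma_r^2$ to $\gamma_{r-1}\le\gamma_r/c$, which is what yields $2^{O(d^2)}$ instead of a doubly exponential bound.

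The classical simulator is then just the greedy regularity-lemma tree. It queries any variable whose expected weight, over inputs consistent with the current restriction, exceeds $\eta=2^{-O(d^2)}(t\log(1/\delta)/\eps)^{-O(d)}$, and outputs $\E[f_\pi]$ at the leaf. A martingale argument bounds its depth by $\poly(t,1/\eta,\log(1/\delta))$. This simulator never tracks input-dependent weights on the true input, so the obstacle you identified never arises.
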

\end{tcolorbox}
\medskip

\Cref{thm:d squared intro} confirms (the strong version of) the simulation conjecture for constant-round quantum query algorithms, showing that such algorithms cannot achieve a superpolynomial separation for unstructured problems. 

Our result continues to carry consequences in the regime of superconstant rounds. Note that $d\le t$ always, and our parameters improve upon the $T \le \exp(t)$ bound implied by~\cite{DFKO07} as long as $d\le O(\sqrt{t})$. In particular,~\Cref{thm:d squared intro} also shows that an  {\sl exponential} separation ($T \ge \exp(t^{\Omega(1)})$) for unstructured problems can only be achieved by quantum algorithms with {\sl polynomially} many rounds ($d \ge t^{\Omega(1)})$. This poses a dilemma for practical quantum advantage on unstructured problems. Exponential separations are the most likely to survive the real-world overheads of quantum error correction and fault tolerance, but these same overheads make physical implementations of polynomial-depth quantum circuits infeasible for the foreseeable future.

\paragraph{Nonadaptive algorithms.} Previously, the simulation conjecture was open even for $1$-round  algorithms. In fact, prior works point to such algorithms being surprisingly powerful in the context of quantum-classical query separations. Simon's problem, {\sc Period-Finding}, and {\sc Forrelation} are all solved by nonadaptive quantum query algorithms. Similarly, Yamakawa and Zhandry's recent exponential separation for an unstructured {\sl search} problem is also achieved by a nonadaptive quantum query algorithm~\cite{YZ24}. 

In light of~\Cref{thm:d squared intro}, any counterexample to the simulation conjecture must instead be based on quantum query algorithms that are highly adaptive.

\subsection{Implications for random-oracle separations} 
\label{sec:implications for random oracle separations}

Bernstein and Vazirani~\cite{BV97}, in their paper that defined $\mathsf{BQP}$, exhibited an oracle relative to which $\mathsf{BPP} \ne \mathsf{BQP}$. Shortly after, Fortnow and Rogers~\cite{FR99} asked if the same is true relative to a {\sl random} oracle, a problem that remains open today.  The simulation conjecture, if true, would rule out the standard approach of obtaining such a separation via query complexity.

We show that~\Cref{conj:our conjecture}, if true, would represent an even more significant ``technique-independent" barrier. Under its strong version, we get: %

    \medskip

 \begin{tcolorbox}[colback = white,arc=1mm, boxrule=0.25mm]
\begin{restatable}{theorem}{randomoracleunrelativizedequiv}
\label{thm:random oracle unrelativized equivalence}
    Assuming the strong version of~\Cref{conj:our conjecture},  $\mathsf{PromiseBPP}^{\bmcO} \ne \PromiseBQP^{\bmcO}$ for a random oracle $\bmcO$ if and only if $\mathsf{PromiseBPP} \ne \PromiseBQP$. 
\end{restatable}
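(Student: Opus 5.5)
The plan is to prove the two directions separately. The ``if'' direction is a Bennett--Gill style argument and needs no conjecture. The ``only if'' direction uses the strong version of \Cref{conj:our conjecture} to build an oracle simulation whose only nontrivial subroutines are oracle-free quantum computations; the hypothesis $\PromiseBPP=\PromiseBQP$ then makes those subroutines classical.

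\emph{($\Leftarrow$).} Suppose $\Pi\in\PromiseBQP\setminus\PromiseBPP$. Then $\Pi\in\PromiseBQP^{\bmcO}$ for every $\bmcO$, since the machine simply ignores the oracle. Assume towards a contradiction that $\Pi\in\PromiseBPP^{\bmcO}$ with positive probability over $\bmcO$.
\begin{enumerate}
\item Since there are countably many oracle machines, some fixed $M$ solves $\Pi$ relative to $\bmcO$ with positive probability. After amplifying $M$ to error $2^{-n}$, the event ``$M^{\bmcO}$ solves $\Pi$'' is a tail event, because finitely many oracle bits can be hardwired. By the $0$--$1$ law its probability is therefore $1$.
\item Consequently, for each promise input $x$, the probability over both $\bmcO$ and $M$'s coins that $M^{\bmcO}(x)$ is correct is at least $1-2^{-n}$.
\item A $\BPP$ machine can simulate $M^{\bmcO}$ with no oracle by lazily sampling a fresh uniform answer to each new oracle query. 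This places $\Pi\in\PromiseBPP$, a contradiction.
\end{enumerate}

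\emph{($\Rightarrow$).} Assume $\PromiseBPP=\PromiseBQP$, and fix a $\PromiseBQP^{\bmcO}$ machine $\mcA$. On each input $x$ of length $n$, $\mcA$ is a $t=\poly(n)$-query quantum algorithm on the oracle string. The classical simulation works with a restriction $\rho$ fixing polynomially many oracle positions, starting from the empty restriction, and repeats the following.
\begin{enumerate}
\item Search for a variable $i$ that is heavy for $\mcA$ under $\rho$, meaning $\E[W_i]$ is at least some threshold $\tau$, where the expectation is over a uniform completion of $\rho$.
\item If such an $i$ is found, query the real oracle at position $i$ and add $\mcO_i$ to $\rho$.
\item If none is found, output an estimate of $\E[f]$, the acceptance probability averaged over uniform completions of $\rho$.
\end{enumerate}
Each round of step 2 fixes one variable that carries at least $\tau$ of the total query weight $t$. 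This is the standard potential argument behind \Cref{claim:our conjecture implies simulation conjecture}, and it bounds the number of rounds by $\poly(t/\tau)$.

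When no heavy variable remains, the strong conjecture applied to $\mcA$ restricted by $\rho$ gives the following: unless $\delta\le 2^{-3n}$, there is a variable of weight at least $1/\polylog(1/\delta)$. Choosing $\tau=1/\poly(n)$ therefore forces $\delta\le 2^{-3n}$ at termination. In that case $f$ is nearly constant on the restriction, so the estimate of $\E[f]$ is correct except on a $2^{-3n}$-ingredient fraction of oracles, up to the usual bookkeeping over the adaptively chosen restrictions. This is exactly where the $\polylog(1/\delta)$ dependence is needed: the number of rounds stays polynomial while the failure probability per input becomes $2^{-3n}$. A union bound over the $2^n$ inputs of each length then gives failure probability at most $2^{-2n}$ per length. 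Borel--Cantelli gives that, with probability $1$ over $\bmcO$, only finitely many input lengths fail, and these can be hardwired. A countable union over machines finishes the argument.

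\emph{Making the subroutines efficient.} The main obstacle is that steps 1 and 3 must run in classical polynomial time, not merely with few queries. I will phrase each as an oracle-free promise problem whose input is $(x,\rho)$ together with a threshold.
\begin{itemize}
\item A quantum algorithm can simulate a uniform completion of $\rho$ for $t$ queries using a $2t$-wise independent hash function, which is indistinguishable from a random completion for any $t$-query algorithm.
\item It can estimate $\E[f]$ directly by running $\mcA$ on this simulated completion.
\item It can estimate the weight $\sum_{i\in S}\E[W_i]$ for any index prefix $S$. To do so, it runs $\mcA$ up to a uniformly random query step $t'\in[t]$ and measures the query register to see whether it lands in $S$.
\end{itemize}
These yield gap decision problems in $\PromiseBQP$, hence in $\PromiseBPP$ by hypothesis. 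Finding a heavy variable then reduces to a binary search over address prefixes using these gap oracles, with slack in the thresholds to absorb the promise gaps. This avoids any need to classically simulate quantum \emph{sampling}, and it is the step where I expect most of the care to go, especially in propagating the threshold slack through the search.
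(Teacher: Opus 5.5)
\textbf{Main gap: the output rule at termination is not justified.} Your architecture matches the paper's: Bennett--Gill for $(\Leftarrow)$; for $(\Rightarrow)$, the greedy regularity procedure made efficient by estimating prefix sums of query weights via $2t$-wise independent hashing and branch-and-prune, then outputting an estimate of the restricted mean. Your Borel--Cantelli-plus-hardwiring finish is a fine substitute for the paper's union bound over all $x$ followed by the $0$--$1$ law.

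The problem is the step where the search stops. Applied to $\mcA_\rho$, the strong conjecture only yields $\min\{\Pr[\Acc],\Pr[\Rej]\}\le 2^{-3n}$, and that does not make $f_\rho$ nearly constant. For example, $f_\rho$ could equal $0.4$ on $90\%$ of completions and $0.9$ on the rest. Then $\Rej=\emptyset$, yet $\E[f_\rho]=0.45$, so outputting $\Ind[\E f_\rho\ge 1/2]$ rejects on a constant fraction of oracles where the promise demands acceptance.

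What you need is concentration, $\Pr[|f_\rho-\E f_\rho|\ge 0.01]\le 2^{-3n}$. Getting it requires applying the conjecture to a \emph{different} algorithm. By \Cref{prop:interval-to-mean}, any failure of concentration gives thresholds $a<b$ with $b-a=\eps/2$ such that both $\Pr[f_\rho\le a]$ and $\Pr[f_\rho\ge b]$ are at least $\min\{\eps/2,\delta/2\}$. Now run $k=O(1/\eps^2)$ copies of $\mcA_\rho$ in parallel and accept iff at least $\lfloor k(a+b)/2\rfloor$ copies accept. This amplified algorithm has $\Acc$ and $\Rej$ sets containing those two tails, and its query weights are at most $k$ times those of $\mcA_\rho$, so the conjecture applies to it. This is \Cref{claim:eps-dependence}. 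With $\eps=0.01$ it costs only constant factors in $\tau$ and $t$, so $\tau=1/\poly(n)$ still works. Without it, your output rule is unsupported.

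\textbf{Secondary point: the round bound is not worst-case.} The potential argument does not bound the number of greedy rounds by $\poly(t/\tau)$ on every oracle. Fixing a heavy variable lowers $\sum_i W_i(\rho)$ only in expectation over the revealed bit; on a particular branch, other variables' weights can as much as double. The paper's argument (\Cref{lem:regularity lemma}) is an Azuma bound: the path length is at least $D$ with probability at most $\delta$, where $D=\poly(t,1/\eta,\log(1/\delta))$. So you must cap the simulation at $D=\poly(t,1/\tau,n)$ rounds, which also keeps it polynomial-time on every oracle, and charge the overflow event to your $2^{-3n}$ failure budget.

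\textbf{Secondary point: the tail-event claim in $(\Leftarrow)$.} The event that a \emph{fixed} $M$ solves $\Pi$ relative to $\bmcO$ is not a tail event, since a single oracle bit can govern $M$'s correctness on every input. The tail event is $\Pi\in\PromiseBPP^{\bmcO}$. The standard repair:
\begin{enumerate}
\item Take an $M$ that succeeds with positive probability.
\item Use Lebesgue density to find a finite restriction under which $M$ succeeds with conditional probability at least $0.99$.
\item Hardwire that restriction into $M$, and only then run the lazy-sampling simulation.
\end{enumerate}
The paper simply cites Bennett--Gill for this direction.
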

\end{tcolorbox}

\medskip 

In words, the status of $\mathsf{PromiseBPP}$ vs.~$\PromiseBQP$ is {\sl equivalent} in the random oracle and unrelativized worlds. We prove the forward implication; the reverse implication is standard and holds unconditionally~\cite{BG81,FR99}. 
Under the standard version of~\Cref{conj:our conjecture}, we still get a barrier but lose the equivalence: If $\mathsf{PromiseBQP}^{\bmcO} \not\sse \mathsf{PromiseHeurBPP}^{\bmcO}$ then $\mathsf{PromiseBPP}\ne \PromiseBQP$ (see \Cref{thm:implication-weak-conjecture}).%

\paragraph{Unconditional equivalences.} By our proof of~\Cref{thm:d squared intro}, we further obtain {\sl unconditional} versions of~\Cref{thm:random oracle unrelativized equivalence} for parallel quantum circuits. For example, for the class of polylogarithmic-depth circuits ($\mathsf{QNC}$), we get:

    \medskip 

 \begin{tcolorbox}[colback = white,arc=1mm, boxrule=0.25mm]
\begin{theorem}%
    \label{thm:QNC equivalence}
    $\mathsf{PromiseQNC}^{\bmcO} \not\sse \mathsf{PromiseQuasiBPP}^{\bmcO}$ for a random oracle $\bmcO$ if and only if $\mathsf{PromiseQNC}\not\sse \mathsf{PromiseQuasiBPP}$. 
\end{theorem}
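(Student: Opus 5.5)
The reverse implication is the standard one, as in~\cite{BG81,FR99}. Suppose $\mathsf{PromiseQNC}\not\sse\mathsf{PromiseQuasiBPP}$, witnessed by a promise problem $\Pi$. Then $\Pi\in\mathsf{PromiseQNC}^{\bmcO}$ trivially, since the circuit can ignore the oracle. Suppose also that $\Pi\in\mathsf{PromiseQuasiBPP}^{\bmcO}$ for a measure-one set of oracles. A standard averaging argument (fixing the machine and amplifying) then yields a single quasipolynomial-time randomized machine $M$ that solves $\Pi$ with bounded error when run with a uniformly random oracle. Such an $M$ can be simulated without any oracle by lazily sampling fresh random bits for each new oracle query, which gives $\Pi\in\mathsf{PromiseQuasiBPP}$, a contradiction.

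For the forward implication, assume $\mathsf{PromiseQNC}\sse\mathsf{PromiseQuasiBPP}$ and fix a $\mathsf{QNC}^{\bmcO}$ algorithm. On inputs of length $n$ it is a $t=\poly(n)$-query, $d=\polylog(n)$-round quantum query algorithm $\mcA_z$ whose input is the oracle truth table. Running the simulation behind~\Cref{thm:d squared intro} with $\eps=0.1$ and $\delta=2^{-n^2}$ costs
\[
2^{O(d^2)}\cdot (t\log(1/\delta)/\eps)^{O(d)} = 2^{\polylog(n)}
\]
classical queries. The logarithmic dependence on $1/\delta$ is what lets us take $\delta$ this small. The simulation then approximates $\mcA_z$'s acceptance probability correctly for all but a $2^{-n^2}$ fraction of oracles. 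A union bound over the $2^n$ inputs $z$ of length $n$, followed by Borel--Cantelli over $n$, shows that for a random $\bmcO$ the simulation is correct on all but finitely many inputs; those finitely many can be hardcoded.

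What remains is to make this simulation \emph{time}-efficient, not just query-efficient. The simulation is an iterative restriction procedure. In each phase it calls $\FindHeavy$ to find the variables with large expected query weight $\E[W_i(\bx)]$ under the current restriction, queries the true oracle on them, and fixes them. At the end it calls $\ApproxExpectation$ to estimate the acceptance probability over a random completion of the restriction. I would show that each call is an instance of a promise problem in $\mathsf{PromiseQNC}$. The description of the current restriction, of quasipolynomial length, is part of that instance, so I will phrase the padding carefully so that the resulting classes stay closed. Given this, the hypothesis places each call in $\mathsf{PromiseQuasiBPP}$, and composing quasipolynomially many quasipolynomial-time calls stays quasipolynomial.

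For the membership in $\mathsf{PromiseQNC}$, I would handle the two subroutines as follows. To estimate $w_i^{(t')}$, run the first $r$ rounds of $\mcA$ and measure the query registers just before round $r$. This preserves depth and gives samples whose distribution has marginals $w_i^{(t')}$; repetition can be done in parallel. Heavy coordinates are then read off by approximately computing the $\argmax$ in $\mathsf{NC}$. The uniformly random unrestricted oracle bits cannot be lazily sampled inside a quantum circuit. I will replace them by a $2t$-wise independent function, which by Zhandry's observation is indistinguishable from uniform for any $t$-query algorithm. This is evaluated as a random degree-$2t$ polynomial over a finite field, which is computable in polylogarithmic depth.

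I expect the main obstacle to be this step of putting the subroutines in low depth. The oracle simulation and the heavy-variable selection must each add only polylogarithmic depth per round. The promise gaps of the subroutines must also be set so that their approximate answers are still good enough for the analysis behind~\Cref{thm:d squared intro}.
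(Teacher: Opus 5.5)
Your outline otherwise matches the paper's argument, but there is a genuine gap in how you implement $\FindHeavy$. The matching parts are: Bennett--Gill for the easy direction; and for the hard direction, replacing the oracle by a $2td$-wise independent function, running the greedy restriction procedure with classical subroutines, and finishing with \Cref{thm:rsquared with eps} and $\ApproxExpectation$, using the $\log(1/\delta)$ dependence to get exponentially small failure probability. Borel--Cantelli in place of the paper's union bound plus zero--one law is fine.

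The gap: you build a polylog-depth circuit that samples query sets and outputs an approximate argmax, then say the hypothesis places this call in $\mathsf{PromiseQuasiBPP}$. That circuit solves a \emph{search} task with many acceptable outputs. Any coordinate of weight at least $\eta/2$ will do, and which one appears depends on measurement outcomes. But $\mathsf{PromiseQNC}\subseteq\mathsf{PromiseQuasiBPP}$ only speaks about decision problems with a gap. Extracting the coordinate bit by bit does not work. ``The $j$-th bit of the returned coordinate'' has no bounded-error answer when two coordinates are heavy or one sits near the threshold. Sampling the output distribution prefix by prefix also fails directly, because your estimates are only additively accurate and the relevant conditional prefix probabilities can be tiny. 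So the sentence ``each call is an instance of a promise problem in $\mathsf{PromiseQNC}$'' is true for $\ApproxExpectation$ (after thresholding) but not for $\FindHeavy$ as you designed it.

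The fix, which is what the paper does via \Cref{claim:QNC-regularity} (the proof of \Cref{cor:algorithmic_regularity_lemma}), is to keep the quantum part purely an estimation problem and do the search classically. For an interval $S=[a,b]\subseteq[N]$, the quantity $\sum_{i\in S}\Ex_{\bx}[W_i(\bx)]$ for the current restricted circuit is a sum over rounds of acceptance probabilities of polylog-depth circuits. Each such circuit prepares the round-$r$ pre-query state against the hash-function oracle and then tests the query registers against $S$, as in \Cref{claim:approx_set_sum}. Deciding whether such an acceptance probability is $\ge\theta+\eps$ or $\le\theta$ is a genuine promise problem in $\mathsf{PromiseQNC}$. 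So binary search over $\theta$, as in \Cref{claim:classical_estimation}, gives a classical estimator under the hypothesis. The branch-and-prune search of \Cref{prop:heavy_hitters} over dyadic intervals then finds a coordinate of weight $\ge\eta/2$ whenever one of weight $\ge\eta$ exists, using $O((td/\eta)\log N)$ estimates. Finally, \Cref{remark:robust-regularity} shows this relaxed selection still suffices for the regularity lemma.
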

\end{tcolorbox}
\medskip 

Here $\mathsf{QuasiBPP}$ refers to randomized quasipolynomial time. 

$\mathsf{QNC}$, in addition to its practical relevance, has been shown to be a rather powerful subclass of $\mathsf{BQP}$. Notably, Cleve and Watrous~\cite{CW00} constructed $\mathsf{QNC}$ circuits for the quantum Fourier transform, thereby showing that Shor's factoring algorithm can be implemented using such circuits with polynomial-time classical pre- and post-processing.
Results like this suggest that $\mathsf{QNC}$ already captures certain important aspects of quantum advantage. Alas,~\Cref{thm:QNC equivalence} shows that providing evidence of such advantage in the form of a random-oracle separation is no easier than simply proving an unrelativized separation.  

To our knowledge,~\Cref{thm:random oracle unrelativized equivalence,thm:QNC equivalence} give the first natural examples of unresolved complexity-theoretic statements that can be shown to be  equivalent in the random-oracle and unrelativized worlds. %
While the sweeping Random Oracle Hypothesis~\cite{BG81} was quickly shown to be false~\cite{Kur82,Har85,CCGHHRR94}, it remains of interest to understand the extent to which the random-oracle world reflects the unrelativized world. Indeed,~\cite{CCGHHRR94} concluded by asking: ``It would be very interesting to know if there are identifiable problem classes for which the random oracle results do point in the right direction." Our results represent progress towards such an understanding.

\begin{remark}[Promise problems]
Our proofs of these equivalences require the generality of promise problems. In particular, we rely on the existence of a $\mathsf{PromiseBQP}$-complete problem, whereas $\BQP$ has no known complete problems.%\footnote{In~\cite{Gol06}, Goldreich makes the case that promise problems are just as fundamental to the development of complexity theory as total problems; see also~\cite{Aar14} for quantum-specific motivation. The survey~\cite{Wat08} even drops the $\mathsf{Promise}$-prefix and simply defines all standard classical and quantum complexity classes in terms of promise problems, writing ``little is lost and much is gained in shifting one’s focus to promise problems."  
%}  %
\end{remark}

\section{Independent work}

Independently and concurrently, Liu and Mutreja~\cite{LM26} gave a $\poly(t)$-query classical simulation of nonadaptive $t$-query quantum algorithms on most inputs. Their proof extends to handle quantum algorithms that make $O(\log t/\log\log t)$ adaptive singleton queries before one parallel batch of $t$  queries. They obtained these results by proving a ``dense indistinguishability conjecture" for such algorithms, and showing that this conjecture implies the simulation conjecture. 

Also independently and concurrently, Escudero Guti\'errez, Palazuelos, and Saucedo~\cite{EGPS26} gave a $O(tr^2)$-query classical simulation of nonadaptive $t$-query rank-$r$ quantum algorithms on most inputs. Rank-$r$ algorithms measure on all qubits and accept on at most $r$ of the measurement outcomes. 

After the submission of our manuscript and exchange of theorem statements, Liu and Mutreja~\cite{LM26} obtained a $t^{O(4^d)}$-query classical simulation of $t$-parallel $d$-round quantum algorithms on most inputs.

\section{Technical overview}
\label{sec:technical overview}

\paragraph{Regularity.} Quantum algorithms with low query weights will be at the heart of our analysis, and so it will be convenient for us to have the following terminology: 

    \begin{definition}[$\eta$-regularity]
    \label{def:regular}
    A quantum algorithm is {\sl $\eta$-regular} if its query weights $W_1(x),\ldots,W_N(x)$ satisfy
    \begin{equation*}
        \Ex_{\bx}[W_i(\bx)] \leq \eta \quad\quad\text{for all }i \in [N].
    \end{equation*}
\end{definition}

We will prove a ``regularity lemma" for this notion of regularity (\Cref{lem:regularity lemma}). We show that for any $t$-query quantum algorithm $\mcA$, there is a classical decision tree $T$ of depth $\poly(t,1/\eta,\log(1/\delta))$ such that $\mcA_\pi$ is $\eta$-regular for all but a $\delta$ fraction of its root-to-leaf paths $\pi$. This lemma is why the simulation conjecture follows from~\Cref{conj:our conjecture}: The latter, stated in the contrapositive, says that sufficiently regular quantum algorithms are well-approximated by constant functions, so $T$ itself serves as a good classical approximation of $\mcA$. %

\paragraph{\{$t$-query $d$-round\}  $\sse$ \{$t$-parallel $d$-round\}.} Instead of $t$-query $d$-round algorithms, all our results will hold for {\sl $t$-parallel} $d$-round algorithms. These algorithms can make $t$ parallel queries in {\sl each} round for $d$ rounds rather than $t$ queries in total (see~\Cref{def:RroundTparallelQalg}). Since our results concern the classical simulation of such algorithms, the same results also apply to $t$-query ones with the same parameters.

\subsection{Nonadaptive algorithms: The case of $d=1$}%
\label{subsec:overview-nonadaptive}
\violet{To demonstrate the power of reasoning through query weights, we begin by showing that~\Cref{conj:our conjecture}, and therefore the simulation conjecture, has a one-page proof in the case of nonadaptive quantum query algorithms.}

\violet{Here and throughout the paper, for a function $f : \zo^N \to [0,1]$ we will be interested in the sets: 
\[ \Acc_f\coloneqq\{x:f(x)\geq 2/3\} \quad\text{and}\quad  \Rej_f\coloneqq\{x:f(x)\leq 1/3\}.%
\]
When $f$ is clear from context, we simply write $\Acc$ and $\Rej$. The query weights $W_i(x)$ of nonadaptive algorithms do not depend on the input $x$, so we will just write $W_i$.
}

\begin{lemma}[\Cref{conj:our conjecture} is true for nonadaptive algorithms]
    \label{lem:nonadaptive-overview}
    Let $\mcA$ be a nonadaptive $t$-query quantum algorithm with query weights $W_1, \ldots, W_N$ and $f:\zo^N \to [0,1]$ denote its acceptance probability. There is some $i \in [N]$ for which
    \begin{equation*}
         W_i \ge \Omega\paren*{\frac{1}{t \ln(1/\delta)}} \quad\quad\text{ where }\delta \coloneqq \violet{\min\big\{{\Pr[\Acc_f]}, \Pr[\Rej_f]\big\}.}
    \end{equation*}
\end{lemma}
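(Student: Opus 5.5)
The plan is to use the hybrid method to show that $f$ is Lipschitz with respect to a weighted Hamming metric whose weights are the query weights. A concentration-of-measure inequality for that metric then forces $\Acc$ and $\Rej$ to be far apart unless some $W_i$ is large.

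\emph{Step 1 (hybrid bound).} Since $\mcA$ is nonadaptive, the state $\ket{\psi}$ right before the single query layer $\mcO_x^{\otimes t}$ does not depend on $x$. The phase oracles for $x$ and $x\oplus S$ (where $x\oplus S$ flips the coordinates in $S$) agree except on the component of $\ket{\psi}$ in which some query register lies in $S$. By a union bound over the $t$ registers, that component has squared norm at most $\sum_{i\in S} W_i =: W(S)$. Hence the final states satisfy
\[
\|\ket{\psi_x}-\ket{\psi_{x\oplus S}}\| \le 2\sqrt{W(S)},
\]
and therefore $|f(x)-f(x\oplus S)| \le 4\sqrt{W(S)}$. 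Fix the constant $s := 1/400$. If $W(S)\le s$, then $|f(x)-f(x\oplus S)|<1/3$. In that case $x\in \Acc$ and $x\oplus S\in\Rej$ cannot both hold. So $\Acc$ and $\Rej$ are at distance more than $s$ in the weighted Hamming metric $d_W(x,y):=\sum_{i: x_i\ne y_i} W_i$.

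\emph{Step 2 (concentration).} Let $w:=\max_i W_i$ and suppose for contradiction that $w$ is small. The function $y\mapsto d_W(y,\Acc)$ changes by at most $W_i$ when coordinate $i$ is flipped. McDiarmid's bounded-differences inequality (or Talagrand's inequality, which gives the same result for this set-distance) then yields the standard two-set bound
\[
\Pr[\Acc]\cdot\Pr\bigl[d_W(\bx,\Acc)\ge s\bigr]\le \exp\!\Bigl(-\frac{s^2}{4\sum_i W_i^2}\Bigr).
\]
Because $\sum_i W_i = t$ and each $W_i\le w$, we have $\sum_i W_i^2\le wt$. By Step 1, every $y\in\Rej$ satisfies $d_W(y,\Acc)> s$, so the left-hand side is at least $\Pr[\Acc]\Pr[\Rej]\ge\delta^2$. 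This gives $\delta^2\le \exp(-s^2/(4wt))$, which rearranges to
\[
w\ge \frac{s^2}{8\,t\ln(1/\delta)} = \Omega\!\Bigl(\frac{1}{t\ln(1/\delta)}\Bigr),
\]
as claimed.

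\emph{Main obstacle.} I expect the delicate point to be Step 2: getting the $\ln(1/\delta)$ dependence rather than something polynomial in $1/\delta$. The naive idea of comparing a random $x$ with a noisy copy $y$ fails. Either the expected flipped weight is too large, or the crossing probability from $\Acc$ to $\Rej$ is exponentially small in $t$. The fix is the two-set form of concentration: control $d_W(\bx,\Acc)$ via the variance proxy $\sum_i W_i^2\le wt$, which is small precisely when the algorithm is regular. The remaining care is in Step 1: checking that the union bound over the $t$ parallel registers costs only $W(S)$ and not $t\cdot W(S)$. This holds because $W_i$ already sums the weight over all registers.
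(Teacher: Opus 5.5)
Your proof is correct and is essentially the paper's argument. The hybrid method gives $d_W(x,y)=\Omega(1)$ for all $x\in\Acc$, $y\in\Rej$, and concentration of $d_W(\cdot,A)$ with variance proxy $\|W\|_2^2\le \|W\|_1\|W\|_\infty\le t\max_i W_i$ then forces one of $\Pr[\Acc],\Pr[\Rej]$ to be at most $\exp(-\Omega(1/(t\max_i W_i)))$. The only cosmetic difference is in how you apply concentration: you use the two-set product form $\Pr[\Acc]\cdot\Pr[d_W(x,\Acc)\ge s]\le \exp(-s^2/(4\|W\|_2^2))$, which the paper notes is the $d=1$ case of its Talagrand corollary, whereas the paper's main write-up first uses the triangle inequality to get $\max\{\mathbb{E}[d_W(x,\Acc)],\mathbb{E}[d_W(x,\Rej)]\}=\Omega(1)$ and then applies McDiarmid's lower tail at $0$; your route costs only a factor of $2$ in the exponent via $\delta^2\le\Pr[\Acc]\Pr[\Rej]$.
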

Our proof proceeds by contrapositive: We show that if $\linf{W}$ is sufficiently small \violet{(i.e.~if $\mcA$ is sufficiently regular)}, then one of $\Acc$ or $\Rej$ must be tiny. The only information about the function $f$ our proof uses is a rather simple consequence of \cite{BBBV97}'s hybrid method. For any two inputs $x,y \in \zo^N$, define the distance between them as,
\begin{equation*}
    \dist_W(x,y) \coloneqq \sum_{i \in \Delta(x,y)} W_i \quad\quad \text{where}\quad \Delta(x,y) \coloneqq \set{i \in [N]: x_i \neq y_i}.
\end{equation*}
The hybrid method shows that $\dist_W(x,y) \geq \Omega(1)$ for any $x \in \Acc$ and $y \in \Rej$. By the triangle inequality, this means every input $x \in \zo^N$ is far from at least one of these sets: 
\begin{equation*}
    \dist_W(x,\Acc) + \dist_W(x,\Rej) \geq \Omega(1) \quad\quad\text{where}\quad \dist_W(x,A) \coloneqq \min_{y \in A}\dist_W(x,y).
\end{equation*}
Consequently, the average input is far from one of these two sets: 
\begin{equation*}
    \max\bigg\{\Ex_{\bx}\bracket*{\dist_W(\bx,\Acc)}, \Ex_{\bx}\bracket*{\dist_W(\bx,\Rej)} \bigg\} \geq \lfrac{1}{2} \cdot \Omega(1) = \Omega(1).
\end{equation*}
At this point, we apply McDiarmid's bounded differences inequality~\cite{McD89,DP09}: For any set \violet{$A\sse \zo^N$} and inputs $x,x^{\oplus i}$ that differ in only the $i^{\text{th}}$ coordinate we have $\abs*{\dist_W(x, A) - \dist_W(x^{\oplus i},A)} \leq W_i$. Therefore,
\begin{equation*}
     \Prx_{\bx}\big[\dist_W(\bx,A) \le \E[\dist_W(\bx,A)] - \eps\big] \le  \exp\left(-\frac{2\eps^2}{\| W\|_2^2}\right) \leq \exp\left(-\frac{2\eps^2}{\lone{W}  \linf{W}}\right), 
\end{equation*}
where the first inequality is by McDiarmid's and the second is H\"older's. A $t$-query quantum algorithm has total query weight $\lone{W} \leq t$, and so we get that: %
\begin{align*}
    \min\set*{\Pr[\Acc],\Pr[\Rej]} &= \min\set*{\Prx_{\bx}[\dist_W(\bx,\Acc) = 0], \Prx_{\bx}[\dist_W(\bx,\Rej) = 0]} \\ &\leq \exp\left(-\Omega\paren*{\frac{1}{t\cdot  \linf{W}}}\right).
\end{align*}
Rearranging completes the proof of~\Cref{lem:nonadaptive-overview}.

\subsection{Warm-up for $d \ge 2$: A doubly exponential bound}
\label{subsec:overview-warmup}
Extending these ideas to adaptive algorithms (i.e.~$d \ge 2$) requires overcoming a few challenges. We will first illustrate these ideas with a simpler proof that achieves quantitatively weaker parameters. This warm-up suffices to prove~\Cref{conj:our conjecture}, and therefore the simulation conjecture, when $d =O(1)$. However, its performance degrades quickly. For example, when $d = \violet{\polylog(t)}$, an important case corresponding to $\QNC$, it gives a bound that is exponential in $t$. %

\begin{lemma}[\violet{Parallel quantum algorithms have heavy variables; warm-up}] %
    \label{lem:warmup-overview}
    Let $\mcA$ be a $t$-parallel $d$-round quantum algorithm with query weights \violet{$W_1(x), \ldots, W_N(x)$} %
    and $f:\zo^N \to [0,1]$ denote its acceptance probability. There is some $i \in [N]$ for which
    \begin{equation*}
         \Ex_{\bx \sim \zo^N}[W_i(\bx)] \ge \paren*{dt \ln(1/\delta)}^{-\Omega(2^d)} \quad\quad\text{ where }\delta \coloneqq \violet{\min\big\{{\Pr[\Acc_f]}, \Pr[\Rej_f]\big\}}.
    \end{equation*}
\end{lemma}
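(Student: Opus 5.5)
We argue by induction on the number of rounds $d$, proving the contrapositive in the same form as \Cref{lem:nonadaptive-overview}: if $\mcA$ is $\eta$-regular with $\eta \le (dt\ln(1/\delta))^{-C 2^d}$, then $\min\{\Pr[\Acc],\Pr[\Rej]\} < \delta$. The base case $d=1$ is \Cref{lem:nonadaptive-overview}.

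\textbf{Step 1: an input-dependent distance.} For $d \ge 2$, the query weights $W_i(x)$ depend on $x$, so we work with the per-input distance
\[
\dist(x,y) \coloneqq \sum_{i\in\Delta(x,y)} W_i(x).
\]
The hybrid method of \cite{BBBV97} still shows that if $x\in\Acc$ and $y\in\Rej$, then $\dist(x,y) \ge \Omega(1)$ (and likewise with $W(y)$ in place of $W(x)$). The difficulty is that this distance is neither symmetric nor Lipschitz in the sense McDiarmid requires: flipping coordinate $i$ of $x$ changes $\dist(x,A)$ not only by $W_i(x)$ but also through the change of the whole weight vector $W(x)$.

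\textbf{Step 2: peel off the first round.} The plan is to separate the first round from the remaining $d-1$. The first round's weights $w^{(1)}_i$ are input-independent, since $U_1$ acts on $\ket 0$. Conditioned on the first-round query outcomes, the rest of the algorithm is a $(d-1)$-round algorithm. Concretely, I would consider the functions $x \mapsto W^{(\ge 2)}_i(x)$ and argue that each such weight function is itself a bounded quantity computable by a $(d-1)$-round algorithm. It is the acceptance probability of a measurement, done by the modified algorithm that stops before round $r$ and measures the query register, so it has the same structure with one fewer adaptive layer in the relevant sense.

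\textbf{Step 3: regularity, induction, and recombination.} Regularity is inherited: the derived algorithms make at most $O(dt)$ queries and have expected weights at most $\eta$. Applying the induction hypothesis in its concentration form, each derived acceptance function is nearly constant on all but a tiny fraction of inputs, namely outside a set of measure $\exp(-1/(dt\,\eta'^{\,c}))$. So on most inputs, $W(x)$ is close in $\ell_1$ to its mean vector $\overline W = \E[W(\bx)]$, up to an error $\gamma$. Replacing $W(x)$ by $\overline W$ costs additive $\gamma$ in the hybrid bound. Provided $\gamma$ is a small constant, every pair $x \in \Acc$, $y \in \Rej$ in the good set then has $\dist_{\overline W}(x,y) \ge \Omega(1)$. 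We then rerun the McDiarmid argument of \Cref{lem:nonadaptive-overview} with the fixed vector $\overline W$, for which $\|\overline W\|_\infty \le \eta$ and $\|\overline W\|_1 \le dt$. The conclusion is that one of $\Acc\cap\Good$ and $\Rej\cap\Good$ has measure $\exp(-\Omega(1/(dt\eta)))$, and adding the measure of the bad set gives the claim.

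\textbf{Main obstacle: parameter bookkeeping.} I expect the main difficulty to be Step 3. Making $W(x)$ concentrate to constant $\ell_1$ accuracy requires a union-bound-free argument over $N$ coordinates. I would first try grouping coordinates by bounding $\E\|W(\bx)-\overline W\|_1$ via the variance of $\sum_i \sigma_i W_i$ for random signs. Alternatively, one can apply the induction to a single derived algorithm that outputs, for a random query position, whether the queried coordinate's weight deviates. Each inductive step needs the previous step's regularity threshold squared (roughly $\eta_{d} \approx \eta_{d-1}^2/\poly(dt\ln(1/\delta))$) to absorb the error $\gamma$ and the failure probability. This squaring is exactly what produces the $(dt\ln(1/\delta))^{-\Omega(2^d)}$ bound.
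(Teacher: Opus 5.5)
Your observation that $w^{(r)}_i(x)$ is the acceptance probability of an $(r-1)$-round algorithm is right, and it is exactly what the paper exploits. Step~3, however, has a genuine gap.

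\textbf{Where Step 3 fails.} The move that lets you rerun McDiarmid is to replace $W(x)$ by the fixed vector $\overline W=\E[W(\bx)]$ at additive cost $\gamma$. This requires $\|W(\bx)-\overline W\|_1\le\gamma$, for a small constant $\gamma$, on most inputs. That is false for regular algorithms. Per-coordinate concentration cannot produce it, because the absolute errors add up over $N$ coordinates.

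Concretely, take $d=2$, $t=1$, $N=2^n$. In round~1, query $\frac{1}{\sqrt N}\sum_i\ket{i}$, so $w^{(1)}_i=1/N$. Then let $U_2$ apply $H^{\otimes n}$ to the query register, so that $w^{(2)}_s(x)=\widehat F_x(s)^2$ where $F_x(i)=(-1)^{x_i}$.
\begin{itemize}
\item Every coordinate has $\E[w^{(2)}_s(\bx)]=1/N$, so the algorithm is $O(1/N)$-regular.
\item Each $w^{(2)}_s(\bx)$ is individually $O(\log N)/N$ with high probability.
\item Yet $N\widehat F_{\bx}(s)^2$ is approximately distributed as $Z^2$ for a standard Gaussian $Z$. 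Hence $\|w^{(2)}(\bx)-\overline w^{(2)}\|_1\approx\E|Z^2-1|\approx 0.97$ for typical $\bx$.
\end{itemize}
So your good set is essentially empty and the final bound is vacuous. Neither the random-sign variance idea nor a ``deviation-detecting'' derived algorithm can rescue a target statement that is false.

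There are also two smaller problems. The $(d-1)$-round lemma only controls the thresholds $1/3$ and $2/3$, while the derived functions $w^{(r)}_i$ live at scale $\eta$, so you would need an amplified version. And the $\log N$ from a per-coordinate union bound still has to be removed somewhere.

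\textbf{What is missing.} The missing idea is never to pass to a fixed weight vector. The paper keeps the $x$-dependent weights and uses Talagrand's convex-distance inequality (\Cref{cor:talagrand-linf}). Since $\|W(x)\|_2^2\le\|W(x)\|_\infty\|W(x)\|_1$, the vector $W(x)/\sqrt{\gamma t}$ is admissible in $\dt(x,A)$ whenever $\|W(x)\|_\infty\le\gamma$. So all that is needed is an $\ell_\infty$ bound on most inputs, which the example above does satisfy.

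That $\ell_\infty$ bound is proved round by round, one coordinate at a time (\Cref{claim:induction}):
\begin{itemize}
\item By Markov, $A=\{y:w^{(r)}_i(y)<2\gamma_r\}$ has $\Pr[A]\ge 1/2$.
\item The hybrid progress bound with the test $\Ind[i\in S]$ places every $x$ with $w^{(r)}_i(x)\ge 4\gamma_r$ at distance $\Omega(\gamma_r/r^2)$ from $A$, measured by its own round-averaged earlier-round weights.
\item Talagrand, restricted to inputs whose earlier-round weights are $\gamma_{r-1}$-bounded, gives tail $\exp(-\Omega(\gamma_r^2/(\gamma_{r-1}r^4t)))$.
\end{itemize}
Union bounding over $i\in[N]$ forces $\gamma_{r-1}\approx\gamma_r^2/(r^4t\ln(N/\delta))$, which is where the $2^d$ exponent comes from. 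The resulting $\ln N$ is removed by the DFKO junta reduction (\Cref{claim:remove dependence on N}). A final Talagrand application with $A=\Acc$ then handles $\Rej\cap\Good$.
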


\violet{We give a full proof of~\Cref{lem:warmup-overview} in~\Cref{sec:2^r} and describe the main ideas here.} We once again apply the hybrid method, but now need to be more careful. \violet{The obvious but key distinction is that when $d \ge 2$,} the queries may depend on the input. In this setting, the hybrid method gives
\[
    \dist_{W(x)}(x,y) \geq \Omega(1/d) \quad\quad\text{ for all } x\in \Acc\text{ and }y \in \Rej, \label{eq:basic hybrid}
\]
\violet{where crucially, the distance measure now depends on $x$. This introduces two difficulties.} The first is that our prior approach based on McDiarmid's inequality assumes a fixed vector $W$ satisfying that $\dist_W(x, A) - \dist_W(x^{\oplus i},A) \leq W_i$ for all inputs $x$. With this new input-dependent distance measure, we would instead want to bound $\dist_{W(x)}(x, A) - \dist_{W(x^{\oplus i})}(x^{\oplus i},A)$. The mismatch between the two distance measures makes this quantity more difficult to analyze.

\paragraph{Talagrand's convex-distance inequality.} To get around this first difficulty, we use a remarkable result due to Talagrand~\cite{Tal96}. \violet{The standard statement of this inequality is given in~\Cref{thm:Talagrand}; we use a straightforward corollary:}%
\begin{corollary}[Easy consequence of Talagrand's convex-distance inequality; see~\Cref{cor:talagrand-linf}]
    \label{cor:Talagrand-overview}
    For any set $A \subseteq \zo^N$, vectors $\set{W(x)}_{x \in \zo^N}$ all satisfying \violet{$\lone{W(x)} \leq t$}, and thresholds $\eps,\gamma \geq 0$,
    \begin{equation*}
        \min\set*{\Pr[A],\Prx_{\bx}\bracket*{\dist_{W(\bx)}(\bx, A) \geq \eps }} \leq \exp\paren*{-\Omega\paren*{\lfrac{\eps^2}{t\gamma}}}+ \Prx_{\bx}[\linf{W(\bx)} \geq \gamma].
    \end{equation*}
\end{corollary}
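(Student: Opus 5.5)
The plan is to derive the corollary directly from the standard form of Talagrand's convex-distance inequality (\Cref{thm:Talagrand}). That form defines the convex distance
\[
\dt(x,A) \coloneqq \sup_{\alpha\ge 0,\ \|\alpha\|_2\le 1}\ \min_{y\in A}\sum_{i\in\Delta(x,y)}\alpha_i ,
\]
and asserts that $\Pr[A]\cdot\Pr_{\bx}[\dt(\bx,A)\ge s]\le e^{-s^2/4}$ for every $s\ge 0$. This holds either as stated or after applying Markov to $\Pr[A]\cdot\E[e^{\dt^2/4}]\le 1$.

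Since $\min\{a,b\}\le\sqrt{ab}$, this immediately gives
\[
\min\set*{\Pr[A],\Pr[\dt(\bx,A)\ge s]}\le e^{-s^2/8}.
\]
So the whole task is to compare the input-dependent distance $\dist_{W(x)}(x,A)$ with $\dt(x,A)$. The point is that Talagrand's distance takes a supremum over all weightings, so it dominates every particular input-dependent weighting at once. This is exactly what removes the mismatch between $W(x)$ and $W(x^{\oplus i})$.

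\textbf{Comparison step.} Fix $x$ with $\linf{W(x)}<\gamma$, and assume $W(x)\ne 0$ (otherwise $\dist_{W(x)}(x,A)=0$ and there is nothing to show). Query weights are nonnegative, so $\alpha\coloneqq W(x)/\|W(x)\|_2$ is a feasible vector in the supremum. Hence
\[
\dist_{W(x)}(x,A)=\|W(x)\|_2\cdot\min_{y\in A}\sum_{i\in\Delta(x,y)}\alpha_i\le \|W(x)\|_2\,\dt(x,A).
\]
By H\"older, $\|W(x)\|_2^2\le\lone{W(x)}\linf{W(x)}\le t\gamma$. Therefore, for such $x$, the condition $\dist_{W(x)}(x,A)\ge\eps$ forces $\dt(x,A)\ge \eps/\sqrt{t\gamma}$.

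\textbf{Combining.} The previous step gives
\[
\Pr\bracket*{\dist_{W(\bx)}(\bx,A)\ge\eps}\le \Pr\bracket*{\dt(\bx,A)\ge \eps/\sqrt{t\gamma}}+\Pr\bracket*{\linf{W(\bx)}\ge\gamma}.
\]
Next use the elementary inequality $\min\{a,b+c\}\le\min\{a,b\}+c$ for nonnegative $a,b,c$. Together with the Talagrand bound at $s=\eps/\sqrt{t\gamma}$, this yields
\[
\min\set*{\Pr[A],\Pr[\dist_{W(\bx)}(\bx,A)\ge\eps]}\le e^{-\eps^2/(8t\gamma)}+\Pr[\linf{W(\bx)}\ge\gamma],
\]
which is the claimed bound.

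I expect no real obstacle. The only points needing care are the normalization of Talagrand's inequality, which affects only the constant in the $\Omega(\cdot)$, and the nonnegativity of $W(x)$, which is what makes it a valid direction in the supremum.
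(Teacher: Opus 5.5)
Your proof is correct and is essentially the paper's argument (the proof of \Cref{cor:talagrand-linf}): you rescale $W(x)$ into a feasible direction for $\dt$, use $\|W(x)\|_2^2\le\|W(x)\|_1\|W(x)\|_\infty\le t\gamma$ to get $\dt(x,A)\ge\eps/\sqrt{t\gamma}$, and apply Talagrand; normalizing by $\|W(x)\|_2$ instead of by $\sqrt{\gamma t}$ makes no difference. The paper records the result in product form, $\Pr[A]\cdot\Pr[\|W(\bx)\|_\infty\le\gamma\text{ and }\dist_{W(\bx)}(\bx,A)\ge\eps]\le e^{-\eps^2/(4\gamma t)}$, and leaves implicit the passage to the $\min$-plus-error form, which your step $\min\{a,b+c\}\le\min\{a,b\}+c\le\sqrt{ab}+c$ supplies.
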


\violet{In words,~\Cref{cor:Talagrand-overview} says that if $A$ is a large set and most $W(x)$'s have small $\| \cdot \|_\infty$, then most $x$'s are close to $A$, even if distance is measured in an $x$-specific manner according to $W(x)$.}
\violet{Note that the $d=1$ case is an immediate consequence of \Cref{cor:Talagrand-overview}}: In this case there is a single $W$ with $\linf{W} \le \eta$ and $\lone{W} \le t$, where for every $x \in \Acc$, we have $\dist_W(x,\Rej) \geq \Omega(1)$. Applying~\Cref{cor:Talagrand-overview} with $A = \Rej$ shows that either $\Acc$ or $\Rej$ must be tiny. The real power of~\Cref{cor:Talagrand-overview} comes in the setting of $d\ge 2$ where $W$ may now depend on $x$.

\violet{However, in order to apply~\Cref{cor:Talagrand-overview} we have to then overcome a second difficulty, that of bounding $\Pr[\| W(\bx)\|_\infty \ge \gamma]$ given only the assumption that every individual coordinate of $W(\bx)$ is small in expectation (i.e.~the assumption of $\eta$-regularity).} One could, of course, apply Markov's inequality on each individual coordinate $W_i(\bx)$. After a union bound over the $N$ coordinates, this gives
\begin{equation*}
     \Prx_{\bx}\bracket[\big]{\linf{W(\bx)} \gg  N  \eta} \ll 1.
\end{equation*}
\violet{Unfortunately, this is too lossy of an approach as it requires $\eta \ll 1/N$. The depth of the classical decision tree from our regularity lemma scales with $1/\eta$, and so this would result in a trivial bound.}   %

\paragraph{Exponential tail bounds for query weights, proved inductively round-by-round.} Instead, we will show that $W_i(\bx)$ \violet{for any fixed $i\in [N]$} satisfies an exponential tail bound, \violet{which translates into only a logarithmic dependence on $N$ when we take a union bound}. Such a logarithmic dependence can then be removed using standard techniques \cite{DFKO07}. To obtain such a bound, we first break down the query weights into their individual \violet{round} contributions,
\begin{equation*}
    W_i(x) = w_i^{(1)} + w^{(2)}_i(x)+ \cdots + w^{(d)}_i(x).
\end{equation*}
\violet{We will bound $W_i(x)$ by bounding each $w_i^{(r)}(x)$.}
The first term $w_i^{(1)}$ does not depend on the input $x$ and is easy to handle. The key observation is that for any $r \geq 2$ and $i \in [N]$, the quantity $w^{(r)}_i(x)$ is itself the acceptance probability of an $(r-1)$-round algorithm. \violet{We can therefore use Talagrand's inequality to not only analyze the acceptance probabilities of quantum algorithms as sketched above, but also to analyze their query weights inductively:} %

\violet{
\begin{lemma}[Inductive step]
    \label{lem:query-weights-tail-overview} Consider any $t$-parallel $d$-round quantum algorithm. Fix $i\in [N]$ and $r\ge 2$ and suppose $\E_{\bx}[w_i^{(r)}(\bx)]\le \gamma_r$. Then for any $\gamma_{r-1}$, we have: 
    \begin{equation*}
        \Prx_{\bx}\big[w^{(r)}_i(\bx) \geq 4\gamma_r\big] \leq \Prx_{\bx}\bracket*{\max_{r' \le r-1}\set*{\linf{w^{(r')}(\bx)}} \geq \gamma_{r-1}} + \exp\paren*{-\Omega\paren*{\frac{\gamma_r^2}{\gamma_{r-1}r^4 t}}}.
    \end{equation*}
\end{lemma}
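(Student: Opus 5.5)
Set $g(x) \coloneqq w_i^{(r)}(x)$. This is the probability that measuring the query register just before the $r$-th round of queries finds an index equal to $i$. So $g$ is the acceptance probability of an $(r-1)$-round algorithm $\mcA'$: run the first $r-1$ rounds of $\mcA$, apply $U_r$, and measure whether some query slot holds $i$ (normalized so that $g\in[0,1]$). The query weights of $\mcA'$ are exactly $w^{(1)}(x),\ldots,w^{(r-1)}(x)$, and their total mass is at most $(r-1)t$. Let $A\coloneqq\{x : g(x)\le 2\gamma_r\}$. By Markov's inequality, $\Pr[A]\ge 1/2$.

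\textbf{Step 1: a hybrid argument for $g$.} Take $x$ with $g(x)\ge 4\gamma_r$ and any $y\in A$. Then $g(x)-g(y)\ge 2\gamma_r$, and $|g(x)-g(y)|\le 2\,\|\,|\psi^{(r)}(x)\rangle-|\psi^{(r)}(y)\rangle\|$. The BBBV hybrid argument, run over the $r-1$ rounds with $x$ as the reference input, gives
\[
\big\|\,|\psi^{(r)}(x)\rangle-|\psi^{(r)}(y)\rangle\big\| \le 2\sum_{r'<r}\sqrt{\textstyle\sum_{j\in\Delta(x,y)} w^{(r')}_j(x)} \le 2\sqrt{(r-1)\,\dist_{W'(x)}(x,y)},
\]
where $W'(x)\coloneqq\sum_{r'<r}w^{(r')}(x)$. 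The second inequality is Cauchy--Schwarz. It follows that
\[
\dist_{W'(x)}(x,A)\ \ge\ \Omega\!\left(\gamma_r^2/r\right) \qquad\text{for every } x \text{ with } g(x)\ge 4\gamma_r.
\]
The quadratic loss in $\gamma_r$ arises here, because distances in the hybrid argument enter under a square root.

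\textbf{Step 2: applying Talagrand.} Apply \Cref{cor:Talagrand-overview} with the set $A$, vectors $W'(x)$ satisfying $\|W'(x)\|_1\le rt$, threshold $\eps=\Theta(\gamma_r^2/r)$, and $\gamma \coloneqq r\gamma_{r-1}$. The choice of $\gamma$ works because if every $\|w^{(r')}(x)\|_\infty<\gamma_{r-1}$ for $r'\le r-1$, then $\|W'(x)\|_\infty<r\gamma_{r-1}$. Since $\Pr[A]\ge1/2$, the minimum in the corollary is attained by the second term once the right-hand side is below $1/2$. (If the right-hand side is not below $1/2$, the claimed bound is essentially trivial after adjusting constants.) We obtain
\[
\Pr[g(\bx)\ge 4\gamma_r] \le \Pr\big[\max_{r'\le r-1}\|w^{(r')}(\bx)\|_\infty\ge\gamma_{r-1}\big] + \exp\!\big(-\Omega(\eps^2/(rt\cdot r\gamma_{r-1}))\big).
\]
Plugging in $\eps=\Theta(\gamma_r^2/r)$ gives an exponent of order $\gamma_r^4/(r^4 t\gamma_{r-1})$.

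\textbf{Main obstacle.} The bound just obtained has $\gamma_r^4$ in the exponent, whereas the statement has $\gamma_r^2$. Recovering $\gamma_r^2$ requires a linear rather than square-root relation between $g(x)-g(y)$ and the distance, which is the hardest step. I would use the fact that $g$ is a probability, i.e.\ the squared norm of a projected state. Write $g(x)=\|P\psi(x)\|^2$. Then $\sqrt{g(x)}-\sqrt{g(y)}\le\|\psi(x)-\psi(y)\|$, so $g(x)\ge4\gamma_r$ and $g(y)\le 2\gamma_r$ give $\|\psi(x)-\psi(y)\|\ge(2-\sqrt2)\sqrt{\gamma_r}$. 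Squaring then yields $\dist_{W'(x)}(x,A)\ge\Omega(\gamma_r/r)$. With $\eps=\Theta(\gamma_r/r)$, the exponent becomes $\gamma_r^2/(r^4t\gamma_{r-1})$, up to polynomial factors in $r$ coming from the hybrid argument and the $\|\cdot\|_1$, $\|\cdot\|_\infty$ bounds. These factors are at most $r^4$ in total if I track them carefully, which matches the statement.
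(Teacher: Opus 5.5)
Your final argument, the one in the ``main obstacle'' paragraph, is correct and is essentially the paper's proof of \Cref{claim:induction}: it replaces the lossy bound $|g(x)-g(y)|\le 2\|\psi(x)-\psi(y)\|$ by $\sqrt{g(x)}-\sqrt{g(y)}\le\|\psi(x)-\psi(y)\|$, then uses Markov for $\Pr[A]\ge 1/2$, the square-root form of the hybrid bound (\Cref{cor:quantum-model}, item (ii)) with Cauchy--Schwarz over rounds to get distance $\Omega(\gamma_r/r)$, and \Cref{cor:talagrand-linf} to finish (your $\Pr[g\ge 4\gamma_r]\le 1/4$ from Markov also disposes of the edge case). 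The only difference is cosmetic: the paper averages the round weight vectors, giving $\|\cdot\|_1\le t$, $\|\cdot\|_\infty\le\gamma_{r-1}$ and distance $\Omega(\gamma_r/r^2)$, whereas you sum them; by scaling this yields the same $r^4$ in the exponent.
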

}

\violet{In words,~\Cref{lem:query-weights-tail-overview} reduces the task of controlling the round-$r$ query weights (LHS of above)  to that of controlling the round-$r'$ query weights for $r'\le r-1$ (RHS of above). Importantly, it translates an exponential tail bound on round-$r'$ weights into an exponential tail bound for round-$r$ weights, assuming only a bound on the expectation of round-$r$ weights.} %

To utilize \Cref{lem:query-weights-tail-overview}, we choose thresholds $\gamma_1 \le \cdots\le \gamma_d$ and define, for each round $r$: 
\begin{equation*}
    \Bad^{(r)} \coloneqq \set{x \in \zo^N: \linf{w^{(r)}(x)} > \gamma_r} \quad\quad\text{and} \quad\quad \Good \coloneqq \zo^N \setminus \bigcup_{r \in [d]} \Bad^{(r)}.
\end{equation*}
\Cref{lem:query-weights-tail-overview} allows us to conclude that $\Bad^{(r)}$ is small provided that:
\begin{enumerate}
    \item The choice of thresholds satisfies $\gamma_{r-1} \ll \gamma_r^2$, \violet{chosen so that the second term on the RHS of~\Cref{lem:query-weights-tail-overview} is sufficiently small.} %
    \item The prior sets $\Bad^{(1)}, \ldots, \Bad^{(r-1)}$ are all sufficiently small.
    \item The query weights are sufficiently small in expectation, \violet{which is implied by our regularity assumption. (As in the nonadaptive case, our proof of~\Cref{lem:warmup-overview} proceeds via the contrapositive, showing that sufficiently regular algorithms must be biased.)} %
\end{enumerate}
We set the parameters in such a way that all these conditions are satisfied for all rounds $r$, meaning that most $x's$ are in $\Good$. \violet{This yields a bound on $\Prx\big[\| W(\bx)\|_\infty\ge \gamma\big]$, which allows us to apply Talagrand's inequality yet again to prove \Cref{lem:warmup-overview}.} 

The requirement that $\gamma_{r-1} \ll \gamma_r^2$ is why \Cref{lem:warmup-overview} has a doubly exponential dependence on~$d$: Chaining this inequality for all $r \in [d]$ gives that $\gamma_1 \ll (\gamma_d)^{2^{d}}$.  \violet{We then have to set the regularity parameter~$\eta$ to be correspondingly small for the base case of our inductive proof to go through. }

\subsection{An improved bound for $d \ge 2$ via higher-order statistics}
\label{subsec:overview-improved}
Using a few additional ideas and a much more careful analysis we are able to prove the following.
\begin{theorem}[\violet{Parallel quantum algorithms have heavy variables}; improved bound]%
    \label{thm:rsquared-overview}
    Let $\mcA$ be a $t$-parallel $d$-round quantum algorithm with query weights \violet{$W_1(x), \ldots, W_N(x)$} and $f:\zo^N \to [0,1]$ denote its acceptance probability. There is some $i \in [N]$ for which
    \begin{equation*}
         \Ex_{\bx \sim \zo^N}[W_i(\bx)] \ge 2^{-\Omega(d^2)} \cdot (t\log(1/\delta))^{-\Omega(d)} 
         \quad\quad\text{ where }\delta \coloneqq \violet{\min\big\{{\Pr[\Acc_f]}, \Pr[\Rej_f]\big\}}.
    \end{equation*}
\end{theorem}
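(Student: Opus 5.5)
We argue by contrapositive, as in the proofs of \Cref{lem:nonadaptive-overview} and \Cref{lem:warmup-overview}. Assume $\mcA$ is $\eta$-regular with $\eta = 2^{-Cd^2}(t\log(1/\delta))^{-Cd}$; we will show $\min\{\Pr[\Acc],\Pr[\Rej]\}<\delta$. The outer step is unchanged. The hybrid method gives $\dist_{W(x)}(x,\Rej)\ge\Omega(1/d)$ for every $x\in\Acc$. \Cref{cor:Talagrand-overview} with $A=\Rej$ and $\eps=\Omega(1/d)$ then reduces the theorem to two tasks. First, pick a final threshold $\gamma\approx 1/(d^2 t\log(1/\delta))$. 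Second, show $\Pr_{\bx}[\linf{W(\bx)}\ge\gamma]<\delta/2$. As in the warm-up, we prove this round by round, using the sets $\Bad^{(r)}$. The whole difficulty is to avoid the recursion $\gamma_{r-1}\ll\gamma_r^2$, which is what produced the $2^d$ in the exponent.

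The idea behind ``higher-order statistics'' is to carry a \emph{moment} or tail profile of each $w_i^{(r)}(\bx)$ through the induction, not just its expectation. The warm-up squares because \Cref{lem:query-weights-tail-overview} turns only a first-moment bound $\E[w_i^{(r)}]\le\gamma_r$ into a tail bound at deviation $\gamma_r$. Talagrand then charges $\gamma_r^2/(\gamma_{r-1}t)$, so we need $\gamma_{r-1}\le\gamma_r^2/t$.

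Instead, we observe that $w_i^{(r)}(x)$ is the acceptance probability of an $(r-1)$-round algorithm. Talagrand's inequality, applied at \emph{every} level set $\{x: w_i^{(r)}(x)\ge\theta\}$ and not just at $\theta\approx\E$, gives a sub-Gaussian-type concentration of $w_i^{(r)}$ around its median. The scale of that concentration is $\sqrt{t\,\gamma_{r-1}}$ times the Lipschitz constant of $w_i^{(r)}$ in the $\dist_{W}$ metric, and the hybrid method bounds that constant by $O(r)$. The concentration scale is therefore additive: $\sigma_r\approx r\sqrt{t\gamma_{r-1}\log(N/\delta)}$. The median, in turn, is at most $2\eta$ by Markov. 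So we can take
\[
\gamma_r \approx \eta + r\sqrt{t\,\gamma_{r-1}\log(N/\delta)}.
\]

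Iterating this naively again looks like square roots. The trick is to choose the thresholds geometrically, $\gamma_r=\gamma\cdot(c\,t\log(1/\delta))^{-(d-r)}2^{-O(d(d-r))}$. With this choice $\gamma_{r-1}\le\gamma_r^2/(r^2t\log)$ holds only up to \emph{polynomial} losses per round. This works because we bound, at level $r$, a $k$-th moment $\E[(w_i^{(r)})^k]$ with $k\approx\log(1/\delta)$, instead of using the expectation. The $2^{O(d^2)}$ factor is the accumulated cost of the $r^{O(1)}$ Lipschitz constants over all rounds and scales. The $\log N$ term from the union bound over coordinates is removed by the standard \cite{DFKO07} reduction.

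The main obstacle is exactly this moment-propagation step: proving a version of \Cref{lem:query-weights-tail-overview} whose hypothesis and conclusion are both $k$-th moment or tail bounds, so that the per-round loss is a factor $\poly(t\log(1/\delta))\cdot 2^{O(d)}$ rather than a squaring. My first attempt would be to integrate Talagrand's bound over all thresholds $\theta$ to obtain a moment inequality. I would then handle the input-dependent metric by conditioning on $\Good^{(<r)}$ and paying $\Pr[\Bad^{(<r)}]$ additively, as in the warm-up.
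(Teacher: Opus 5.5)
There is a genuine gap, and it is in the step the theorem depends on. Your reduction in the first paragraph (Talagrand on $\Acc/\Rej$, then the \cite{DFKO07} step) is sound. What fails is the bound on $\Pr[\linf{W(\bx)}\ge\gamma]$.

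Your recurrence $\gamma_r\approx\eta+r\sqrt{t\gamma_{r-1}\log(N/\delta)}$ is exactly the warm-up's squaring condition $\gamma_{r-1}\lesssim\gamma_r^2/(r^2t\log(N/\delta))$. The geometric schedule you then choose does not satisfy it ``up to polynomial losses''; it violates it outright. With any constant ratio $\gamma_r/\gamma_{r-1}=K$ (yours has $K=(ct\log(1/\delta))\cdot 2^{O(d)}$), we get $\gamma_r^2/\gamma_{r-1}=K\gamma_r=\gamma_{r+1}\le 1$ for every $r\le d-1$. So the exponent $\gamma_r^2/(r^4\gamma_{r-1}t)$ in the Talagrand bound is at most $1/t$, and the resulting bound on round-$r$ bad inputs is vacuous.

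Replacing expectations by $k$-th moments of $w_i^{(r)}(\bx)$ does not help on either side of the inductive step. On the hypothesis side, the expectation bound is used only to make $A=\{y:w_i^{(r)}(y)<2\gamma_r\}$ have constant measure, which Markov already gives. On the conclusion side, a moment bound obtained by integrating Talagrand's tail carries no more information than the tail itself. The bottleneck is the exponent. Its scale $\sqrt{t\gamma_{r-1}}$ is fixed by the only per-input information your induction carries about earlier rounds, namely $\linf{w^{(r')}(x)}\le\gamma_{r'}$, which enters through $\|W\|_2^2\le\|W\|_\infty\|W\|_1$. While goodness is defined by singleton maxima, reaching deviation $\gamma_r$ with failure probability $\delta/N$ forces $\gamma_r^2\gtrsim t\gamma_{r-1}\log(N/\delta)$, whatever statistic of $w_i^{(r)}(\bx)$ you track. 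Separately, accumulated $r^{O(1)}$ factors multiply to only $2^{O(d\log d)}$, so they are not the source of the $2^{O(d^2)}$.

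The missing idea is that the paper's ``higher-order statistics'' are $m$-wise statistics of each input's query-set distribution $\mcD_x^{(r)}$, not moments over $\bx$. An input is good in round $r$ if $\mcD_x^{(r)}$ is $(m_r,\gamma_r)$-spread, i.e.\ $\Pr_{\bS\sim\mcD_x^{(r)}}[\bS\cap T\neq\emptyset]<\gamma_r$ for every $|T|=m_r$ (\Cref{def:hit-spread}). Distance is measured by the sharper hybrid quantity $\dist_{\mcD}(x,y)=\Pr_{\bS\sim\mcD}[\bS\cap\Delta(x,y)\neq\emptyset]$.

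\Cref{lem:dist-to-hit} shows that spreadness multiplies the Talagrand exponent by $m$, giving $\exp(-\Omega((\tau-\gamma)^2m/(\gamma t)))$. The argument builds $T$ greedily. Each of the $m$ steps still faces distance at least $\tau-\gamma$ from $A$, so it finds an element of residual hit-mass at least $(\tau-\gamma)^2/(t\,\dtal(x,A)^2)$. Hence if $\dtal(x,A)$ were small, $T$ would $\gamma$-hit $\mcD_x$.

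This factor of $m$ is what allows $\gamma_{r-1}=\gamma_r/c$ instead of $\gamma_{r-1}\approx\gamma_r^2/t$; the cost moves into $m$. In \Cref{lem:most-inputs-good} the paper proceeds as follows:
\begin{itemize}
\item It applies the hybrid method with the tests $S\mapsto\Ind[S\cap T\neq\emptyset]$.
\item It uses regularity only through Markov: $\eta\le\gamma_r/(4m_r)$ makes $\{y:\Pr_{\bS\sim\mcD_y^{(r)}}[\bS\cap T\neq\emptyset]\le\gamma_r/2\}$ have measure at least $1/2$.
\item It merges earlier rounds into the product distribution $\mcD_x^{(\quads,r-1)}$, so the $\sum_{r'}\sqrt{\cdot}$ form of the hybrid bound becomes a single distance.
\item It union-bounds over the $\binom{N}{m_r}$ choices of $T$, which requires $m_{r-1}\gtrsim(tr^3/\gamma_r)(m_r\log N+\log(d/\delta))$.
\end{itemize}
With $\gamma_r=c^{-(d-r)}$, the product $\prod_{r'}1/\gamma_{r'}$ inside $m_1$ is where the $2^{O(d^2)}$ comes from. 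Setting $\eta=\gamma_1/(4m_1)$ then gives the stated bound.
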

Our proof of \Cref{thm:rsquared-overview} will be based on more fine-grained information about the quantum algorithm than in \Cref{lem:warmup-overview}. Rather than just  analyzing the query weights of \textsl{individual coordinates}, we will analyze the query weights of \textsl{sets}. On an input $x\in \zo^N$, each round $r\in [d]$ of a $t$-parallel algorithm induces a distribution $\mcD_x^{(r)}$ over subsets of $[N]$ of size at most $t$.  %
These distributions encode more information than the query weights of coordinates, as the latter can be easily recovered using the simple equality,
\begin{equation*}
    w_i^{(r)}(x) = \Prx_{\bS \sim \mcD_x^{(r)}}[i \in \bS].
\end{equation*}
We can now describe the main differences between the proofs of \Cref{thm:rsquared-overview} and \Cref{lem:warmup-overview}.

\paragraph{$\Good$ vs $\Bad$.}  \violet{The proof of~\Cref{lem:warmup-overview} is based on a partition of inputs into $\Bad^{(r)}$ and $\Good^{(r)}\coloneqq \zo^N\setminus \Bad^{(r)}$, where the definitions of these sets are chosen to facilitate applications of Talagrand's inequality to the inputs in $\Good^{(r)}$. Specifically, we consider an input $x$ good if $\| w^{(r)}(x) \|_\infty \le \gamma_r$. This can be interpreted as saying that $\mathcal{D}_x^{(r)}$ is ``well-spread" in the sense that 
\[ \Prx_{\bS \sim \mathcal{D}_x^{(r)}}[i\in \bS]\le \gamma_r \quad \text{for all $i\in [N]$}.  \]
This notion of goodness is therefore based on ``singleton statistics" (i.e.~``$1$-wise statistics") of $\mathcal{D}_x^{(r)}$. }

Goodness in the proof of~\Cref{thm:rsquared-overview} will instead be based on a broader set of $m$-wise statistics of $\mathcal{D}_x^{(r)}$ where $m\ge 1$. We now consider an input $x$ good if $\mathcal{D}_x^{(r)}$ is well-spread in the sense that 
\begin{equation} \Prx_{\bS\sim \mathcal{D}_x^{(r)}}[\bS\cap T \ne\emptyset] \le \gamma_r \text{ for all subsets $T\sse [N]$ of size $m_r$.}\label{eq:well spread overview}
\end{equation}

\violet{We say that $\mcD_x^{(r)}$ is {\sl $(m_r,\gamma_r)$-spread} if it satisfies the above, and {\sl $(m_r,\gamma_r)$-hit} otherwise.}

\violet{Recall that the bottleneck in our proof of~\Cref{lem:warmup-overview} was the recurrence relationship $\gamma_{r-1}\ll \gamma_r^2$, which resulted in $\gamma_1$ being doubly exponential in $d$. With this new dual-parameter notion of spreadness, with an appropriately large choice of $m_r$ our recurrence relationship becomes $\gamma_{r-1}\le c\gamma_r$ for some absolute constant $c$, which results in $\gamma_1$ only being singly exponentially in $d$. 

}

\paragraph{Stronger distance guarantees via a more careful hybrid method.} \violet{Note, however, that $(m_r,\gamma_r)$-spreadness becomes a stricter guarantee the larger $m_r$ is. Since the size of $\Good^{(r)}$ shrinks as $m_r$ becomes large, it is more challenging to bound the size of $\Bad^{(r)}$ when we work with this new notion. 

We overcome this in two ways. We first show that the hybrid method actually gives a stronger guarantee than what we used in the proof of~\Cref{lem:warmup-overview}. Our proof of~\Cref{lem:warmup-overview} uses the fact that if $x\in \Acc$ and $y\in \Rej$, then} 
\violet{
\begin{equation*}
    \sum_{r \in [d]} \dist_{w^{(r)}(x)}(x,y) \geq \Omega(1/d) \quad\text{where} \quad \dist_{w^{(r)}(x)}(x,y) \coloneqq \sum_{i\in \Delta(x,y)} w^{(r)}_i(x). %
\end{equation*}
Working directly with the distributions $\mcD_x^{(r)}$, the hybrid method gives the stronger bound: %
\begin{equation}
    \sum_{r \in [d]} \dist_{\mcD_x^{(r)}}(x,y) \geq \Omega(1/d) \quad\text{where} \quad \dist_{\mathcal{D}_x^{(r)}}(x,y) \coloneqq \Prx_{\bS\sim\mathcal{D}_x^{(r)}}[x_{\bS} \ne y_{\bS}]. \label{eq:new distance overview} %
\end{equation}
This bound is indeed stronger since 
\begin{align*}
\Prx_{\bS\sim\mcD_x^{(r)}}[x_{\bS}\ne y_{\bS}] &= \Prx_{\bS\sim\mathcal{D}_x^{(r)}}[\bS \cap \Delta(x,y)\ne \emptyset] \\
&\le \Ex_{\bS \sim \mathcal{D}_x^{(r)}}[|\bS \cap \Delta(x,y)|]\ =\ \sum_{i\in \Delta(x,y)} \Prx_{\bS\sim\mathcal{D}_x^{(r)}}[i\in \bS] \ = \ \sum_{i\in \Delta(x,y)} w_i^{(r)}(x).  
\end{align*} }

\paragraph{Our main concentration inequality.} \violet{Next, we prove a generalization of~\Cref{cor:Talagrand-overview}, the reformulation of Talagrand's inequality that we used repeatedly in the proof of~\Cref{lem:warmup-overview}. This version incorporates our new notions of well-spreadness (\ref{eq:well spread overview}) and distance (\ref{eq:new distance overview}):}

\begin{lemma}[Well-spread distributions cannot separate large sets; see~\Cref{lem:dist-to-hit}]
    \label{lem:dist-to-hit-overview}
   For any set $A \subseteq \zo^N$, distributions  $\set{\mcD_x}_{x \in \zo^N}$ supported on subsets of $[N]$ of size at most $t$, and thresholds $\eps,\gamma \geq 0$,
        \begin{equation*}
        \min\set*{\Pr[A], \Prx_{\bx}\big[ \dist_{\mcD_{\bx}}(\bx, A) \geq \eps\big] }\leq \exp\paren*{-\Omega\paren*{\frac{(\eps-\gamma)^2m}{\gamma t}}} + \Prx_{\bx}\big[\mcD_{\bx}\text{ is }(m,\gamma)\text{-hit}\,\big].
    \end{equation*} 
\end{lemma}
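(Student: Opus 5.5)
The plan is to reduce to Talagrand's convex-distance inequality (\Cref{thm:Talagrand}) in its usual form: for every $A\sse\zo^N$ and $s\ge 0$,
\[
\Pr[A]\cdot\Prx_{\bx}\big[\dt(\bx,A)\ge s\big]\le e^{-s^2/4},
\qquad\text{where}\qquad
\dt(x,A)=\sup_{\alpha\ge 0,\ \|\alpha\|_2\le 1}\ \min_{y\in A}\sum_{i\in\Delta(x,y)}\alpha_i .
\]
The feature we need is that the supremum is taken separately for each $x$, so the weight vector may depend on $x$. Consequently, for any nonnegative $\alpha(x)$ we get
\[
\min_{y\in A}\sum_{i\in\Delta(x,y)}\alpha_i(x)\le \|\alpha(x)\|_2\cdot \dt(x,A).
\]
This is exactly the mechanism behind \Cref{cor:Talagrand-overview}. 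So the task is to convert a spread distribution $\mcD_x$ into a weight vector with small $\ell_2$ norm that still controls $\dist_{\mcD_x}(x,y)$ up to an additive $\gamma$.

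\textbf{Key combinatorial step.} Fix $x$ with $\mcD_x$ $(m,\gamma)$-spread, and let $w_i\coloneqq\Pr_{\bS\sim\mcD_x}[i\in\bS]$, so that $\sum_i w_i\le t$. Define
\[
\alpha_i(x)\coloneqq\min\{w_i,\ \gamma/m\}.
\]
We claim that for every $\Delta\sse[N]$,
\[
\Prx_{\bS\sim\mcD_x}[\bS\cap\Delta\ne\emptyset]\le \gamma+\sum_{i\in\Delta}\alpha_i(x).
\]
To see this, split $\Delta$ into $L=\{i\in\Delta: w_i\le\gamma/m\}$ and $H=\Delta\setminus L$.
\begin{itemize}
\item For $L$, a union bound gives hitting probability at most $\sum_{i\in L}w_i=\sum_{i\in L}\alpha_i$.
\item For $H$, partition it into $\lceil |H|/m\rceil$ blocks of size at most $m$ (padding each block to size $m$ only increases its hitting probability). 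By spreadness each block is hit with probability at most $\gamma$, so the total is at most $\gamma\lceil|H|/m\rceil\le\gamma+\gamma|H|/m$. Since $\alpha_i=\gamma/m$ on $H$, this equals $\gamma+\sum_{i\in H}\alpha_i$.
\end{itemize}
Summing the two parts proves the claim. For the norm, we have $\alpha_i^2\le (\gamma/m)\,w_i$ for every $i$, hence $\|\alpha(x)\|_2^2\le (\gamma/m)\sum_i w_i\le \gamma t/m$.

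\textbf{Assembling.} Let $B\coloneqq\{x: \dist_{\mcD_x}(x,A)\ge\eps \text{ and } \mcD_x \text{ is } (m,\gamma)\text{-spread}\}$, and assume $\eps>\gamma$ (otherwise the bound is trivial). For $x\in B$ and every $y\in A$, the claim gives
\[
\sum_{i\in\Delta(x,y)}\alpha_i(x)\ \ge\ \dist_{\mcD_x}(x,y)-\gamma\ \ge\ \eps-\gamma .
\]
Combining this with the displayed consequence of Talagrand's inequality and the norm bound,
\[
\dt(x,A)\ \ge\ \frac{(\eps-\gamma)\sqrt m}{\sqrt{\gamma t}}\qquad\text{for all }x\in B.
\]
Talagrand's inequality then yields $\Pr[A]\cdot\Pr[B]\le\exp\!\big(-(\eps-\gamma)^2m/(4\gamma t)\big)$, and therefore
\[
\min\{\Pr[A],\Pr[B]\}\le\sqrt{\Pr[A]\Pr[B]}\le\exp\!\big(-(\eps-\gamma)^2m/(8\gamma t)\big).
\]
Finally, $\Pr[\dist_{\mcD_{\bx}}(\bx,A)\ge\eps]\le\Pr[B]+\Pr[\mcD_{\bx}\text{ is }(m,\gamma)\text{-hit}]$. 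Using $\min\{a,b+c\}\le\min\{a,b\}+c$ gives the lemma.

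The main obstacle is the combinatorial claim. A naive choice such as $\alpha_i=w_i$ loses all benefit from $m$, while uniform weights $\gamma/m$ have an $\ell_2$ norm that grows with $N$. Truncating at $\gamma/m$ keeps the norm bounded by $\gamma t/m$ and still controls $\dist_{\mcD_x}(x,y)$ up to the additive $\gamma$. The only thing to double-check is the padding step for $H$ when $|H|$ is not a multiple of $m$ or when $|H|<m$.
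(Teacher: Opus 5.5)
Your proof is correct, but it takes a different route from the paper. The paper argues by greedy peeling. It repeatedly applies \Cref{claim:dist-to-heavy-element}, which gives an element $i$ with $\Pr[i\in\bS]\ge \dist_{\mcD}(x,A)^2/(t\,\dtal(x,A)^2)$, to residual distributions $\mcD^{(j)}$ obtained by discarding sets already hit by $T^{(j)}$. A total-variation argument shows the distance to $A$ drops by at most $\gamma$ until $T^{(j)}$ $\gamma$-hits. The paper concludes that if $\dtal(x,A)\le(\tau-\gamma)\sqrt{m/(\gamma t)}$, then $m$ greedy steps already produce a $\gamma$-hitting set.

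You instead argue in the dual direction and isolate an $A$-independent fact. An $(m,\gamma)$-spread $\mcD$ satisfies $\Pr[\bS\cap\Delta\neq\emptyset]\le\gamma+\sum_{i\in\Delta}\alpha_i$ for the truncated marginals $\alpha_i=\min\{w_i,\gamma/m\}$. This follows from a union bound on the light coordinates and a block partition of the heavy ones. The padding step you were unsure about is fine, since $m\le N$ and hitting probability is monotone in the set. Because $\|\alpha\|_1\le t$ and $\|\alpha\|_\infty\le\gamma/m$, your argument shows the lemma is literally an instance of \Cref{cor:talagrand-linf}, with threshold $\gamma/m$ and distance $\eps-\gamma$. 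So no iteration or residual distributions are needed, and you get the same exponent $(\eps-\gamma)^2m/(4\gamma t)$ as the paper. The paper's route has the (unused) benefit of exhibiting the hitting set explicitly. Yours is shorter, more modular, and explains directly why $m$ helps: the weights are capped at $\gamma/m$.

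One small correction: your remark that the case $\eps\le\gamma$ is trivial is only right for $\eps=\gamma$. For $\eps<\gamma$ the displayed inequality is actually false. Take $\eps=0$, $t=1$, $\mcD_x$ uniform on singletons, $\gamma=1/2$, $m=\lfloor N/4\rfloor$, and $\Pr[A]=1/2$. Every $\mcD_x$ is then spread, so the right-hand side is $\exp(-\Omega(N))$, while the left-hand side is $1/2$. The statement should therefore be read with the hypothesis $\gamma<\eps$, as in the formal version \Cref{lem:dist-to-hit}. Under that hypothesis your proof is complete.
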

\violet{\Cref{lem:dist-to-hit-overview} recovers~\Cref{cor:Talagrand-overview} when $m=1$. We note the tension between the two terms on the RHS: As $m$ grows, the first becomes smaller whereas the second becomes larger.}

\violet{With~\Cref{lem:dist-to-hit-overview} in hand, our proof of~\Cref{thm:rsquared-overview} shares the same overall structure as that of~\Cref{lem:warmup-overview}. For each round $r\in [d]$, we consider a set of bad inputs $\Bad^{(r)} \sse \zo^N$, now defined as those whose distributions $\mathcal{D}^{(r)}_x$ are $(m_r,\gamma_r)$-hit. We argue inductively, using~\Cref{lem:dist-to-hit-overview}, that if $\Bad^{(1)},\ldots,\Bad^{(r-1)}$ are small then $\Bad^{(r)}$ is small as well. Having shown that most inputs fall in $\Good \coloneqq \zo^N \setminus \bigcup_r \Bad^{(r)}$, we then apply~\Cref{lem:dist-to-hit-overview} once more to conclude that $\Pr[\Acc]\cdot \Pr[\Rej\cap\Good]$ is small. 

}

\subsection{Implications for random-oracle separations: Proofs of~\Cref{thm:random oracle unrelativized equivalence,thm:QNC equivalence}}

\label{sec:implications for random oracle separations technical overview}

\violet{Separations in the query model yield relativized separations, but the converse does not necessarily hold: simulations in the query model do not necessarily yield relativized simulations. In particular, the simulation conjecture alone does not imply that $\mathsf{BPP}^{\bmcO} =  \mathsf{BQP}^{\bmcO}$ for a random oracle $\bmcO$. While it guarantees that every quantum Turing machine making $\poly(n)$ queries to an oracle has an approximator that only makes $\poly(n)$ classical queries, it does not guarantee that this approximator can be implemented using only $\poly(n)$ classical {\sl time steps}.

To prove~\Cref{thm:random oracle unrelativized equivalence}, we show that the specific approximators that we construct via~\Cref{conj:our conjecture} do admit time-efficient implementations assuming $\PromiseBPP= \PromiseBQP$. To sketch why, first recall that~\Cref{conj:our conjecture} implies the simulation conjecture via a regularity lemma, which we now state formally:}
\begin{restatable}[Regularity lemma]{lemma}{regularityoverview}
\label{lem:regularity lemma}
Let $\mcA$ be a $t$-query quantum algorithm. There is a classical
decision tree $T$ of depth $\poly(t,1/\eta,\log(1/\delta))$ such that
the restricted algorithm $\mcA_{\bpi}$ is $\eta$-regular  with probability at least 
$1-\delta$ with respect to a random root-to-leaf path $\bpi\sim T$ induced by a uniform random input $\bx\sim \zo^N$.
\end{restatable}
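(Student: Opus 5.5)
We build $T$ greedily, in the style of the standard ``query heavy variables'' argument (as in \cite{DFKO07} and Aaronson--Ambainis), and use total query weight as a potential. For a partial assignment $\rho$ (a node of the tree), let $\mcA_\rho$ denote the algorithm with the coordinates in $\rho$ hardwired. Concretely, the oracle answers on fixed coordinates are replaced by the fixed bits, so queries there no longer count; formally, the query weight on fixed coordinates is simply ignored. The remaining $N-|\rho|$ free coordinates are uniform. Define the potential
\[
\Phi(\rho)\coloneqq \Ex_{\bx\sim \rho}\Big[\sum_{i\notin \rho} W_i(\bx)\Big],
\]
the expected query weight on the free coordinates. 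Since the total query weight is always $t$, we have $\Phi\le t$ at the root, and $\Phi \ge 0$ everywhere.

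The tree is built as follows. At a node $\rho$, if $\mcA_\rho$ is $\eta$-regular, we stop and make $\rho$ a leaf. Otherwise there is a free coordinate $i$ with $\E_{\bx\sim\rho}[W_i(\bx)]>\eta$, and we query $i$ and branch. The key identity is that the expected potential after branching satisfies
\[
\E_{b}\big[\Phi(\rho\cup\{i\mapsto b\})\big]=\Phi(\rho)-\E_{\bx\sim\rho}[W_i(\bx)]<\Phi(\rho)-\eta .
\]
This holds because the query weights $W_j(x)$ of $\mcA_\rho$, as functions of the input, are unchanged by restriction: fixing $x_i$ just conditions the uniform distribution. Averaging over the branch therefore recovers the same expectations with coordinate $i$ removed from the sum.

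Now follow a random root-to-leaf path $\bpi$ induced by $\bx$. Let $\Phi_k$ be the potential after $k$ steps, frozen once a leaf is reached. Then $\Phi_k+\eta\cdot\#\{\text{non-leaf steps up to }k\}$ is a supermartingale, and since $\Phi\ge 0$ we get $\E[\#\text{steps}]\le t/\eta$. To reach depth $\poly(t,1/\eta,\log(1/\delta))$ rather than just an expectation bound, we need a high-probability statement. Each step's potential change is bounded: by Cauchy--Schwarz-type bounds on $|\E[W_j\mid x_i=b]-\E[W_j]|$, summed over $j$, it is at most $2t$. Azuma's inequality then shows that the probability of survival past $k=C(t^2/\eta^2)\log(1/\delta)$ steps is at most $\delta$. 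Surviving that long forces $\sum(\text{drift})\ge k\eta$ while $\Phi\in[0,t]$, so the martingale part must deviate by at least $k\eta-t$, and with increments of size $O(t)$ this has probability $\exp(-\Omega((k\eta)^2/(kt^2)))\le\delta$. We truncate the tree at depth $k$; the truncated paths, which have probability at most $\delta$, are exactly the paths allowed to be non-regular.

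The main obstacle I expect is making the restriction semantics and the increment bound rigorous. First, one must check that $\mcA_\rho$ is a legitimate $t$-query algorithm whose query weights on free coordinates are exactly those of $\mcA$. This is clear if we keep the original circuit and only randomize the free bits. Second, the per-step change in $\Phi$ must be bounded. Since $0\le\Phi\le t$ always, each increment is trivially at most $t$ in absolute value, so the Azuma step goes through with increments $O(t)$. If a cleaner bound is needed, an alternative is Markov/Chernoff on the number of ``low-progress'' steps, using that the expected decrease is at least $\eta$ at every non-leaf node. Either route gives depth $O(t^2\log(1/\delta)/\eta^2)$, which is $\poly(t,1/\eta,\log(1/\delta))$.
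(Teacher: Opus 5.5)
Your proposal is correct and follows the paper's proof essentially step for step. It uses the same greedy tree that queries any free coordinate whose conditional expected query weight exceeds $\eta$, and the same potential $\sum_i W_i(\pi_j)$, whose conditional expectation drops by exactly the queried weight. It then applies Azuma--Hoeffding to the same compensated martingale with increments at most $2t$; as you observe, the trivial bound $\Phi\in[0,t]$ already controls those increments, which is what the paper uses, so the Cauchy--Schwarz remark is unnecessary.
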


\violet{This tree $T$ is constructed using a simple greedy algorithm: If $\mcA$ is already $\eta$-regular, we are done. Otherwise, there must be at least one variable $i\in [N]$ with query weight larger than $\eta$. We query any such variable at the root of our decision tree and recurse on $\mcA_{x_i=0}$ and $\mcA_{x_i=1}$. We use a martingale argument to show that a random path through this tree will w.h.p.~terminate after not too many steps, and~\Cref{lem:regularity lemma} follows.}

\violet{For our proof of~\Cref{thm:random oracle unrelativized equivalence}, we further need an algorithmic version of this regularity lemma:}

\begin{lemma}[Algorithmic regularity lemma, see \Cref{cor:algorithmic_regularity_lemma} for the formal version]%
\label{lem:algorithmic regularity lemma overview}
\violet{If $\PromiseBPP = \PromiseBQP$, there is an efficient randomized algorithm that takes as input a representation of $\mcA$ as a quantum oracle circuit and on any input $\mcO\in\zo^N$ computes that path $\pi$ in $T$ that $\mcO$ follows.
}
\end{lemma}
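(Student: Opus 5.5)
The algorithm simulates the greedy construction of $T$ from the proof of \Cref{lem:regularity lemma}, one node at a time along the path of the given input $\mcO$. At a node with current restriction $\rho$ (the queried coordinates and their answers so far), it must do two things: decide whether $\mcA_\rho$ is already $\eta$-regular, and if not, name a variable $i$ with $\E_{\bx}[W_i(\bx)]$ noticeably large under $\mcA_\rho$. It then queries $\mcO_i$, extends $\rho$, and recurses. The path length is at most the depth bound $\poly(t,1/\eta,\log(1/\delta))$, so the total cost is that many rounds of this subroutine. The tree is only needed up to approximation: the martingale argument behind \Cref{lem:regularity lemma} still works if each queried variable has weight at least $\eta/2$, and if we stop only when all weights are at most $2\eta$. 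So I would redefine $T$ via this approximate greedy rule, and re-run the martingale argument with constants changed.

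The core step is estimating query weights using a $\PromiseBQP$ oracle. Fix $\rho$, a round/query index $t'$, and a variable $i$. The quantity $w_i^{(t')}(x)$ is the probability that measuring the query register just before query $t'$ yields $i$. Consider the quantum algorithm that does the following: sample a uniformly random completion $\bx$ of $\rho$ using fresh random bits (so it simulates the oracle itself, with no access to $\mcO$); run $\mcA$ on $\bx$ up to query $t'$; measure the query register; and output it. Its output distribution is $\mcD(j)=\E_{\bx}[w_j^{(t')}(\bx)]$, and summing over $t'$ (chosen at random) gives $\E[W_j]/t$. To find a heavy variable, we need to sample from this distribution classically. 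The standard route is to reduce sampling to $\PromiseBQP$ decision problems by bit-by-bit self-reduction. The question ``is the probability that the output lies in prefix set $P$ at least $a$ or at most $a-\epsilon$?'' is a promise problem solvable in $\BQP$ by amplitude estimation/repetition. Under $\PromiseBPP=\PromiseBQP$ it is therefore solvable by a $\BPP$ algorithm, which can be given a random threshold to avoid promise-gap issues. Descending the binary tree of indices $j\in[N]$ this way with $\log N$ queries lets us sample approximately from $\mcD$. Drawing $O(t/\eta\cdot\log(1/\delta'))$ samples, any $j$ with $\E[W_j]\ge\eta$ shows up with frequency about $\eta/t$. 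The empirical frequencies then either certify near-regularity or exhibit a $j$ with weight at least $\eta/2$. Note that the completion $\bx$ is sampled internally, so the $\PromiseBQP$ instance is a fixed circuit described by $(\mcA,\rho)$ with no oracle, and $N$ is exponential only in the input size.

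I expect the main obstacle to be making the derandomized sampling rigorous and consistent. The $\PromiseBPP$ algorithm's answers are only guaranteed on promise instances, so its behavior near the thresholds is arbitrary. To handle this, I would use the standard technique of choosing random thresholds together with the slack in the approximate greedy rule above: it tolerates any $i$ with weight in $[\eta/2,\eta]$. Then every answer consistent with the promise leads to a valid tree. Errors are union-bounded over the $\poly(t,1/\eta,\log(1/\delta))$ nodes on the path, by amplifying each decision to failure probability $\delta/\poly$. Finally, I would check that the path computed corresponds to a single fixed tree up to this small failure probability. That is what \Cref{cor:algorithmic_regularity_lemma} needs downstream.
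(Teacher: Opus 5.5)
\textbf{Gap: the oracle-free circuit is not efficient.} Your quantum circuit ``samples a uniformly random completion $\bx$ of $\rho$ using fresh random bits'' and then runs $\mcA$ on it. But $\bx\in\zo^N$ with $N=2^{\poly(n)}$, while the circuit you hand to the $\PromiseBQP$ oracle must have size $\poly(|C|,1/\eta,\log(1/\delta))$, i.e.\ polylogarithmic in $N$. Because $\mcA$ queries in superposition, the simulated oracle has to be a single consistent function available to every branch. You cannot generate its bits lazily on demand, and writing down its truth table costs $N$ random bits. Your remark that $N$ is exponential only in the input size covers the $\log N$ cost of descending the binary tree over $[N]$, but it does not address this.

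\textbf{The missing ingredient.} This is what the paper's proof supplies. Each quantity you need, $\sum_{i\in P}\Ex_{\bx}[w^{(k)}_i(\bx)]$ for an interval $P$, is the acceptance probability of a $(k-1)$-query circuit: run to the $k$-th pre-query state and measure with the projector onto $P$. It is therefore a multilinear polynomial of degree at most $2t$ in the oracle bits. So its average is unchanged if the uniform oracle is replaced by a $2t$-wise independent hash family. Such a family has an $O(t\log N)$-bit seed and can be evaluated reversibly, in superposition, in $\poly(t,\log N)$ time. The paper then samples seeds classically, estimates each resulting oracle-free circuit via $\PromiseBPP=\PromiseBQP$ plus binary search on the threshold, and averages with Hoeffding. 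Some ingredient of this kind is indispensable; Zhandry-style random-oracle simulation would be an alternative.

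\textbf{The rest of your plan works but is more elaborate than needed.} Once that estimator exists, your approximate bit-by-bit sampling followed by verification can be made to work, but it has to manage error accumulation over $\log N$ levels. The paper avoids this with Goldreich--Levin branch-and-prune on interval sums. It keeps every dyadic interval whose estimated mass exceeds $\tau/2$; there are at most $O(t/\tau)$ per level, since the total mass is at most $t$. It then verifies the surviving singletons against $3\tau/4$. This directly gives the guarantee the robust martingale argument (\Cref{remark:robust-regularity}) needs: return some $i$ of weight at least $\eta/2$ if one of weight above $\eta$ exists, otherwise return $\perp$. No sampling or random thresholds are required; the binary-search estimator already tolerates arbitrary answers inside the promise gap.

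\textbf{Minor points.} You also do not need the computed path to lie in a single fixed tree. For each fixed setting of the internal randomness the algorithm is a deterministic classical query algorithm, and that is all \Cref{cor:algorithmic_regularity_lemma} and its downstream use require. Finally, exclude coordinates already fixed by $\rho$. The restricted circuit $C_\rho$ answers those internally, so their weight is $0$; otherwise your sampler would keep flagging them.
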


\violet{A natural attempt at proving~\Cref{lem:algorithmic regularity lemma overview} would be to estimate the query weight of each variable $i\in [N]$, in search of one with weight at least $\eta$. This would be the root node in the path~$\pi$, and we can then recurse. However, such a search is too time consuming: A $\BQP$ oracle algorithm makes at most $\poly(n) = \polylog(N)$ queries to $\mcO\in \zo^N$, and so our classical simulation has to take at most $\polylog(N)$ classical time steps. We instead give a $\PromiseBQP$ algorithm for estimating {\sl sums} of query weights, $\sum_{i \in S} \Ex_{\bx}[W_i(\bx)]$ for any subset $S\sse [N]$.  With this subroutine in hand, a standard branch-and-prune algorithm (\'a la~\cite{GL89,KM93}) allows us to identify a variable of high query weight with only a logarithmic dependence on $N$.

\Cref{thm:random oracle unrelativized equivalence} then follows easily, %
and the proof of~\Cref{thm:QNC equivalence} is completely analogous.   
}

\violet{\paragraph{Complexity of finding high query weight vs.~high influence variables.} \cite{AA14}'s reduction of the simulation conjecture to their conjecture is also based on a regularity lemma,  where their notion of a regular quantum algorithm is one whose {\sl influences}, rather than query weights, are all low. They also showed that their regularity lemma can be made algorithmic, but under the much stronger assumption that $\mathsf{P} = \mathsf{P}^{\#\mathsf{P}}$ rather than $\mathsf{PromiseBPP} = \mathsf{PromiseBQP}$. The crux of this difference lies in the fact that our subroutine for finding a variable of high query weight,
as discussed above, is in $\PromiseBQP$, whereas~\cite{AA14}'s subroutine for finding one of high influence is in $\mathsf{NP}^{\#\mathsf{P}}$.

This highlights a dual advantage of query weights over influences. On one hand, since query weights upper bound influences (after suitable normalization, see~\Cref{prop:query-weight-to-inf}), any regularity lemma for low query weights also serves as one for low influences. And yet, somewhat paradoxically, our regularity lemma for query weights admits a much more efficient algorithmic implementation than~\cite{AA14}'s one for influences.

}

\section{Preliminaries}

\paragraph{Notation and naming conventions.} We write $[N]$ to denote the set $\{1,2,\ldots,N\}$. We use \textbf{boldface} letters, e.g.~$\bx,\by$, to denote random variables.  Unless otherwise stated, all probabilities are w.r.t.~uniform random $\bx\sim\zo^N$. For a set $A\subseteq \zo^N$, we write $\Pr[A]$ to denote $\Pr_{\bx}[\bx\in A]$. For an event $E$, we use $\Ind[E]$ for the $\zo$-valued indicator of whether $E$ occurs. For $x,y \in \zo^N$ we denote the symmetric difference of $x,y$ as
\begin{equation*}
    \Delta(x,y) = \set{i \in [N] : x_i \neq y_i}.
\end{equation*}
We use calligraphic font for several objects: oracles ($\mcO$), distributions $(\mcD)$, and algorithms $(\mcA)$. We use $\norm{U}$ for the operator norm of a matrix $U$, and $\lone{\ket{\psi}}, \ltwo{\ket{\psi}}, \linf{\ket{\psi}}$ for the standard $p$-norms.

Here and throughout the paper, for a function $f : \zo^N \to [0,1]$ (which we reserve to denote acceptance probabilities of a quantum algorithm) we define the sets:  
\[ \Acc_f\coloneqq\{x:f(x)\geq 2/3\} \quad\text{and}\quad  \Rej_f\coloneqq\{x:f(x)\leq 1/3\}. \]
When $f$ is clear from context, we simply write $\Acc$ and $\Rej$. 

\pparagraph{Distances.}
 For any $x, y \in \zo^N$ and $w \in \R_{\geq 0}^N$, we define
\begin{equation*}
    \dist_w(x,y) \coloneqq \sum_{i \in \Delta(x,y)} w_i = \sum_{i \in [N]} w_i \cdot \Ind[x_i \neq y_i].
\end{equation*}
Similarly, for distribution $\mcD$ over subsets of $[N]$,
\begin{equation*}
    \dist_{\mcD}(x,y) \coloneqq \Prx_{\bS \sim \mcD}[\bS \cap \Delta(x,y) \neq \emptyset].
\end{equation*}
Both of these measures extend in the natural way to a set $A \subseteq \zo^N$,
\begin{equation*}
    \dist_w(x,A) \coloneqq \min_{y \in A}\dist_w(x,y) \quad\text{and}\quad \dist_{\mcD}(x,A) \coloneqq \min_{y \in A} \dist_{\mcD}(x,y).
\end{equation*}

\pparagraph{Concentration inequalities.}
We will use the following concentration inequality due to Talagrand.
\begin{theorem}[Talagrand's convex-distance inequality, {\cite[Thm.~6.1 and Eq.~(6.5)]{Tal96}}]
    \label{thm:Talagrand}
    For any set $A \subseteq \zo^n$ and $t \geq 0$
    \begin{equation*}
        \Pr[A] \cdot \Pr\big[\dt(\bx,A) \geq t\big] \leq e^{-t^2/4},
    \end{equation*}
    where $\dt(\cdot,\cdot)$ is Talagrand's convex distance: %
    \begin{equation*}
        \dt(x,A) \coloneqq \mathop{\sup_{W \in \R^n_{\ge 0}}}_{\ltwo{W}\le 1} \dist_{W}(x,A). %
    \end{equation*}
\end{theorem}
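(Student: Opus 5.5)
The plan is the standard induction-on-dimension argument. Rather than the tail bound directly, I will prove the stronger exponential-moment bound
\[
\Ex_{\bx}\Big[\exp\big(\tfrac14\dt(\bx,A)^2\big)\Big] \le \frac{1}{\Pr[A]}
\]
for every nonempty $A\subseteq\zo^n$. The stated inequality then follows from Markov's inequality, since $\Pr[\dt(\bx,A)\ge t]\le e^{-t^2/4}\,\E[\exp(\dt(\bx,A)^2/4)]$.

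\textbf{Step 1 (dual form of $\dt$).} For $x$ and $A$, let $U_A(x)=\{(\Ind[x_i\ne y_i])_{i\in[n]} : y\in A\}\subseteq\zo^n$ and let $V_A(x)$ be its convex hull. I claim $\dt(x,A)=\min_{\nu\in V_A(x)}\ltwo{\nu}$. The direction $\ge$ holds because $\dist_W(x,A)=\min_{u\in U_A(x)}\langle W,u\rangle=\min_{\nu\in V_A(x)}\langle W,\nu\rangle$, and taking $W=\nu^*/\ltwo{\nu^*}$ for the minimal-norm $\nu^*$ gives, by the projection characterization, $\langle W,\nu\rangle\ge\ltwo{\nu^*}$ for all $\nu\in V_A(x)$. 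Note $\nu^*$ is automatically nonnegative, so this $W$ is admissible. The direction $\le$ is Cauchy--Schwarz: $\langle W,\nu^*\rangle\le\ltwo{\nu^*}$ for any unit $W\ge0$. This is a minimax/separating-hyperplane fact on a finite convex set and is routine.

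\textbf{Step 2 (one-coordinate recursion).} The case $n=1$ is a direct check. For the inductive step, write $x=(x',b)$ with $x'\in\zo^{n-1}$. Set $A_b=\{y':(y',b)\in A\}$ and $B=A_0\cup A_1$.
\begin{itemize}
\item If $\nu\in V_{A_b}(x')$, then $(\nu,0)\in V_A(x)$.
\item If $\mu\in V_{B}(x')$, then $(\mu,1)\in V_A(x)$, since the last coordinate is $\le1$ and we may increase it freely for the norm bound.
\end{itemize}
By convexity, $(\lambda\nu+(1-\lambda)\mu,\,1-\lambda)\in V_A(x)$ for every $\lambda\in[0,1]$. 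Applying convexity of $\ltwo{\cdot}^2$ then gives
\[
\dt(x,A)^2\le \lambda\,\dt(x',A_b)^2+(1-\lambda)\,\dt(x',B)^2+(1-\lambda)^2 .
\]

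\textbf{Step 3 (Hölder and the inductive hypothesis).} Fix $b$ and take the expectation over $x'$. Hölder's inequality with exponents $1/\lambda$ and $1/(1-\lambda)$, together with the inductive hypothesis in dimension $n-1$, gives
\[
\E_{x'}\big[e^{\dt(x,A)^2/4}\big]\le e^{(1-\lambda)^2/4}\Pr[A_b]^{-\lambda}\Pr[B]^{-(1-\lambda)} =\frac{1}{\Pr[B]}\, r_b^{-\lambda}e^{(1-\lambda)^2/4},
\]
where $r_b=\Pr[A_b]/\Pr[B]\in[0,1]$. If $A_b$ is empty, take $\lambda=0$.

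The main obstacle is the calculus lemma
\[
\inf_{\lambda\in[0,1]} r^{-\lambda}e^{(1-\lambda)^2/4}\le 2-r \qquad\text{for all } r\in(0,1].
\]
I would take $\lambda=1+2\ln r$ when $r\ge e^{-1/2}$ and $\lambda=0$ otherwise, and then verify the resulting one-variable inequality by comparing logarithms and derivatives. This is exactly where the constant $1/4$ is forced.

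\textbf{Step 4 (combine).} Averaging over $b\in\zo$ gives
\[
\E\big[e^{\dt(\bx,A)^2/4}\big]\le\frac{1}{\Pr[B]}\Big(2-\frac{r_0+r_1}{2}\Big).
\]
Set $u=(r_0+r_1)/2=\Pr[A]/\Pr[B]$. The elementary bound $u(2-u)\le1$ then yields that the right-hand side is at most $1/(u\Pr[B])=1/\Pr[A]$, which closes the induction.
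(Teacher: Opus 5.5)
Your proposal is correct and is exactly Talagrand's induction-on-dimension proof: the min-norm dual form of $\dt$, the one-coordinate convex-combination bound, H\"older with the calculus lemma $\inf_{\lambda\in[0,1]} r^{-\lambda}e^{(1-\lambda)^2/4}\le 2-r$, and $u(2-u)\le 1$. The paper does not prove this statement itself but simply quotes it from Talagrand, whose argument is the one you give. The only slip is in Step 2: with your exact-indicator definition of $U_A(x)$, the vector $(\mu,1)$ need not itself lie in $V_A(x)$, but you do get some $(\mu,s)\in V_A(x)$ with $s\in[0,1]$, which is all the norm bound needs (alternatively, use Talagrand's upward-closed version of $U_A(x)$, which leaves $\dt$ unchanged).
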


We will also use the standard Azuma--Hoeffding bound (see e.g.~the textbook \cite{MUbook}).
\begin{theorem}[Azuma--Hoeffding inequality]
    \label{thm:azuma}
    Let $\bZ_0, \ldots, \bZ_D$ be a martingale satisfying $\abs*{\bZ_{i} - \bZ_{i-1}} \leq t$ almost surely for all $i \in [D]$. Then,
    \begin{equation*}
        \Pr[\bZ_D - \bZ_0 \geq \eps] \leq \exp\paren*{-\frac{\eps^2}{2Dt^2}}.
    \end{equation*}
\end{theorem}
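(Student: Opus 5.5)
The plan is the standard exponential-moment (Chernoff) method, applied one martingale increment at a time. Let $\mathcal{F}_i$ be the filtration with respect to which $\bZ_0,\ldots,\bZ_D$ is a martingale; if none is given, take the natural filtration generated by $\bZ_0,\ldots,\bZ_i$. Write the increments as $\bX_i \coloneqq \bZ_i - \bZ_{i-1}$. By the martingale property, $\E[\bX_i \mid \mathcal{F}_{i-1}] = 0$, and by hypothesis $|\bX_i| \le t$ almost surely.

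\textbf{Step 1: conditional Hoeffding lemma.} The key step, and the only one with any real content, is to show that for every $\lambda \ge 0$,
\[
\E\bigl[e^{\lambda \bX_i} \bigm| \mathcal{F}_{i-1}\bigr] \le e^{\lambda^2 t^2/2}.
\]
I would argue by convexity. Since $x \mapsto e^{\lambda x}$ is convex, on $[-t,t]$ it lies below its chord:
\[
e^{\lambda x} \le \tfrac{t-x}{2t}\, e^{-\lambda t} + \tfrac{t+x}{2t}\, e^{\lambda t}.
\]
Taking conditional expectations and using $\E[\bX_i \mid \mathcal{F}_{i-1}] = 0$, the linear terms in $x$ drop out. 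This leaves the bound $\cosh(\lambda t)$. Finally, compare Taylor series term by term: $(2k)! \ge 2^k k!$, so
\[
\cosh(\lambda t) = \sum_{k\ge 0} \frac{(\lambda t)^{2k}}{(2k)!} \le \sum_{k\ge 0} \frac{(\lambda^2 t^2/2)^k}{k!} = e^{\lambda^2 t^2/2}.
\]
I expect this step to be the main (mild) obstacle. Because the increments are symmetric about zero here, the chord bound is exactly $\cosh(\lambda t)$, and no general Hoeffding-lemma calculus is needed.

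\textbf{Step 2: iterate via the tower property.} Since $\bZ_{D-1} - \bZ_0$ is $\mathcal{F}_{D-1}$-measurable, Step 1 gives
\[
\E\bigl[e^{\lambda(\bZ_D - \bZ_0)}\bigr]
= \E\Bigl[e^{\lambda(\bZ_{D-1} - \bZ_0)}\, \E\bigl[e^{\lambda \bX_D} \bigm| \mathcal{F}_{D-1}\bigr]\Bigr]
\le e^{\lambda^2 t^2/2}\, \E\bigl[e^{\lambda(\bZ_{D-1} - \bZ_0)}\bigr].
\]
Inducting down to $\bZ_0$ yields
\[
\E\bigl[e^{\lambda(\bZ_D - \bZ_0)}\bigr] \le e^{D\lambda^2 t^2/2}.
\]

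\textbf{Step 3: Markov and optimize.} By Markov's inequality applied to $e^{\lambda(\bZ_D - \bZ_0)}$,
\[
\Pr[\bZ_D - \bZ_0 \ge \eps] \le e^{-\lambda \eps + D\lambda^2 t^2/2}.
\]
Choosing $\lambda = \eps/(Dt^2)$ minimizes the exponent and gives $\exp\bigl(-\eps^2/(2Dt^2)\bigr)$, as claimed. The degenerate case $t = 0$ is trivial: then $\bZ_D = \bZ_0$ almost surely, and the bound holds for every $\eps > 0$.
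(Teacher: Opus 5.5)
Your argument is correct and is the standard exponential-moment proof: the chord bound on $[-t,t]$ gives $\cosh(\lambda t)\le e^{\lambda^2t^2/2}$, you iterate with the tower property, and finish with Markov at $\lambda=\eps/(Dt^2)$. The paper gives no proof of this statement and simply cites the textbook \cite{MUbook}, whose proof is essentially yours; your argument implicitly assumes $\eps\ge 0$, which is needed for $\lambda\ge 0$ (the stated bound fails for $\eps<0$) and is the only case the paper uses.
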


\pparagraph{Restrictions.} A restriction of $\zo^N$ is denoted $\pi \in \{0,1,\star\}^N$. Its size, denoted $\abs*{\pi}$ is the number of $i \in [N]$ for which $\pi_i \neq \star$. %
For an input $x \in \zo^N$ we use $x_{\pi}$ to denote the restricted input,
\begin{equation*}
    (x_\pi)_i = \begin{cases}
       \pi_i&\text{if $\pi_i \neq \star$}\\
        x_{i}&\text{otherwise.}
    \end{cases}
\end{equation*}
We say $x$ is \textsl{consistent} with $\pi$ (or vice versa) if $x_{\pi} = x$. These restrictions can be appended to using the notation, for any $i \in [N], b \in \zo$
\begin{equation*}
    (\pi \cup \set{x_i = b})_j = \begin{cases}
        b&\text{if }j=i\\
        \pi_j&\text{otherwise.}
    \end{cases}
\end{equation*}
For a function $f:\zo^N \to [0,1]$ or algorithm $A:\zo^N \to \zo$ we use $f_{\pi}$ and $A_{\pi}$ to denote the function $x \mapsto f(x_{\pi})$ and algorithm $x \mapsto A(x_{\pi})$ respectively.

\pparagraph{Complexity classes.}
We will define many complexity classes of promise problems.
\begin{definition}[Promise problems]
    A promise problem $L \coloneqq  (L_{\YES}, L_{\NO})$ is specified by two subsets $L_{\YES}, L_{\NO} \subseteq \zo^{\star}$ satisfying $L_{\YES} \cap L_{\NO} = \emptyset$.
\end{definition}
\begin{definition}[Complexity classes]
    Let $L \coloneqq  (L_{\YES}, L_{\NO})$. Then,
    \begin{enumerate}
        \item $\PromiseBPP$ contains $L$ iff there is a randomized polynomial time Turing machine $A$ satisfying, for all inputs $x$,
        \begin{align*}
        x \in L_{\YES} &\implies \Pr_A[A(x) = 1] \geq 2/3,\\
        x \in L_{\NO} &\implies \Pr_A[A(x) = 1]\leq 1/3.
        \end{align*}
    \item $\PromiseBQP$ is identical to $\PromiseBPP$ except $A$ is allowed to be a quantum Turing machine.
    \item $\PromiseQuasi$ is identical to $\PromiseBPP$ except $A$ is allowed to run in time $n^{\polylog(n)}$.
    \item $\PromiseQNC$ contains $L$ iff there is an efficient classical Turing machine $A$ that on input length~$n$ writes down a polynomial-size polylogarithmic-depth quantum circuit $C_n$ such that, for all $n$ and inputs $x$ of length $n$,
    \begin{align*}
        x \in L_{\YES} &\implies \Pr[C_{n}(x) = 1] \geq 2/3,\\
        x \in L_{\NO} &\implies \Pr[C_{n}(x) = 1]\leq 1/3.
        \end{align*}
    \item $\PromiseHeurBPP$ contains $L$ iff for every constant $c$, there is a randomized polynomial time Turing machine $A$ satisfying, for all $n \in \N$,
    \begin{equation*}
        \Prx_{\bx \sim \Unif(\zo^n)}[A\text{ is incorrect on }\bx] \leq \frac{1}{n^c}.
    \end{equation*}
    where $A$ is incorrect on $x$ if $x \in L_{\YES}$ and $\Pr_A[A(x) = 1] < 2/3$ or $x \in L_{\NO}$ and $\Pr_A[A(x) = 1] > 1/3$.
    \end{enumerate}
\end{definition}

\subsection{Quantum query algorithms}
Let $x \in  \zo^N $ be an $N$-bit string $x_1\ldots x_N$. In the following, we consider the standard phase oracle model,
\begin{align*}
    \mcO_x:\ket{i} \ket{b} \mapsto (-1)^{x_ib}\ket{i} \ket{b} \qquad \text{for} \qquad i \in [N], b\in\zo
\end{align*}
which in general may act on superpositions.

For a quantum query algorithm we refer to both $\mcO_x$ and $x \in \zo^N$ as the \textit{oracle} and sometimes the algorithm's \textit{input}.
We say that a quantum algorithm makes a query to $x$ \textit{or} $\mcO_x$ whenever it applies $\mcO_x$. We will also say that a quantum algorithm makes a $t$-parallel query to $x$ \textit{or} $\mcO_x$ whenever it applies $\mcO_x^{\otimes t}$.

\begin{figure}[h]
  \centering
  \begin{quantikz}
    \lstick{Query register: $\ket{0}$} &[-2mm] \gate[wires=2]{U_1} \slice[style=black]{$\ket{\psi_1}$} & \gate{\mcO_x^{\otimes t}} & \gate[wires=2]{U_2} \slice[style=black]{$\ket{\psi_2(x)}$} & \ \ldots\ \qw & \gate[wires=2]{U_d} \slice[style=black]{$\ket{\psi_d(x)}$} & \gate{\mcO_x^{\otimes t}}  \slice[style=black]{$\ket{\Psi(x)}$} & \meter{}  \rstick{\text{Output}} \\
    \lstick{Ancilla register: $\ket{0}$} & & \qw & \qw & \ \ldots\ \qw & \qw & \qw & \meter{}
  \end{quantikz}
  \caption{A $t$-parallel $d$-round quantum query algorithm.}
  \label{fig:RTQuantumAlgo}
\end{figure}

\begin{definition}[$d$-round $t$-parallel quantum query algorithm (\Cref{fig:RTQuantumAlgo})]\label{def:RroundTparallelQalg}
    For any quantum oracle $\mcO_x$ acting on a Hilbert space $\mcH$, a $t$-parallel $d$-round quantum query algorithm is given by a sequence of unitaries
    $U_1,\ldots,U_d$, and a measurement operator $M_0$  
    acting on the tensor Hilbert space $\mcH^{\otimes t}$ and possibly ancillas. 
    For each $r \in [d]$, the intermediate state of the algorithm before the $r^{th}$ query is given by 
    \begin{align*}
        \psirx \coloneqq
U_{r}(\mcO_x^{\otimes t})\cdots U_2(\mcO_x^{\otimes t})U_1\ket{0}.
    \end{align*}
    The final state of the algorithm is $ \ket{\Psi(x)} \coloneqq (\mcO_x^{\otimes t})\psiRx$ and its acceptance probability is given by $\norm{M_0 \ket{\Psi(x)}}^2$ for a measurement operator $M_0$.
\end{definition}

\subsection{Query weights}
We use the hybrid method pioneered by \cite{BBBV97}. Roughly speaking, that method shows that if there are inputs $x,y$ for which the acceptance probability of $x$ is much more than $y$, then $x$ and $y$ must differ on indices with high ``query weight." In the case of sequential algorithms (i.e. $t=1$) in which the method has typically been applied, the query weight of index $i$ by a state $\ket{\psi}$ is defined as $w_i = \bra{\psi} |i\rangle\langle i| \otimes \Id \ket{\psi}$ where $ |i\rangle\langle i| \otimes \Id$ is the projector onto the $i$-th query index. 

We will use two natural generalizations of query weights for $t$-parallel algorithms. In the first, we allocate weight to an index $i$ if it was one of the $t$ indices queried. Let $\Pi_i^{(t)}\coloneqq \Id - \bigotimes_{j=1}^t (\Id - E_i^{(j,t)})$ be the projector onto the subspace where at least one of the $t$ queries is $i$, and where $E_i^{(j,t)}$ is the projector onto index $i$ in the $j$-th register.
\begin{definition}[(Index) query weight]\label{def:query_weight_t_parallel}
The query weight of an index $i \in [N]$ in round $r \in [d]$ on oracle input $x$ is defined as
\begin{equation*}
     w_i^{(r,t)}(x) \coloneqq \bpsirx \Pi^{(t)}_i \psirx.
\end{equation*}
The \textit{total (index) query weights} over $d$ rounds   is,
\begin{equation*}
    W_i^{(t)}(x) \coloneqq \sum_{r \in [d]} w_i^{(r,t)}(x). \label{eq:total_qweight}
\end{equation*}
\end{definition}
Note that the sum of the $t$-parallel query weights is bounded by $t$, since 
\begin{align}
\sum_{i \in [N]}\bra{\Psi}\Pi_i^{(t)}\ket{\Psi} \le \sum_{j=1}^t \sum_{i \in [N]}\bra{\Psi}E_i^{(j,t)}\ket{\Psi} = \sum_{j=1}^t \bra{\Psi}\Id\ket{\Psi} = t .\label{eq:t_parallel_index_query_weights_sum_to_t}
\end{align}
In the second generalization, we look at the query weight of an entire set $S \subseteq [N]$. Let $\Pi_S^{(t)}$ be the orthogonal projector onto the subspace where the set of distinct indices queried across all $t$ registers is exactly $S$, that is 
\begin{align*}
    \Pi_S^{(t)} = \sum_{\substack{(i_1, \dots, i_t) \in [N]^t \\ \{i_1, \dots, i_t\} = S}} \bigotimes_{j=1}^t \ket{i_j}\bra{i_j}.
\end{align*} 
\begin{definition}[(Set) query weight]\label{def:subset_query_weight}
The query weight of a set $S \subseteq [N]$ in round $r \in [d]$ on oracle input $x$ is defined as
\begin{equation*}
     w_S^{(r,t)}(x) \coloneqq \bpsirx \Pi^{(t)}_S \psirx.
\end{equation*}
The \textit{total (set) query weight} over all $d$ rounds is, 
\begin{equation*}
    W_S^{(t)}(x) \coloneqq \sum_{r \in [d]} w_S^{(r,t)}(x).\label{eq:total_subset_qweight} 
\end{equation*}
\end{definition}

In the following, we drop the superscript $t$ and write $w_i^{(r)}(x)$, $W_i(x)$, $w_S^{(r)}(x)$ and $W_S(x)$ when clear from context.

Since the subset projectors $\{\Pi_S\}_{S \subseteq [N]}$ are mutually orthogonal and cover the entire Hilbert space of the query registers, $\sum_{S \subseteq [N]} \Pi_S = \Id$.  Consequently, for any query state $\ket{\Psi}$, the sum of the exact subset query weights $w_S$ over all possible subsets $S$ is exactly $1$. Additionally, since there are at most $t$ distinct indices that can be queried at once, $\Pi_S = 0$ whenever $|S| > t$, and thus,
\begin{equation*}
    \sum_{S \subseteq [N] } w_S= \sum_{\substack{S \subseteq [N] \\ |S|\le t}} w_S = \sum_{\substack{S \subseteq [N] \\ |S|\le t}}  \bra{\Psi} \Pi_S \ket{\Psi}= \bra{\Psi} \sum_{\substack{S \subseteq [N] \\ |S|\le t}}  \Pi_S \ket{\Psi}= \bra{\Psi} \Id  \ket{\Psi} = 1.
\end{equation*}
It is thus natural to view $w_S$ as a distribution over sets $S \subseteq [N]$ of size $|S| \le t$. Throughout this paper we denote this distribution by $\mcD$ such that 
\begin{align*}
    \Prx_{\bS \sim \mcD_x^{(r)}}[\bS = S] = w_S^{(r)}(x).
\end{align*}

\subsection{Implications of the hybrid method}\label{sec:hybrid-imp}

We next turn to several corollaries of the hybrid method (\Cref{prop:hybrid_method}) that we use directly in the following sections.

First, we note that the query weights give rise to a natural notion of distance between quantum states which we will see bounds the progress of each query in the hybrid argument. For $x,y \in \zo^N$, the distance between $x$ and $y$ weighted by index query weights $\{w_i\}_{i \in [N]}$ is 
    \begin{align*}
        \dist_{w}(x,y) \coloneqq \sum_{i \in \Delta(x,y)} w_i
    \end{align*}
For $x,y \in \zo^N$, the distance between $x$ and $y$ given by distributions $\mcD$ induced by subset query weights $\set{w_S}_{S \subseteq [N]}$ is,
    \begin{align}
        \dist_{\mcD}(x,y) \coloneqq \Prx_{\bS \sim \mcD}[\bS \cap \Delta(x,y) \neq \emptyset] \label{def:weighted_dist_subset}
    \end{align}

The relationship between weighted distance by subset query weights and that of index query weights is that $\dist_{\mcD_x^{(r)}}(x,y)$ is a tighter measure of progress than $ \dist_{w^{(r)}(x)}(x,y) $ at each round $r \in [d]$:
\begin{align}
    \dist_{w^{(r)}(x)}(x,y) \geq \dist_{\mcD^{(r)}_x}(x,y) \qquad \text{and} \qquad
   \dist_{w^{(r)}(y)}(x,y) \geq \dist_{\mcD^{(r)}_y}(x,y)\label{eq:query_weights_vs_differentiating_set}.
\end{align}
This is due to the fact that any subset $S$ satisfying $S \cap \Delta(x,y) \neq \emptyset$ must contain some $i \in \Delta(x,y)$, and thus subset projectors (\Cref{def:subset_query_weight}) are majorized by the $t$-parallel index projectors (\Cref{def:query_weight_t_parallel}),
\begin{equation*}
    \sum_{i \in \Delta(x,y)} \Pi_{i} \succeq \sum_{\substack{S \,:\, S \cap \Delta(x,y) \neq \emptyset}} \Pi_{S}.
\end{equation*}

The hybrid method naturally imposes distinguishability constraints on both the final and intermediate quantum states of the algorithm. The hybrid method uses query weights as a tool for bounding these states. However, since the query weights are themselves derived from the intermediate states, we may equally regard them as \emph{representations} of those states---and as such, they must obey the same progress constraints. 
Concretely, for a \textit{binary test}  $\phi:\lset{S}_{S\subseteq [N]} \rightarrow \zo$, the sum of subset query weights indicated by the test, $\sum_{S \in \supp(\phi)} w_S^{(r)}(x) = \Ex_{\bS \sim \mcD^{(r)}_x}[\phi(\bS)]$ can be viewed as the output of an $r-1$-round quantum algorithm whose final measurement is $\sum_{S \in \supp(\phi)} \Pi_S$ and acceptance probability for each $x \in \zo^N$ given by,
\begin{equation*}
    \pnorm{2}{\sum_{S \in \supp(\phi)} \Pi_S \psirx}^2 = \sum_{S \in \supp(\phi)} w_S^{(r)}(x) = \Ex_{\bS \sim \mcD^{(r)}_x}[\phi(\bS)].
\end{equation*}

With this perspective, we ask whether the distributions $\mcD^{(r)}_x$ and $\mcD^{(r)}_y$ can be distinguished by a \textit{binary test} $\phi:\lset{S}_{S\subseteq [N]} \rightarrow \zo$. We say that $\phi$ distinguishes $\mcD^{(r)}_x$ from $\mcD^{(r)}_y$ with threshold $\gamma$ if
\begin{equation}
    \sum_{S \in \supp(\phi)} w_S^{(r)}(x) =\Ex_{\bS \sim \mcD^{(r)}_x}[\phi(\bS)] \geq \gamma \quad\quad\text{and}\quad\quad \sum_{S \in \supp(\phi)} w_S^{(r)}(y) =\Ex_{\bS \sim \mcD^{(r)}_y}[\phi(\bS)] \leq \gamma/2.
\end{equation}
In this language, we have the following corollary of the hybrid method (\Cref{prop:hybrid_method}).
\begin{corollary}\label{cor:quantum-model}
    Consider a $d$-round $t$-parallel query algorithm with intermediate pre-query states $\psix{r}$ for each $r \in [d]$ and final state $\ket{\Psi(x)}$. We use the notation from above to denote query weights $w_i^{(r)}(x)$, distributions $\mcD_x^{(r)}$ induced by subset query weights $w_S^{(r)}(x)$.  
    \begin{enumerate}[label=(\roman*)]
        \item \emph{Initial condition.} \label{item:inital_condition_cor}
        \begin{align*}
            w_i^{(1)}(x) =w_i^{(1)}(y)&&  \forall i \in [N], x,y \in \zo^N,\\
            w_S^{(1)}(x) =w_S^{(1)}(y) && \forall S \subseteq [N], x,y \in \zo^N 
        \end{align*}
        \item \emph{Progress evolution.} \label{item:progress_evolution_cor} For any $x\neq y \in \zo^N$, round $r \in [d]$, and threshold $\gamma$, if there is some test $\phi:\lset{S}_{S\subseteq [N]} \rightarrow \zo$ for which $\Ex_{\bS \sim \mcD^{(r)}_x}[\phi(\bS)] \geq \gamma$ and $ \Ex_{\bS \sim \mcD^{(r)}_y}[\phi(\bS)] \leq \gamma/2$ then
            \begin{equation}
            \label{eq:dist-bound-layer}
            \sum_{r' \in [r-1]}\sqrt{\dist_{w^{(r')}(x)}(x,y)} \ge \sum_{r' \in [r-1]}\sqrt{\dist_{\mcD_x^{(r')}}(x,y)} \geq \Omega(\sqrt{\gamma}).
        \end{equation} 
        In particular, applying this with $\phi(S) \coloneqq \Ind[i \in S]$, if $w_i^{(r)}(x) \geq \gamma$ and  $w_i^{(r)}(y) \leq \gamma/2$ then the same constraint holds.
        \item\label{item:distinguishibility_constraint_cor} \emph{Distinguishability constraint.} For any $x\neq y \in \zo^N$, if a measurement $M_0$ distinguishes the final states such that $f(x) = \pnorm{2}{M_0 \ket{\Psi(x)}}^2$ and $f(y) = \pnorm{2}{M_0 \ket{\Psi(y)}}^2$, then 
        \begin{equation}
             \ltup{2 \sum_{r \in [d]}\sqrt{\dist_{w^{(r)}(x)}(x,y)}}^2 \geq \ltup{2 \sum_{r \in [d]}\sqrt{\dist_{\mcD_x^{(r)}}(x,y) }}^2\ge \ltup{\sqrt{f(x)} - \sqrt{f(y)}}^2.\label{eq:final_condition_implies_total_progress_2}
        \end{equation}
        
    \end{enumerate}
\end{corollary}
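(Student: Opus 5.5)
The plan is to derive all three items from a single telescoping hybrid bound on the distance between intermediate states. Item \ref{item:inital_condition_cor} is immediate: $\ket{\psi_1(x)} = U_1\ket{0}$ contains no oracle call, so it does not depend on $x$. Every weight $w_S^{(1)}(x)$ and $w_i^{(1)}(x)$ is a quadratic form of this fixed state, so these weights do not depend on $x$ either.

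\textbf{Core hybrid bound.} Write $D_r(x,y) \coloneqq \ket{\psi_r(x)} - \ket{\psi_r(y)}$ for $r \le d$, and $D_{d+1}(x,y) \coloneqq \ket{\Psi(x)} - \ket{\Psi(y)}$. Inserting and removing $\mcO_y^{\otimes t}\ket{\psi_r(x)}$ gives
\begin{equation*}
D_{r+1} = U_{r+1}\big[(\mcO_x^{\otimes t} - \mcO_y^{\otimes t})\ket{\psi_r(x)} + \mcO_y^{\otimes t} D_r\big].
\end{equation*}
For $r = d$ there is no $U_{d+1}$, so it is replaced by the identity. Since the unitaries preserve norms, induction yields
\begin{equation*}
\ltwo{D_r} \le \sum_{r'<r} \ltwo{(\mcO_x^{\otimes t}-\mcO_y^{\otimes t})\ket{\psi_{r'}(x)}}.
\end{equation*}
The key claim is that the operator $\mcO_x^{\otimes t}-\mcO_y^{\otimes t}$ commutes with every $\Pi_S$ and vanishes on the range of $\Pi_S$ whenever $S\cap\Delta(x,y)=\emptyset$. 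Indeed, on a basis state $\ket{i_1,b_1}\cdots\ket{i_t,b_t}$ with $\{i_1,\dots,i_t\}=S$, the two oracles apply the phases $(-1)^{\sum_j x_{i_j}b_j}$ and $(-1)^{\sum_j y_{i_j}b_j}$. These phases agree when $x$ and $y$ agree on $S$. On the remaining blocks the operator has norm at most $2$. Hence
\begin{equation*}
\ltwo{(\mcO_x^{\otimes t}-\mcO_y^{\otimes t})\ket{\psi_{r'}(x)}} \le 2\Big(\sum_{S\cap\Delta\neq\emptyset} w_S^{(r')}(x)\Big)^{1/2} = 2\sqrt{\dist_{\mcD_x^{(r')}}(x,y)}.
\end{equation*}
The comparison with $\dist_{w^{(r')}(x)}$ then follows termwise from \eqref{eq:query_weights_vs_differentiating_set}.

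\textbf{Item \ref{item:distinguishibility_constraint_cor}.} Take $M_0$ to be a projector, or more generally a contraction. By the reverse triangle inequality,
\begin{equation*}
\abs*{\sqrt{f(x)}-\sqrt{f(y)}} = \abs*{\ltwo{M_0\ket{\Psi(x)}}-\ltwo{M_0\ket{\Psi(y)}}} \le \ltwo{D_{d+1}}.
\end{equation*}
Applying the core bound with $r=d+1$ and squaring gives \eqref{eq:final_condition_implies_total_progress_2}.

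\textbf{Item \ref{item:progress_evolution_cor}.} Let $P_\phi \coloneqq \sum_{S\in\supp(\phi)}\Pi_S$. This is a projector, since the $\Pi_S$ are mutually orthogonal. The same reverse triangle inequality gives
\begin{equation*}
\sqrt{\gamma}-\sqrt{\gamma/2} \le \ltwo{P_\phi\ket{\psi_r(x)}}-\ltwo{P_\phi\ket{\psi_r(y)}} \le \ltwo{D_r} \le 2\sum_{r'<r}\sqrt{\dist_{\mcD_x^{(r')}}(x,y)}.
\end{equation*}
This is the desired $\Omega(\sqrt\gamma)$ bound, with constant $(1-1/\sqrt2)/2$. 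The case $r=1$ is vacuous: by item \ref{item:inital_condition_cor}, no test can satisfy the hypothesis when $\gamma>0$. The singleton case $\phi(S)=\Ind[i\in S]$ is a special case, because $\sum_{S\ni i}\Pi_S=\Pi_i$.

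\textbf{Main obstacle.} The only step requiring real care is the block-vanishing claim for $\mcO_x^{\otimes t}-\mcO_y^{\otimes t}$. It needs the phase-oracle convention with the $b$ register and the fact that the registers act independently, so that both the commutation with $\Pi_S$ and the norm bound of $2$ hold. Once this is in place, everything else is the standard BBBV telescoping argument.
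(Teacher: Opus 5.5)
Your proposal is correct and is essentially the paper's argument. It uses the same BBBV telescoping bound on $\|\psi_r(x)-\psi_r(y)\|$, with each oracle difference controlled by $2\|\Pi_{\cap\Delta(x,y)}\psi_{r'}(x)\|$. You apply it to $M_0$ for item (iii) and to the projector $\sum_{S\in\mathrm{supp}(\phi)}\Pi_S$ for item (ii), which is exactly the paper's step of applying (iii) to the truncated $(r-1)$-round algorithm.

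The only difference is minor: you get $|\sqrt{f(x)}-\sqrt{f(y)}|\le\|\Psi(x)-\Psi(y)\|$ directly from the reverse triangle inequality for the contraction $M_0$. The paper instead goes through monotonicity of fidelity to a Hellinger-type bound and then discards the $(\sqrt{1-f(x)}-\sqrt{1-f(y)})^2$ term. You also supply the block-diagonal justification of the oracle-difference bound, which the paper leaves implicit.
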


We give the proof in the appendix (\Cref{proof:test_implies_distinguishing}).

\section{Warm-up: Proof of~\Cref{lem:warmup-overview}}
\label{sec:2^r}

In this section we prove the following: 

\begin{lemma}[\Cref{lem:warmup-overview} with a dependence on $N$]  
\label{lem:regular implies biased}
Let $\mathcal{A}$ be a $t$-parallel $d$-round quantum algorithm and $f : \zo^N \to [0,1]$ denote its acceptance probabilities. If $\mathcal{A}$ is $\eta$-regular where 
\[ \eta \le (dt\ln(N/\delta))^{-\Omega(2^d)},\] 
then $\min\{\Pr[\Acc_f],\Pr[\Rej_f]\}\le \delta$. 
    \end{lemma}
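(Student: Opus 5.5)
The plan is to prove the contrapositive in two stages. First, from $\eta$-regularity, we show that most inputs are $\Good$, i.e.\ $\|w^{(r)}(\bx)\|_\infty\le\gamma_r$ for every round $r$. Second, we apply \Cref{cor:Talagrand-overview} (via \Cref{cor:talagrand-linf}) together with the distinguishability constraint \Cref{cor:quantum-model}\ref{item:distinguishibility_constraint_cor}. We fix thresholds $\gamma_1\le\gamma_2\le\cdots\le\gamma_d\le\gamma$ satisfying $\gamma_{r-1}\le c\,\gamma_r^2/(r^4 t\ln(dN/\delta))$. Chaining these gives $\gamma_1\approx (dt\ln(N/\delta))^{-O(2^d)}$. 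We then take $\eta\le\gamma_1/4$, so that the expectation hypothesis $\E[w_i^{(r)}(\bx)]\le\eta\le\gamma_r/4$ holds at every round.

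\textbf{Step 1: tail bounds, round by round.} Round $1$ is trivial. By \Cref{cor:quantum-model}\ref{item:inital_condition_cor}, $w^{(1)}$ is constant, and $\eta$-regularity gives $\|w^{(1)}\|_\infty\le\eta\le\gamma_1$, so $\Bad^{(1)}=\emptyset$.

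For $r\ge2$ and fixed $i$, we prove the inductive step \Cref{lem:query-weights-tail-overview}. Let $A=\{y: w_i^{(r)}(y)\le 2\gamma_r'\}$, where $\gamma_r'=\gamma_r/4\ge\E[w_i^{(r)}]$. By Markov, $\Pr[A]\ge1/2$. If $w_i^{(r)}(x)\ge 4\gamma_r'$, then \Cref{cor:quantum-model}\ref{item:progress_evolution_cor} with $\phi=\Ind[i\in S]$ gives, for every $y\in A$, $\sum_{r'<r}\sqrt{\dist_{w^{(r')}(x)}(x,y)}\ge\Omega(\sqrt{\gamma_r})$. By Cauchy--Schwarz, $\dist_{W^{<r}(x)}(x,A)\ge\Omega(\gamma_r/r)$, where $W^{<r}=\sum_{r'<r}w^{(r')}$ has $\ell_1$ norm at most $rt$.

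Now apply \Cref{cor:Talagrand-overview} to the set $A$ with the vector family $W^{<r}(x)$, taking $\eps=\Omega(\gamma_r/r)$ and $\ell_\infty$ threshold $r\gamma_{r-1}$. Since $\Pr[A]\ge 1/2$, this yields
\[
\Pr[w_i^{(r)}\ge\gamma_r]\le 2\exp\!\big(-\Omega(\gamma_r^2/(r^4t\gamma_{r-1}))\big)+\Pr[\textstyle\bigcup_{r'<r}\Bad^{(r')}].
\]
A union bound over $i\in[N]$ gives $\Pr[\Bad^{(r)}]\le N\exp(\cdots)+N\Pr[\bigcup_{r'<r}\Bad^{(r')}]$. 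This is the one place that needs care: the factor of $N$ multiplying the earlier bad events. The fix is to avoid union-bounding the second term. The event $\bigcup_{r'<r}\Bad^{(r')}$ does not depend on $i$, so it is subtracted once, and the union bound over $i$ is applied only to the exponential term. Concretely, we bound $\Pr[\exists i: w_i^{(r)}\ge\gamma_r,\ x\in\bigcap_{r'<r}\Good^{(r')}]\le N\cdot 2\exp(\cdots)$. The recurrence on $\gamma$ makes this at most $\delta/(2d)$. Summing over rounds then gives $\Pr[\neg\Good]\le\delta/2$.

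\textbf{Step 2: the final Talagrand step.} Suppose for contradiction that $\Pr[\Acc]>\delta$ and $\Pr[\Rej]>\delta$. By \Cref{cor:quantum-model}\ref{item:distinguishibility_constraint_cor}, any $x\in\Acc$ and $y\in\Rej$ satisfy $\sum_r\sqrt{\dist_{w^{(r)}(x)}(x,y)}\ge\Omega(1)$. Cauchy--Schwarz then gives $\dist_{W(x)}(x,\Rej)\ge\Omega(1/d)$. Apply \Cref{cor:Talagrand-overview} with $A=\Rej$, $\ell_1$ bound $dt$, $\ell_\infty$ threshold $d\gamma_d$ (which fails only off $\Good$), and $\eps=\Omega(1/d)$. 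This gives
\[
\min\{\Pr[\Rej],\Pr[\Acc]\}\le\exp\!\big(-\Omega(1/(d^4t\gamma_d))\big)+\delta/2.
\]
Choosing $\gamma_d\le c/(d^4t\ln(2/\delta))$ makes the right-hand side at most $\delta$, a contradiction.

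The main obstacle is the bookkeeping in Step 1: handling the $N$ factor correctly and checking that the recurrence keeps $\eta$ at $(dt\ln(N/\delta))^{-O(2^d)}$. The $r^4$ factors contribute only $2^{O(d\cdot 2^d)}$-type losses, which are absorbed by the $\Omega(2^d)$ exponent since $d\le dt$.
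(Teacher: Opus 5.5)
Your proposal is correct and is essentially the paper's proof. It uses the same $\Good/\Bad$ split and the same round-by-round induction: Markov on $w_i^{(r)}$, progress evolution with $\phi=\Ind[i\in S]$, Cauchy--Schwarz, then \Cref{cor:talagrand-linf} restricted to the earlier-good inputs, so that the union bound over $i$ hits only the exponential term. It also has the same doubly exponential recurrence $\gamma_{r-1}\approx\gamma_r^2/(r^4t\ln(dN/\delta))$ and the same final Talagrand step via the distinguishability constraint.

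Using unnormalized sums $W^{<r}$ (with $\ell_1\le rt$ and $\ell_\infty\le r\gamma_{r-1}$) instead of the paper's round-averaged vectors is only a rescaling. Your $\min$-form conclusion gives $\min\{\Pr[\Acc],\Pr[\Rej]\}\le\delta$ directly, which is slightly tidier than the paper's product bound.
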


We will show in~\Cref{sec:removing dependence on $N$} how the dependence on $N$ in the regularity parameter of~\Cref{lem:regular implies biased} can be generically removed by setting $N = \poly(2^{td},1/\delta)$, thereby yielding~\Cref{lem:warmup-overview}. %

\subsection{Proof overview for~\Cref{lem:regular implies biased}}

\violet{Let $W_1(x),\ldots,W_N(x)$ be $\mathcal{A}$'s query weights, which we break down into their individual round contributions $W_i(x) = w_i^{(1)}(x) + \cdots + w_i^{(d)}(x).$}
Define %
 \[
        \gamma^* 
        \coloneqq
        O\!\left(\frac{1}{d^4t\ln(1/\delta)}\right).
    \]
and partition $\zo^N$ into $\mathsf{Good}\sqcup \mathsf{Bad}$ where 
\begin{align*}
\mathsf{Bad}
&:=
\{x \in\zo^N \colon \exists\, r\in[d],\exists\, i\in[N]\ \text{such that}\ w^{(r)}_i(x) \ge \gamma^*\} \\
\mathsf{Good}
 &:= \zo^N \setminus \mathsf{Bad}. 
 \end{align*}

We will prove the following lemmas:

\begin{lemma}[Good inputs are biased]
\label{lem:good inputs are biased}
$\Pr[\Acc]\cdot \Pr[\Rej\cap \mathsf{Good}] \le \delta$.  
\end{lemma}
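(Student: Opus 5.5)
We will apply Talagrand's convex-distance inequality (\Cref{thm:Talagrand}) with $A = \Acc$, and show that every $y \in \Rej\cap\Good$ lies at Talagrand distance at least $\tau \coloneqq \Omega\big(\sqrt{\ln(1/\delta)}\big)$ from $\Acc$. Talagrand's inequality then gives $\Pr[\Acc]\cdot\Pr[\Rej\cap\Good] \le \Pr[\Acc]\cdot\Pr[\dt(\bx,\Acc)\ge\tau] \le e^{-\tau^2/4} \le \delta$, provided the constant hidden in $\gamma^*$ is chosen suitably.

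\textbf{Step 1: the hybrid method from the $\Rej$ side.} Fix $y \in \Rej\cap\Good$ and any $x\in\Acc$. Since $\sqrt{f(x)}-\sqrt{f(y)} \ge \sqrt{2/3}-\sqrt{1/3} = \Omega(1)$, the distinguishability constraint \Cref{cor:quantum-model}\ref{item:distinguishibility_constraint_cor} applies. We use it with the roles of $x,y$ swapped, i.e.\ with weights taken at $y$; this is legitimate because the hybrid argument is symmetric in the two inputs. It gives
\[ \sum_{r\in[d]}\sqrt{\dist_{w^{(r)}(y)}(x,y)} \ge \Omega(1). \]
By Cauchy--Schwarz, $\sum_r \dist_{w^{(r)}(y)}(x,y) \ge \Omega(1/d)$. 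Writing $W(y)=\sum_r w^{(r)}(y)$, we conclude that $\dist_{W(y)}(x,y)\ge\Omega(1/d)$ for every $x\in\Acc$, i.e.\ $\dist_{W(y)}(y,\Acc)\ge\Omega(1/d)$.

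\textbf{Step 2: converting to Talagrand distance.} Set $u \coloneqq W(y)/\ltwo{W(y)}$, a nonnegative unit vector, so that $\dt(y,\Acc)\ge \dist_u(y,\Acc) = \dist_{W(y)}(y,\Acc)/\ltwo{W(y)}$. Because $y\in\Good$, each coordinate satisfies $w^{(r)}_i(y)<\gamma^*$, hence $\linf{W(y)}\le d\gamma^*$. We also have $\lone{W(y)}\le dt$ by \eqref{eq:t_parallel_index_query_weights_sum_to_t}. By H\"older,
\[ \ltwo{W(y)}^2 \le \lone{W(y)}\linf{W(y)} \le d^2 t\gamma^*. \]
Therefore
\[ \dt(y,\Acc)\ \ge\ \Omega\!\left(\frac{1}{d\cdot d\sqrt{t\gamma^*}}\right) = \Omega\!\left(\frac{1}{\sqrt{d^4 t\gamma^*}}\right). \]
With $\gamma^* = c/(d^4 t\ln(1/\delta))$ for a small constant $c$, this is at least $2\sqrt{\ln(1/\delta)}$, which is what the first paragraph requires.

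\textbf{Main obstacle.} The delicate point is Step 1: we must make sure the hybrid bound is available with the query weights of the \emph{$\Rej$-side, good} input $y$. The $\Good$ condition only controls $y$'s weights, not $x$'s. \Cref{cor:quantum-model} is stated with $x$'s weights, so I would invoke it after relabeling, using that $|\sqrt{f(x)}-\sqrt{f(y)}|$ is symmetric. The remaining bookkeeping is routine: the factor $d$ from Cauchy--Schwarz, the factor $d$ from summing rounds in $\linf{W(y)}$, and the factor $d$ in $\lone{W(y)}$. These combine to the $d^4$ in $\gamma^*$.
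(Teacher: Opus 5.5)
Your proof is correct and is essentially the paper's argument. The paper likewise applies the distinguishability constraint with the weights of the $\Rej$-side input; since \Cref{cor:quantum-model} holds for any ordered pair, no separate symmetry argument is needed. It then uses Cauchy--Schwarz over rounds and invokes Talagrand through \Cref{cor:talagrand-linf}, applied to the round-averaged vector $\overline W = W/d$ with $\eps=\Omega(1/d^2)$ and $\gamma=\gamma^*$. That corollary is exactly your H\"older normalization step, packaged as a lemma, and it yields the same $d^4 t\gamma^*$ exponent.
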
 

\begin{lemma}[Regularity $\Rightarrow$ Few bad inputs]
\label{lem:regularity implies few bad inputs} 
$\Prx[\mathsf{Bad}] \le \delta.$
\end{lemma}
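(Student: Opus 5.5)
The plan is to prove \Cref{lem:regularity implies few bad inputs} by induction over rounds. Instead of the single threshold $\gamma^*$, I use a decreasing chain of thresholds $\gamma_d \coloneqq \gamma^* \ge \gamma_{d-1} \ge \cdots \ge \gamma_1$, and for each round $r$ define
\[
\Bad^{(r)} \coloneqq \{x : \linf{w^{(r)}(x)} \ge \gamma_r\}.
\]
Since $\gamma_r \le \gamma^*$, any $x\in\Bad$ has some round $r$ with $\linf{w^{(r)}(x)} \ge \gamma^* \ge \gamma_r$, so $\Bad \subseteq \bigcup_{r} \Bad^{(r)}$. It therefore suffices to show, for each $r\in[d]$,
\[
\Pr\Big[\Bad^{(r)} \setminus \textstyle\bigcup_{r'<r}\Bad^{(r')}\Big] \le \delta/d,
\]
and then sum over $r$.

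\textbf{Base case ($r=1$).} By the initial condition (\Cref{cor:quantum-model}\ref{item:inital_condition_cor}), $w^{(1)}_i(x)$ does not depend on $x$. Hence $w^{(1)}_i = \E[w^{(1)}_i(\bx)] \le \E[W_i(\bx)] \le \eta$. Choosing $\eta < \gamma_1$ makes $\Bad^{(1)}$ empty.

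\textbf{Inductive step ($r \ge 2$).} Fix a coordinate $i$. I want a per-coordinate version of \Cref{lem:query-weights-tail-overview} in which the ``prior rounds bad'' term is an event we intersect with, not a probability added separately. The reason is the union bound over the $N$ coordinates. If I apply \Cref{lem:query-weights-tail-overview} as stated for each $i$, the term $\Pr[\bigcup_{r'<r}\Bad^{(r')}]$ would be multiplied by $N$. So I will rerun its proof to obtain
\[
\Pr\Big[w^{(r)}_i(\bx)\ge \gamma_r \ \wedge\ \bx\notin \textstyle\bigcup_{r'<r}\Bad^{(r')}\Big] \le \exp\!\Big(-\Omega\big(\tfrac{\gamma_r^2}{\gamma_{r-1} r^4 t}\big)\Big).
\]
The argument has three parts.
\begin{enumerate}
\item Regularity gives $\E[w_i^{(r)}(\bx)] \le \eta \le \gamma_r/4$. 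By Markov, the set $A \coloneqq \{y : w^{(r)}_i(y) \le \gamma_r/2\}$ has $\Pr[A]\ge 1/2$.
\item For any $x$ with $w_i^{(r)}(x)\ge\gamma_r$ and any $y\in A$, progress evolution (\Cref{cor:quantum-model}\ref{item:progress_evolution_cor}) with the test $\phi(S)=\Ind[i\in S]$ gives $\sum_{r'<r}\sqrt{\dist_{w^{(r')}(x)}(x,y)} \ge \Omega(\sqrt{\gamma_r})$. By Cauchy--Schwarz, $\dist_{\sum_{r'<r} w^{(r')}(x)}(x,A) \ge \Omega(\gamma_r/r)$.
\item If moreover $x \notin \bigcup_{r'<r}\Bad^{(r')}$, the weight vector $\sum_{r'<r}w^{(r')}(x)$ has $\ell_\infty$-norm at most $r\gamma_{r-1}$ and $\ell_1$-norm at most $rt$. \Cref{cor:Talagrand-overview}, applied to $A$ with these $x$-dependent weights (set to $0$ on the bad $x$'s, so its $\Pr[\linf{\cdot}\ge\gamma]$ term vanishes), then bounds the probability of such $x$ by $\exp(-\Omega(\gamma_r^2/(r^4 t\gamma_{r-1})))$, once we use $\Pr[A]\ge 1/2$.
\end{enumerate}
A union bound over $i\in[N]$ then gives
\[
\Pr\Big[\Bad^{(r)}\setminus\textstyle\bigcup_{r'<r}\Bad^{(r')}\Big] \le N\exp\!\Big(-\Omega\big(\tfrac{\gamma_r^2}{\gamma_{r-1} r^4 t}\big)\Big).
\]

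\textbf{Parameter choice.} Set $\gamma_{r-1} \coloneqq c\,\gamma_r^2/(d^4 t\ln(Nd/\delta))$ for a small constant $c$, which makes each round's contribution at most $\delta/d$. Unrolling from $\gamma_d=\gamma^* = \Theta(1/(d^4t\ln(1/\delta)))$ gives $\gamma_1 \ge (dt\ln(N/\delta))^{-O(2^d)}$. So the hypothesis $\eta \le (dt\ln(N/\delta))^{-\Omega(2^d)}$, with a large enough constant in the $\Omega$, ensures $\eta \le \gamma_1/4 \le \gamma_r/4$ for all $r$, as used in the base case and in each inductive step.

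\textbf{Main obstacle.} The delicate point is the inductive step. \Cref{lem:query-weights-tail-overview} must be used in its ``intersected with good'' form to avoid an $N$-fold blow-up of the prior-bad probability. Doing so requires checking that \Cref{cor:Talagrand-overview} tolerates weight vectors zeroed out on bad inputs, while the distance lower bound from progress evolution is only needed for good $x$.
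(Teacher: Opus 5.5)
Your proposal is correct and follows essentially the same route as the paper. The paper's \Cref{claim:induction} is exactly your ``intersected-with-good'' per-coordinate step: Markov on $A$, progress evolution with $\phi(S)=\Ind[i\in S]$, Cauchy--Schwarz, \Cref{cor:talagrand-linf} applied to the prior-round weights of inputs that are good in earlier rounds, and a union bound over $i$. It is followed by the same telescoping over rounds with the doubly exponential recursion $\gamma_{r-1}\approx\gamma_r^2/(r^4t\ln(dN/\delta))$. The only differences are cosmetic. You sum rather than average the prior-round weight vectors. Your constants ($\eta\le\gamma_r/4$, $A$ cut at $\gamma_r/2$, badness at $\gamma_r$) are internally consistent, which tidies up the paper's $4\gamma_r$ versus $\gamma_r$ mismatch between the claim and its application.
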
 

\Cref{lem:regular implies biased} follows as an immediate consequence since 
    \begin{align*}
\Pr[\Acc]\cdot \Pr[\Rej] &= \underbrace{\Pr[\Acc]\cdot \Pr[\Rej \cap \mathsf{Good}]}_{\le\,\delta \text{ by~\Cref{lem:good inputs are biased}}} + \underbrace{\Pr[\Acc]\cdot \Pr[\Rej \cap \mathsf{Bad}].}_{\le\,\Pr[\mathsf{Bad}]\,\le\,\delta \text{ by~\Cref{lem:regularity implies few bad inputs}}} 
    \end{align*}

We prove~\Cref{lem:good inputs are biased} in~\Cref{sec:good inputs are biased} and then~\Cref{lem:regularity implies few bad inputs} in~\Cref{sec:regularity implies few bad inputs}. 

\paragraph{Corollary of Talagrand's convex-distance inequality.} 
The proofs of both lemmas rely on a simple corollary of Talagrand's inequality (\Cref{thm:Talagrand}). This corollary allows us to bound the  probability that a random $\bx\sim \zo^N$ is far from any large set $A\sse \zo^N$, even if we measure distance using an $x$-specific distance metric $d_{W(x)}$. We are allowed to do so as long as this distance measure doesn't put too much weight on any one coordinate (i.e.~$\|W(x)\|_{\infty}\leq \gamma$). 

\begin{corollary} 
    \label{cor:talagrand-linf} %
    Let $A \subseteq \zo^N$ be any set, and for each $x\in\zo^N$ let
    $W(x)\in\R_{\ge 0}^N$ satisfy $\lone{W(x)}\le t$.
    Then for every $\gamma,\eps>0$, %
    \begin{equation*}
        \Pr[A]\cdot \Pr\big[\linf{W(\bx)} \leq \gamma \text{ and }\dist_{W(\bx)}(\bx,A) \geq \eps\big]
        \leq
        \exp\!\left(-\Omega\left(\frac{\eps^2}{\gamma t}\right)\right).%
    \end{equation*}
\end{corollary}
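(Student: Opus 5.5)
The plan is to reduce directly to Talagrand's convex-distance inequality (\Cref{thm:Talagrand}) by normalizing each $W(x)$ into a unit $\ell_2$ vector. Fix $x$ in the event $E \coloneqq \{x : \linf{W(x)}\le\gamma \text{ and } \dist_{W(x)}(x,A)\ge\eps\}$. By H\"older, $\ltwo{W(x)}^2 \le \lone{W(x)}\cdot\linf{W(x)} \le t\gamma$. Set $\widetilde W(x)\coloneqq W(x)/\sqrt{t\gamma}$. This vector is nonnegative with $\ltwo{\widetilde W(x)}\le 1$, so it is a feasible choice in the supremum defining $\dt$.

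Since $\dist_w(x,A)$ is linear in $w$ (a minimum over $y\in A$ of sums linear in $w$, scaled by a positive constant), we get
\[
\dt(x,A)\ \ge\ \dist_{\widetilde W(x)}(x,A) \;=\; \frac{\dist_{W(x)}(x,A)}{\sqrt{t\gamma}}\;\ge\;\frac{\eps}{\sqrt{t\gamma}}.
\]
The key point is that Talagrand's distance takes a supremum over all weight vectors separately for each $x$. The dependence of $W(x)$ on $x$ therefore costs nothing: we only need to exhibit one feasible weight vector per point. So $E\subseteq\{x:\dt(x,A)\ge \eps/\sqrt{t\gamma}\}$.

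Applying \Cref{thm:Talagrand} with threshold $\eps/\sqrt{t\gamma}$ gives
\[
\Pr[A]\cdot\Pr[E]\ \le\ \Pr[A]\cdot\Pr\bigl[\dt(\bx,A)\ge \eps/\sqrt{t\gamma}\bigr]\ \le\ \exp\bigl(-\eps^2/(4t\gamma)\bigr),
\]
which is the claimed bound with constant $1/4$ in the $\Omega$.

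There is no real obstacle here. The only points needing care are the edge cases. If $\gamma=0$ or $t=0$, the event forces $W(x)=0$, hence distance $0<\eps$, so the event is empty. If $A=\emptyset$, then $\Pr[A]=0$ and the bound is trivial. Otherwise the minimum defining the distance is over a nonempty set and the argument above applies.
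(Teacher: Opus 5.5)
Your proposal is correct and follows the paper's proof essentially line for line. Both proofs normalize $W(x)$ by $\sqrt{\gamma t}$ and use H\"older to place it in the supremum defining $\dt$. Both then conclude $\dt(x,A)\ge \eps/\sqrt{\gamma t}$ on the event and apply \Cref{thm:Talagrand} at that threshold, obtaining the constant $1/4$.
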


\begin{proof}

Let
\[
E:=\big\{x\in\zo^N:\linf{W(x)}\le \gamma\ \text{and}\ \dist_{W(x)}(x,A)\ge\eps\big\}.
\]
For any $x\in E$, we consider its normalized weight vector  
\[
\tilde{W}(x) :=\frac{W(x)}{\sqrt{\gamma t}}
\]
and note that since $\lone{W(x)}\le t$ and $\linf{W(x)}\le \gamma$, we have
\[
\| \tilde{W}(x) \|_2^2
\;=\;
\frac{\| W(x)\|_2^2}{\gamma t}
\;\le\;
\frac{\|W(x)\|_{\infty}\|W(x)\|_1}{\gamma t}
\;\le\;1.
\]
Therefore $\tilde{W}(x)$ is included in the supremum in the definition of Talagrand's convex
distance $d_{\mathrm{Tal}}(x,A)$, and hence
\[
d_{\mathrm{Tal}}(x,A)
\;\ge\;
\min_{y\in A}\sum_{i \in \Delta(x,y)} \tilde{W}_i(x) 
\;=\;
\frac{1}{\sqrt{\gamma t}}\min_{y\in A}\sum_{i\in\Delta(x,y)} W_i(x) 
\;=\;
\frac{\dist_{W(x)}(x,A)}{\sqrt{\gamma t}}
\;\ge\;
\frac{\eps}{\sqrt{\gamma t}}.
\]
Since every $x\in E$ satisfies 
\[
d_{\mathrm{Tal}}(x,A)\ge \frac{\eps}{\sqrt{\gamma t}}, 
\]
we can apply~\Cref{thm:Talagrand} with
$t=\eps/\sqrt{\gamma t}$ to get that 
\[
\Pr[A]\cdot \Pr[E]
\le
\exp\!\left(-\frac{t^2}{4}\right)
=
\exp\!\left(-\frac{\eps^2}{4\gamma t}\right).
\]
This completes the proof.
\end{proof}

\subsection{Good inputs are biased: Proof of~\Cref{lem:good inputs are biased}}\label{sec:good inputs are biased} %

The proof of \Cref{lem:good inputs are biased} is a simple application of \Cref{cor:talagrand-linf}. \Cref{item:distinguishibility_constraint_cor} of \Cref{cor:quantum-model} tells us that every $x\in \Rej$ is far from $\Acc$ with distance measured according to $W(x)$, and every $x\in\mathsf{Good}$, by definition, has bounded $\| W(x)\|_\infty$. \Cref{cor:talagrand-linf} thus translates into a bound on $\Pr[\Acc]\cdot \Pr[\Rej \cap \mathsf{Good}]$. 

In more detail, for each $x \in \zo^N$ we define its round-averaged query weight vector $\overline W(x) \in \R^N_{\ge 0}$ where: 
\[
\overline W_i(x) := \Ex_{\br\sim [d]}\big[ w^{(\br)}_{i}(x)\big] = \frac{W_i(x)}{d}  \quad \text{for each $i\in [N]$}
\]
and note that 
\[   \dist_{\overline W(x)}(x,y) =  \Ex_{\br\sim [d]}\big[ \dist_{w^{(\br)}(x)}(x,y)\big]. \]

Since $t$-parallel query weights sum to at most $t$ (\Cref{eq:t_parallel_index_query_weights_sum_to_t}),%
\begin{equation}
    \lone{\overline W(x)}
=
\Ex_{\br\sim [d]}\Bigg[\sum_{i=1}^N w^{(\br)}_{i}(x)\Bigg] 
\leq
t.
\label{eq:l1}
\end{equation}
If $x\in\mathsf{Good}$, then $w^{(r)}_{i}(x)\le \gamma^*$ for every $r\in [d]$ and $i \in [N]$ and so
\begin{equation}
\linf{\overline W(x)}\le \gamma^*.
\label{eq:linfty}
\end{equation}
Furthermore, for any $x\in \Rej$ and $y\in \Acc$, \Cref{item:distinguishibility_constraint_cor} of \Cref{cor:quantum-model} tells us that 
\begin{align*}
   \sum_{r \in [d]}\sqrt{\dist_{w^{(r)}(x)}(x,y)} &\geq \Omega(1)\\
    \sqrt{\sum_{r \in [d]}\dist_{w^{(r)}(x)}(x,y)} \sqrt{\sum_{r \in [d]} 1} &\geq \Omega(1) &\tag{Cauchy-Schwarz}\\
    \frac{1}{d}\sum_{r \in [d]}\dist_{w^{(r)}(x)}(x,y) &\geq \Omega\left(\frac{1}{d^2}\right)
\end{align*}
Hence, $\dist_{\overline W(x)}(x,y) = \Ex_{\br\sim [d]}\big[  \dist_{w^{(\br)}(x)}(x,y) \big]
\ge \Omega\left(\frac{1}{d^2}\right)$. 
Equivalently, for any $x\in \Rej$, 
\begin{equation}  \dist_{\overline W(x)}(x,\Acc) \ge \Omega\left(\frac1{d^2}\right).
\label{eq:dist to A}
\end{equation}

With~\Cref{eq:l1,eq:linfty,eq:dist to A} in hand, we can now apply~\Cref{cor:talagrand-linf} with $A = \Acc$, $\gamma=\gamma^*$, and $\eps=\Omega\left(\frac{1}{d^2}\right)$ to conclude that: 
\[ 
    \Pr[\Acc]\cdot \Pr[\mathsf{Good}\cap \Rej] \leq \exp\left(-\Omega\left(\frac1{\gamma^* d^4 t}\right) \right) \le \delta,
\]
where the final inequality holds by our choice of $\gamma^*$.

\subsection{Regularity implies few bad inputs: Proof of~\Cref{lem:regularity implies few bad inputs}}

\label{sec:regularity implies few bad inputs}

The next claim converts the expectation bound on query weights implied by the assumption of regularity 
 into a high-probability bound. We do so inductively, in a round-by-round fashion. \Cref{lem:regularity implies few bad inputs} then follows easily via a union bound.  

For $r\in [d]$ and $\gamma > 0$, we define the sets 
\[ \mathsf{Good}^{(\le r)}(\gamma) \coloneqq \big\{ x \in \zo^N \colon w^{(r')}_i(x) \le \gamma \text{ for all $r' \in [r]$ and $i\in [N]$} \big\}, \]
and $\mathsf{Bad}^{(\le r)}(\gamma) \coloneqq \zo^N\setminus \mathsf{Good}^{(\le r)}(\gamma)$.

\begin{claim}[Inductive step: goodness is preserved w.h.p]
\label{claim:induction}
    Fix $r\in \{2,\ldots,d\}$ and $i\in [N]$. Suppose $\Ex_{\bx}[w^{(r)}_i(\bx)] \le \gamma_r$ and 
define%
\[ \gamma_{r-1}
         \coloneqq 
        O\!\left(\frac{\gamma_r^2}{r^4t\ln(N/\delta)}\right).\] 
Then 
\begin{align*}
     \Prx\big[  \mathsf{Bad}^{(\leq r)}(4\gamma_r)  \cap \mathsf{Good}^{(\le r-1)}(\gamma_{r-1})\big] \leq \delta.
\end{align*}

        \end{claim}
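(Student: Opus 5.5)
I read the event $\mathsf{Bad}^{(\le r)}(4\gamma_r)\cap\mathsf{Good}^{(\le r-1)}(\gamma_{r-1})$ as the event that the fixed coordinate $i$ is heavy in round $r$, namely $B\coloneqq\{x\in\mathsf{Good}^{(\le r-1)}(\gamma_{r-1}) : w^{(r)}_i(x)\ge 4\gamma_r\}$. This is the only way badness can arise: $\gamma_{r-1}\le 4\gamma_r$, so rounds $r'\le r-1$ are already controlled. The union bound over $i\in[N]$ is left to the proof of \Cref{lem:regularity implies few bad inputs}, which is why $\gamma_{r-1}$ carries $\ln(N/\delta)$. I will in fact show $\Pr[B]\le\delta/N$, which is stronger than needed. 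The plan is to run the argument from \Cref{sec:good inputs are biased}, but with the acceptance probability replaced by the round-$r$ query weight $w^{(r)}_i$. By the discussion before \Cref{cor:quantum-model}, this weight is itself the acceptance probability of an $(r-1)$-round algorithm.

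\textbf{Large comparison set.} Let $A\coloneqq\{y : w^{(r)}_i(y)\le 2\gamma_r\}$. By Markov's inequality and the hypothesis $\E[w^{(r)}_i(\bx)]\le\gamma_r$, we get $\Pr[A]\ge 1/2$.

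\textbf{Distance from $A$.} Fix $x\in B$ and $y\in A$. Apply \Cref{item:progress_evolution_cor} of \Cref{cor:quantum-model} with the test $\phi(S)=\Ind[i\in S]$ and threshold $4\gamma_r$. Since $w^{(r)}_i(x)\ge 4\gamma_r$ and $w^{(r)}_i(y)\le 2\gamma_r$, this gives
\[
\sum_{r'\in[r-1]}\sqrt{\dist_{w^{(r')}(x)}(x,y)}\ge\Omega(\sqrt{\gamma_r}).
\]
Here $r\ge 2$ guarantees the sum is nonempty. Cauchy--Schwarz, exactly as in the proof of \Cref{lem:good inputs are biased}, then gives $\sum_{r'<r}\dist_{w^{(r')}(x)}(x,y)\ge\Omega(\gamma_r/r)$. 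Define the averaged vector $\overline W(x)\coloneqq\frac{1}{r-1}\sum_{r'<r}w^{(r')}(x)$. Then $\dist_{\overline W(x)}(x,A)\ge\Omega(\gamma_r/r^2)$ for every $x\in B$. Moreover $\|\overline W(x)\|_1\le t$ by \eqref{eq:t_parallel_index_query_weights_sum_to_t}. Every $x\in B\subseteq\mathsf{Good}^{(\le r-1)}(\gamma_{r-1})$ also satisfies $\|\overline W(x)\|_\infty\le\gamma_{r-1}$.

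\textbf{Concentration.} Apply \Cref{cor:talagrand-linf} with this $A$, the weights $\overline W(x)$, $\gamma=\gamma_{r-1}$ and $\eps=\Omega(\gamma_r/r^2)$. The event in that corollary contains $B$, so
\[
\tfrac12\Pr[B]\le\Pr[A]\Pr[B]\le\exp\!\Big(-\Omega\Big(\frac{\gamma_r^2}{r^4\gamma_{r-1}t}\Big)\Big).
\]
With $\gamma_{r-1}=c\,\gamma_r^2/(r^4t\ln(N/\delta))$ and a small enough constant $c$, the right-hand side is at most $\delta/(2N)$. Hence $\Pr[B]\le\delta/N\le\delta$.

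The only delicate point is the second step: turning a gap in $w^{(r)}_i$ into a distance lower bound measured with $x$'s own earlier-round weights. This needs the factor-$2$ gap demanded by \Cref{item:progress_evolution_cor}, which is why $A$ uses threshold $2\gamma_r$ while $B$ uses $4\gamma_r$. It also needs the distance to be measured from $x$'s side, which is exactly what \Cref{cor:talagrand-linf} accepts.
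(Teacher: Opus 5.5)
This is essentially the paper's proof. It uses the same Markov set $A$ with $\Pr[A]\ge 1/2$, the progress-evolution bound with the test $\Ind[i\in S]$ and the $4\gamma_r$ versus $2\gamma_r$ gap, Cauchy--Schwarz onto a round-averaged weight vector with $\ell_1$-norm at most $t$ and $\ell_\infty$-norm at most $\gamma_{r-1}$, and \Cref{cor:talagrand-linf}, which gives $\delta/N$ for the fixed coordinate. The only difference is bookkeeping: $\mathsf{Bad}^{(\le r)}(4\gamma_r)$ quantifies over all coordinates, so the event is not just the one for your fixed $i$, and the paper completes the claim itself with the union bound over $i\in[N]$ (implicitly using the expectation hypothesis for every $i$, as regularity supplies) instead of deferring that step to \Cref{lem:regularity implies few bad inputs}.
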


\begin{proof}
For each $x \in \zo^N$ and $r\in\{2,\ldots,d\}$, we define its round-averaged weight vector $\overline W^{(\le r-1)}(x)$ where: 
\[
\overline W^{(\le r-1)}_i(x):= \Ex_{\br'\sim [r-1]}\big[ w^{(\br')}_i(x)\big] \quad \text{for each $i\in [N]$}
\]
and note that 
\[   \dist_{\overline W^{(\le r-1)}(x)}(x,y) =  \Ex_{\br'\sim [r-1]}\big[ \dist_{w^{(\br')}_x}(x,y)\big]. \]

Since $t$-parallel query weights sum to $t$ (\Cref{eq:t_parallel_index_query_weights_sum_to_t}),
\begin{equation}
    \lone{\overline W^{(\le r-1)}(x)}
=
\Ex_{\br'\sim [r-1]}\Bigg[\sum_{i=1}^N w^{(\br')}_{i}(x)\Bigg] 
\leq
t.
\label{eq:l1norm}
\end{equation}
Since $x\in\mathsf{Good}^{(\le r-1)}(\gamma_{r-1})$, we have that $w_{i}^{(r')}(x)\le \gamma_{r-1} $ for every $r' \in [r-1]$ and $i \in [N]$, and so
\begin{equation}
\linf{\overline W^{(\le r-1)}(x)}\le \gamma_{r-1}. 
\label{eq:linftynorm}
\end{equation}

Fix $r$ and $i$ to those in the statement of the claim. Define $A\coloneqq  \{y : w_{i}^{(r)}(y) < 2\gamma_r\}$. Since we are assuming that $\Ex_{\by}[w^{(r)}_i(\by)] \le \gamma_r$, it follows from Markov's inequality that $\Pr [A] \geq \lfrac12.$

Furthermore, for any $y\in A$ and $x$ such that $w_{i}^{(r)}(x) \geq 4\gamma_r$, \Cref{item:progress_evolution_cor} of \Cref{cor:quantum-model} tells us that  
\begin{align*}
     \sum_{r'\in [r-1]}\sqrt{\dist_{w^{(r')}(x)}(x,y)}&\geq \Omega(\sqrt{\gamma_r})\\
      \sqrt{\sum_{r' \in [r-1]}\dist_{w^{(r')}(x)}(x,y)} \sqrt{\sum_{r' \in [r-1]} 1} &\geq \Omega(\sqrt{\gamma_r}) &\tag{Cauchy-Schwarz}\\
    \frac{1}{r}\sum_{r' \in [r-1]}\dist_{w^{(r')}(x)}(x,y) &\geq \Omega\left(\frac{\gamma_r}{r^2}\right)
\end{align*}
Hence, $\dist_{\overline W^{(\le r-1)}(x)}(x,y) = \Ex_{\br'\sim [r-1]}\big[  \dist_{w^{(\br')}(x)}(x,y) \big]
\ge \Omega\left(\frac{\gamma_r}{r^2}\right).$ 
Equivalently, for any $x$ such that $w_{i}^{(r)}(x) \geq 4\gamma_r$, 
\begin{equation}  \dist_{\overline W^{(\le r-1)}(x)}(x,A) \ge \Omega\left(\frac{\gamma_r}{r^2}\right).
\label{eq:dist to S}
\end{equation}

With~\Cref{eq:l1norm,eq:linftynorm,eq:dist to S} in hand, we can now apply~\Cref{cor:talagrand-linf} with $A=A$, $\gamma=\gamma_{r-1}$, and $\eps=\Omega\left(\gamma_r/r^2\right)$, along with the fact that $\Pr[A]\ge \frac1{2}$, to conclude: 
\begin{align*}
    \Prx\big[ w^{(r)}_i(\bx) \ge 4\gamma_r  \text{\ and\ } \bx \in \mathsf{Good}^{(\le r-1)}(\gamma_{r-1})\big] \leq \exp\left(-\Omega\left(\frac{\gamma_r^2}{\gamma_{r-1} r^4 t}\right) \right) \le \frac{\delta}{N},
\end{align*}
where the final inequality holds by our choice of $\gamma_{r-1}$. Finally, union bounding over all $i\in [N]$, we obtain 
\begin{align*}
    \Prx\big[  \mathsf{Bad}^{(\le r)}(4\gamma_r) \cap \mathsf{Good}^{(\le r-1)}(\gamma_{r-1})\big] \leq N\cdot \Prx\big[ w^{(r)}_i(\bx) \ge 4\gamma_r  \text{\ and\ } \bx \in \mathsf{Good}^{(\le r-1)}(\gamma_{r-1})\big] \leq  \delta,
\end{align*}
which completes the proof. 
\end{proof}

\subsubsection{Proof of~\Cref{lem:regularity implies few bad inputs} given~\Cref{claim:induction}}

Define thresholds
\begin{align*} 
\gamma_d &\coloneqq \gamma^*\\ 
\gamma_{r-1} &\coloneqq O\left(\frac{\gamma_r^2}{r^4t\ln(dN/\delta)}\right) \quad \text{for $r \in \{2,\ldots,d\}$}
\end{align*}
Note $\mathsf{Good} = \mathsf{Good}^{(\le d)}(\gamma_d)$. Note also that $\gamma_1 \le \cdots \le \gamma_d$ where 
\[ \gamma_1 \ge (dt \ln(N/\delta))^{-O(2^d)}, \]
and this quantity on the RHS matches the regularity assumption of~\Cref{lem:regular implies biased}.  By choosing constants appropriately, we can therefore ensure that $\eta \le \gamma_1 \le \cdots \le \gamma_d$, which in turn ensures that the assumption of~\Cref{claim:induction} is satisfied for all $r\in \{2,\ldots,d\}$.

Since first-round weights are independent of $x$ (\Cref{item:inital_condition_cor}), the expectation bound implied by regularity trivially translates into a worst-case bound: 
\[ \Ex_{\bx}[w^{(1)}_i(\bx)] \le \eta \text{\ for all $i\in [N]$} \ \Rightarrow\ \text{For all $x$, we have}\ w^{(1)}_i(x) \le \eta  \text{\ for all $i\in [N]$},\]
and so $\Pr[\mathsf{Good}^{(1)}(\gamma_1)] = 1$. Consequently, if $x \in \mathsf{Bad} = \mathsf{Bad}^{(\le d)}(\gamma_d)$, it must ``go from $\mathsf{Good}$ to $\mathsf{Bad}$ at some point", and so 
\[ \Pr[\mathsf{Bad}] \le \sum_{r=2}^d \Pr\big[\mathsf{Bad}^{(\le r)}(\gamma_r) \cap \mathsf{Good}^{(\le r-1)}(\gamma_{r-1})\big] \le \frac{\delta}{d-1}\cdot (d-1) = \delta, \]
where the final inequality is an application of~\Cref{claim:induction} with $\delta = \frac{\delta}{d-1}$.

\section{An improved bound: Proof of~\Cref{thm:rsquared-overview}}

In this section we prove:

\begin{theorem}[\Cref{thm:rsquared-overview} with a dependence on $N$]  
\label{thm:rsquared-bound}
Let $\mathcal{A}$ be a $t$-parallel $d$-round quantum algorithm and $f : \zo^N \to [0,1]$ denote its acceptance probabilities. If $\mathcal{A}$ is $\eta$-regular where 
\[ \eta \le 2^{-\Omega(d^2)}\cdot (t\log N)^{-\Omega(d)}\cdot \log(1/\delta)^{-1}, \] 
then $\min\{\Pr[\Acc_f],\Pr[\Rej_f]\}\le \delta$. 
    \end{theorem}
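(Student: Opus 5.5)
We follow the skeleton of the proof of \Cref{lem:regular implies biased}, replacing singleton statistics by $m$-wise statistics and \Cref{cor:talagrand-linf} by \Cref{lem:dist-to-hit-overview}.

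\textbf{Good/bad decomposition.} For each round $r\in[d]$ we fix parameters $(m_r,\gamma_r)$. We set
\[
\Bad^{(r)}\coloneqq\{x:\mcD^{(r)}_x\text{ is }(m_r,\gamma_r)\text{-hit}\}, \qquad \Good\coloneqq\zo^N\setminus\bigcup_r\Bad^{(r)}.
\]
As before, we have $\Pr[\Acc]\Pr[\Rej]\le \Pr[\Acc]\Pr[\Rej\cap\Good]+\Pr[\Bad]$, so it suffices to bound each term by $\delta/2$.

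\textbf{First term.} By \Cref{item:distinguishibility_constraint_cor} of \Cref{cor:quantum-model} and Cauchy--Schwarz, every $x\in\Rej$ satisfies $\dist_{\overline{\mcD}_x}(x,\Acc)\ge\Omega(1/d^2)$. Here $\overline{\mcD}_x$ is the mixture over a uniformly random round $\br\in[d]$ of $\mcD^{(\br)}_x$. If $x\in\Good$, this mixture is $(m,\gamma)$-spread for $m=\min_r m_r$ and $\gamma=\max_r\gamma_r$. To see this, for any $|T|=m$, the probability of hitting $T$ is an average of per-round hitting probabilities, and each is at most $\gamma_r$ by monotonicity in $m$. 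Applying \Cref{lem:dist-to-hit-overview} with $A=\Acc$ and $\eps=\Omega(1/d^2)\ge 2\gamma$ gives a bound $\exp(-\Omega(m/(d^4\gamma t)))$. This is at most $\delta$ provided $\gamma_d\approx 1/d^2$ and $m_d\gtrsim d^4\gamma_d t\log(1/\delta)$.

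\textbf{Inductive step (the main obstacle).} We must show that $\Bad^{(r)}\cap\Good^{(\le r-1)}$ is small. Being hit is a statement about all $m_r$-sets $T$, so a union bound over $T$ costs $\binom{N}{m_r}\le N^{m_r}$. This is the source of the $(t\log N)^{d}$-type losses.

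Fix $T$ with $|T|=m_r$. The quantity $\Pr_{\bS\sim\mcD^{(r)}_x}[\bS\cap T\ne\emptyset]$ is a binary test, so by \Cref{item:progress_evolution_cor} it is the acceptance probability of an $(r-1)$-round algorithm. Its expectation is at most $\sum_{i\in T}\E[w^{(r)}_i]\le m_r\eta$ by regularity. Let $A_T=\{y:\text{hit prob}<2m_r\eta\}$; Markov gives $\Pr[A_T]\ge 1/2$. Whenever $x$'s hit probability is at least $\gamma_r\ge 4m_r\eta$, \Cref{item:progress_evolution_cor} and Cauchy--Schwarz give $\dist_{\overline{\mcD}^{(\le r-1)}_x}(x,A_T)\ge\Omega(\gamma_r/r^2)$. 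On $\Good^{(\le r-1)}$ the averaged distribution is $(m_{r-1},\gamma_{r-1})$-spread. So \Cref{lem:dist-to-hit-overview} bounds the failure probability per $T$ by
\[
\exp\bigl(-\Omega\bigl((\gamma_r/r^2)^2 m_{r-1}/(\gamma_{r-1}t)\bigr)\bigr),
\]
which needs $\gamma_{r-1}\le\gamma_r/(2r^2)$.

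The key point is that the exponent is now linear in $\gamma_r$ once we take $\gamma_{r-1}=c\gamma_r/r^2$. The union bound then requires
\[
m_{r-1}\gtrsim \frac{r^2 t\,m_r\log N}{\gamma_r}\cdot r^2.
\]
Thus the $\gamma$'s decay only geometrically, $\gamma_1\ge 2^{-O(d)}d^{-O(d)}/d^2$, while the $m$'s grow by a factor $\poly(d)\,t\log N\cdot 2^{O(d)}$ per round, giving $m_1\le 2^{O(d^2)}(t\log N)^{O(d)}\log(1/\delta)$.

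The base case is round $1$, whose distribution is input-independent. Regularity gives that round $1$ is $(m_1,m_1\eta)$-spread, so we need $m_1\eta\le\gamma_1$. This means $\eta\le\gamma_1/m_1=2^{-O(d^2)}(t\log N)^{-O(d)}\log(1/\delta)^{-1}$, as claimed. Similarly, the Markov step at each round needs $4m_r\eta\le\gamma_r$, which this choice of $\eta$ also ensures.

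Finally, we sum the failure probabilities over rounds, with failure parameter $\delta/d$, and absorb the resulting $\log d$ factors. The delicate point I expect is that the chain between $m_r$ and $m_{r-1}$ really costs only $\gamma_r^{-1}$ rather than $\gamma_r^{-2}$ per level. Ensuring this requires the distance lower bound $\gamma_r/r^2$ to be used linearly in \Cref{lem:dist-to-hit-overview} via its $(\eps-\gamma)^2m/\gamma$ form with $\gamma\asymp\eps$.
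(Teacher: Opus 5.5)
There is a genuine quantitative gap. Your skeleton matches the paper's: bad sets defined by $(m_r,\gamma_r)$-hitting, the Markov set $A_T$ for the test $S\mapsto\Ind[S\cap T\neq\emptyset]$, \Cref{lem:dist-to-hit}, a union bound over $\binom{N}{m_r}$ sets $T$, and the input-independent base case. The final parameter count is wrong, however, and the error comes from combining rounds by a \emph{uniform} average.

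Your recurrence $\gamma_{r-1}=c\gamma_r/r^2$ solves to $\gamma_r=\gamma_d\,c^{d-r}(r!/d!)^2$. So $1/\gamma_r=d^{\Theta(d)}$ for every $r\le d/2$, not $2^{O(d)}$ as your claimed per-round growth factor $\poly(d)\,t\log N\cdot 2^{O(d)}$ for the $m_r$'s assumes. Since $m_1\gtrsim\prod_{r=2}^{d}(t\log N/\gamma_r)$ and $\prod_{r=2}^{d}(d!/r!)^2=\prod_{s=3}^{d}s^{2(s-2)}=2^{\Theta(d^2\log d)}$, what you actually prove is $\eta\le 2^{-O(d^2\log d)}(t\log N)^{-O(d)}\log(1/\delta)^{-1}$. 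This cannot be absorbed into $(t\log N)^{-O(d)}$ unless $t\log N\ge d^{\Omega(d)}$; note $d$ may even exceed $t$ for $t$-parallel algorithms.

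A better choice of thresholds does not remove this loss. For the uniform mixture over rounds $1,\ldots,r-1$, the distance bound $\Omega(\gamma_r/r^2)$ is the best Cauchy--Schwarz can give (equality when the per-round distances are equal). Meanwhile, the spread parameter you can certify on $\Good^{(\le r-1)}$ is at least $\gamma_{r-1}/(r-1)$. So \Cref{lem:dist-to-hit} forces $\gamma_{r-1}\lesssim\gamma_r/r$ at best, which still makes $\prod_r\gamma_r^{-1}=2^{\Omega(d^2\log d)}$.

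The missing idea is the paper's weighted combination $\mcD^{(\quads,r-1)}_x$. It takes $(r-r')^2$ independent copies of $\mcD^{(r')}_x$ for each $r'\le r-1$; a mixture with probabilities proportional to $(r-r')^2$ would serve equally well.
\begin{itemize}
\item Weighted Cauchy--Schwarz with $\sum_k k^{-2}\le\pi^2/6$ (\Cref{claim:combine-dist-exp}) gives distance $\Omega(\gamma_r)$ with no $r^{-2}$ loss, at the price of support size $O(tr^3)$.
\item On $\Good^{(\le r-1)}$ the combined distribution is $(m_{r-1},\sum_{r'<r}(r-r')^2\gamma_{r'})$-spread (\Cref{prop:combine-dist-hit}).
\item With purely geometric thresholds $\gamma_r\propto c^{-(d-r)}$, this spread parameter is $O(\gamma_r/c)$ by convergence of $\sum_k k^2c^{-k}$. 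So consecutive thresholds need only a constant ratio.
\item The exponent in \Cref{lem:dist-to-hit} becomes $\Omega(\gamma_r m_{r-1}/(tr^3))$, and $\prod_r\gamma_r^{-1}=c^{O(d^2)}=2^{O(d^2)}$.
\end{itemize}
This is exactly where the $2^{-\Omega(d^2)}$ in the statement comes from. Your uniform average is harmless in the final ``good inputs are biased'' step, where it is used once and costs only $\poly(d)$.
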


\violet{
Again, the statement of~\Cref{thm:rsquared-bound} does not quite match that of~\Cref{thm:rsquared-overview} due to the dependence on $N$, but as mentioned, in~\Cref{sec:removing dependence on $N$} we will show that this dependence can be generically removed by setting $N = \poly(2^{td},1/\delta)$.  In the setting of~\Cref{thm:rsquared-bound} this amounts to replacing $\log N$ with $\log(1/\delta)$, which then yields~\Cref{thm:rsquared-overview}.}

\pparagraph{Proof overview.}
As discussed in \Cref{subsec:overview-improved}, this proof will rely on the following notion of \emph{distance}: For any distribution $\mcD$ and $x,y$, we denote,
\begin{equation*}
    \dist_{\mcD}(x,y) \coloneqq \Prx_{\bS \sim \mcD}[x_{\bS} \neq y_{\bS}],
\end{equation*}
and similarly, for set $A$,
\begin{equation*}
    \dist_{\mcD}(x, A) \coloneqq \min_{y \in A} \dist_{\mcD}(x, y).
\end{equation*}

This definition aligns with \Cref{eq:dist-bound-layer,eq:final_condition_implies_total_progress_2}: For any $x\in \Acc$, there is some round $r\in [d]$ for which $$\dist_{\mcD_x^{(r)}}(x, \Rej) \geq \poly(1/r).$$
Similarly, \Cref{eq:dist-bound-layer} means that if $\mcD_x^{(r)}$ and $\mcD_y^{(r)}$ are ``far" (i.e. distinguishable by some test), then there must be some $r' \leq r$ for which $\dist_{\mcD_x^{(r')}}(x,y)$ is large.%

For the warm-up (\Cref{lem:warmup-overview}), we formalized the notion of $\mcD^{(r)}_x$ as being ``concentrated" by having a single coordinate $i \in [N]$ that commonly appears in $\bS \sim \mcD^{(r)}$. Our  proof of~\Cref{thm:rsquared-bound} requires a more general definition: %
\begin{definition}[$(m,\gamma)$-hit and spread] 
\label{def:hit-spread}
    For $m \in [N]$ and $\gamma \in [0,1]$, we say that a distribution $\mcD$ over subsets of $[N]$ is {\sl $(m,\gamma)$-hit} if 
    \begin{equation*}
        \Prx_{\bS \sim \mcD}[\bS \cap T \neq \emptyset] \geq \gamma\quad \text{for some $T\sse [N]$ of size $m$.}
    \end{equation*}
    We will also say a distribution is $(m, \gamma)$-\textsl{spread} if it is not $(m,\gamma)$-\textsl{hit}, meaning there is no $T$ of size $m$ for which the above inequality holds.
\end{definition}
\violet{The most technical portion of this proof connects these notions of distance and concentration, showing that for a distribution to separate large sets, it must be concentrated.}

\begin{restatable}[Large distance implies hitting]{lemma}{DistToHit}
    \label{lem:dist-to-hit}
    Let $A\sse \zo^N$ be any set, and for each $x\in \zo^N$ let $\mcD_x$ be a distribution over subsets of $[N]$ of size at most $t$. For any $m\in [N]$ and $0 < \gamma < \tau$, 
\begin{equation*}
        \Pr[A] \cdot \Pr\big[
            \dist_{\mcD_{\bx}}(\bx, A) \geq \tau\text{ and }
            \mathcal D_{\bx}\ \text{is }(m,\gamma)\text{-spread}\,
        \big] \leq \exp\paren*{-\Omega\paren*{\frac{(\tau-\gamma)^2m}{\gamma t}}}.
    \end{equation*}
\end{restatable}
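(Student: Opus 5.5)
The plan is to reduce \Cref{lem:dist-to-hit} to Talagrand's inequality (\Cref{thm:Talagrand}), in the same way as \Cref{cor:talagrand-linf}. For each $x$ with $\mcD_x$ $(m,\gamma)$-spread, I would build a weight vector $W(x)\in\R^N_{\ge 0}$ with two properties. First, it should dominate the distribution distance up to an additive $\gamma$:
\[ \dist_{W(x)}(x,y)\ \ge\ \dist_{\mcD_x}(x,y)-\gamma \quad\text{for all } y. \]
Second, it should have $\ltwo{W(x)}^2\lesssim \gamma t/m$. Normalizing by $\sqrt{\gamma t/m}$ then gives, for every $x$ in the event,
\[ \dt(x,A)\ \gtrsim\ \frac{\tau-\gamma}{\sqrt{\gamma t/m}}. \]
\Cref{thm:Talagrand} then yields exactly $\exp(-\Omega((\tau-\gamma)^2 m/(\gamma t)))$.

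\textbf{Constructing $W(x)$.} Let $w_i=\Pr_{\bS\sim\mcD_x}[i\in\bS]$ be the marginals. Let $T$ be the $m$ coordinates with the largest $w_i$, and set $W_i=w_i$ for $i\notin T$ and $W_i=0$ for $i\in T$. By a union bound,
\[ \dist_{\mcD_x}(x,y)\le \Pr[\bS\cap T\ne\emptyset]+\Pr[\bS\cap(\Delta(x,y)\setminus T)\ne\emptyset]\le \gamma+\dist_{W}(x,y), \]
where the first term is below $\gamma$ by spreadness. Taking the minimum over $y\in A$ gives $\dist_W(x,A)\ge\tau-\gamma$. The remaining task is to bound $\ltwo{W}^2$, and for this I would use spreadness on every block. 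Sort the marginals in decreasing order and cut them into consecutive blocks $B_0=T,B_1,B_2,\dots$ of size $m$. Spreadness applied to each block gives $\Pr[\bS\cap B_k\neq\emptyset]<\gamma$. Since $|\bS|\le t$, the block sums satisfy
\[ a_k:=\sum_{i\in B_k}w_i=\E\big[|\bS\cap B_k|\big]\le t\gamma . \]
Each single coordinate also has $w_i<\gamma$. Every element of $B_{k+1}$ is at most $\min_{B_k}w\le a_k/m$. Therefore
\[ \ltwo{W}^2\le\sum_{k\ge 0}\Big(\max_{B_{k+1}}w\Big)\,a_{k+1}\le\sum_k \min\{\gamma,\,a_k/m\}\,a_{k+1}. \]

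\textbf{Main obstacle.} The crude bound $a_k\le t\gamma$ together with $\sum_k a_k\le t$ only gives $\ltwo{W}^2\le \gamma t^2/m$. That loses a factor of $t$ relative to the claimed exponent. I expect recovering this factor to be the main difficulty. The estimate $a_k=\E|\bS\cap B_k|\le t\Pr[\bS\cap B_k\ne\emptyset]$ is lossy: it charges $t$ elements to every set that hits the block, even though the total mass $\sum_k a_k$ is only $t$.

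What I would try first is to replace marginals by a fractional charging scheme. For each $S$ in the support of $\mcD_x$, distribute a budget over the coordinates of $S\setminus T$, so that the pointwise domination $\Ind[S\cap\Delta\setminus T\ne\emptyset]\le\sum_{i\in\Delta}(\text{charge of }i)$ still holds on average over $\bS$. The goal is that the resulting vector has $\ell_\infty$ norm about $\gamma/m$ while keeping $\ell_1$ norm at most $t$. Equivalently, I would use the dual (mixture) form of the convex distance, $\dt(x,A)=\min_{\nu\in\Delta(A)}\ltwo{(\Pr_{\by\sim\nu}[\by_i\neq x_i])_i}$. This lets one exploit that a spread distribution puts mass less than $\gamma$ on every $m$-set. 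That is a Pr[hit] constraint rather than an expected-count constraint, and it is the form I want to use directly in place of $a_k\le t\gamma$.

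If that fails, the fallback is to accept the $\gamma t^2/m$ bound and absorb the extra $t$ into the choice of $m_r$ in the later application, at the cost of worse constants in the recurrence.
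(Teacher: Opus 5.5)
\textbf{Gap.} Your reduction is sound up to $\dist_W(x,A)\ge\tau-\gamma$. The whole proof then rests on $\|W\|_2^2\lesssim \gamma t/m$, which you leave open. For your choice of $W$ (the marginals of $\mathcal{D}_x$ with the top $m$ coordinates zeroed), that bound is false.

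\emph{Counterexample.} Let $\mathcal{D}_x$ output a fixed $t$-set $U$ with probability $\gamma/2$. Otherwise it outputs a uniformly random one of $K>2m/\gamma$ pairwise disjoint $t$-sets, all disjoint from $U$.
\begin{itemize}
\item Every $m$-set is hit with probability at most $\gamma/2+m/K<\gamma$, so $\mathcal{D}_x$ is $(m,\gamma)$-spread.
\item $\dist_{\mathcal{D}_x}(x,y)$ can be as large as $1-\gamma/2$.
\item Yet the top $m$ marginals all lie in $U$, and the other $t-m$ coordinates of $U$ keep marginal $\gamma/2$.
\item For $t\ge 2m$ this gives $\|W\|_2^2\ge t\gamma^2/8$, a factor $\gamma m/8$ above $\gamma t/m$.
\end{itemize}
The regime $m\gg 1/\gamma$ is exactly where the lemma gets used.

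The marginals of $\mathcal{D}_x$ cannot see that the whole clump $U$ is neutralized by hitting it once. That is the real source of the lost factor. Your charging and dual-form ideas, as sketched, do not say how to exploit it. The fallback proves a different statement, with exponent $(\tau-\gamma)^2m/(\gamma t^2)$.

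\textbf{Missing idea: remove \emph{sets}, not coordinates, as you go.} The paper builds $T$ greedily.
\begin{itemize}
\item At step $j$ it adds the coordinate of largest marginal in the residual distribution $\mathcal{D}^{(j-1)}$. In $\mathcal{D}^{(j-1)}$, every set already hit by $T^{(j-1)}$ is replaced by $\emptyset$.
\item Spreadness keeps $\mathcal{D}^{(j-1)}$ within total variation $\gamma$ of $\mathcal{D}_x$. Hence $\dist_{\mathcal{D}^{(j-1)}}(x,A)\ge\tau-\gamma$, and its sets still have size at most $t$.
\item The H\"older-plus-normalization argument of \Cref{cor:talagrand-linf}, applied to the residual marginal vector $w^{(j-1)}$, gives a coordinate with residual marginal at least $(\tau-\gamma)^2/(t\,d_{\mathrm{Tal}}(x,A)^2)$.
\item These greedy gains sum exactly to $\Pr[S\cap T^{(m)}\neq\emptyset]<\gamma$. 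This forces $d_{\mathrm{Tal}}(x,A)^2>(\tau-\gamma)^2m/(\gamma t)$, and \Cref{thm:Talagrand} finishes.
\end{itemize}

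If you prefer your single-vector framing, the vector that works is $\bar w=\frac1m\sum_{j=1}^{m}w^{(j-1)}$, not the marginals of $\mathcal{D}_x$. Each $w^{(j-1)}$ satisfies $\dist_{w^{(j-1)}}(x,y)>\tau-\gamma$ for every $y\in A$. Moreover $\|w^{(j-1)}\|_2^2\le t\,\|w^{(j-1)}\|_\infty$, and the $\|w^{(j-1)}\|_\infty$ sum to less than $\gamma$. Jensen then gives $\|\bar w\|_2^2<\gamma t/m$.
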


Using this, we will prove the two main lemmas, the analogues of~\Cref{lem:regularity implies few bad inputs,lem:good inputs are biased} in our warm-up. First, we will define parameters,
\begin{equation*}
    m_1 \geq \cdots \geq m_d \quad\quad\text{and}\quad\quad \gamma_1 \leq \cdots \leq \gamma_d
\end{equation*}
and for each round $r\in [d]$, the sets
\begin{equation}
    \label{eq:def-bad-good}
    \Bad^{(r)} \coloneqq \set{x : \mcD_x^{(r)} \text{ is $(m_r,\gamma_r)$-hit}} \quad\text{and}\quad \Good^{(\leq r)} \coloneqq \zo^N \setminus \bigcup_{r' \in [r]}\Bad^{(r')}.
\end{equation}
Note the direction of monotonicity: At lower layers, we have larger values for $m$ and smaller values of $\gamma$. Both of these correspond to a more stringent constraint in order to be in $\Good$. Our first bound uses this monotonicity to bound the number of bad inputs:
\begin{restatable}[Regularity $\Rightarrow$ Few bad inputs]{lemma}{BoundBadStronger}
    \label{lem:most-inputs-good}
    For any $r\in \{ 2,\ldots,d\}$, if the regularity parameter $\eta$ satisfies
  $\eta \leq \gamma_r/(4m_r)$ where 
    \begin{equation*}
        \gamma_r \geq \Omega\paren[\Bigg]{\sum_{r' \in [r-1]} \gamma_{r'} \cdot (r-r')^2}, 
    \end{equation*}
    then
    \begin{equation*}
        \Pr\bracket*{\Bad^{(r)} \cap \Good^{(\leq r-1)}} \leq \binom{N}{m_r} \cdot \exp\paren*{-\Omega\paren*{\frac{\gamma_r m_{r-1} }{tr^3}}}
    \end{equation*}    
\end{restatable}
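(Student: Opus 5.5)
Fix $r\in\{2,\ldots,d\}$. We union bound over all sets $T\subseteq[N]$ of size $m_r$, which accounts for the $\binom{N}{m_r}$ factor. Fix such a $T$ and consider the binary test $\phi_T(S)\coloneqq\Ind[S\cap T\neq\emptyset]$. It then suffices to show
\[
\Pr\big[\Pr_{\bS\sim\mcD_{\bx}^{(r)}}[\bS\cap T\ne\emptyset]\ge\gamma_r \text{ and } \bx\in\Good^{(\le r-1)}\big]\le \exp\paren*{-\Omega\paren*{\tfrac{\gamma_r m_{r-1}}{tr^3}}}.
\]
By regularity and a union bound, $\E_{\by}\big[\E_{\bS\sim\mcD^{(r)}_{\by}}[\phi_T(\bS)]\big]\le\sum_{i\in T}\E[w_i^{(r)}(\by)]\le m_r\eta\le\gamma_r/4$. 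Let $A\coloneqq\{y:\E_{\bS\sim\mcD^{(r)}_y}[\phi_T(\bS)]\le\gamma_r/2\}$. Markov's inequality gives $\Pr[A]\ge 1/2$.

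For any $x$ with $\E_{\mcD_x^{(r)}}[\phi_T]\ge\gamma_r$ and any $y\in A$, \Cref{item:progress_evolution_cor} of \Cref{cor:quantum-model} gives $\sum_{r'\in[r-1]}\sqrt{\dist_{\mcD_x^{(r')}}(x,y)}\ge c\sqrt{\gamma_r}$. We need to convert this into a statement about a single round that we can feed into \Cref{lem:dist-to-hit}. Instead of Cauchy--Schwarz with uniform weights, we use weights decaying in $r-r'$. Write $\tau_{r'}\coloneqq\kappa\,\gamma_r/(r-r')^2$ for a suitable constant $\kappa$. Since $\sum_{r'}\sqrt{\tau_{r'}}=\sqrt{\kappa\gamma_r}\sum_{k\ge1}1/k\le O(\sqrt{\kappa\gamma_r}\log r)$, that sum is too large. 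To stay within the budget, we instead take $\tau_{r'}=\kappa\gamma_r/((r-r')^2\cdot \Theta(r))$ or similar, so that $\sum_{r'}\sqrt{\tau_{r'}}<c\sqrt{\gamma_r}$. Then for every such pair $(x,y)$ there is some $r'\le r-1$ with $\dist_{\mcD_x^{(r')}}(x,y)\ge\tau_{r'}$. Hence $x$ satisfies $\dist_{\mcD_x^{(r')}}(x,A)\ge\tau_{r'}$ for some $r'$. This needs care, since $r'$ may depend on $y$. To handle this we minimize over $y$ per round and argue by contradiction: if for every $r'$ some $y_{r'}\in A$ is close, we would need a single $y$ close in all rounds simultaneously. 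The clean fix is to apply \Cref{lem:dist-to-hit} to the round-averaged distribution $\overline{\mcD}_x\coloneqq\sum_{r'}\lambda_{r'}\mcD_x^{(r')}$ with weights $\lambda_{r'}\propto 1/\tau_{r'}$-style normalization. Distance is linear in the mixture, so $\dist_{\overline{\mcD}_x}(x,A)$ is bounded below uniformly in $y$, and a mixture of $(m_{r'},\gamma_{r'})$-spread distributions is $(m_{r-1},\sum\lambda_{r'}\gamma_{r'})$-spread because $m_{r'}\ge m_{r-1}$.

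Concretely, I would choose $\lambda_{r'}$ so that the mixture distance threshold is $\tau\approx\gamma_r/\mathrm{poly}(r)$ and the spread parameter is $\gamma\approx\sum_{r'}\lambda_{r'}\gamma_{r'}$. The hypothesis $\gamma_r\ge\Omega(\sum_{r'}\gamma_{r'}(r-r')^2)$ is exactly what ensures $\gamma\le\tau/2$, so that $(\tau-\gamma)^2/\gamma\gtrsim\tau^2/\gamma$. Because $x\in\Good^{(\le r-1)}$, each $\mcD_x^{(r')}$ is $(m_{r'},\gamma_{r'})$-spread, hence $(m_{r-1},\gamma_{r'})$-spread, and therefore the mixture is $(m_{r-1},\gamma)$-spread. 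Applying \Cref{lem:dist-to-hit} with $\Pr[A]\ge1/2$ gives a bound $\exp(-\Omega(\tau^2 m_{r-1}/(\gamma t)))$. The main obstacle is the bookkeeping of $\lambda_{r'}$ so that $\tau^2/\gamma\gtrsim\gamma_r/r^3$. With weights $\lambda_{r'}\propto(r-r')^{-2}$ normalized (total $O(1)$), a Cauchy--Schwarz with these weights gives $\sum_{r'}\lambda_{r'}\dist\ge\Omega(\gamma_r/r^{O(1)})$ and $\gamma\lesssim\gamma_r/\Omega(1)$-scaled. The loss in $r$ comes out as $r^3$ after optimizing, giving the claimed exponent. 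Finally, union bounding over $T$ completes the proof.
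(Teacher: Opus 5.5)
Your mixture idea is viable and differs from the paper's route, but the concrete weight choice you commit to fails. You flagged this bookkeeping as the crux yourself. Take mixture weights $\lambda_{r'}$ summing to $1$, and write $d_{r'}$ for $\dist_{\mcD_x^{(r')}}(x,y)$. Cauchy--Schwarz gives
\[
c^2\gamma_r \le \Big(\sum_{r'\in[r-1]}\sqrt{d_{r'}}\Big)^2 \le \Big(\sum_{r'}\lambda_{r'}^{-1}\Big)\Big(\sum_{r'}\lambda_{r'}d_{r'}\Big).
\]
So the distance threshold is $\tau=c^2\gamma_r/\sum_{r'}\lambda_{r'}^{-1}$. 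The spread parameter is $\gamma=\sum_{r'}\lambda_{r'}\gamma_{r'}$, and you need $\gamma\le\tau/2$.

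With your concrete choice $\lambda_{r'}\propto(r-r')^{-2}$:
\begin{itemize}
\item $\sum_{r'}\lambda_{r'}^{-1}=\Theta(r^3)$, so $\tau=\Theta(\gamma_r/r^3)$.
\item But $\gamma\ge\lambda_{r-1}\gamma_{r-1}=\Omega(\gamma_{r-1})$.
\item The hypothesis only guarantees $\gamma_{r-1}\le\gamma_r/C$ for a constant $C$. The paper's eventual choice $\gamma_{r'}=c^{-(d-r')}$ has exactly $\gamma_{r-1}=\gamma_r/c$.
\end{itemize}
Hence $\gamma\gg\tau$ once $r$ exceeds a constant, and \Cref{lem:dist-to-hit} yields nothing. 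Your assertion that the hypothesis ``is exactly what ensures $\gamma\le\tau/2$'' is false for these weights.

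The weights must go the other way, putting more mass on earlier rounds. Take $\lambda_{r'}=(r-r')^2/Z$ with $Z=\sum_{k=1}^{r-1}k^2=\Theta(r^3)$. This is what your earlier phrase ``$\lambda_{r'}\propto 1/\tau_{r'}$'' would have given. Then:
\begin{itemize}
\item $\sum_{r'}\lambda_{r'}^{-1}=Z\sum_k k^{-2}\le\pi^2Z/6$, so $\tau\ge 6c^2\gamma_r/(\pi^2Z)$.
\item $\gamma=Z^{-1}\sum_{r'}\gamma_{r'}(r-r')^2\le\gamma_r/(CZ)\le\tau/2$ for a large enough constant $C$. This is precisely where the hypothesis enters.
\item On $\Good^{(\le r-1)}$, each $\mcD_x^{(r')}$ is $(m_{r'},\gamma_{r'})$-spread and $m_{r'}\ge m_{r-1}$. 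So the mixture is $(m_{r-1},\tau/2)$-spread.
\item \Cref{lem:dist-to-hit}, with support size $t$, then gives exponent $\Omega(\tau m_{r-1}/t)=\Omega(\gamma_r m_{r-1}/(tr^3))$, as required.
\end{itemize}

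With this fix your argument is a clean alternative to the paper's. The paper uses the product $\mcD_x^{(\quads,r-1)}$ with $(r-r')^2$ independent copies of round $r'$, which is the unnormalized counterpart of this mixture. Its distance is $\Omega(\gamma_r)$, via \Cref{prop:ind-sum} and Cauchy--Schwarz, and its spread parameter is $\sum\gamma_{r'}(r-r')^2$ by a union bound. There the $r^3$ enters through the support size $O(tr^3)$ rather than through $\tau$. Your linear mixture avoids \Cref{prop:ind-sum} and the $\min\{1,\cdot\}$ truncation at no cost in the final bound.
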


With~\Cref{lem:most-inputs-good} in hand, a bound on $\Pr[\Bad^{(\le d)}]$ follows easily via a union bound.
The second lemma then says that if this probability is small, the function must be biased: 
\begin{restatable}[\violet{Good inputs are biased}]{lemma}{GoodToBiased}
    \label{lem:bound-var}
    There is an absolute constant $c$ such that if 
    \begin{equation}
        \label{eq:sum-biased}
        \sum_{r \in [d]} (d-r+1)^2 \cdot \gamma_r \leq c,
    \end{equation}
    then
    \begin{equation*}
        \Pr\big[\Acc\big] \cdot \Pr\big[\Rej \cap \Good^{(\leq d)}\big] \leq \exp\paren*{-\Omega\paren*{\frac{m_d}{td^3}}}.
    \end{equation*}
\end{restatable}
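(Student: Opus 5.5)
The plan is to mirror the warm-up proof of \Cref{lem:good inputs are biased}, replacing \Cref{cor:talagrand-linf} with \Cref{lem:dist-to-hit}. The key step is to produce, for each $x \in \Rej \cap \Good^{(\le d)}$, a single round $r$ such that $\dist_{\mcD_x^{(r)}}(x,\Acc)$ is large compared with $\gamma_r$. We will then apply \Cref{lem:dist-to-hit} with $A = \Acc$ once per round and union bound over the $d$ rounds.

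\textbf{Step 1 (a weighted pigeonhole over rounds).} Fix $x\in\Rej$ and $y \in \Acc$. By \Cref{item:distinguishibility_constraint_cor} of \Cref{cor:quantum-model},
\[ \sum_{r\in[d]} \sqrt{\dist_{\mcD_x^{(r)}}(x,y)} \ge \Omega(1). \]
A flat Cauchy--Schwarz, as in the warm-up, is not enough, because we need to beat a round-dependent threshold $\gamma_r$. Instead, we use weights $(d-r+1)$. By Cauchy--Schwarz, $\sum_r \sqrt{a_r} \le \bigl(\sum_r a_r (d-r+1)^{-2}\bigr)^{1/2}\bigl(\sum_r (d-r+1)^2\bigr)^{1/2}$. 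This is the wrong way around, so we argue by contradiction instead. Suppose $\dist_{\mcD_x^{(r)}}(x,y) < \tau_r$ for every $r$, where $\tau_r := \gamma_r + \lambda/(d-r+1)^{4}$ (or a similar choice). Then $\sum_r \sqrt{\tau_r} \le \sum_r \sqrt{\gamma_r} + \sqrt{\lambda}\sum_r (d-r+1)^{-2}$. The first term is controlled by \eqref{eq:sum-biased} through Cauchy--Schwarz: $\sum_r \sqrt{\gamma_r} \le \bigl(\sum_r(d-r+1)^2\gamma_r\bigr)^{1/2}\bigl(\sum_r (d-r+1)^{-2}\bigr)^{1/2} = O(\sqrt c)$. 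Choosing $c$ and $\lambda$ as small absolute constants makes the total fall below the hybrid bound, a contradiction. Hence for every pair $(x,y)$ some round $r$ has $\dist_{\mcD_x^{(r)}}(x,y)\ge \tau_r$.

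The issue is that this $r$ depends on $y$, whereas \Cref{lem:dist-to-hit} needs $\dist_{\mcD_x^{(r)}}(x,\Acc)$, a minimum over $y$ with $r$ fixed. We handle this by partitioning $\Acc$ into $\Acc_r(x)$, but that depends on $x$. Cleaner is to run the contradiction against the minimizers directly. Suppose for every $r$ that $\dist_{\mcD_x^{(r)}}(x,\Acc)<\tau_r$. This only gives different witnesses $y_r$ per round, which is the main obstacle. I would try to resolve it by combining per-round witnesses: note that $\dist_{\mcD}(x,y)$ is monotone in $\Delta(x,y)$. So instead I would apply \Cref{lem:dist-to-hit} to the averaged distribution $\overline{\mcD}_x := \sum_r \alpha_r \mcD_x^{(r)}$ with weights $\alpha_r \propto (d-r+1)^{-2}$ (mixture distributions are still supported on sets of size $\le t$). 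Then $\dist_{\overline{\mcD}_x}(x,y) = \sum_r \alpha_r \dist_{\mcD_x^{(r)}}(x,y)$, which is a single metric in $y$. The weighted Cauchy--Schwarz $\sum_r \sqrt{a_r} \le (\sum_r \alpha_r a_r)^{1/2}(\sum_r \alpha_r^{-1})^{1/2}$ gives $\dist_{\overline{\mcD}_x}(x,\Acc) \ge \Omega(1/\sum_r\alpha_r^{-1}) = \Omega(1/d^3)$ after normalizing.

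\textbf{Step 2 (spreadness of the mixture).} For $x\in\Good^{(\le d)}$, each $\mcD_x^{(r)}$ is $(m_r,\gamma_r)$-spread with $m_r \ge m_d$. Hence it is $(m_d,\gamma_r)$-spread, since a smaller $T$ hits less often. So for any $|T| = m_d$,
\[ \Pr_{\overline{\mcD}_x}[\bS\cap T\neq\emptyset] < \sum_r \alpha_r\gamma_r =: \bar\gamma. \]
With $\alpha_r \propto (d-r+1)^{-2}$ this is not obviously controlled by \eqref{eq:sum-biased}, so the weights must be chosen to balance both requirements. I would take $\alpha_r \propto (d-r+1)^{-2}$ normalized by $Z = \sum_r (d-r+1)^{-2} = \Theta(1)$. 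The distance bound then becomes $\Omega(1/\sum_r (d-r+1)^2) = \Omega(1/d^3)$, while $\bar\gamma = O(\sum_r (d-r+1)^{-2}\gamma_r)$. We need $\bar\gamma \le \tau/2$ with $\tau = \Theta(1/d^3)$. Using $(d-r+1)^{-2} \le d^{-3}(d-r+1)^{-2}\cdot d^{3}$ is too crude, and making this balance precise is the delicate point. If it fails, I would fall back to a per-round application: use \eqref{eq:sum-biased} to show the hybrid sum forces some $r$ with $\dist_{\mcD_x^{(r)}}(x,\Acc)\ge \tau_r \ge 2\gamma_r$ and $\tau_r-\gamma_r \gtrsim 1/d^{3/2}$, and union bound over $r$.

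\textbf{Step 3 (conclusion).} Apply \Cref{lem:dist-to-hit} with $A=\Acc$, $m=m_d$, $\gamma=\bar\gamma$, and $\tau=\Theta(1/d^3)\ge 2\bar\gamma$. Then $(\tau-\gamma)^2/\gamma \ge \Omega(\tau)$ whenever $\gamma\le\tau/2$ (since $(\tau/2)^2/\gamma \ge \tau/2$), so the bound is
\[ \exp\bigl(-\Omega(\tau m_d/t)\bigr) = \exp\bigl(-\Omega(m_d/(td^3))\bigr), \]
matching the claim.
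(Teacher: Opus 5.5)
There is a genuine gap in Steps 2--3, and it comes from the direction of your mixture weights. With $\alpha_r\propto(d-r+1)^{-2}$, the largest weight falls on the last round, so $\bar\gamma=\sum_r\alpha_r\gamma_r\ge\alpha_d\gamma_d=\gamma_d/Z$ with $Z=\sum_{k\le d}k^{-2}=\Theta(1)$. Goodness of $x$ only guarantees that round $d$ hits a size-$m_d$ set with probability below $\gamma_d$, so no better bound on the mixture is available. Meanwhile \eqref{eq:sum-biased} permits $\gamma_d$ as large as $c$, and the intended parameter setting does make $\gamma_d$ a constant. Your distance guarantee, however, is only $\tau=\Theta(1/d^3)$. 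Hence $\bar\gamma\gg\tau$ for large $d$, the hypothesis $\gamma<\tau$ of \Cref{lem:dist-to-hit} fails, and the assertion $\tau\ge2\bar\gamma$ in Step 3 is unsupported. A telling sign is that \eqref{eq:sum-biased} never actually enters your mixture argument. It controls $\gamma_r$ weighted by $(d-r+1)^2$, which is the weighting your mixture should use. The per-round fallback does not help either. It needs $\dist_{\mcD_x^{(r)}}(x,\Acc)$ to be large for a fixed $r$, but the hybrid bound only supplies a $y$-dependent round for each $y$. This is exactly the obstacle you identified, and separate per-round minimizers $y_r$ cannot be combined into a contradiction.

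The fix is to reverse the weights: take $\alpha_r=(d-r+1)^2/Z'$ with $Z'=\sum_{k\le d}k^2=\Theta(d^3)$. For $x\in\Rej$, your weighted Cauchy--Schwarz still gives $\dist_{\overline{\mcD}_x}(x,\Acc)\ge\Omega(1)/\sum_r\alpha_r^{-1}=\Omega(1)/(Z'\sum_k k^{-2})=:\tau=\Theta(1/d^3)$, because $\sum_k k^{-2}$ converges. Now $\bar\gamma=Z'^{-1}\sum_r(d-r+1)^2\gamma_r\le c/Z'$, which is at most $\tau/2$ once $c$ is a small enough absolute constant. Applying \Cref{lem:dist-to-hit} with support size $t$, $m=m_d$, and $\gamma=\tau/2$ then yields $\exp(-\Omega(\tau m_d/t))=\exp(-\Omega(m_d/(td^3)))$.

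With this correction, your route is a valid variant of the paper's. The paper instead uses the product distribution $\mcD_x^{(\quads,d)}$: take $(d-r+1)^2$ independent copies of round $r$ and union the sets. By \Cref{claim:combine-dist-exp}, its distance to $\Acc$ is $\Omega(1)$. By \Cref{prop:combine-dist-hit}, it is $(m_d,\sum_r(d-r+1)^2\gamma_r)$-spread. Its support size is $O(td^3)$. The mixture and the product simply move a factor of $d^3$ between the distance and the support size, giving the same exponent. The mixture has the small advantage of not needing \Cref{prop:ind-sum}.
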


\paragraph{Structure of this section.} We prove~\Cref{lem:dist-to-hit} in~\Cref{sec:dist-to-hit}, \Cref{lem:most-inputs-good} in~\Cref{sec:most-inputs-good}, and~\Cref{lem:bound-var} in~\Cref{sec:lem:bound-var}.  In~\Cref{sec:rsquared-bound} we put everything together to prove~\Cref{thm:rsquared-bound}.

\subsection{Large distance implies hitting: Proof of \Cref{lem:dist-to-hit}}
\label{sec:dist-to-hit}

To prove~\Cref{lem:dist-to-hit}, we analyze the natural greedy algorithm that builds a size-$m$ set $T$ that $\gamma$-hits $\mcD_x$, under the assumption that $\dist_{\mcD_x}(x,A) \geq \tau$. At each step, this greedy algorithm picks the variable that overlaps with the largest fraction of $\bS\sim \mcD_x$. The following helps quantify the progress we make at each step:

\begin{claim}%
    \label{claim:dist-to-heavy-element}
    Let $\mcD$ be a distribution supported on subsets of $[N]$ of size at most $t$. \violet{For any input $x \in \zo^N$ and set $A \subseteq \zo^N$, there exists an $i\in [N]$ such that }
    \begin{equation*}
        \Prx_{\bS \sim \mcD}[i \in \bS] \geq \frac{\violet{\dist_{\mcD}(x,A)^2}}{t \cdot \dtal(x,A)^2}.
    \end{equation*}
\end{claim}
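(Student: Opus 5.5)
We will exhibit one specific weight vector in the supremum defining $\dtal(x,A)$, and pick it so that its value is controlled by the $i$ of largest marginal. Write $p_i \coloneqq \Pr_{\bS\sim\mcD}[i\in\bS]$ and $p^* \coloneqq \max_i p_i$. The claim is then equivalent to
\[
\dtal(x,A) \ge \frac{\dist_{\mcD}(x,A)}{\sqrt{t\,p^*}} ,
\]
so it suffices to produce a vector $W\in\R^N_{\ge 0}$ with $\ltwo{W}\le 1$ and $\dist_W(x,A) \ge \dist_{\mcD}(x,A)/\sqrt{t p^*}$. If $p^*=0$ then $\mcD$ is supported on $\emptyset$ and $\dist_{\mcD}(x,A)=0$, so the claim is trivial; assume $p^*>0$.

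The natural choice is the marginal vector itself, normalized: $W \coloneqq p/\sqrt{t p^*}$. By Hölder,
\[
\ltwo{p}^2 \le \linf{p}\,\lone{p} \le p^* \cdot \Ex_{\bS}[|\bS|] \le p^*\, t,
\]
so $\ltwo{W}\le 1$. This is exactly the normalization used in the proof of \Cref{cor:talagrand-linf}, and it is admissible in the definition of $\dtal$ in \Cref{thm:Talagrand}.

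It remains to compare the distances. For every $y\in A$, the union bound (the same inequality displayed in the technical overview) gives
\[
\dist_p(x,y) = \sum_{i\in\Delta(x,y)} p_i = \Ex_{\bS}\big[|\bS\cap\Delta(x,y)|\big] \ge \Pr_{\bS}\big[\bS\cap\Delta(x,y)\ne\emptyset\big] = \dist_{\mcD}(x,y).
\]
Taking the minimum over $y\in A$ yields $\dist_p(x,A)\ge\dist_{\mcD}(x,A)$. Hence
\[
\dtal(x,A) \ge \dist_W(x,A) = \frac{\dist_p(x,A)}{\sqrt{tp^*}} \ge \frac{\dist_{\mcD}(x,A)}{\sqrt{tp^*}}.
\]
Squaring and rearranging gives $p^* \ge \dist_{\mcD}(x,A)^2/(t\,\dtal(x,A)^2)$. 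We then take $i$ to be the argmax of $p_i$.

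There is no real obstacle. The only points to check are the conventions: $\dtal(x,A)>0$ whenever $\dist_{\mcD}(x,A)>0$, since $x\notin A$ in that case; and if $\dist_{\mcD}(x,A)=0$ the bound is trivial.
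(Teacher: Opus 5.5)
Your proof is correct and is essentially the paper's argument: the paper also works with the marginal vector $w_i = \Pr_{\bS\sim\mcD}[i\in\bS]$. It then uses the union bound to get $\dist_w(x,A)\ge \dist_{\mcD}(x,A)$, and combines H\"older ($\linf{w}\ge \ltwo{w}^2/t$) with the admissibility of a normalized $w$ in the supremum defining $\dtal$. The only cosmetic difference is that the paper normalizes by $\ltwo{w}$ and applies H\"older afterward, whereas you fold H\"older into the normalization by $\sqrt{t p^*}$, exactly as in the proof of \Cref{cor:talagrand-linf}.
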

The proof of \Cref{claim:dist-to-heavy-element} is analogous to that of \Cref{cor:talagrand-linf}. We give it in \Cref{subsec:dist-to-heavy-element}. For now, we show how to use it to prove \Cref{lem:dist-to-hit}.
\subsubsection{Proof of \Cref{lem:dist-to-hit} assuming \Cref{claim:dist-to-heavy-element}} 
We begin by initializing $T^{(0)} = \emptyset$. At each step $j = 1,2,\ldots$, we set
    \begin{equation*}
        i_j \coloneqq \argmax_{i \in [N]} \Prx_{\bS \sim \mcD_x} \bracket*{i \in \bS \text{ and }\bS \cap T^{(j-1)} = \emptyset},
    \end{equation*}
    and let $T^{(j)} \coloneqq T^{(j-1)} \cup \set{i_j}$. We stop at the first time step $m^{\star}$ at which $T^{(m^{\star})}$ $\gamma$-hits $\mcD_x$. We also define $\mcD^{(j)}$ to be the distribution over subsets $\bS'$ formed by first drawing $\bS \sim \mcD_x$ and setting
    \begin{equation*}
        \bS' \coloneqq \begin{cases}
        \bS &\text{if }\bS \cap T^{(j)} = \emptyset \\
        \emptyset&\text{otherwise.}
        \end{cases}
    \end{equation*}
    Given this notation, we can write
    \begin{align}
        \Prx_{\bS \sim \mcD_x}\bracket*{\bS \cap T^{(m^{\star})} \neq \emptyset} &= \sum_{j \in [m^{\star}]} \Prx_{\bS \sim \mcD_x} \bracket*{i_j \in \bS \text{ and }\bS \cap T^{(j-1)} = \emptyset} \notag \\
        &= \sum_{j \in [m^{\star}]} \Prx_{\bS \sim \mcD^{(j-1)}} \bracket*{i_j \in \bS} \notag \\
        &=\sum_{j \in [m^{\star}]} \argmax_{i \in [N]} \Prx_{\bS \sim \mcD^{(j-1)}} \bracket*{i \in \bS}. \label{eq:sum arg max}
    \end{align}
    We wish to apply \Cref{claim:dist-to-heavy-element} to conclude that each term in the above sum must be large. To do so, we observe that
    \begin{equation*}
        \dist_{\mcD^{(j)}}(x, A) \geq \dist_{\mcD_x}(x, A) - \dtv(\mcD_x, \mcD^{(j)})
    \end{equation*}
    Furthermore, we have defined $m^{\star}$ to be the first time step for which $T^{(m^{\star})}$ $\gamma$-hits $\mcD_x$. This means that $\dtv(\mcD_x, \mcD^{(j)}) \leq \gamma$ %
    for every $j < m^{\star}$. We can therefore apply~\Cref{claim:dist-to-heavy-element} to~\Cref{eq:sum arg max} to get: 
    \begin{equation*}
        \Prx_{\bS \sim \mcD_x}\bracket*{\bS \cap T^{(m^{\star})} \neq \emptyset} \geq m^{\star} \cdot \frac{\max\set{0,\dist_{\mcD_x}(x, A) - \gamma}^2}{t \cdot \dtal(x,A)^2}.
    \end{equation*}
    Finally, we observe that if $\dist_{\mcD_x}(x,A) \geq \tau$ and $\dtal(x,A) \leq  \sqrt{\frac{(\tau-\gamma)^2\,m}{\gamma t}}$, then for $m^{\star} = m$, the above will be at least $\gamma$, which means that $\mcD_x$ is $(m,\gamma)$-hit. Therefore,
    \begin{equation*}
        \Pr\bracket*{
            \dist_{\mcD_{\bx}}(\bx, A) \geq \tau\text{ and }
            \mathcal D_{\bx}\ \text{is }(m,\gamma)\text{-spread}
        } \leq \Pr\bracket*{\dtal(x,A) > \sqrt{\lfrac{(\tau-\gamma)^2\,m}{\gamma t}}},
    \end{equation*}
and Talagrand's convex-distance inequality (\Cref{thm:Talagrand}) gives the desired result.

\subsubsection{Proof of \Cref{claim:dist-to-heavy-element}}
\label{subsec:dist-to-heavy-element}

    Let $w \in \R^N_{\ge 0}$ be the vector $w_i \coloneqq \Prx_{\bS \sim \mcD}[i \in \bS]$. Since $\mcD$ is supported on subsets of size at most $t$, we have $\lone{w} \leq t$. Our goal is to lower bound $\linf{w}$. We first apply H\"older's inequality
    \begin{equation}
        \label{eq:Holder}
        \linf{w} \geq \frac{\ltwo{w}^2}{\lone{w}} \geq \frac{\ltwo{w}^2}{t},
    \end{equation}
    which reduces our task to that of lower bounding $\ltwo{w}$. To do so, we first show that \violet{$\dist_{w}(x,A) \geq \dist_{\mcD}(x,A)$. It suffices to show the inequality for every $y \in A$,
    \begin{equation*}
        \dist_{\mcD}(x,y) =  \Prx_{\bS\sim\mcD}[x_{\bS}\ne y_{\bS}] \le \sum_{i\in \Delta(x,y)} \Prx_{\bS\sim\mcD}[i\in \bS] = \sum_{i\in\Delta(x,y)}w_i = \dist_w(x,y).
    \end{equation*}}  
    Finally, let $\tilde{w} = \frac{w}{\ltwo{w}}$, and so \[ d_{\tilde{w}}(x,A) \geq \frac{\dist_{\mcD}(x,A)}{\ltwo{w}}.\] This normalization ensures  $\tilde{w}$ is included in the supremum in the definition of $\dt$ (recall \Cref{thm:Talagrand}), and so $d_{\tilde{w}}(x,A) \leq  \dt(x,A)$. Combining these, we get: 
    \begin{equation*}
       \frac{\dist_{\mcD}(x,A)}{\ltwo{w}} \leq d_{\tilde{w}}(x,A) \leq \dtal(x,A),
    \end{equation*}
This along with~\Cref{eq:Holder} completes the proof.

\subsection{Regularity implies few bad inputs: Proof of \Cref{lem:most-inputs-good}}
\label{sec:most-inputs-good}
Our proof will use the following method of combining $\mcD^{(1)}_x,\ldots,\mcD^{(r)}_x$ into a single distribution. This choice is made because the series $\sum_{i}^{r} 1/i^2$ converges for any choice of $r$ and also no individual term is smaller than $1/\poly(r)$. The convergence ensures that \Cref{claim:combine-dist-exp} holds, whereas the fact that no term is too small ensures the $r^3$ term in \Cref{lem:most-inputs-good} is mild.%

\begin{claim}
    \label{claim:combine-dist-exp}
     For any $x\in \zo^N$ and $r\in [d]$ we define $\mcD^{(\quads,r)}_x$ to be the product distribution
    \begin{equation}
        \label{eq:def-quad-dist}
        \mcD^{(\quads,r)}_x \coloneqq \paren*{\mcD_x^{(1)}}^{r^2} \times \paren*{\mcD_x^{(2)}}^{(r-1)^2} \times \cdots \times \paren*{\mcD_x^{(r-1)}}^{2^2}\times \paren*{\mcD_x^{(r)}}^{1^2}.
    \end{equation}
    Then for $y\in \zo^N$, 
    \begin{equation}
        \label{eq:compare-exp}
        \Prx_{\bS \sim \mcD^{(\quads,r)}_x}[x_{\bS} \neq y_{\bS}] \geq \lfrac{1}{3}\cdot \min\set*{1,\paren[\Bigg]{\sum_{r' \in [r]}\sqrt{\Prx_{\bS \sim \mcD^{(r')}_x}\bracket*{x_{\bS} \neq y_{\bS}}}}^2}.
    \end{equation}
\end{claim}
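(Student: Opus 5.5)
Write $p_{r'} \coloneqq \Pr_{\bS\sim\mcD^{(r')}_x}[x_{\bS}\neq y_{\bS}]$ for $r'\in[r]$. The plan is to compute the left-hand side of \eqref{eq:compare-exp} exactly from the product structure and then compare it with $(\sum_{r'}\sqrt{p_{r'}})^2$.

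A set drawn from $\mcD^{(\quads,r)}_x$ is a union of independent draws: $(r-r'+1)^2$ draws from each $\mcD^{(r')}_x$. The union misses $\Delta(x,y)$ only if every component misses it, so
\[
\Prx_{\bS\sim\mcD^{(\quads,r)}_x}[x_{\bS}\neq y_{\bS}] \;=\; 1-\prod_{r'\in[r]}(1-p_{r'})^{(r-r'+1)^2} \;\ge\; 1-\exp\Big(-\sum_{r'\in[r]}(r-r'+1)^2 p_{r'}\Big).
\]
Set $L\coloneqq\sum_{r'}(r-r'+1)^2p_{r'}$. Since $1-e^{-L}\ge \min\{1,L\}\cdot(1-e^{-1})\ge \tfrac13\min\{1,L\}$, it suffices to show $L\ge(\sum_{r'}\sqrt{p_{r'}})^2$, up to the constant, which can be absorbed.

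This is a weighted Cauchy--Schwarz step:
\[
\Big(\sum_{r'}\sqrt{p_{r'}}\Big)^2=\Big(\sum_{r'}\frac{1}{r-r'+1}\cdot (r-r'+1)\sqrt{p_{r'}}\Big)^2\le \Big(\sum_{k=1}^{r}\frac1{k^2}\Big)\cdot L\le \frac{\pi^2}{6}L.
\]
Hence $L\ge \tfrac{6}{\pi^2}(\sum\sqrt{p_{r'}})^2$. This is where the quadratic multiplicities matter: they make $\sum 1/k^2$ bounded independently of $r$.

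Combining the two steps gives $\text{LHS}\ge(1-e^{-1})\min\{1,\tfrac{6}{\pi^2}(\sum\sqrt{p_{r'}})^2\}$. Since $\tfrac{6}{\pi^2}(1-e^{-1})\approx 0.384>\tfrac13$, this yields \eqref{eq:compare-exp}, using $\min\{1,cZ\}\ge c\min\{1,Z\}$ for $c\le1$.

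The only delicate step is checking this constant. The argument needs the elementary inequality $1-e^{-L}\ge(1-e^{-1})\min\{1,L\}$, which holds by concavity of $1-e^{-L}$ on $[0,1]$ and monotonicity for $L\ge1$. Nothing else is needed.
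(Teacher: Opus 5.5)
Your proposal is correct and follows essentially the same route as the paper. The paper also lower-bounds the probability of a union of independent events by $(1-e^{-1})\min\{1,\sum p\}$ (its \Cref{prop:ind-sum}) and uses Cauchy--Schwarz with $\sum_k 1/k^2\le\pi^2/6$ (its \Cref{prop:exponential-sum}). It then concludes via the same constant check $(1-e^{-1})\cdot\frac{6}{\pi^2}\ge\frac13$.
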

Note that the right-hand side of \Cref{eq:compare-exp} is the quantity we can bound using \Cref{eq:dist-bound-layer,eq:final_condition_implies_total_progress_2}. We will also need a simple proposition: 
\begin{proposition}
    \label{prop:combine-dist-hit} 
    Let $\gamma = \sum_{r' \in [r]} \gamma_{r'} \cdot (r-r' + 1)^2$. For all $x \in \zo^N$, 
    \begin{equation*}
        x \in \Good^{(\leq r)} \implies \mcD^{(\quads,r)}_x\text{ is }(m_r,\gamma)\text{-spread}.
    \end{equation*}
\end{proposition}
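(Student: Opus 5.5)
The argument is a union bound over the independent coordinates of the product distribution. Fix $x \in \Good^{(\le r)}$ and an arbitrary set $T \subseteq [N]$ with $|T| = m_r$. We must show that
\[
\Prx_{\bS \sim \mcD^{(\quads,r)}_x}[\bS \cap T \neq \emptyset] < \gamma .
\]
Think of a sample from $\mcD^{(\quads,r)}_x$ as the tuple of independent draws $(\bS_{r',k})$, where $r' \in [r]$ and $k \in [(r-r'+1)^2]$, with $\bS_{r',k} \sim \mcD^{(r')}_x$. The sampled set is their union, in the same sense in which the quantity $\Prx[x_{\bS} \neq y_{\bS}]$ in \Cref{claim:combine-dist-exp} is read. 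The product set meets $T$ exactly when some component does, so a union bound gives
\[
\Prx_{\bS \sim \mcD^{(\quads,r)}_x}[\bS \cap T \neq \emptyset] \le \sum_{r' \in [r]} (r-r'+1)^2 \cdot \Prx_{\bS \sim \mcD^{(r')}_x}[\bS \cap T \neq \emptyset].
\]
Independence of the components is not even needed for this step.

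It remains to bound each term by $\gamma_{r'}$. Since $x \in \Good^{(\le r)}$, for every $r' \le r$ we have $x \notin \Bad^{(r')}$, so $\mcD^{(r')}_x$ is $(m_{r'},\gamma_{r'})$-spread. The set $T$ has size $m_r \le m_{r'}$, by the monotonicity $m_1 \ge \cdots \ge m_d$. We can therefore pad $T$ to a superset $T' \supseteq T$ of size exactly $m_{r'}$, which is possible as long as $m_{r'} \le N$. Spreadness applied to $T'$ gives
\[
\Prx_{\bS \sim \mcD^{(r')}_x}[\bS \cap T \neq \emptyset] \le \Prx_{\bS \sim \mcD^{(r')}_x}[\bS \cap T' \neq \emptyset] < \gamma_{r'} .
\]
Substituting into the union bound yields a total strictly less than $\sum_{r'} \gamma_{r'}(r-r'+1)^2 = \gamma$. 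Since $T$ was arbitrary, $\mcD^{(\quads,r)}_x$ is not $(m_r,\gamma)$-hit, i.e. it is $(m_r,\gamma)$-spread.

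The one point needing care is the padding step, which uses monotonicity of hitting probabilities under enlarging $T$ together with $m_r \le m_{r'}$. There is also a strict-versus-non-strict inequality convention in \Cref{def:hit-spread}: spread means the hitting probability is $<\gamma$ for every $T$, and the union bound preserves this strictness. Beyond these, the argument is routine.
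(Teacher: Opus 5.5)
Your proof is correct and is essentially the paper's argument. You fix $T$ with $|T|=m_r$, union bound over the $(r-r'+1)^2$ copies of each $\mcD^{(r')}_x$, and bound each term by $\gamma_{r'}$ using $x\notin\Bad^{(r')}$ and $m_{r'}\ge m_r$. Your padding step, which needs $m_{r'}\le N$ as implicitly assumed in \Cref{def:hit-spread}, just makes explicit the monotonicity the paper uses without comment.
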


We prove \Cref{claim:combine-dist-exp} in \Cref{subsec:combine-dists} and \Cref{prop:combine-dist-hit} in \Cref{subsec:proof-of-combine-dist-hit}. For now, we use them to prove~\Cref{lem:most-inputs-good}. 

\subsubsection{Proof of~\Cref{lem:most-inputs-good} assuming~\Cref{claim:combine-dist-exp} and~\Cref{prop:combine-dist-hit}}

    In order for $x$ to be in $\Bad^{(r)}$, there must exist some $|T| = m_r$ for which $\Prx_{\bS \sim \mcD^{(r)}_x}[\bS \cap T \neq \emptyset] \geq \gamma_r$. Our proof will union bound over all $\binom{N}{m_r}$ choices of $T$.

    For any fixed $T$, define 
    \begin{equation*}
        A_T \coloneqq \set*{y: \Pr_{\bS \sim \mcD^{(r)}_y}\big[\bS \cap T \neq \emptyset\big] \leq \frac{\gamma_r}{2}} \quad\text{and}\quad  \Bad_T \coloneqq \set*{x: \Pr_{\bS \sim \mcD^{(r)}_x}\big[\bS \cap T \neq \emptyset\big] \geq \gamma_r}.
    \end{equation*}
    Our goal will be to show that each $\Bad_T$ is small enough that we can afford a union bound over all choices of $T$. We begin by showing that $A_T$ is large. Using the $\eta \le \gamma_r/(4m_r)$ regularity assumption,
    \begin{equation*}
        \Ex_{\by \sim \zo^N}\bracket*{\Ex_{\bS \sim \mcD^{(r)}_{\by}}\bracket*{\abs*{\bS \cap T}}} = \sum_{i \in T}\Ex_{\by \sim \zo^N}\bracket*{\Prx_{\bS \sim \mcD^{(r)}_{\by}}\bracket*{i \in \bS}} \leq \eta \cdot \abs*{T} \leq \frac{\gamma_r}{4}.
    \end{equation*}
    Therefore, by Markov's inequality, $\Pr[A_T] \geq \frac1{2}$.

    Next, we can apply \Cref{eq:dist-bound-layer} with the test $S \mapsto \Ind[S \cap T \neq \emptyset]$. This gives, for all $x \in \Bad_T$ and $y \in A_T$,
    \begin{equation*}
         \paren*{\sum_{r' \in [r-1]}\sqrt{\Prx_{\bS \sim \mcD^{(r')}_x}\bracket*{x_{\bS} \neq y_{\bS}}}}^2 \geq \Omega(\gamma_r).
    \end{equation*}
    By \Cref{claim:combine-dist-exp}, we have that
    \begin{equation*}
        \dist_{\mcD^{(\quads,r-1)}_x}(x, A_T) \geq c \gamma_r,
    \end{equation*}
    for some absolute constant $c > 0$. Using an appropriately large choice of constant in our assumption that 
    \begin{equation*}
        \gamma_r \geq \Omega\paren[\Bigg]{\sum_{r' \in [r-1]} \gamma_{r'} \cdot (r-r')^2}, 
    \end{equation*}

we can apply~\Cref{prop:combine-dist-hit} to get that
    \begin{equation*}
        x \in \Good^{(\leq r-1)} \implies \mcD^{(\quads,r-1)}_x\text{ is  }(m_{r-1},\lfrac{c}{2}\gamma_r)\text{-spread.} 
    \end{equation*}
  We now apply \Cref{lem:dist-to-hit} with $\tau = c\gamma_r$ and $\gamma = \frac{c}{2}\gamma_r$, along with the fact that $\mcD_x^{(\quads,r-1)}$ is supported on sets of size $O(t r^3)$:
    \begin{equation*}
        \Pr\bracket*{A_T} \cdot \Pr\bracket*{
           \Bad_T \cap \Good^{(\leq r-1)}} \leq \exp\paren*{-\Omega\paren*{\frac{\gamma_r m_{r-1} }{t r^3}}}. 
    \end{equation*}
     Finally, we observe that
    \begin{equation*}
        \Bad^{(r)}  \cap \Good^{(\leq r-1)} = \bigcup_{|T| = m_r} \Bad_T \cap \Good^{(\leq r-1)}.
    \end{equation*}
    The desired result follows from our earlier bound of $ \Pr\bracket*{A_T} \geq \frac1{2}$ for all choices of $T$ and a union bound over the $\binom{N}{m_r}$ choices of $T$.

\subsubsection{Proof of \Cref{claim:combine-dist-exp}}
\label{subsec:combine-dists}
\Cref{claim:combine-dist-exp} will follow straightforwardly from the following two propositions.
\begin{proposition} 
    \label{prop:exponential-sum}
     For any distributions $\mcD^{(1)},\ldots,\mcD^{(r)}$, and $x,y \in \zo^N,$
     \begin{equation*}
         \sum_{r' \in [r]} (r-r'+1)^2 \cdot \Prx_{\bS \sim \mcD^{(r')}_x}\bracket*{x_{\bS} \neq y_{\bS}} \geq \frac{6}{\pi^2} \cdot \paren[\Bigg]{\sum_{r' \in [r]}\sqrt{\Prx_{\bS \sim \mcD^{(r')}_x}\bracket*{x_{\bS} \neq y_{\bS}}}}^2
     \end{equation*}
\end{proposition}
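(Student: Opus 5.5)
This is a weighted Cauchy--Schwarz inequality. Write $p_{r'} \coloneqq \Pr_{\bS \sim \mcD^{(r')}_x}[x_{\bS}\neq y_{\bS}] \ge 0$ and set the weights $a_{r'} \coloneqq (r-r'+1)^2$ for $r'\in[r]$. The plan is to split each summand on the right-hand side as
\[ \sqrt{p_{r'}} = \big(\sqrt{a_{r'}}\sqrt{p_{r'}}\big)\cdot \frac{1}{\sqrt{a_{r'}}} \]
and apply Cauchy--Schwarz. This gives
\[ \Big(\sum_{r'\in[r]} \sqrt{p_{r'}}\Big)^2 \le \Big(\sum_{r'\in[r]} a_{r'} p_{r'}\Big)\cdot\Big(\sum_{r'\in[r]} \frac{1}{a_{r'}}\Big). \]

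It remains to bound the second factor. As $r'$ ranges over $[r]$, the index $k \coloneqq r-r'+1$ ranges over $[r]$, so
\[ \sum_{r'\in[r]} \frac{1}{a_{r'}} = \sum_{k=1}^{r} \frac{1}{k^2} \le \sum_{k\ge 1}\frac{1}{k^2} = \frac{\pi^2}{6}. \]
Rearranging then gives exactly
\[ \sum_{r'\in[r]} (r-r'+1)^2\, p_{r'} \ge \frac{6}{\pi^2}\Big(\sum_{r'\in[r]}\sqrt{p_{r'}}\Big)^2. \]

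There is no real obstacle. The only point to get right is matching the weights to the indices: the largest weight, $r^2$, goes to round $1$, consistent with the definition of $\mcD^{(\quads,r)}_x$. The bound uses nothing about the distributions beyond $p_{r'}\ge 0$.
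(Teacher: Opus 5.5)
Your proposal is correct and is essentially the paper's proof: both split $\sqrt{p_{r'}}$ as $(r-r'+1)^{-1}\cdot(r-r'+1)\sqrt{p_{r'}}$, apply Cauchy--Schwarz, and bound $\sum_{k=1}^{r} 1/k^2$ by $\pi^2/6$. Your remark that the argument needs only $p_{r'} \ge 0$ is accurate.
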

\begin{proof}
    This is a straightforward application of Cauchy-Schwarz and convergence of $\sum_{i = 1}^{\infty} 1/i^2 = {\pi}^2/6$.
    \begin{align*}
        \paren*{\sum_{r' \in [r]}\sqrt{\Prx_{\bS \sim \mcD^{(r')}_x}\bracket*{x_{\bS} \neq y_{\bS}}}}^2 &= \paren*{\sum_{r'}\sqrt{(r - r' + 1)^{-2}} \cdot \sqrt{(r - r' + 1)^2\cdot\Prx_{\bS \sim \mcD^{(r')}_x}\bracket*{x_{\bS} \neq y_{\bS}}}}^2\\
        &\leq \paren*{\sum_{r'} (r-r'+1)^{-2}}\cdot \paren*{\sum_{r'}(r - r' + 1)^2\cdot\Prx_{\bS \sim \mcD^{(r')}_x}\bracket*{x_{\bS} \neq y_{\bS}}}\\
        &\leq \frac{\pi^2}{6} \cdot \sum_{r'} (r-r'+1)^2 \cdot \Prx_{\bS \sim \mcD^{(r')}_x}\bracket*{x_{\bS} \neq y_{\bS}}. \qedhere
    \end{align*}
\end{proof}

\begin{proposition}
    \label{prop:ind-sum}
    For any independent events $\bE_1, \ldots, \bE_m$,
    \begin{equation*}
        \Pr[\bE_1 \text{ or } \cdots \text{ or }\bE_m] \geq (1 - e^{-1}) \cdot \min\set[\Bigg]{1, \sum_{i \in [m]}\Pr[\bE_i]}.
    \end{equation*}
\end{proposition}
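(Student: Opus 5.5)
The plan is to pass to the complement and use independence. Write $p_i \coloneqq \Pr[\bE_i]$ and $s \coloneqq \sum_{i\in[m]} p_i$. By independence,
\[
\Pr[\bE_1 \text{ or } \cdots \text{ or } \bE_m] \;=\; 1 - \prod_{i\in[m]} (1-p_i).
\]
Applying the elementary inequality $1-p \le e^{-p}$ (valid for all real $p$) to each factor gives $\prod_i (1-p_i) \le e^{-s}$. Hence the union probability is at least $1-e^{-s}$.

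It then remains to show $1-e^{-s} \ge (1-e^{-1})\min\{1,s\}$ for all $s\ge 0$. I will split into two cases.

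\emph{Case $s\ge 1$.} Monotonicity gives $1-e^{-s}\ge 1-e^{-1}$, which is the claim.

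\emph{Case $s\in[0,1]$.} The function $g(s)=1-e^{-s}$ is concave with $g(0)=0$ and $g(1)=1-e^{-1}$. So on $[0,1]$ it lies above the chord joining these endpoints, i.e.\ $g(s)\ge s\,(1-e^{-1})$.

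Combining the two cases yields the proposition. There is no real obstacle; the only point needing care is the concavity (chord) argument in the case $s\le 1$.
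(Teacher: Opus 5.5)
Your proposal is correct and follows the paper's proof: pass to the complement using independence, bound $\prod_i(1-p_i)\le e^{-\sum_i p_i}$, then apply $1-e^{-P}\ge(1-e^{-1})\min\{1,P\}$. The paper states this last inequality without justification; your monotonicity-plus-chord argument supplies the missing verification.
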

\begin{proof}
    Let $p_i = \Pr[\bE_i]$. Then,
    \begin{align*}
        \Pr[\bE_1 \text{ or } \cdots \text{ or }\bE_m] &= 1 - \prod_{i \in [m]}(1-p_i) \\
        &\geq 1 - e^{-\sum_{i \in [m]} p_i}.
    \end{align*}
    The desired result follows by the inequality $1 - e^{-P} \geq (1 - e^{-1}) \cdot \min\set{1, P}$ applied with $P = \sum_{i \in [m]} p_i$.
\end{proof}

We now combine these propositions.
\begin{proof}[Proof of \Cref{claim:combine-dist-exp}]
    First, we apply \Cref{prop:ind-sum} which gives
    \begin{equation*}
         \Prx_{\bS \sim \mcD^{(\quads,r)}_x}[x_{\bS} \neq y_{\bS}] \geq (1 - e^{-1}) \cdot \min\set*{1,\sum_{r' \in [r]} (r-r'+1)^2 \cdot \Prx_{\bS \sim \mcD^{(r')}_x}\bracket*{x_{\bS} \neq y_{\bS}}}.
    \end{equation*}
    The desired result then follows from \Cref{prop:exponential-sum} and the inequality $(1-e^{-1}) \cdot \frac{6}{\pi^2} \geq \frac1{3}$.
\end{proof}

\subsubsection{Proof of \Cref{prop:combine-dist-hit}}
\label{subsec:proof-of-combine-dist-hit}
    \begin{proof}
       Let $x \in \Good^{(\leq r)}$. We will show that 
        \begin{equation*}
            \Prx_{\bS \sim \mcD^{(\quads,r)}_x}[\bS \cap T \neq \emptyset] < \gamma
        \end{equation*}
        for any set $T$ of size $m_{r}$. Fix such a $T$.  
        Since $x \in \Good^{(\leq r)}$, it is not in $\Bad^{(1)},\ldots, \Bad^{(r)}$. Since $m_{r'}\ge m_r$ for all $r' \in [r]$, this means that
        \begin{equation*}
            \Prx_{\bS \sim \mcD^{(r')}_x}[\bS \cap T \neq \emptyset] < \gamma_{r'} \quad\quad\text{for all $r' \in [r]$}.
        \end{equation*}
    Then, by a union bound and the definition of $\mcD^{(\quads,r)}_x$ in \Cref{eq:def-quad-dist},
    \begin{align*}
         \Prx_{\bS \sim \mcD^{(\quads,r)}_x}[\bS \cap T \neq \emptyset] &\leq \sum_{r' \in [r]}(r-r'+1)^2 \cdot \Prx_{\bS \sim \mcD^{(r')}_x}[\bS \cap T \neq \emptyset] \\
         &< \sum_{r' \in [r]}(r-r'+1)^2 \cdot \gamma_{r'} = \gamma. \qedhere
    \end{align*}
\end{proof}

\subsection{Good inputs are biased: Proof of \Cref{lem:bound-var}}
\label{sec:lem:bound-var}
We can reuse many of the tools used in the proof of \Cref{lem:most-inputs-good}. %
For every $x\in \Rej$ and $y\in \Acc$, a combination of \Cref{eq:final_condition_implies_total_progress_2} and \Cref{claim:combine-dist-exp} gives
    \begin{equation*}
        \Prx_{\bS \sim \mcD_x^{(\quads, d)}}[x_{\bS} \neq y_{\bS}] \geq v  = \Omega(1) .
    \end{equation*}
That is, 
\begin{equation*}
    \dist_{\mcD_x^{(\quads, d)}}(x, \Acc) \geq v.
\end{equation*}
By \Cref{lem:dist-to-hit} and the fact that $\mcD_x^{(\quads, d)}$ is supported on sets of size $O(td^3)$,
    \begin{equation*}
        \Pr\big[\Acc\big] \cdot \Prx_{\bx}\Big[\bx\in\Rej\text{ and } \mcD_{\bx}^{(\quads, d)}\text{ is $(m_d, \lfrac1{2} v)$-spread}\Big] \leq \exp\paren*{-\Omega\paren*{\frac{m_d}{td^3}}}.
    \end{equation*}
    Finally \Cref{prop:combine-dist-hit} and \Cref{eq:sum-biased} together give that if $x \in \Good^{(\leq d)}$ then $\mcD_{x}^{(\quads, d)}$ is $(m_d, v/2)$-spread, giving the desired inequality.

\subsection{Putting things together: Proof of \Cref{thm:rsquared-bound}}
\label{sec:rsquared-bound}
    \paragraph{Setting parameters.} First we set the parameters $\gamma_1,\ldots, \gamma_d$ and $m_1,\ldots, m_d$. For some absolute constant $c$, we set
\[         \gamma_r = c^{-(d-r)} \quad\quad\text{and}\quad\quad m_r \coloneqq \log(1/\delta) \cdot (c td^3 \log N)^{d-r + 1} \cdot \prod_{r' = r+1}^d \frac{1}{\gamma_{r'}}.
   \] 
    Using these parameters, we set
    \begin{align*}
        \eta \coloneqq \frac{\gamma_1}{4m_1} &\geq \lfrac1{4} \cdot c^{-d} \cdot \paren*{\log(1/\delta) \cdot (ctd^3 \log N)^d \cdot \paren*{\frac{1}{\gamma_1}}^{d}}^{-1}\\
        &\geq 2^{-O(d^2)} \cdot (t \log N)^{-d} \cdot \log(1/\delta)^{-1}. \tag{Absorbing $(cd^3)^{-d}$ into $2^{-O(d^2)}$}
    \end{align*}

    \pparagraph{Bounding the size of the $\Bad$ sets.} We claim that
    \begin{equation}
        \Pr\bracket*{\Bad^{(r)} \cap \Good^{(\leq r-1)}} \leq \frac{\delta}{2d} \quad \text{for each $r\in [d]$,} \label{eq:each slice}
    \end{equation}
    where we take the convention that $\Good^{(\leq 0)} = \zo^N$ and otherwise follow the definitions in \Cref{eq:def-bad-good}. For $r = 1$, the distribution $\mcD^{(1)}_x$ does not depend on $x$ (\Cref{item:inital_condition_cor} of \Cref{cor:quantum-model}). Therefore, the assumption of $\eta$-regularity gives, for all $T$ and $x$,
    \begin{equation*}
        \Prx_{\bS \sim \mcD^{(1)}_x}[\bS \cap T \neq \emptyset] \leq \eta \cdot |T|.
    \end{equation*}
    For our choice of $\eta$, we therefore have that there is no $T$ of size $m_1$ that $\gamma_1$-hits $\mcD^{(1)}_x$, and so $\Bad^{(1)}$ is empty.

    For any $r \in \{2,\ldots,d\}$, we will apply \Cref{lem:most-inputs-good}. To do so, we need to verify that
    \begin{equation*}
        \gamma_r \geq \Omega\paren[\Bigg]{\sum_{r' \in [r-1]} \gamma_{r'} \cdot (r-r')^2},
    \end{equation*}
    which does indeed hold for a large enough choice of $c$ due to the convergence of the series $\sum_{k = 1}^\infty k^2/c^k$. Therefore,
    \begin{equation*}
          \Pr\bracket*{\Bad^{(r)} \cap \Good^{(\leq r-1)}} \leq  \binom{N}{m_r} \cdot\exp\paren*{-\Omega\paren*{\frac{\gamma_r m_{r-1}}{tr^3}}}.
    \end{equation*}
   For the RHS of the above to be at most $\frac{\delta}{2d}$ it suffices to ensure
    \begin{equation*}
         \exp\paren*{-\Omega\paren*{\frac{\gamma_r m_{r-1}}{tr^3}}} \leq \frac{\delta}{d \binom{N}{m_r}}.
    \end{equation*}
    Using the bound $\binom{N}{m} \leq N^m$ and solving for $m_{r-1}$, it in turn suffices to ensure
    \begin{equation*}
        m_{r-1} \geq \Omega\left(\frac{tr^3}{\gamma_r} \cdot \paren*{m_r \log N + \log(d/\delta)}\right),
    \end{equation*}
    which does indeed hold for our choice of parameters. Having shown~\Cref{eq:each slice}, a union bound over all $r \in [d]$ gives that
    \begin{equation*}
        \Pr\bracket*{\Good^{(\leq d)}} \geq 1 - \lfrac{\delta}{2}.
    \end{equation*}

    \pparagraph{$\Good$ inputs are biased.} This step is a direct application of \Cref{lem:bound-var}. First, we verify that, for our choice of parameters, \Cref{eq:sum-biased} holds. This is once again true due to convergence of the series $\sum_{k = 1}^\infty k^2/c^k$. Therefore, \Cref{lem:bound-var} gives
    \begin{equation*}
        \Pr\big[\Acc\big] \cdot \Pr\big[\Rej \cap \Good^{(\leq d)}\big] \leq \exp\paren*{-\Omega\paren*{\frac{m_d}{td^3}}},
    \end{equation*}
    which is at most $\delta/2$ by our choice of $m_d$. 

We have shown that 
\[ \Pr\big[\Acc\big]\cdot \Pr\big[\Rej\big] \le \Pr\big[\Acc\big]\cdot \Pr\big[\Rej \cap \Good^{(\le d)}\big] + \Pr\big[\Bad^{(\le d)}\big] \le \lfrac{\delta}{2} + \lfrac{\delta}{2} = \delta,  \]
and this completes the proof of~\Cref{thm:rsquared-bound}.

\section{Incorporating $\eps$-dependence and removing $N$-dependence}
\label{sec:clean-up}

\violet{ 
In this section we show how statements such as~\Cref{lem:regular implies biased} and~\Cref{thm:rsquared-bound} can be generically improved in two ways: 

\begin{enumerate}
    \item We show how they imply a version that bounds $\eps$-sized deviations rather than the constant-sized ones. That is, we get a version where $\delta \coloneqq \Pr[\abs*{f(\bx) - \Ex[f(\bx)]} \geq \eps]$ at the cost of a $\poly(1/\eps)$-overhead in the regularity parameter. (Note that if $\min\{\Pr[\Acc_f],\Pr[\Rej_f]\} \ge \delta$ then either $\Acc_f$ or $\Rej_f$ witnesses the fact $\Pr\big[|f(\bx)-\E[f(\bx)]|\ge \frac1{6}\big] \ge \delta$.) 
    \item We then show how any dependence on $N$ in the regularity parameter can be removed by setting $N = \poly(2^{td},1/\eps,1/\delta)$. This essentially follows from the result of~\cite{DFKO07} discussed in the introduction, though we cannot quite apply it as a blackbox in our setting. 
\end{enumerate}
}

\subsection{Incorporating $\eps$-dependence}
\label{sec:eps dependence}
\begin{claim}[Bounding small deviations]
    \label{claim:eps-dependence}
    Suppose that, for some function $\eta(d,t,\delta)$, every $t$-parallel $d$-round quantum algorithm with acceptance probability $f:\zo^N \to[0,1]$ has a coordinate $i \in [N]$ for which
    \begin{equation*}
        \Ex_{\bx}[W_i(\bx)] \geq \eta(d,t,\delta) \quad\quad\text{where } \delta \coloneqq \min\big\{{\Pr[\Acc_f]}, \Pr[\Rej_f]\big\}.
    \end{equation*}
    Then, for every $\eps > 0$, every such algorithm also has a coordinate $i\in[N]$ for which
    \begin{equation*}
        \Ex_{\bx}[W_i(\bx)] \geq \eta(d,t,\delta,\eps)\quad\quad\text{where } \delta \coloneqq \Prx_{\bx}\bracket*{\abs*{f(\bx) - \Ex[f(\bx)]} > \eps},
    \end{equation*}
    and we define
    \begin{equation*}
        \eta(d,t,\delta,\eps) \coloneqq \Omega(\eps^2) \cdot \eta(d,O(t/\eps^2),\min\{\delta/2,\eps/2\}).
    \end{equation*}

\end{claim}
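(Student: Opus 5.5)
We reduce $\eps$-sized deviations to constant-sized ones by building a new algorithm $\mcA'$ from $\mcA$ whose acceptance probability amplifies the difference between $f(x)$ and a threshold near $\E[f]$. Set $\mu \coloneqq \E[f(\bx)]$. If $\Pr[|f(\bx)-\mu|>\eps] = \delta$, then either $\Pr[f(\bx) > \mu+\eps] \ge \delta/2$ or $\Pr[f(\bx) < \mu-\eps]\ge \delta/2$; say the former (the other case is symmetric). Since $f \le 1$ and $\E[f]=\mu$, Markov-type reasoning gives a lower bound on the opposite side: $\Pr[f(\bx) \le \mu + \eps/2] \ge \eps/2$. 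Indeed, otherwise $f > \mu+\eps/2$ on a $(1-\eps/2)$-fraction, and since $f\ge 0$ this forces $\E[f] > (1-\eps/2)(\mu+\eps/2) \ge \mu$ once we use $\mu\le 1$. This is exactly where the $\min\{\delta/2,\eps/2\}$ in the claimed $\eta$ comes from.

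\textbf{Amplification.} Let $\mcA'$ run $k = O(1/\eps^2)$ independent copies of $\mcA$ in parallel, using fresh registers and making all queries of the copies within the same rounds. It accepts iff the fraction of accepting copies exceeds $\mu + 3\eps/4$. Then $\mcA'$ is a $kt$-parallel $d$-round algorithm with $kt = O(t/\eps^2)$. Let $f'$ be its acceptance probability. By Chernoff, $f'(x)\ge 2/3$ whenever $f(x) > \mu+\eps$, and $f'(x)\le 1/3$ whenever $f(x)\le \mu+\eps/2$. Hence $\min\{\Pr[\Acc_{f'}],\Pr[\Rej_{f'}]\}\ge \min\{\delta/2,\eps/2\}$. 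The hypothesis applied to $\mcA'$ now gives a coordinate $i$ with $\E[W'_i(\bx)] \ge \eta(d,O(t/\eps^2),\min\{\delta/2,\eps/2\})$. The threshold test is just a final measurement, so it is permitted by \Cref{def:RroundTparallelQalg}; no extra round is used.

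\textbf{Relating query weights.} The state of $\mcA'$ before round $r$ is the tensor product of $k$ copies of $\psirx$. Its query registers are $k$ blocks of $t$ registers each. The projector $\Pi_i^{(kt)}$ onto ``some register queries $i$'' is dominated by the sum over the $k$ blocks of the blockwise projectors $\Pi_i^{(t)}$, by the same union-bound operator inequality used for \Cref{eq:t_parallel_index_query_weights_sum_to_t}. Consequently $w'^{(r)}_i(x)\le k\, w^{(r)}_i(x)$, so $W'_i(x)\le kW_i(x)$. Dividing by $k=O(1/\eps^2)$ yields $\E[W_i(\bx)] \ge \Omega(\eps^2)\cdot \eta(d,O(t/\eps^2),\min\{\delta/2,\eps/2\})$, as claimed.

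The main obstacle is the step establishing that both sides are large: the threshold must be placed between $\mu+\eps/2$ and $\mu+\eps$ so that both the far-tail event and its complement have non-negligible mass. A secondary concern is that $\mu$ is a fixed number, so it can be hard-coded into the final measurement. Everything else is routine bookkeeping of the parallelism parameter.
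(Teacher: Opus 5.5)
Your proposal is correct and follows essentially the same route as the paper. The paper also finds two thresholds $\eps/2$ apart, each side having mass at least $\min\{\delta/2,\eps/2\}$ (its \Cref{prop:interval-to-mean}, proved by the same averaging argument), amplifies with $k=O(1/\eps^2)$ parallel copies and a hard-coded threshold, and notes that each query weight grows by at most a factor of $k$. One small fix: the inequality $(1-\eps/2)(\mu+\eps/2)\ge\mu$ needs $\mu\le 1-\eps/2$, not merely $\mu\le 1$, but this follows from your case assumption, since $f$ exceeds $\mu+\eps$ somewhere and $f\le 1$.
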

\Cref{claim:eps-dependence} follows from standard success amplification. The following inequality will be useful.
\begin{proposition}
    \label{prop:interval-to-mean}
    Let $\bp$ be a random variable supported on $[0,1]$ and $\eps, \delta$ satisfy
    \begin{equation*}
         \delta = \Pr[|\bp - \mu| \geq \eps] \quad\quad\text{where $\mu \coloneqq \Ex[\bp]$}.
    \end{equation*}
    Then, there are thresholds $b-a = \eps/2$ for which
    \begin{equation*}
        \min\{\Pr[\bp \leq a], \Pr[\bp \geq b]\} \geq \min\{\eps/2, \delta/2\}.
    \end{equation*}   
\end{proposition}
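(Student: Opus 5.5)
We will find the thresholds by a case split on which side of the mean carries the deviation mass. Since $\delta = \Pr[|\bp-\mu|\ge\eps] \le \Pr[\bp\ge\mu+\eps]+\Pr[\bp\le\mu-\eps]$, one of these two tails has probability at least $\delta/2$. By symmetry (replace $\bp$ with $1-\bp$, which swaps the roles of $a$ and $b$), we may assume $\Pr[\bp\ge \mu+\eps]\ge \delta/2$. We then take $b\coloneqq\mu+\eps$ and $a\coloneqq \mu+\eps/2$, so that $b-a=\eps/2$ and $\Pr[\bp\ge b]\ge\delta/2$.

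It remains to lower bound $\Pr[\bp\le a]=\Pr[\bp\le\mu+\eps/2]$ by $\eps/2$, and here we use that $\bp$ has mean $\mu$ and lies in $[0,1]$. Suppose for contradiction that $q\coloneqq\Pr[\bp\le \mu+\eps/2]<\eps/2$. Then
\[
\mu=\Ex[\bp] \ge (1-q)(\mu+\eps/2) + q\cdot 0 = \mu+\eps/2 - q(\mu+\eps/2).
\]
This gives $q(\mu+\eps/2)\ge\eps/2$, so $q\ge \frac{\eps/2}{\mu+\eps/2}$. If $\mu+\eps/2\le 1$, this yields $q\ge\eps/2$, a contradiction. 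If instead $\mu+\eps/2>1$, then $\mu+\eps>1$, so $\Pr[\bp\ge\mu+\eps]=0$. That contradicts $\Pr[\bp\ge\mu+\eps]\ge\delta/2$ unless $\delta=0$, in which case the claimed bound $\min\{\eps/2,\delta/2\}=0$ is trivial.

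In every case, then, $\Pr[\bp\le a]\ge\eps/2$ and $\Pr[\bp\ge b]\ge \delta/2$, which gives the proposition. The only delicate point is the averaging step showing that the lower side has mass at least $\eps/2$; this is exactly where we need $\bp\in[0,1]$, both for the lower bound $\bp\ge 0$ in the averaging and for the upper bound used to handle the case $\mu+\eps/2>1$. The symmetric case, where the lower tail $\Pr[\bp\le\mu-\eps]$ is large, uses $a=\mu-\eps$ and $b=\mu-\eps/2$, with the same argument applied to $1-\bp$.
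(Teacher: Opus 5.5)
Your proof is correct and is essentially the paper's argument reflected through $\bp \mapsto 1-\bp$. The paper takes the lower tail, sets $a=\mu-\eps$ and $b=\mu-\eps/2$, and bounds $\mu$ from above using $\bp\le 1$; you take the upper tail and bound $\mu$ from below using $\bp\ge 0$. You also treat the edge case $\mu+\eps/2>1$ explicitly, whereas the paper's final inequality silently needs the mirror condition $\mu-\eps/2\ge 0$, so your version is slightly more careful.
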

\begin{proof}
    Either $\Pr[\bp \leq \mu - \eps] \geq \delta/2$ or $\Pr[\bp \geq \mu + \eps] \geq \delta/2$. Without loss of generality, we assume it is the former. Given that, we set $a = \mu-\eps$ and $b = \mu-\eps/2$ which have the desired difference. Then, defining %
    $\delta_{\mathrm{large}} \coloneqq \Pr[\bp \geq b]$, we can upper bound the mean as
    \begin{equation*}
        \mu \leq (1 - \delta_{\mathrm{large}}) \cdot (\mu-\eps/2) + \delta_{\mathrm{large}} \cdot 1 \leq \mu - \eps/2 + \delta_{\mathrm{large}},
    \end{equation*}
    which requires $\delta_{\mathrm{large}} \geq \eps/2$. These thresholds therefore satisfy the desired inequality
\end{proof}

\begin{proof}[Proof of \Cref{claim:eps-dependence}]
    Let $f:\zo^N \to [0,1]$ be the acceptance probability of a $d$-round $t$-parallel quantum algorithm satisfying
    \begin{equation*}
         \delta \coloneqq \Prx_{\bx}\bracket*{\abs*{f(\bx) - \Ex[f(\bx)]} > \eps}.
    \end{equation*}
    Then, by \Cref{prop:interval-to-mean}, there are thresholds $b - a = \eps/2$ for which
    \begin{equation*}
        \min(\Pr[f(\bx) \leq a], \Pr[f(\bx) \geq b]) \geq \min\{\eps/2, \delta/2\}.
    \end{equation*}
    We will now construct an amplified version of the original quantum query algorithm and apply the hypothesis of \Cref{claim:eps-dependence} to this amplified variant. For $k = O(1/\eps^2)$ the amplified algorithm runs $k$ copies of the original algorithm in parallel, and for $\ell \coloneqq \floor{k(a + b)/2}$, outputs whether at least $\ell$ of those copies would output $1$. This is a $kt$-parallel $d$-round quantum query algorithm since its final state can be computed as
    \begin{equation*}
        \ket{\Psi(x)}^{\otimes k}\coloneqq
            (\mcO_x^{\otimes t})^{\otimes k}U_{d}^{\otimes k}(\mcO_x^{\otimes t})^{\otimes k}\cdots U_2^{\otimes k}(\mcO_x^{\otimes t})^{\otimes k}U_1^{\otimes k}\ket{1}.
    \end{equation*}
  
    The resulting algorithm has acceptance probability
    \begin{equation*}
        F(x) \coloneqq \Prx_{\bz \sim \mathrm{Bin}(k, f(x))}[\bz \geq \ell].
    \end{equation*}
    Chebyshev's inequality gives that if $f(x) \geq b$ then $F(x) \geq 2/3$ and if $f(x) \leq a$ then $F(x) \leq 1/3$. Therefore,
    \begin{equation*}
        \min\big\{{\Pr[\Acc_F]}, \Pr[\Rej_F]\big\} \geq \min\{\eps/2, \delta/2\},
    \end{equation*}
    which, by the hypothesis, implies that this amplified algorithm has a coordinate with expected query weight at least $\eta(d, kt, \min\{\eps/2, \delta/2\})$. The desired result follows from the observation that each expected query weight of the amplified algorithm is at most a factor of $k$ larger than the expected query weights in the original algorithm.
    \end{proof}

\violet{\paragraph{Applying~\Cref{claim:eps-dependence}.} We} can straightforwardly apply \Cref{claim:eps-dependence} to both our conjecture and~\Cref{thm:rsquared-bound}.
\begin{claim}[Small deviation version of our conjecture] \label{claim:eps-dependent-conjecture}
    If \Cref{conj:our conjecture} is true, for any $\eps > 0$, any $t$-query quantum algorithm with acceptance probability $f : \zo^N \to [0,1]$ has a variable $i \in [N]$ with query weight at least
    \begin{equation*}
         \Ex_{\bx}[W_i(\bx)] \ge \poly\paren*{\frac{\eps \delta}{t}} \quad\quad\text{ where }\delta \coloneqq \Pr[\abs*{f(\bx) -\Ex[f(\bx)]} \geq \eps].
    \end{equation*}
\end{claim}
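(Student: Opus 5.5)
This is a direct instantiation of \Cref{claim:eps-dependence}, with the function $\eta(t,\delta)$ supplied by \Cref{conj:our conjecture}. The plan is to first cast \Cref{conj:our conjecture} in the form of the hypothesis of \Cref{claim:eps-dependence}, then apply the claim and simplify the resulting bound.

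\textbf{Setting up the hypothesis.} A $t$-query quantum algorithm is in particular a $t$-parallel $t$-round algorithm: each round makes one genuine query, padded with dummy queries that act trivially, so its query weights are unchanged. More usefully, we run the amplification from the proof of \Cref{claim:eps-dependence} directly in the sequential model. We execute $k=O(1/\eps^2)$ copies of $\mcA$ one after another, which gives a $kt$-query algorithm whose expected query weight on each coordinate is exactly $k$ times that of $\mcA$. By \Cref{conj:our conjecture}, every $t'$-query algorithm has a coordinate with
\[
\Ex[W_i]\ge \eta(t',\delta')\coloneqq \poly(\delta'/t'),
\]
where $\delta'=\min\{\Pr[\Acc],\Pr[\Rej]\}$. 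So the hypothesis of \Cref{claim:eps-dependence} holds, with no dependence on $d$.

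\textbf{Applying the claim.} Let $\delta=\Pr[|f(\bx)-\E f|\ge\eps]$. \Cref{prop:interval-to-mean} gives thresholds $a<b$ with $b-a=\eps/2$ and
\[
\min\{\Pr[f\le a],\Pr[f\ge b]\}\ge \min\{\eps/2,\delta/2\}.
\]
The amplified algorithm $F$ thresholds the number of accepting copies at $\ell=\floor{k(a+b)/2}$. By Chebyshev, $\min\{\Pr[\Acc_F],\Pr[\Rej_F]\}\ge\min\{\eps/2,\delta/2\}$. The conjecture then yields a coordinate of $F$ with
\[
\Ex[W_i^F]\ge\poly\bigl(\min\{\eps,\delta\}\,\eps^2/t\bigr).
\]
Dividing by $k$ gives
\[
\Ex[W_i^{\mcA}]\ge \Omega(\eps^2)\cdot\poly\bigl(\min\{\eps,\delta\}\eps^2/t\bigr).
\]
Since $\min\{\eps,\delta\}\ge\eps\delta$ for $\eps,\delta\le1$, this is $\poly(\eps\delta/t)$.

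\textbf{Expected obstacle.} The only subtle point is the strict versus non-strict inequality in the definition of $\delta$. \Cref{claim:eps-dependence} uses $>\eps$, while the present statement uses $\ge\eps$. To bridge this, we apply the argument at $\eps/2$: the event $|f-\E f|\ge\eps$ is contained in the event $|f-\E f|>\eps/2$, so the probability only increases, and halving $\eps$ costs only constant factors inside the $\poly$. Beyond this, everything is routine bookkeeping of polynomial factors.
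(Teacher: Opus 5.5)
Your proposal is correct and is essentially the paper's argument. The paper instantiates \Cref{claim:eps-dependence} with the bound $\poly(\delta/t)$ from \Cref{conj:our conjecture}; you inline that amplification argument, and running the $k$ copies sequentially is a clean choice because it keeps the amplified algorithm in the $t$-query model of the conjecture and makes its query weights exactly $k$ times those of $\mcA$. (Your side remark that padding rounds with dummy queries leaves query weights unchanged is not quite right, since the dummy index registers contribute to $\Pi_i^{(t)}$, but nothing in your argument depends on it.)
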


We state the corollary of \Cref{thm:rsquared-bound} in the contrapositive so we can more easily apply it to prove the special case of the simulation conjecture, \Cref{thm:d squared intro}.
\begin{corollary}[Consequence of \Cref{thm:rsquared-bound} and \Cref{claim:eps-dependence}]
\label{cor:rsquared with eps dependence}
    Let $\mathcal{A}$ be a $t$-parallel $d$-round quantum algorithm and $f : \zo^N \to [0,1]$ denote its acceptance probability. If $\mathcal{A}$ is $\eta$-regular where 
\[ \eta \le 2^{-\Omega(d^2)}\cdot ((t\log N)/\eps)^{-\Omega(d)}\cdot \log(1/\delta)^{-1}, \] 
then $\Pr[\abs*{f(\bx) -\Ex[f(\bx)]} \geq \eps]\le \delta.$
\end{corollary}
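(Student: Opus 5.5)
The plan is to derive this by plugging \Cref{thm:rsquared-bound} into the amplification statement \Cref{claim:eps-dependence}, and then reading the result in the contrapositive. First I will restate \Cref{thm:rsquared-bound} in the form \Cref{claim:eps-dependence} expects. Every $t$-parallel $d$-round algorithm on $N$ variables has a coordinate with $\Ex_{\bx}[W_i(\bx)] \ge \eta(d,t,\delta)$, where $\delta = \min\{\Pr[\Acc_f],\Pr[\Rej_f]\}$ and
\[ \eta(d,t,\delta) \coloneqq 2^{-O(d^2)}\cdot (t\log N)^{-O(d)}\cdot \log(1/\delta)^{-1}. \]
Indeed, if every coordinate had expected weight below this quantity, the algorithm would be $\eta$-regular with $\eta$ small enough for \Cref{thm:rsquared-bound} to apply. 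That theorem would then force $\min\{\Pr[\Acc_f],\Pr[\Rej_f]\}$ to be strictly smaller than the actual value $\delta$, which is a contradiction. (Run the theorem with a slightly smaller $\delta'$, e.g.\ $\delta/2$; this only changes constants.)

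Second, I apply \Cref{claim:eps-dependence}. Its proof builds a $kt$-parallel $d$-round algorithm with $k = O(1/\eps^2)$, still on the same $N$ variables, so the $\log N$ factor is unchanged. The claim then says every such algorithm has a coordinate with expected weight at least
\[ \Omega(\eps^2)\cdot 2^{-O(d^2)}\cdot \big(O(t/\eps^2)\log N\big)^{-O(d)}\cdot \log\big(1/\min\{\delta/2,\eps/2\}\big)^{-1}, \]
where now $\delta = \Pr[|f(\bx)-\Ex f|>\eps]$. Next I simplify. The constant in $O(t/\eps^2)^{-O(d)}$ contributes $2^{-O(d)}$, which is absorbed into $2^{-O(d^2)}$. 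Both the leading $\eps^2$ and the $\eps^{-2}$ inside the power combine into $((t\log N)/\eps)^{-O(d)}$.

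The one non-routine point is the factor $\log(1/\min\{\delta/2,\eps/2\})^{-1}$. It is at least $\log(2/\delta)^{-1}\cdot\log(2/\eps)^{-1}$ up to constants, because $\log(1/\min\{a,b\}) \le \log(1/a)+\log(1/b) \le 2\log(1/a)\log(1/b)$ when both are at least $2$. The $\log(2/\eps)^{-1}$ part is at least $\eps$, so it is absorbed into $(1/\eps)^{-O(d)}$. This leaves exactly the stated $\log(1/\delta)^{-1}$. Small edge cases, such as $\delta$ close to $1$, are handled by adjusting constants.

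Finally I take the contrapositive. Suppose $\mathcal{A}$ is $\eta$-regular with $\eta$ below the stated bound, with suitably chosen hidden constants. Then no coordinate reaches the lower bound just derived. Hence the deviation probability $\delta' = \Pr[|f(\bx)-\Ex f|\ge\eps]$ must satisfy $\log(1/\delta') > \log(1/\delta)$, i.e.\ $\delta' \le \delta$. The $>\eps$ versus $\ge\eps$ mismatch is harmless: apply the argument with $\eps/2$, which again only changes constants. I expect the main obstacle to be only this bookkeeping, in particular making sure the $\eps$-dependence from the log term and from the $t/\eps^2$ substitution all fit inside $((t\log N)/\eps)^{-\Omega(d)}$.
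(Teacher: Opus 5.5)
Your proposal is correct and follows the route the paper intends; the paper gives no argument beyond saying to straightforwardly apply \Cref{claim:eps-dependence} to \Cref{thm:rsquared-bound}, which is what you do: recast the theorem as a heavy-coordinate lower bound, push it through the amplification claim with $N$ unchanged, absorb all $\eps$-factors (including $\log(2/\eps)^{-1}\ge\Omega(\eps)$) into $((t\log N)/\eps)^{-O(d)}$, and take the contrapositive. The one inaccuracy is your claim that $\delta$ near $1$ is ``handled by adjusting constants'': there $\log(1/\delta)^{-1}\to\infty$ while your derived bound stays bounded, and the statement as written fails for $\delta$ close enough to $1$ (e.g.\ $f(x)=x_1$ with $\eps\le 1/4$), so you should instead read $\log(1/\delta)$ as $\log(2/\delta)$ or assume $\delta\le 1/2$, as the paper implicitly does.
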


\subsection{Removing $N$-dependence}
\label{sec:removing dependence on $N$}

We will need the following result of~\cite{DFKO07}: 
\begin{theorem}[\cite{DFKO07}]
    \label{thm:DFKO}
    Let $f : \zo^N \to [0,1]$ be a degree-$D$ polynomial. For every $\eps > 0$ there is a set $J\sse [N]$ of size $2^{O(D)}/\eps^2$ such that the function $f_{\sse J}:\zo^N\to [0,1]$,
    \[ f_{\sse J}(x) \coloneqq \Ex_{\br\sim\zo^N}\big[f(\br)\mid \br_J = x_J\big] \]
    satisfies $\| f - f_{\sse J}\|_2^2 \le \eps$. 
\end{theorem}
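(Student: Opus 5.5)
The plan is to take $J$ to be the set of \emph{influential} coordinates of $f$ and to show that the Fourier mass of $f$ outside $J$ is small. Fix a threshold $\tau = \eps^2\cdot 2^{-cD}$ for a suitable constant $c$, and let $J\coloneqq\{i : \Inf_i(f)\ge\tau\}$. Since $f$ has degree $D$ and range $[0,1]$, its total influence is at most $D\cdot\Var(f)\le D$. Hence $|J|\le D/\tau = 2^{O(D)}/\eps^2$, which is the claimed size. In Fourier terms, $f_{\sse J}$ keeps exactly the coefficients $\hat f(S)$ with $S\sse J$. So $g\coloneqq f-f_{\sse J}=\sum_{S\not\sse J}\hat f(S)\chi_S$ and $\|f-f_{\sse J}\|_2^2=\Var(g)$, and it suffices to show $\Var(g)\le\eps$.

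We record the basic properties of $g$. It has degree at most $D$, and since $f_{\sse J}$ is an average of values of $f$ it lies in $[0,1]$, so $g$ takes values in $[-1,1]$. For $i\notin J$, every coefficient of $f$ on a set containing $i$ is also a coefficient of $g$. Hence $\Inf_i(g)=\Inf_i(f)<\tau$ for such $i$.

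To exploit this, restrict the coordinates in $J$: for a random assignment $\bm\rho$ to $J$, let $h=g_{\bm\rho}$ be the resulting bounded degree-$D$ function on the coordinates outside $J$. Every monomial of $g$ contains a coordinate outside $J$, so each $h$ has mean zero and $\Ex_{\bm\rho}[\Var(g_{\bm\rho})]=\Var(g)$. The core analytic input is a DFKO-type inequality for bounded, mean-zero, degree-$D$ functions $h:\zo^n\to[-1,1]$:
\[
\Var(h)\;\le\;2^{O(D)}\sum_{i}\Ex\big[|\partial_i h|^3\big],
\]
where $\partial_i h$ is the (bounded, degree $\le D-1$) discrete derivative. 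The right-hand side is \emph{linear} in the restriction, which avoids the Jensen problem that would arise with powers of influences. Averaging over $\bm\rho$ gives $\Ex_{\bm\rho}\Ex|\partial_i g_{\bm\rho}|^3=\Ex|\partial_i g|^3$. Bonami hypercontractivity for the degree-$(D-1)$ polynomial $\partial_i g$ then gives $\Ex|\partial_i g|^3\le 2^{O(D)}\Inf_i(g)^{3/2}$. Therefore
\[
\Var(g)\le 2^{O(D)}\sum_{i\notin J}\Inf_i(g)^{3/2}\le 2^{O(D)}\sqrt\tau\sum_i\Inf_i(f)\le 2^{O(D)}D\sqrt\tau,
\]
which is at most $\eps$ for our choice of $\tau$. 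Note that $\tau=\eps^2 2^{-O(D)}$ is exactly what produces the $1/\eps^2$ in $|J|$.

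The main obstacle is the displayed inequality $\Var(h)\le 2^{O(D)}\sum_i\Ex|\partial_i h|^3$ for bounded functions; boundedness is essential, since without it small influences do not force small variance. I would first try to prove it by writing $\Var(h)=\sum_i\Ex[\partial_i h\cdot L_i h]$-style identities, in the spirit of the Efron--Stein/semigroup decomposition $\Var(h)=\sum_i\int_0^\infty e^{-2s}\Ex[(\partial_i h)(\partial_i T_{e^{-s}}h)]\,ds$. Boundedness of $h$ would bound the level-one-type correlations, with Hölder splitting off $|\partial_i h|^{3}$ against the hypercontractive $L^{3/2}$ norm of the smoothed derivative. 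If this direct route is too lossy, I would fall back on invoking the original DFKO tail theorem for bounded functions (via their argument bounding Fourier weight of bounded functions in terms of small influences) at this step. The remaining steps are routine bookkeeping.
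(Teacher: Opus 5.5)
\paragraph{The key inequality is false.} You need $\Var(h)\le 2^{O(D)}\sum_i\E[|\partial_i h|^3]$ for mean-zero, degree-$D$, $[-1,1]$-valued $h$, and this inequality is false, not merely hard to prove. It is not scale-invariant: replacing $h$ by $ch$ with $0<c\le 1$ keeps every hypothesis but multiplies the left side by $c^2$ and the right side by $c^3$. Concretely, $h(x)=\frac{1}{n}\sum_j(-1)^{x_j}$ has $\|h\|_\infty=1$ and $\Var(h)=1/n$, yet $\sum_i\E[|\partial_i h|^3]=8/n^2$, so normalizing the sup norm does not save it.

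The consequence you draw from it, $\Var(g)\le 2^{O(D)}\sum_{i\notin J}\Inf_i(g)^{3/2}$, fails as well. Take $f=\frac{1}{2}+\frac{1}{2n}\sum_j(-1)^{x_j}$. Every influence is $\Theta(1/n^2)$, so $J=\emptyset$ for large $n$. Then $\Var(g)=\frac{1}{4n}$, while the right-hand side is $O(1/n^2)$.

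What you have written is Friedgut's junta argument for \emph{Boolean} functions. There $|\partial_i f|\in\{0,1\}$, so every moment of $\partial_i f$ equals $\Inf_i(f)$, and hypercontractivity upgrades $\Inf_i$ to $\Inf_i^{3/2}$. For $[0,1]$-valued functions the derivatives can be uniformly tiny, and that gain simply does not exist.

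\paragraph{What the correct statement looks like.} For bounded functions, the usable quantity is the \emph{largest} influence, not a sum of powers of influences. Already for $D=1$,
$\Var(h)=\sum_i\hat h(\{i\})^2\le\max_i|\hat h(\{i\})|\cdot\sum_i|\hat h(\{i\})|\le\max_i|\hat h(\{i\})|$,
since $\sum_i|\hat h(\{i\})|=\|h\|_\infty\le 1$. This is why $\tau=\eps^2 2^{-O(D)}$ is the right threshold. Proving a degree-$D$ analogue with only $2^{O(D)}$ loss is where the real work of \cite{DFKO07} lies.

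A maximum also does not average linearly over $\bm{\rho}$. Even granting such a bound, you would still have to control $\max_{i\notin J}\Inf_i(g_{\bm{\rho}})$ for typical $\bm{\rho}$. That is exactly the Jensen obstacle your restriction-linear device was meant to avoid.

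\paragraph{Relation to the paper.} Your fallback, invoking DFKO's tail theorem at this step, amounts to citing the result. That is all the paper does: it gives no proof of this theorem, imports it from \cite{DFKO07} as a black box, and uses it only to reduce to $N=\poly(2^{td},1/\eps,1/\delta)$.

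The rest of your bookkeeping is correct: the bound on $|J|$, $\Inf_i(g)=\Inf_i(f)$ for $i\notin J$, $\Ex_{\bm{\rho}}[\Var(g_{\bm{\rho}})]=\Var(g)$, and the hypercontractive estimate for $\E[|\partial_i g|^3]$. But the one step carrying the analytic content is false as stated.
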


\begin{claim}[Removing dependence on $N$]
\label{claim:remove dependence on N}
    Suppose every $t$-parallel $d$-round quantum algorithm with acceptance probability $f : \zo^N\to [0,1]$ that is $\eta$-regular for some function $\eta(t,d,\eps,\delta,N)$ satisfies $\Pr[|f-\E[f]|\ge  \eps] \le \delta$. %
    Then every such algorithm that is $\eta$-regular where  \[ \eta \le \eta(t,d,\eps, \delta,2^{O(td)}/\eps^4\delta^2)\]   satisfies $\Pr[|f-\E[f]|\ge 2\eps]\le 2\delta$.  
\end{claim}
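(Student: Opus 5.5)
The plan is to use \Cref{thm:DFKO} to find a small set $J$ of coordinates that carries almost all of $f$'s Fourier mass. I would then build a new $t$-parallel $d$-round algorithm $\mcA'$ on only $N' = |J| + M$ variables, where $M \le 2^{O(td)}/\eps^4\delta^2$. This algorithm should be $\eta$-regular, have acceptance probability close to $f$, and be close in distribution to it. We then apply the hypothesis to $\mcA'$ with $N'$ in place of $N$ and transfer the conclusion back to $f$, losing a factor of $2$ in both $\eps$ and $\delta$.

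\textbf{Step 1 (junta approximation).} A $t$-parallel $d$-round algorithm makes $td$ queries in total, so its acceptance probability $f$ is a polynomial of degree at most $2td$. I apply \Cref{thm:DFKO} with error parameter $\eps'\coloneqq \eps^2\delta/100$. This gives a set $J$ of size $2^{O(td)}/\eps^4\delta^2$ with $\|f-f_{\sse J}\|_2^2\le \eps'$. By Markov, $|f-f_{\sse J}|\le \eps/10$ except on a $\delta/100$ fraction of inputs. Equivalently, the Fourier weight of $f$ on sets not contained in $J$ is at most $\eps'$.

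\textbf{Step 2 (hashing the outside coordinates).} For a hash $h:[N]\setminus J\to[M]$ and a shift $s\in\zo^{[N]\setminus J}$, define $\mcA'_{h,s}(y,z)\coloneqq \mcA(x)$, where $x_J=y$ and $x_i = z_{h(i)}\oplus s_i$ for $i\notin J$. Each query to $x_i$ is simulated by a query to $z_{h(i)}$ followed by a fixed phase correction. Hence $\mcA'_{h,s}$ is still $t$-parallel and $d$-round. Its query weight on $y_j$ equals that of $x_j$ in $\mcA$. Its query weight on $z_k$ is the sum of the weights of the coordinates in the bucket $h^{-1}(k)$.

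Let $g_{h,s}$ denote the acceptance probability of $\mcA'_{h,s}$. The induced input distribution differs from uniform only through collisions inside a bucket. Writing $g_{h,s}$ via the Fourier expansion of $f$, a monomial $\chi_S$ is changed only if $S\not\subseteq J$. So, averaging over random $h,s$ with $M$ large, $g_{h,s}(y,\cdot)$ is close in $L_2$ to $f_{\sse J}(y)$ plus a term whose mass is controlled by the Fourier weight outside $J$ (at most $\eps'$). Combining this with Step 1 and Markov, for a typical $(h,s)$ we get:
\begin{itemize}
\item $\E[g_{h,s}]$ is within $\eps/10$ of $\E[f]$;
\item $\Pr[|g_{h,s}-\E g_{h,s}|\ge\eps]$ is at least $\Pr[|f-\E f|\ge 2\eps]-\delta$.
\end{itemize}
The contrapositive of the hypothesis applied to $g_{h,s}$ then yields the claim.

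\textbf{Main obstacle: regularity of the bucketed algorithm.} The coordinates $y_j$ inherit $\eta$-regularity directly. A bucket, however, sums many weights, each at most $\eta$ in expectation and totalling at most $td$, so its expected weight is about $td/M$. That can exceed $\eta$ unless $M\gtrsim td/\eta$, which is circular, since $\eta$ itself depends on $N'$.

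My first attempt would exploit the slack in the junta step instead of relying on bucket sizes. Coordinates outside $J$ carry tiny Fourier mass, so I would move the outside coordinates into ``dummy'' form: replace their oracle answers by fresh random bits and argue, by the hybrid method (\Cref{cor:quantum-model}), that this changes acceptance probabilities only negligibly. That would let the bucket variables be made regular by splitting weight across more copies, with the number of copies kept within the $2^{O(td)}/\eps^4\delta^2$ budget.

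If this fails, the fallback is a Bernstein bound over random $h$. Each term is at most $\eta$, so the maximum bucket weight is $O(\eta\log M + td/M)$. That only costs logarithmic factors in $\eta$, which I would absorb into the constants of the stated bound. I expect this bookkeeping, together with controlling the non-uniformity caused by hash collisions, to be the genuinely delicate part.
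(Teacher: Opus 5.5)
There is a genuine gap. Your hashing step creates a regularity problem that you never resolve, and neither of your proposed fixes yields the bound as stated. The bucket variables $z_k$ have expected weight about $td/M$, or $O(\eta\log M + td/M)$ under your Bernstein fallback. So making $\mcA'_{h,s}$ regular at the level $\eta(t,d,\eps,\delta,N')$, with $N'=|J|+M$, forces $M \gtrsim td/\eta(t,d,\eps,\delta,N')$. Since the statement allows an arbitrary function $\eta(\cdot)$, this makes $N'$ depend on $\eta$ rather than being $2^{O(td)}/\eps^4\delta^2$. Nor can the extra $\log M$ factor be absorbed into ``constants'' of an unspecified function. For example, with the warm-up bound $\eta=(dt\ln(N/\delta))^{-\Omega(2^d)}$, $td/\eta$ far exceeds $2^{O(td)}$ once $d\gg\log t$.

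There is also a secondary error. For a fixed $(h,s)$ the induced $x$ is not uniform: a collision $h(i)=h(i')$ fixes $x_i\oplus x_{i'}$. So the $y_j$ do not inherit $\eta$-regularity directly; that holds only after averaging over $s$.

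The missing idea is that you need no $z$ variables at all. Let $\mcB$ sample $r\sim\zo^N$ internally. It answers every query to an index $i\notin J$ with $r_i$, which is a fixed phase and uses no oracle call, and queries the real oracle only on $J$. This is a mixture of restrictions of $\mcA$. It can be realized as one $t$-parallel $d$-round algorithm by holding $r$ in an ancilla register in uniform superposition, used only as a control.

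When $x$ is uniform, $(x_J, r_{\overline{J}})$ is uniform. Hence three things hold:
\begin{itemize}
\item the acceptance probability of $\mcB$ is exactly $g=f_{\subseteq J}$;
\item the query weights of $\mcB$ on $i\notin J$ are $0$;
\item for $i\in J$, $\E_x[\tilde W_i(x)]=\E_x[W_i(x)]\le\eta$.
\end{itemize}
So $\mcB$ is an $\eta$-regular algorithm on $|J|$ variables, and the hypothesis with $N=|J|$ gives $\Pr[|g-\E g|\ge\eps]\le\delta$. Moreover $\E[f-g]=0$ and $\|f-g\|_2^2\le\eps^2\delta$ by DFKO, so Chebyshev gives $\Pr[|f-g|\ge\eps]\le\delta$. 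A union bound then finishes the proof.

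Your ``first attempt'' gestures at this replacement, but frames it wrongly. It is not a hybrid-method argument that acceptance probabilities barely change; pointwise they can change a lot. The replacement produces $f_{\subseteq J}$ exactly, and its closeness to $f$ is only in $L_2$, which is what Theorem~\ref{thm:DFKO} supplies. Once you see this, there are no buckets left to regularize, and the instance size is exactly $|J|\le 2^{O(td)}/\eps^4\delta^2$.
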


\begin{proof} 
Let $\mcA$ be a $t$-parallel $d$-round quantum algorithm that is $\eta$-regular where  
\begin{equation} \eta \le \eta(t,d,\eps,\delta,2^{O(td)}/\eps^4\delta^2),\label{eq:bound on regularity}
\end{equation}
Since $\mcA$ makes at most $td$ queries in total, its acceptance probability $f : \zo^N \to [0,1]$ has degree at most $2td$~\cite{BBCMdW01}. Applying~\Cref{thm:DFKO} with $f=f$, $D = 2td$, and $\eps = \eps^2\delta$, we get the existence of a set $J \sse [N]$,
\begin{equation} |J| \le \frac{2^{O(td)}}{\eps^4\delta^2} \quad \text{such that} \quad \| f - f_{\sse J}\|_2^2 \le \eps^2\delta. \label{eq:junta properties}
\end{equation}

Now consider the quantum query algorithm $\mcB$ defined as follows: On input $x\in \zo^N$,
\begin{enumerate}
    \item Sample $\br\sim \zo^N$ uniformly at random. 
\item Replace any queries to $x_i$ where $i\notin J$ with a query to $\br_i$. 
\end{enumerate} 
Note that $\mcB$ remains a $t$-parallel $d$-round algorithm (it is a mixture of restrictions of $\mcA$), and it is designed so that its acceptance probability $g : \zo^N \to [0,1]$ is given exactly by $f_{\sse J}$. Note also that $g$ depends only on the coordinates in $J$: there is a function $g' : \zo^J \to [0,1]$ such that $g(x) = g'(x_J)$ for all $x\in \zo^N$.

Furthermore, if $W_1(x),\ldots,W_N(x)$ are the query weights of $\mcA$, the query weights $\tilde{W}_1(x),\ldots,\tilde{W}_N(x)$ of $\mcB$ are given by: 
\begin{align*}\label{eq:quantum-junta}
\tilde{W}_i(x) = 
\begin{cases}
    \ds\Ex_{\br}\big[W_i(\br)\mid \br_J = x_J\big] & i \in J \\
    0 & otherwise. 
\end{cases}   
\end{align*}
Consequently $\Ex[\tilde{W}_i(\bx)] = \Ex[W_i(\bx)]$ if $i\in J$ and $\Ex[\tilde{W}_i(\bx)] = 0$ otherwise, and  so $\mcB$ inherits $\mcA$'s regularity (\Cref{eq:bound on regularity}).  By our bound on the size of $J$ (\Cref{eq:junta properties}), we further get that it is $\eta$-regular where 
\[ \eta \le \eta(t,d,\eps,\delta, |J|).\]
Since $g$ depends only on the variables in $J$, we can apply the assumption of our claim to $\mcB$ and $g$ with $N = |J|$ to get that $\Pr[|g-\E[g]|\ge \eps]\le \delta$. 

Finally, let $h = f-g$. By the triangle inequality and a union bound, 
\begin{align*}
\Pr\big[|f-\E[f]|\ge 2\eps\big] &\le \Pr\big[|g-\E[g]| \ge \eps\big] + \Pr\big[|h-\E[h]|\ge \eps\big] \\
&\le \delta + \frac{\| f-g \|_2^2}{\eps^2} \tag{Chebyshev} \\
&= \delta + \frac{\| f - f_{\sse J}\|_2^2}{\eps^2} \ \le\ 2\delta. \tag{$g \equiv f_{\sse J}$ and~\Cref{eq:junta properties}}
\end{align*}
This completes the proof. 
\end{proof}

\paragraph{Applying~\Cref{claim:remove dependence on N}.} We can now combine~\Cref{claim:remove dependence on N} and~\Cref{cor:rsquared with eps dependence} to finally prove~\Cref{thm:rsquared-overview}, and indeed, even the more general version with a dependence on $\eps$:

\begin{theorem}[\Cref{thm:rsquared-overview} with $\eps$-dependence]
\label{thm:rsquared with eps}
    Let $\mathcal{A}$ be a $t$-parallel $d$-round quantum algorithm and $f : \zo^N \to [0,1]$ denote its acceptance probability. If $\mathcal{A}$ is $\eta$-regular where 
\[ \eta \le 2^{-\Omega(d^2)}\cdot (t\log(1/\delta)/\eps)^{-\Omega(d)}, \] 
then $\Pr[\abs*{f(\bx) -\Ex[f(\bx)]} \geq \eps]\le \delta.$
\end{theorem}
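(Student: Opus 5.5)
This follows by composing \Cref{cor:rsquared with eps dependence} with \Cref{claim:remove dependence on N}. \Cref{cor:rsquared with eps dependence} gives a regularity threshold that still depends on $N$:
\[ \eta(t,d,\eps,\delta,N) \coloneqq 2^{-c_1 d^2}\cdot \big((t\log N)/\eps\big)^{-c_1 d}\cdot \log(1/\delta)^{-1}. \]
If $\mcA$ is $\eta(t,d,\eps,\delta,N)$-regular, then $\Pr[|f-\E[f]|\ge\eps]\le\delta$. This is exactly the hypothesis of \Cref{claim:remove dependence on N}.

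We apply that claim with parameters $\eps/2$ and $\delta/2$ in place of $\eps$ and $\delta$. It concludes that any $t$-parallel $d$-round algorithm which is $\eta(t,d,\eps/2,\delta/2,N^\ast)$-regular, with
\[ N^\ast = \frac{2^{O(td)}}{(\eps/2)^4(\delta/2)^2}, \]
satisfies $\Pr[|f-\E[f]|\ge \eps]\le \delta$. The only thing to check is the remaining quantity once $N^\ast$ is substituted:
\[ \log N^\ast = O\big(td + \log(1/\eps) + \log(1/\delta)\big). \]

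This substitution is the only real bookkeeping, and it is the step I expect to be the obstacle. Plugging in naively gives
\[ (t\log N^\ast/\eps)^{-O(d)} = \Big(t\big(td+\log(1/\eps)+\log(1/\delta)\big)/\eps\Big)^{-O(d)}. \]
The three pieces are handled as follows.
\begin{itemize}
\item The $\log(1/\eps)$ term is at most $1/\eps$, so it is absorbed by increasing the exponent on $1/\eps$.
\item The $\log(1/\delta)$ term yields the target factor $(t\log(1/\delta)/\eps)^{-O(d)}$. The extra $\log(1/\delta)^{-1}$ factor is dominated by this for $d\ge1$.
\item The $td$ term would contribute $(t^2 d)^{-O(d)}$. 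Here $t^{-O(d)}$ can be folded into $t^{-\Omega(d)}$, and $d^{-O(d)}$ is absorbed into $2^{-O(d^2)}$.
\end{itemize}
All constants are absorbed into the $\Omega(\cdot)$ in the exponents. Since smaller $\eta$ only makes regularity a stronger assumption, any $\eta$ below the resulting bound suffices, which yields the theorem.

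There is one subtlety: the source of $\log N$ in \Cref{thm:rsquared-bound} is the union bound over $\binom{N}{m_r}$. One could instead redo the parameter setting there with $\log N$ replaced by $\log N^\ast$. Either way, after substitution the exponent structure remains $2^{-O(d^2)}\cdot(\text{poly}(t,\log(1/\delta),1/\eps))^{-O(d)}$, which is the claimed form. If the $td$ inside the log proves troublesome for the stated shape, I would note that $t^{O(d)}\cdot d^{O(d)}$ is already within the allowed $2^{O(d^2)}t^{O(d)}$, so no issue arises.
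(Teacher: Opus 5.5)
Your proposal is correct and is the paper's argument: the paper obtains \Cref{thm:rsquared with eps} by composing \Cref{cor:rsquared with eps dependence} with \Cref{claim:remove dependence on N}, exactly as you do. Your use of $\eps/2,\delta/2$ undoes the factor-$2$ loss in that claim. Your accounting of the $td$ term in $\log N^\ast$, which gives $(t^2 d)^{-O(d)}$ and is absorbed into $2^{-O(d^2)}t^{-O(d)}$, spells out bookkeeping that the paper summarizes only as ``replacing $\log N$ with $\log(1/\delta)$.''
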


For the same reason, combining~\Cref{claim:remove dependence on N} and~\Cref{lem:regular implies biased} yields~\Cref{lem:warmup-overview}.

\section{A regularity lemma and its implications}
\label{sec:regularity lemma}

In this section we prove our regularity lemma, repeated here for convenience:

\regularityoverview*

In~\Cref{sec:implications of regularity lemma} we then use it to show~\Cref{conj:our conjecture} implies the simulation conjecture, and likewise the strong version of~\Cref{conj:our conjecture} implies the strong version of the simulation conjecture. Similarly,~\Cref{thm:d squared intro} follows by combining~\Cref{thm:rsquared with eps} with~\Cref{lem:regularity lemma}.

\subsection{Proof of \Cref{lem:regularity lemma}}

For a quantum query algorithm $\mcA$ and a restriction $\pi$, we write $W_i(\pi)$ to denote the expected  query weight that $\mathcal{A}_{\pi}$ places on variable $i\in [N]$, where the expectation is over all  $\bx$'s that are consistent with $\pi$ (i.e.~all inputs to $\mcA_{\pi}$). Consider the classical query algorithm $\mathsf{Simulate}$ described in~\Cref{fig:regularity-alg}. It is a simple greedy algorithm that queries a variable of query weight larger than $\eta$ (if one exists), restricts $\mcA$ accordingly, and recurses.

\begin{figure}[h]
  \captionsetup{width=.9\linewidth}
    \algtext*{EndIf}
    \algtext*{EndFor}
  \begin{tcolorbox}[colback=white, arc=1mm, boxrule=0.25mm]
    \vspace{2pt}
    $\mathsf{Simulate}(\mcA,\eta,\delta, x)$:\vspace{6pt}
    
    \textbf{Input:} A $t$-query quantum algorithm $\mcA$ with acceptance probability $f : \zo^N\to [0,1]$ and  query weights $W_1(x),\ldots,W_N(x)$. Parameters $\eta$, $\delta$, and an input $x\in\zo^N$. \vspace{6pt}

    \textbf{Output:} An approximation of $f(x)$. \vspace{6pt}

    \begin{algorithmic}
    \State \textbf{Initialize:} $\pi_0 = \emptyset$ and $D=\poly(t,1/\eta,\log(1/\delta))$.
    \For{$j=1,\ldots, D$}
        \If{there exists an $i_j \in [N]$ such that $W_{i_j}(\pi_{j-1}) > \eta$}
          \State Query $x_{i_j}$ and let $b_j$ be the answer. 
          \State Let $\pi_{j} = \pi_{j-1} \cup \{x_{i_j}= b_j\}$ be the extension of $\pi_{j-1}$%
        \Else 
        \State Set $\pi_j = \pi_{j-1}$ and break. \vspace{6pt}
        \EndIf
      \EndFor
    \Return $\ds\Ex_{\bx}[f(\bx)\mid \bx\text{ is consistent with } \pi_j]$
    \end{algorithmic}
  \end{tcolorbox}
  \caption{A classical query algorithm for approximating a quantum query algorithm.}
  \label{fig:regularity-alg}
\end{figure}

We claim that $\mathsf{Simulate}$'s decision tree representation $T$ satisfies the conditions of \Cref{lem:regularity lemma}. Note that if we ever reach the ``Break" condition in \Cref{fig:regularity-alg}, then $W_i(\pi)\leq \eta$ for all $i\in[N]$, and so $\mcA_{\pi}$ is $\eta$-regular whenever $|\pi|< D$. It therefore suffices to show that $\Prx_{\bx}[|\bpi|\geq D] \leq \delta$, where $\bpi\sim T$ is the random path in $T$ induced by a uniform random $\bx\sim\zo^N$. 
  
To do so, let $\pi_j$ denote the truncation of a path $\pi$ to the first $j$ queries and consider the progress measure 
\[ P_j \coloneqq \sum_{i=1}^N W_i(\pi_j).\]  
 Note that $P_j$ is bounded between $[0,t]$ since $\mcA$ and its restrictions are $t$-query algorithms (\Cref{eq:t_parallel_index_query_weights_sum_to_t}). To analyze how it evolves as $j$ increases (i.e.~as the path grows down $T$), note that we can express $W_i(\pi)$ as 
\begin{align*}
     W_i(\pi)\coloneqq \begin{cases}
        0 & \text{if $i \in \pi$} \\
        \ds\Ex_{\bx}[W_i(\bx) \mid \bx \text{ is consistent with } \pi] & otherwise
        \end{cases}
\end{align*}
(see the proof of \Cref{claim:remove dependence on N}), and so 
\begin{equation}
\Ex_{\bb\sim \zo}\big[ W_i(\pi \cup \{ x_j = \bb\})\big] = 
\begin{cases}
0 & \text{if $i = j$}  \\
W_i(\pi) & \text{otherwise.} 
\end{cases}
\end{equation}
In words, when $x_j$ is queried, its query weight in both subfunctions drops to $0$, whereas for all other variables $i\ne j$, their query weights in the two subfunctions average to their original query weight. Consequently, with each query, our progress measure drops in expectation by exactly the weight of the queried variable:

\begin{proposition}[$\mathsf{Simulate}$ makes progress in expectation at each step]\label{prop:regularity-progress}  
    Let $\bP_j \coloneqq \sum_i W_i(\bpi_j)$ and let $\boldeta_j \coloneqq W_{\bi_j}(\mathbf{\bpi}_{j-1})$ be the query weight of the $j$th index queried. If we do not query an index, we say that $\boldeta_j=0$. Then, \[ \Ex\big[\bP_j\mid \bpi_{j-1}=\pi_{j-1}\big] =P_{j-1} - \eta_j.\] 
\end{proposition}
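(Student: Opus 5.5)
We will prove the identity by conditioning on $\bpi_{j-1}=\pi_{j-1}$ and splitting into two cases, according to whether $\mathsf{Simulate}$ queries a new variable at step $j$.

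\emph{Case 1: no query.} If the algorithm has already broken, or no $i$ satisfies $W_i(\pi_{j-1})>\eta$, then $\bpi_j=\pi_{j-1}$ deterministically. Hence $\bP_j=P_{j-1}$ and $\boldeta_j=0$, and the claim is immediate.

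\emph{Case 2: a query.} Otherwise the index $i_j$ is a deterministic function of $\pi_{j-1}$. The bit $\bb_j=\bx_{i_j}$ is then a uniform random bit independent of $\pi_{j-1}$: conditioned on $\bx$ being consistent with $\pi_{j-1}$, the coordinates outside $\pi_{j-1}$ are still uniform, and $i_j\notin\pi_{j-1}$. This last fact holds because $W_{i}(\pi)=0$ for $i\in\pi$, so an already-fixed variable never exceeds $\eta$. Therefore
\[ \Ex\big[\bP_j\mid \bpi_{j-1}=\pi_{j-1}\big]=\sum_{i\in[N]}\Ex_{\bb\sim\zo}\big[W_i(\pi_{j-1}\cup\{x_{i_j}=\bb\})\big], \]
where we use linearity of expectation and the fact that the sum over $i$ is finite.

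We evaluate each summand using the displayed identity preceding the proposition, which in turn follows from the formula $W_i(\pi)=\Ex[W_i(\bx)\mid \bx\text{ consistent with }\pi]$ for $i\notin\pi$ and $W_i(\pi)=0$ for $i\in\pi$.
\begin{itemize}
\item For $i=i_j$, the summand is $0$.
\item For $i\neq i_j$ with $i\notin\pi_{j-1}$, we use the tower property. Averaging the conditional expectation over the uniform bit $\bb$ recovers conditioning on $\pi_{j-1}$ alone, so the summand equals $W_i(\pi_{j-1})$.
\item For $i\in\pi_{j-1}$, both sides are $0$.
\end{itemize}
Summing over $i$ gives $\sum_i W_i(\pi_{j-1})-W_{i_j}(\pi_{j-1})=P_{j-1}-\eta_j$, as claimed.

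The only delicate step is justifying the restricted-weight formula. The query weights of $\mcA_\pi$ on the free variables are the averages of $\mcA$'s weights over completions of $\pi$, and the fixed variables receive weight $0$, since queries to them can be answered without touching the oracle. This is exactly the observation already used for $\mcB$ in the proof of \Cref{claim:remove dependence on N}. We will cite that observation rather than re-derive it.
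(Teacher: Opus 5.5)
Your proposal is correct and follows essentially the same route as the paper: the paper starts from the restricted-weight formula (citing the proof of \Cref{claim:remove dependence on N}), deduces that averaging over the queried bit sends the queried variable's weight to $0$ and leaves every other weight unchanged, and sums over $i$, exactly as you do. Your additional checks (the no-query case, that $i_j\notin\pi_{j-1}$, and that $\bx_{i_j}$ is a uniform bit given $\bpi_{j-1}=\pi_{j-1}$) are details the paper leaves implicit, and they are handled correctly.
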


Since $P_j$ is bounded and drops in expectation as $j$ increases, this would suggest that $\bpi$ is unlikely to be too long. This motivates us in defining the following martingale: 
\begin{align*}
    \bZ_j = \bP_j+\sum_{j'=1}^{j} \boldeta_{j'}. 
\end{align*}
Note that if $|\pi|\geq D$, then $Z_D = P_D + \sum_{j=1}^{D} \eta_j \geq D\eta$  since $\mathsf{Simulate}$ only queries a variable if its query weight $\eta_j$ is greater than $\eta$. Therefore, $Z_D -Z_0\geq D\eta -P_0\geq D\eta -t$, and so: 
\[ 
     \Pr[|\bpi|\geq D]\leq  \Pr[\bZ_D -\bZ_0 \geq D\eta-t].
\]
The remainder of the proof will therefore focus on bounding the probability that the difference $\bZ_D-\bZ_0$ is large. We do so by showing that $\{\bZ_j\}$ is a martingale with respect to the random variables $\{\bpi_j\}$. First note that $\bZ_j$ is indeed a function of $\bpi_0,\ldots,\bpi_j$ since $\bP_j$ and $\boldeta_{j'}$ are functions of these variables for all $j'\in[j]$. Also, $\bZ_j$ has finite expected value since $\bP_j$ and $\boldeta_{j'}$ are bounded. Furthermore, 
\begin{align*}\Ex[\bZ_j\mid\pi_0,\ldots,\pi_{j-1}] &=  \Ex[\bP_j \mid \pi_0,\ldots,\pi_{j-1}] +  \Ex\bigg[\sum_{j'=1}^{j} \boldeta_{j'}\biggm| \pi_0,\ldots,\pi_{j-1}\bigg]\\
    &= P_{j-1}-\eta_{j} + \sum_{j'=1}^{j} \eta_{j'} &\tag{by \Cref{prop:regularity-progress}}\\
    &=Z_{j-1}. &\tag{by definition}
\end{align*}
Thus, $\{\bZ_j\}$ is indeed a martingale with respect to $\{\bpi_j\}$ as desired. Finally, note that 
\begin{align*}
    |Z_j-Z_{j-1}|= |P_j -P_{j-1}+\eta_j| \leq 2t
\end{align*}
since each term is in $[0,t]$, and so we can apply Azuma--Hoeffding's (\Cref{thm:azuma}) to conclude
\begin{align*}
    \Pr[\bZ_D -\bZ_0 \geq D\eta-t] \leq \exp\left(-\Omega\left(\frac{(D\eta -t)^2}{Dt^2}\right)\right).
\end{align*}
For an appropriate choice of $D=\poly(t,1/\eta,\log(1/\delta))$, we achieve the desired bound of $\Pr[\bZ_D -\bZ_0 \geq D\eta-t]\leq \delta$. 

\begin{remark}[This proof robustifies]
    \label{remark:robust-regularity} The proof of \Cref{lem:regularity lemma} assumes that we pick a coordinate with query weight $> \eta$ at each step. For its algorithmic version (\Cref{subsec:algorithmic-regularity}), we will need the fact that the proof of~\Cref{lem:regularity lemma} goes through essentially unmodified under the following weaker guarantee: Whenever a coordinate with query weight $> \eta$ exists, we pick one with query weight $> \eta/2$.
\end{remark}

\subsection{Implications of~\Cref{lem:regularity lemma}}
\label{sec:implications of regularity lemma}

\begin{claim}[\Cref{conj:our conjecture} $\Rightarrow$ Simulation conjecture]
\label{claim:our conjecture implies simulation conjecture}
If~\Cref{conj:our conjecture} holds then the simulation conjecture holds. 
\end{claim}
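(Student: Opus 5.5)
We combine the regularity lemma (\Cref{lem:regularity lemma}) with the small-deviation form of the conjecture (\Cref{claim:eps-dependent-conjecture}). Fix a $t$-query algorithm $\mcA$ with acceptance probability $f$, and targets $\eps,\delta>0$. The classical simulator runs the decision tree $T$ from \Cref{lem:regularity lemma} with regularity parameter $\eta$ and failure parameter $\delta/2$. At the leaf $\pi$ it reaches, it outputs $\E[f_\pi]$, exactly as $\mathsf{Simulate}$ does. Querying $T$ costs $D=\poly(t,1/\eta,\log(1/\delta))$ classical queries. Outputting $\E[f_\pi]$ costs no queries, since the simulator knows $\mcA$.

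We choose $\eta$ so that \Cref{claim:eps-dependent-conjecture} forces concentration at regular leaves. That claim, applied to $\mcA_\pi$, says: if $\mcA_\pi$ has every expected query weight below $\poly(\eps\delta'/t)$, then $\Pr_{\bx\sim\pi}[|f_\pi(\bx)-\E f_\pi|\ge\eps]<\delta'$. Two facts let us apply it at a leaf. First, $\mcA_\pi$ is itself a $t$-query quantum algorithm on $\zo^N$: hardwire the restricted bits into the oracle, which is a query-preserving transformation. Second, its query weights are exactly the conditional quantities $W_i(\pi)$ used in the lemma. We set $\delta'=\delta/2$ and $\eta=\poly(\eps\delta/t)$ with the polynomial from the claim, so that every $\eta$-regular leaf satisfies $\Pr_{\bx\sim\pi}[|f(\bx)-\E f_\pi|\ge\eps]<\delta/2$.

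Averaging over a uniform $\bx$, which induces a uniform random leaf $\bpi$ together with a uniform $\bx$ consistent with it, gives
\[
\Pr_{\bx}\big[|f(\bx)-\E f_{\bpi}|\ge\eps\big]\le\Pr[\mcA_{\bpi}\text{ not }\eta\text{-regular}]+\Ex_{\bpi}\Big[\Pr_{\bx\sim\bpi}\big[|f_{\bpi}(\bx)-\E f_{\bpi}|\ge\eps\big]\Big]\le\tfrac{\delta}{2}+\tfrac{\delta}{2}.
\]
The total query count is $D=\poly(t,1/\eta,\log(1/\delta))=\poly(t,1/\eps,1/\delta)$, which is the simulation conjecture.

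The main step requiring care is justifying that the conjecture applies to restricted algorithms with the same parameters. This needs three checks: that restriction does not increase $t$, that its query-weight vector is the conditional expectation $W_i(\pi)$ (this was already used in the proof of \Cref{claim:remove dependence on N}), and that the coordinates already fixed by $\pi$ carry weight $0$. Once these are in place, the rest is bookkeeping. For the strong version, the same argument with a $1/\polylog(1/\delta)$ dependence yields $\polylog(1/\delta)$ queries.
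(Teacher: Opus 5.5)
Your proposal is correct and follows the paper's proof: run $\mathsf{Simulate}$ with $\eta=\poly(\eps\delta/t)$ taken from \Cref{claim:eps-dependent-conjecture}, apply that claim to the leaves $\pi$ where $\mcA_\pi$ is $\eta$-regular, and add the regularity lemma's failure probability. The differences are cosmetic. You split $\delta$ as $\delta/2+\delta/2$ where the paper ends with $2\delta$, and you state explicitly the checks the paper leaves implicit: restriction preserves $t$, the restricted weights are the conditional $W_i(\pi)$ with fixed coordinates carrying weight $0$, and a uniform $\bx$ is uniform over consistent inputs at each leaf.
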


\begin{proof}
Assume~\Cref{conj:our conjecture} and let 
    $\eta^* = \poly(\eps\delta/t)$ be the lower bound on query weights given by~\Cref{claim:eps-dependent-conjecture}. Let $\mcA$ be a $t$-query quantum algorithm and consider running the algorithm $\mathsf{Simulate}$ in~\Cref{fig:regularity-alg} on $\mcA$, $\eta^*$, and $\delta$. It represents a classical decision tree $T$ of depth $\poly(t,1/\eta^*,\log(1/\delta)) = \poly(t,1/\eps,1/\delta)$ such that $\mcA_{\pi}$ is $\eta^*$-regular for $\ge 1-\delta$ fraction of  root-to-leaf paths $\pi$ in $T$.  For  these $\pi$'s, \Cref{claim:eps-dependent-conjecture} tells us that $\Pr[|f_{\pi}(\bx)-\Ex[f_{\pi}]|\ge \eps] \le \delta$ and $T$'s leaf value is $\E[f_{\pi}]$. We can therefore conclude that $\Pr[|T(\bx)-f(\bx)|\ge \eps]\le 2\delta$. 
\end{proof}

As mentioned in the introduction, one can also consider a strong version of~\Cref{conj:our conjecture} with a $\polylog(1/\delta)$ dependence, which we now state formally: 

\begin{conjecture}[Strong version of~\Cref{conj:our conjecture}]
\label{conj:our conjecture strong}
    Let $\mcA$ be a $t$-query quantum algorithm and $f : \zo^N \to [0,1]$ denote its acceptance probability. There is a variable $i \in [N]$ such that
    \begin{equation*}
         \Ex_{\bx\sim\zo^N}[W_i(\bx)] \ge \poly\paren*{\frac{1}{t \log(\frac{1}{\delta})}} \quad\quad\text{ where }\delta \coloneqq \min\big\{{\Pr[f(\bx) \ge \lfrac{2}{3}]}, \Pr[f(\bx) \le \lfrac{1}{3}]\big\}.
    \end{equation*}
\end{conjecture}

By~\Cref{lem:regularity lemma}, this strong version implies a strong version of the simulation conjecture where the classical query complexity has a $\polylog(1/\delta)$ dependence rather than $\poly(1/\delta)$. The proof of this implication is essentially identical to that of~\Cref{claim:our conjecture implies simulation conjecture}.  The only difference is that under~\Cref{conj:our conjecture strong}, the lower bound of~\Cref{claim:eps-dependent-conjecture} improves from $\poly(\eps\delta/t)$ to $\poly(\eps/(t\log(1/\delta)))$, and so the depth of the decision tree that $\mathsf{Simulate}$ represents improves from $\poly(t,1/\eps,1/\delta)$ to $\poly(t,1/\eps,\log(1/\delta))$. 

For the same reasons,~\Cref{thm:d squared intro} follows straightforwardly from~\Cref{thm:rsquared with eps} via~\Cref{lem:regularity lemma}.%

\section{Implications for random-oracle separations}
In this section, we show that if $\PromiseBPP = \PromiseBQP$ and our conjecture holds, then there is an efficient classical oracle algorithm that simulates any $\BQP$ oracle circuit on a random oracle. We also show that if $\PromiseQNC = \PromiseQuasi$, then there is an efficient classical oracle algorithm that simulates any $\QNC$ oracle circuit on a random oracle. %
These proofs require our algorithmic regularity lemma (\Cref{lem:algorithmic regularity lemma overview}) that we prove in \Cref{subsec:algorithmic-regularity}. We then give the implications of that regularity lemma in \Cref{subsec:oracle-implications}.

\pparagraph{Notation of this section.}  Note that in previous sections, we used $x$ synonymously with $\mcO$. That is {\sl not} the case in this section. In this section, we consider quantum algorithms which on input $x\in\zo^n$, are allowed to query an oracle $\mcO \in \zo^N$ where $N$ is exponentially larger than $n$. We will write $C^{\mcO}(x)$ to denote the output of a quantum oracle circuit $C$ on input $\ket{x}$ and oracle $\mcO$. 

Most of our focus will be on the performance of $C$ as a function of the oracle $\mcO$, and we define suitable notation. For a restriction $\pi$, we use $C_{\pi}$ to denote the oracle circuit such that $(C_\pi)^{\mcO} = C^{\mcO_\pi}$. This oracle circuit can be constructed in time $\poly(|C|, |\pi|)$ and behaves as follows: Whenever $C$ makes a query to its oracle, $C_{\pi}$ first checks if that query is restricted by $\pi$. If so, it uses $\pi$ for the response. Otherwise, it queries the oracle.

We will frequently reason about the query weights of an oracle circuit and so define notation to do so. In this section we work in the fully adaptive model, where $t$ is the number of adaptive queries made by $C$. For any $i \in [N]$ and $k \in [t]$ we use $w_i^{(k)}(C^{\mcO})$ to denote the query weight of $C$ to the $i^{\text{th}}$ index of $\mcO$ in its $k^{\text{th}}$ query and $W_i(C^{\mcO}) = \sum_{k \in [t]} w_i^{(k)}(C^{\mcO})$ to refer to the corresponding total query weight (recall \Cref{def:query_weight_t_parallel}  applied to $1$-parallel queries). As in \Cref{def:regular} $C$ is $\eta$-\textsl{regular} if, for all $i \in [N]$ its total query weights satisfy $\Ex_{\bmcO}[W_i(C^{\bmcO})] \leq \eta$.

\subsection{Our algorithmic regularity lemma}
\label{subsec:algorithmic-regularity}
In this section, we prove the following.
\begin{lemma}[Algorithmic regularity lemma, formal version of \Cref{lem:algorithmic regularity lemma overview}]\label{cor:algorithmic_regularity_lemma}
If $\PromiseBPP = \PromiseBQP$, there is a classical randomized algorithm that takes as input a quantum oracle circuit $C$ and access to an oracle $\mcO \in \zo^N$, runs in time $\poly(|C|, 1/\eta, \log(1/\delta))$, and outputs a restriction $\pi$ consistent with $\mcO$, satisfying, with probability at least $1-\delta$ over a uniform $\bmcO$ and the randomness of the algorithm, that $C_{\pi}$ is $\eta$-regular.
\end{lemma}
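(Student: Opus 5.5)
We will implement $\mathsf{Simulate}$ (\Cref{fig:regularity-alg}) efficiently, using the robust version of \Cref{lem:regularity lemma} from \Cref{remark:robust-regularity}. At each step we only need a subroutine that, given the current restriction $\pi$, either certifies that no coordinate has expected query weight $>\eta$ in $C_\pi$, or returns a coordinate $i$ with weight $>\eta/2$. We then query $\mcO_i$, extend $\pi$, and repeat for at most $D=\poly(t,1/\eta,\log(1/\delta))$ steps. The martingale argument of \Cref{lem:regularity lemma} then shows that with probability $1-\delta$ over $\bmcO$ the final $C_\pi$ is $\eta$-regular. Subroutine failures are handled by a union bound after amplifying each call to failure probability $\delta/\poly(D,\ldots)$.

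\textbf{Step 1: estimating sums of query weights in $\PromiseBQP$.} For a prefix string $p\in\zo^{\le \log N}$, let $S_p\sse[N]$ be the indices whose binary expansion begins with $p$. Define $W_{S_p}(\pi)\coloneqq \sum_{i\in S_p}\E_{\bmcO}[W_i(C_\pi^{\bmcO})]$. This is $t$ times the acceptance probability of the following quantum procedure:
\begin{itemize}
\item pick a uniform $\bk\in[t]$;
\item simulate $C_\pi$ up to its $\bk$-th query, answering oracle calls with a random oracle;
\item measure the query register and accept iff the outcome lies in $S_p$.
\end{itemize}
Only $t$ queries are made, so the random oracle can be simulated efficiently by a $2t$-wise independent hash family, since acceptance probabilities of $t$-query algorithms are degree-$2t$ polynomials in the oracle. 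The procedure runs in time $\poly(|C|,|\pi|)$. The promise problem ``given $(C,\pi,p,\theta)$, is this acceptance probability $\ge\theta+\eta/(8t)$ or $\le\theta$?'' is therefore in $\PromiseBQP$. Under $\PromiseBPP=\PromiseBQP$ it lies in $\PromiseBPP$, and binary search over thresholds yields a classical randomized additive $\eta/8$ estimator for $W_{S_p}(\pi)$, with failure probability amplified by repetition.

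\textbf{Step 2: branch-and-prune (Goldreich--Levin/Kushilevitz--Mansour style).} Start from the root prefix and maintain a frontier of prefixes whose estimated weight is at least $3\eta/4$. Prune any node whose estimate falls below $3\eta/4$. Since $\sum_i W_i(\pi)\le t$, each level has at most $O(t/\eta)$ surviving prefixes, so there are $O((t/\eta)\log N)$ estimator calls in total, which is $\poly(|C|,1/\eta)$ because $\log N\le|C|$. If some $i$ has weight $>\eta$, every prefix of $i$ has weight $>\eta\ge 3\eta/4+\eta/8$ and survives, so a leaf is reached. Any surviving leaf has true weight $\ge 3\eta/4-\eta/8>\eta/2$. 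If the frontier dies out, then w.h.p. no coordinate exceeds $\eta$, and we stop.

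\textbf{Main obstacle.} The delicate points are in Step 1: making the weight-estimation procedure a legitimate uniform $\PromiseBQP$ instance whose input is the circuit description together with $\pi$, and replacing the exponentially long random oracle by efficiently simulable $2t$-wise independence. A secondary issue is correctness: the restriction $\pi$ produced is consistent with $\mcO$ because every queried bit is actually read from $\mcO$. The accumulated estimation errors are covered by the robustness remark together with a union bound over the polynomially many calls.
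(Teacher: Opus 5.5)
Your proposal is correct and follows essentially the same route as the paper. Both run $\mathsf{Simulate}$ under the robustness remark, find a heavy coordinate by Goldreich--Levin-style branch-and-prune over intervals (your prefix sets are dyadic intervals), and estimate sums of query weights using $\PromiseBPP=\PromiseBQP$ together with $2t$-wise independent hashing. The only cosmetic difference is that you fold the random choice of query round and the hash seed into one quantum circuit, whereas the paper sums $t$ separate per-round estimates and averages over classically sampled seeds via Hoeffding's inequality.
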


To prove \Cref{cor:algorithmic_regularity_lemma}, we give an efficient way of executing \textsf{Simulate} from \Cref{fig:regularity-alg} provided $\PromiseBPP=\PromiseBQP$. The key step is to design an algorithm for finding indices with high query weight (\Cref{claim:find_heavy}). Using a standard branch-and-prune approach (\Cref{prop:heavy_hitters}), it suffices to design an algorithm to estimate \textsl{sums} of query weights; i.e. for any $S \subseteq [N]$, an algorithm that estimates $\sum_{i \in S} \Ex_{\bmcO}[W_i(C^{\bmcO})]$ (\Cref{claim:approx_set_sum}). To do so, we begin by designing an algorithm for estimating the average acceptance probability of $C^{\bmcO}$ over a random oracle $\bmcO$ (see \Cref{claim:approx_expectation}) and then show it can be instantiated to estimate sums of query weights.

We begin with the following folklore consequence of $\PromiseBPP = \PromiseBQP$.
\begin{claim}[Classical estimation of quantum acceptance probabilities]\label{claim:classical_estimation}
    If $\PromiseBPP = \PromiseBQP$, then there is a classical randomized algorithm that, given a description of a quantum circuit $C$ (with no oracle access), an accuracy parameter $\epsilon > 0$, and a confidence parameter $\delta > 0$, outputs an estimate $\hat{p}$ satisfying
    \begin{equation*}
        \Prx\!\left[\,\lvert\, \hat{p} - \Prx[C~\text{accepts}]\,\rvert \geq \epsilon\right] \leq \delta,
    \end{equation*}
    in time $\poly(|C|, 1/\epsilon, \log(1/\delta))$.
\end{claim}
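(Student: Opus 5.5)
Our plan is to reduce estimation to a sequence of $\PromiseBQP$ decision problems, which under $\PromiseBPP = \PromiseBQP$ can be solved classically. Concretely, we define the promise problem $L = (L_{\YES}, L_{\NO})$ whose inputs are triples $(C, \theta, 1^{k})$ with $C$ a quantum circuit description and $\theta \in [0,1]$ a rational threshold. The yes instances are those with $\Pr[C \text{ accepts}] \ge \theta + 1/k$, and the no instances are those with $\Pr[C \text{ accepts}] \le \theta - 1/k$.

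First we check that $L \in \PromiseBQP$. A quantum machine runs $C$ on $\ket{0}$ a total of $O(k^2)$ times and measures. It accepts iff the empirical acceptance frequency exceeds $\theta$. By a Chernoff bound this errs with probability at most $1/3$ on promise instances, and it runs in time $\poly(|C|, k)$. By hypothesis there is then a classical randomized polynomial-time algorithm $B$ deciding $L$ with error at most $1/3$ on promise instances. Its behavior on non-promise instances is arbitrary, but it still halts in polynomial time.

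Next we perform the estimation. We amplify $B$ by majority vote over $O(\log(1/\delta) + \log(1/\epsilon))$ independent runs. This drives its error on each promise instance below $\delta\epsilon/10$. We then set $k = \lceil 4/\epsilon \rceil$ and query the amplified $B$ on the grid $\theta_j = j\epsilon/4$ for $j = 0, 1, \ldots, \lceil 4/\epsilon\rceil$, which is $O(1/\epsilon)$ thresholds. Write $p = \Pr[C \text{ accepts}]$. By a union bound, with probability at least $1-\delta$ every promise-satisfying query is answered correctly. On that event, each answer is correct whenever $|p - \theta_j| \ge \epsilon/4$: the answer is ``yes'' for $\theta_j \le p - \epsilon/4$ and ``no'' for $\theta_j \ge p + \epsilon/4$. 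The answers for thresholds within $\epsilon/4$ of $p$ may be arbitrary. We output $\hat p = \theta_{j^*}$, where $j^*$ is the largest index answered ``yes'', or $\hat p = 0$ if there is none.

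It remains to check correctness. Every $\theta_j \ge p + \epsilon/4$ is answered ``no'', so $\theta_{j^*} < p + \epsilon/4$. On the other side, the grid spacing is $\epsilon/4$, so some grid point lies in $(p - \epsilon/2, p - \epsilon/4]$ whenever $p \ge \epsilon/2$, and that point is answered ``yes''. Hence $\theta_{j^*} > p - \epsilon/2$, and in all cases $|\hat p - p| < \epsilon$. The total running time is $O(1/\epsilon) \cdot O(\log(1/(\delta\epsilon))) \cdot \poly(|C|, 1/\epsilon) = \poly(|C|, 1/\epsilon, \log(1/\delta))$.

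The only delicate point is the behavior of $B$ outside the promise. We handle it by never relying on answers near $p$: the monotone-scan argument above tolerates arbitrary answers within $\epsilon/4$ of $p$. A binary search would be riskier here, since an arbitrary answer near $p$ could misdirect it. One could make it work, but the linear scan over $O(1/\epsilon)$ thresholds is simpler and stays within the allowed time bound.
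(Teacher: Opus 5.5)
Your proof is correct and is essentially the paper's argument: the paper takes the canonical $\PromiseBQP$-complete gap problem and shifts its thresholds by amplification, which amounts to your parametrized problem $(C,\theta,1^k)$ placed directly in $\PromiseBQP$; it then searches over thresholds using the classical algorithm guaranteed by $\PromiseBPP=\PromiseBQP$. The only difference is your linear scan versus the paper's binary search, and your concern about binary search is unfounded: on the good event every answer still certifies $p > \theta - 1/k$ (``yes'') or $p < \theta + 1/k$ (``no'') for the queried $\theta$, so arbitrary answers near $p$ never leave $p$ more than $1/k$ outside the retained interval and errors do not compound. Either search fits within the time bound.
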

\begin{proof}
    The canonical $\PromiseBQP$-complete problem is: Given a quantum circuit $C$, accept if its acceptance probability is at least $2/3$ and reject if it is less than $1/3$. Therefore, if $\PromiseBPP = \PromiseBQP$, this task can be done by a randomized polytime algorithm with failure probability $\leq \delta$ in time polynomial in the input circuit size and $\log(1/\delta)$. By standard amplification techniques, these thresholds of $2/3$ and $1/3$ can be replaced with $\tau + \eps$ and $\tau$ at the cost of a $\poly(1/\eps)$ overhead. The desired result then follows by binary search.
\end{proof}

In order to estimate the acceptance probability of a quantum oracle circuit, we need an efficient way of choosing a random oracle. For this,
we use a standard construction of $t$-wise independent hash families.
\begin{fact}[$t$-wise independent hash families~\cite{Joffe}, Corollary 3.34 of \cite{Vad12}]\label{cor:twise_hash}
For every $m, t \in \mathbb{N}$, there is a family of $t$-wise independent functions $\mcH = \{h : \zo^{m} \rightarrow \zo \}$ such that choosing a random function from $\mcH$ takes $O(tm)$ random bits, and evaluating a function from $\mcH$ takes $\poly(t , m )$ time.
\end{fact}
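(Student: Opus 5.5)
The plan is the classical polynomial-evaluation construction over a finite field of characteristic two. Identify $\zo^m$ with the field $\mathbb{F}_{2^m}$. Sample coefficients $a_0,\ldots,a_{t-1}\in\mathbb{F}_{2^m}$ independently and uniformly; this uses exactly $tm$ random bits. Define
\[ p(z)=\sum_{k=0}^{t-1}a_k z^k, \]
and set $h(z)$ to be the first bit of the bit representation of $p(z)$. The family $\mcH$ is the set of all such $h$, one for each coefficient vector.

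\emph{$t$-wise independence.} Fix distinct points $z_1,\ldots,z_t\in\mathbb{F}_{2^m}$. The map $(a_0,\ldots,a_{t-1})\mapsto(p(z_1),\ldots,p(z_t))$ is linear over $\mathbb{F}_{2^m}$. Its matrix is the Vandermonde matrix on the $z_j$, which is invertible because the points are distinct. A uniform coefficient vector is therefore sent to a uniform vector in $\mathbb{F}_{2^m}^t$, so the values $p(z_1),\ldots,p(z_t)$ are independent and each is uniform. Taking the first bit of each coordinate preserves this: it is a fixed balanced map applied coordinatewise to independent uniform elements. Hence $(h(z_1),\ldots,h(z_t))$ is uniform on $\zo^t$. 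If $t>2^m$, the statement about $t$ distinct inputs is vacuous. In that case one can simply use $\min\{t,2^m\}$ coefficients, which gives full independence and costs no more randomness or time.

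\emph{Efficient evaluation.} Arithmetic in $\mathbb{F}_{2^m}$ requires an explicit irreducible polynomial of degree $m$ over $\mathbb{F}_2$. This is the main obstacle, since the bound on evaluation time must be deterministic. I would handle it with Shoup's deterministic algorithm, which finds such a polynomial in $\poly(m)$ time. Alternatively, I would use the explicit family $x^{2\cdot 3^\ell}+x^{3^\ell}+1$, which is irreducible over $\mathbb{F}_2$, and embed $\zo^m$ into a field of degree $2\cdot 3^\ell \ge m$ that is at most a constant factor larger. Outputting one bit of a uniform field element is still a uniform bit, so the embedding does not affect the independence argument. It only changes the randomness to $O(tm)$, which the statement allows.

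Once the field is fixed, each field operation is polynomial multiplication and reduction modulo the chosen irreducible, costing $\poly(m)$. Horner's rule evaluates $p(z)$ with $t$ multiplications and $t$ additions, for a total of $\poly(t,m)$ time. Extracting one bit of the result is free.
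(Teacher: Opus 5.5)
Your proposal is correct and is essentially the standard argument behind the cited result. The paper gives no proof of its own, and Vadhan's Corollary 3.34 is obtained the same way: evaluate a uniformly random polynomial of degree $t-1$ over $\mathbb{F}_{2^m}$, get $t$-wise independence from Vandermonde invertibility, truncate the output to one bit, and make field arithmetic efficient via an explicit irreducible polynomial such as $x^{2\cdot 3^\ell}+x^{3^\ell}+1$ (or Shoup's algorithm).
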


For a quantum query circuit $C$, let $\mu$ denote the average acceptance probability of $C$ over the uniform distribution of oracles,
\begin{equation*}
    \mu(C) \coloneqq \Prx_{\bmcO}[C^{\bmcO}~\text{accepts}].
\end{equation*}
We may approximate $\mu$ by making use of the above strongly explicit $t$-wise independent functions.
\begin{claim}[$\PromiseBPP = \PromiseBQP$ implies efficient approximation of acceptance probabilities]\label{claim:approx_expectation}
    If $\PromiseBPP = \PromiseBQP$, there exists a classical algorithm $\ApproxExpectation$ which  estimates the acceptance probability $\mu \coloneqq \Pr_{\bmcO}[C^{\bmcO}~\text{accepts}]$ of any quantum query circuit $C$, to tolerance $\tau$ and failure probability $\delta$,
    \begin{equation*}
        \Prx\lbr{\labs{\ApproxExpectation(C, \tau, \delta) - \mu(C)} \geq \tau} \leq \delta,
    \end{equation*}
    and runs in time $\poly(|C|, \frac{1}{\tau}, \log(\frac{1}{\delta}))$.
\end{claim}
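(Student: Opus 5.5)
The plan is to replace the random oracle $\bmcO$ by a pseudorandom one drawn from a $2t$-wise independent hash family, which yields an ordinary (oracle-free) quantum circuit. We then estimate that circuit's acceptance probability classically via \Cref{claim:classical_estimation}.

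The key observation is that the acceptance probability of a $t$-query quantum algorithm is a polynomial of degree at most $2t$ in the oracle bits~\cite{BBCMdW01}. Concretely, $\Pr[C^{\mcO}\text{ accepts}] = p(\mcO)$ where $p$ is multilinear of degree $\le 2t$ in $(-1)^{\mcO_i}$. Its expectation under the uniform distribution therefore depends only on the $2t$-wise marginals of $\bmcO$. So if $\bh$ is drawn from a $2t$-wise independent family $\mcH=\{h:\zo^m\to\zo\}$ (with $N=2^m$) as in \Cref{cor:twise_hash}, then $\Ex_{\bh}[p(\bh)] = \mu(C)$ exactly.

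Next we build the oracle-free circuit $C'$. It first prepares the $O(tm)$ random seed bits of $\bh$, by applying Hadamards to fresh qubits and then measuring them or treating them as classical. Each oracle call $\mcO_{h}\colon\ket{i}\ket{b}\mapsto(-1)^{h(i)b}\ket{i}\ket{b}$ is then implemented reversibly. We compute $h(i)$ into an ancilla using the $\poly(t,m)$-time evaluation circuit, apply a controlled phase with $b$, and uncompute. Since $m=\log N\le|C|$ (the query register has $m$ qubits) and $t\le|C|$, the circuit $C'$ has size $\poly(|C|)$ and accepts with probability exactly $\Ex_{\bh}[p(\bh)]=\mu(C)$. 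Finally, we run the estimator of \Cref{claim:classical_estimation} on $C'$ with accuracy $\tau$ and confidence $\delta$. This takes time $\poly(|C'|,1/\tau,\log(1/\delta)) = \poly(|C|,1/\tau,\log(1/\delta))$, and it also covers the classical cost of writing down $C'$.

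The main point to check carefully is the degree bound in the setting where the seed is held in superposition rather than sampled classically. Measuring the seed register at the start, or never touching it again, makes $C'$ an exact mixture over $h$ of $C^{\mcO_h}$, so linearity of expectation gives the claim. One must also ensure that the uncomputation of $h(i)$ leaves no garbage entangled with the query register; otherwise the implemented map is not exactly $\mcO_h$. Standard reversible computation with uncomputation handles this. A second detail is that $C$ may use a query convention different from the phase oracle, but any standard convention is interconvertible with it at constant overhead.
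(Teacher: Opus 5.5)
Your proposal is correct. It uses the same three ingredients as the paper: the degree-$2t$ bound on the acceptance probability, replacing the random oracle by a $2t$-wise independent hash from \Cref{cor:twise_hash}, and the estimator of \Cref{claim:classical_estimation}. Where you differ is in how you average over the hash seed. The paper samples $O(\log(1/\delta)/\tau^2)$ seeds classically and builds, for each seed $s$, an oracle-free circuit $\mcB_q(s)$ whose acceptance probability is $\Pr[C^{\mcO_{h_s}}\text{ accepts}]$. It estimates each of these to accuracy $\tau/2$, union bounds over the per-seed failures, and applies Hoeffding's inequality to control the empirical average. You instead move the seed sampling inside the quantum circuit, using Hadamards on a seed register that is afterwards only a control or is measured. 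Then a single circuit $C'$ has acceptance probability exactly $\mu(C)$, and one call to \Cref{claim:classical_estimation} with parameters $(\tau,\delta)$ suffices. Your route is shorter and drops the concentration step entirely. Its one extra obligation is the block-diagonal argument: a control-only seed register makes $C'$ an exact uniform mixture of the circuits $C^{\mcO_{h_s}}$. You identify this correctly, and it holds because the compute--phase--uncompute implementation of each query leaves no garbage. The paper's route keeps every circuit passed to the estimator seed-specific and free of internal randomness, which is slightly more modular. The price is $O(\log(1/\delta)/\tau^2)$ estimator calls and the Hoeffding bookkeeping.
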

\begin{proof}
    Suppose the circuit $C$ makes $t$ queries to $\mcO$. Since the acceptance probability of a $t$-query quantum circuit is a multilinear polynomial of degree at most $2t$ in the bits of $\bmcO$, it is identical under any $2t$-wise independent distribution. Thus, for any distribution $\mcU_{2t}$ that is $2t$-wise independent,
    \begin{equation*}
        \Prx_{\bmcO}[C^{\bmcO}~\text{accepts}] = \Prx_{\bmcV \sim \mcU_{2t}}[C^{\bmcV}~\text{accepts}].
    \end{equation*}
    We may equate $\bmcV \sim \mcU_{2t}$ with a $2t$-wise independent hash $\mcH_{2t}$ family over boolean functions $h : \zo^n \rightarrow \zo$ on $n = \log N$ bits. Note that by \Cref{cor:twise_hash}, sampling  $h \sim \mcH_{2t}$ takes $O(t \cdot \log N)$ random bits, and evaluating a function from $\mcH$ takes $\poly(t , \log N)$ time. Let $\mcB_{q}$ be the quantum algorithm which takes as input a seed $s$ of size $O(t \log N)$, samples $h_s \sim \mcH_{2t}$ with the seed, and simulates $C^\mcO$ by replacing each query to $\mcO$ with a call to $h_s$. $\mcB_{q}$ is a quantum algorithm (without queries) that runs in time $\poly(|C|, t^2, \log N)$.
    For any fixed seed $s$, the circuit $\mcB_q(s)$ is a quantum circuit (without queries) of size $\poly(|C|, \log N)$. By \Cref{claim:classical_estimation}, there is a classical algorithm $\mcB_c$ which, given a seed $s$, accuracy $\epsilon$, and failure probability $\delta'$, outputs an estimate of $\Pr[\mcB_q(s)~\text{accepts}]$ to within $\pm\epsilon$ with probability $\geq 1 - \delta'$, in time $\poly(|C|, 1/\epsilon, \log(1/\delta'))$.
    The algorithm $\ApproxExpectation$ proceeds as follows:
    \begin{enumerate}[label=(\roman*)]
        \item Set $m = \ceil*{ \frac{2\ln(4/\delta)}{\tau^2}}$ and $\delta' = \delta/(2m)$.
        \item For $j \in [m]$: sample $\bh_j \sim \mcH_{2t}$ and run $\mcB_c(\bh_j, \tau/2, \delta')$ to obtain an estimate $\bz_j$ of $\Pr[\mcB_q(\bh_j)~\text{accepts}]$.
        \item Output $\hat{\bmu} = \frac{1}{m} \sum_{j=1}^m \bz_j$.
    \end{enumerate}

    By a union bound, all estimates satisfy $|\bz_j - \Pr[\mcB_q(\bh_j)~\text{accepts}]| \leq \tau/2$ simultaneously with probability $\geq 1 - \delta/2$. Condition on this event. The values $p_j \coloneqq \Pr[\mcB_q(\bh_j)~\text{accepts}] \in [0,1]$ are i.i.d.\ (over seed choice) with expectation $\mu$. By Hoeffding's inequality,
    \begin{equation*}
        \Prx\!\left[\left|\tfrac{1}{m}\textstyle\sum_j p_j - \mu\right| \geq \tfrac{\tau}{2}\right] \leq 2\exp(-m\tau^2/2) \leq \frac{\delta}{2}.
    \end{equation*}
    Since $|\hat{\bmu} - \frac{1}{m}\sum_j p_j| \leq \tau/2$ on our conditioned event, the triangle inequality gives $\Pr[|\hat{\bmu} - \mu| \geq \tau] \leq \delta$.

    The total runtime is $m$ invocations of $\mcB_c$ on circuits of size $\poly(|C|, \log N)$, giving $\poly(|C|, \frac{1}{\tau}, \log(\frac{1}{\delta}))$.
\end{proof}

We next show how to estimate sums of query weights.
\begin{claim}[Estimating sums of query weights]\label{claim:approx_set_sum}
    If $\PromiseBPP = \PromiseBQP$, there exists a classical algorithm $\ApproxQueryWeights$ which takes as input an oracle circuit $C$ and an efficiently representable subset of indices\footnote{Our proof will always take $S$ to be an interval $[a,b]$.} $S \subseteq [N]$ and parameters $\eps, \delta$, and satisfies
    \begin{equation*}
        \Prx\bracket*{\abs*{\ApproxQueryWeights(C,S, \eps, \delta) - \overline{W}_S(C)} \geq \eps} \leq \delta \quad\quad\text{where}\quad \overline{W}_S(C) = \sum_{i \in S}\Ex_{\bmcO}[W_i(C^{\bmcO})]
    \end{equation*}
    and runs in time $\poly(|C|, \frac{1}{\eps}, \log(\frac{1}{\delta}))$.
\end{claim}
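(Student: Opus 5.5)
We reduce estimating $\overline{W}_S(C)$ to estimating average acceptance probabilities of auxiliary quantum oracle circuits, and then invoke \Cref{claim:approx_expectation}. The starting point is the identity
\[
\overline{W}_S(C)=\sum_{k\in[t]}\Ex_{\bmcO}\Bigl[\sum_{i\in S} w_i^{(k)}(C^{\bmcO})\Bigr].
\]
For each fixed $k$, the inner quantity $\sum_{i\in S} w_i^{(k)}(C^{\mcO})=\bra{\psi_k(\mcO)}\Pi_S\ket{\psi_k(\mcO)}$ is the probability that measuring the query register of the pre-$k$-th-query state yields an index in $S$. Here $\Pi_S=\sum_{i\in S}\ket{i}\bra{i}\otimes\Id$, and we are in the $1$-parallel setting, so the index projectors are orthogonal and their sum is itself a projector.

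For each $k\in[t]$, build a quantum oracle circuit $C_{k,S}$ in three steps:
\begin{enumerate}[label=(\roman*)]
\item Run $C$ up to just before its $k$-th query.
\item Coherently compute the bit $\Ind[i\in S]$ from the query register $\ket{i}$ into a fresh ancilla.
\item Measure that ancilla and accept iff it is $1$.
\end{enumerate}
Since $S=[a,b]$ is an interval, step (ii) is a comparison against $a$ and $b$, implemented by a reversible circuit of size $\poly(\log N)$. Hence $|C_{k,S}|\le\poly(|C|)$. By construction,
\[
\Pr[C_{k,S}^{\mcO}\text{ accepts}]=\sum_{i\in S}w_i^{(k)}(C^{\mcO}),
\]
so $\mu(C_{k,S})=\Ex_{\bmcO}\bigl[\sum_{i\in S} w_i^{(k)}(C^{\bmcO})\bigr]$. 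Summing over $k$ gives $\overline{W}_S(C)=\sum_{k\in[t]}\mu(C_{k,S})$.

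The algorithm $\ApproxQueryWeights(C,S,\eps,\delta)$ then calls $\ApproxExpectation(C_{k,S},\eps/t,\delta/t)$ for each $k\in[t]$ and outputs the sum of the $t$ estimates. By a union bound over the $t$ calls, with probability at least $1-\delta$ every estimate is within $\eps/t$ of its target, so the sum is within $\eps$ of $\overline{W}_S(C)$. The running time is $t\cdot\poly(|C|,t/\eps,\log(t/\delta))=\poly(|C|,1/\eps,\log(1/\delta))$, using $t\le|C|$.

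The only step requiring care is checking that $C_{k,S}$ is a legitimate oracle circuit to which \Cref{claim:approx_expectation} applies. It makes fewer than $t$ queries, so the $2t$-wise independence argument in that proof still applies. Its acceptance probability is exactly the partial query weight; this follows because measuring the ancilla after the coherent membership test is equivalent to applying $\Pi_S$. I expect no real obstacle beyond this bookkeeping.
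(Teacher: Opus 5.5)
Your proposal is correct and follows essentially the same route as the paper: for each $k\in[t]$, build the oracle circuit whose final state is the $k$-th pre-query state and whose measurement is the projector onto indices in $S$. Then observe that $\overline{W}_S(C)=\sum_{k}\mu(C_{k,S})$ and estimate each term with \Cref{claim:approx_expectation}. Your per-call tolerance $\epsilon/t$ and failure probability $\delta/t$, combined with a union bound, spell out bookkeeping the paper leaves implicit.
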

\begin{proof}
    Let $t$ denote the query complexity of the circuit $C$, which is upper bounded by $|C|$. For any round $k \in [t]$ and set $S$, we design a quantum oracle circuit $\tilde{C}^{(k, S)}$ so that on oracle $\mcO$, the acceptance probability of $(\tilde{C}^{(k,S)})^{\mcO}$ is exactly equal to $\sum_{i \in S}w^{(k)}_i(C^{\mcO})$. This is simple to construct given $C$ and an efficient representation of $S$: $\tilde{C}^{(k,S)}$ is the oracle circuit with final state given by the $k^{th}$ pre-query state, and measurement operator given by the projector onto the set $S$. 
    Then, we observe that
    \begin{equation*}
        \overline{W}_S(C) = \sum_{k \in [t]} \Ex_{\bmcO}\bracket*{\sum_{i \in S}w^{(k)}_i(C^{\bmcO})} = \sum_{k \in [t]}\mu(\tilde{C}^{(k,S)}),
    \end{equation*}
    which suffices since the right-hand side of the above can be estimated using $t$ invocations of \Cref{claim:approx_expectation}.
\end{proof}

We will apply the following algorithm for finding heavy hitters given access to an estimation oracle, which is a standard technique in learning theory and property testing:

\begin{proposition}[Heavy hitters via binary search {\cite{GL89}}; see also {\cite{KM93, ODBook}}]\label{prop:heavy_hitters}
    Let weights $W_j \in [0,1]$ for $j \in [N]$ satisfy $\sum_{j \in[N]} W_j \le M$. Suppose we have access to an estimation oracle $\textsc{Est}$ such that for any interval subset $S =[a,b] \subseteq [N]$, precision $\epsilon > 0$, and failure probability $\gamma > 0$, $\textsc{Est}(S, \epsilon, \gamma)$ returns an estimate $\hat{W}(S)$ of $W(S) \coloneqq \sum_{j \in S} W_j$ satisfying
    \[
        \Pr\bigl[\lvert \hat{W}(S) - W(S) \rvert \geq \epsilon\bigr] \leq \gamma.
    \]
    in time $T(\eps, \gamma)$. Then, for any threshold $\tau > 0$ and failure probability $\delta > 0$, there exists an algorithm running in time $\poly(M, \log N, T(\tau/4, \Omega((\delta \tau) / (M \log N))))$ and satisfies:
    \begin{enumerate}[label=(\roman*)]
        \item \label{item:heavy_hitters_1} If there exists $i^\star \in [N]$ with $W_{i^\star} \geq \tau$, then the algorithm returns some $i$ with $W_i \geq \frac{\tau}{2}$ with probability at least $1 - \delta$.
        \item \label{item:heavy_hitters_2} If no $i^\star \in [N]$ satisfies $W_{i^\star} \geq \frac{\tau}{2}$, then the algorithm returns $\perp$ with probability at least $1 - \delta$.
    \end{enumerate}
\end{proposition}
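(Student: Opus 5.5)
We will prove \Cref{prop:heavy_hitters} by the standard binary-search (branch-and-prune) procedure over a dyadic decomposition of $[N]$. Consider the complete binary tree whose root is the interval $[1,N]$. Each internal node $[a,b]$ has children $[a,\lfloor (a+b)/2\rfloor]$ and $[\lfloor (a+b)/2\rfloor+1,b]$, and the leaves are singletons; the depth is $\lceil\log N\rceil$.

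The algorithm processes the tree level by level, keeping a set $\mathcal{L}_\ell$ of ``live'' intervals at level $\ell$, starting from $\mathcal{L}_0=\{[1,N]\}$. For each child $S$ of a live interval, it calls $\textsc{Est}(S,\tau/4,\gamma)$ and keeps $S$ if the estimate is at least $3\tau/4$. At the leaf level it returns any surviving singleton $\{i\}$, or $\perp$ if none survives.

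Correctness rests on a good event $G$: every estimate issued is within $\tau/4$ of the truth. On $G$, the following hold.
\begin{itemize}
\item Any interval containing $i^\star$ with $W_{i^\star}\ge\tau$ has $W(S)\ge\tau$, so its estimate is at least $3\tau/4$. Such intervals always survive, which gives existence in item \ref{item:heavy_hitters_1}.
\item Any surviving interval has true weight $W(S)> 3\tau/4-\tau/4=\tau/2$. So a surviving singleton $i$ satisfies $W_i>\tau/2$, which gives the return guarantee in \ref{item:heavy_hitters_1}.
\item In the setting of \ref{item:heavy_hitters_2}, no singleton has weight at least $\tau/2$, so no leaf survives and the algorithm outputs $\perp$.
\end{itemize}

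The one point needing care is bounding the number of estimator calls, so that $\gamma$ can be fixed and the union bound closed. On $G$, the live intervals at a given level are disjoint and each has weight more than $\tau/2$. Since $\sum_j W_j\le M$, at most $2M/\tau$ intervals are live per level, and so at most $4M/\tau$ children are queried per level. The total number of calls is therefore $K=O(M\log N/\tau)$.

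To make this bound unconditional, the algorithm aborts, outputting $\perp$, if any level ever holds more than $2M/\tau$ live intervals. On $G$ this abort never fires, so the first $K$ calls all behave as analyzed. Setting $\gamma=\Omega(\delta\tau/(M\log N))$ makes $\Pr[\neg G]\le K\gamma\le\delta$ by a union bound over at most $K$ calls. The running time is $K\cdot T(\tau/4,\gamma)$ plus $\poly(M,\log N,1/\tau)$ bookkeeping, which matches the stated bound.
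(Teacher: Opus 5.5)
Your proof is correct and follows the paper's branch-and-prune argument: split dyadic intervals, estimate at precision $\tau/4$, bound the live intervals per level by $O(M/\tau)$ via disjointness, and take a union bound with $\gamma=\Theta(\delta\tau/(M\log N))$ over $O(M\log N/\tau)$ calls. The differences are minor. You use one survival threshold of $3\tau/4$ at every level, where the paper prunes at $>\tau/2$ and adds a separate $3\tau/4$ check at the leaves. Your abort rule makes the number of calls deterministic, which cleanly justifies the union bound that the paper applies while implicitly conditioning on the good event.
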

For completeness, we give the proof in the appendix (\Cref{appendix:proof_heavy_hitters}). We apply it to the vector $\overline{W}_i(C) = \Ex_{\bmcO}[W_i(C^{\bmcO})]$ using \Cref{claim:approx_set_sum} as the estimation procedure. Since $\sum_{i \in [N]} \overline{W}_i(C)$ is upper bounded by $|C|$, we immediately obtain the following.
\begin{corollary}[Finding a heavy query weight]\label{claim:find_heavy}
    If $\PromiseBPP = \PromiseBQP$, there exists a classical (randomized) algorithm $\FindHeavy$ which takes as input a quantum query circuit  $C$, a tolerance $\tau$, and failure probability $\delta$, and satisfies
    \begin{enumerate}[label=(\roman*)]
        \item If there is an $i^{\star}$ with $\overline{W}_{i^{\star}}(C) \geq \tau$, then $\FindHeavy(C, \tau, \delta)$ returns some $i$ satisfying $\overline{W}_{i}(C)  \geq \tau/2$ with probability at least $1-\delta$.
        \item If there is no $i^{\star}$ with $\overline{W}_{i^{\star}}(C)  \geq \tau/2$, then $\FindHeavy(C, \tau, \delta)$ returns $\perp$ with probability at least $1-\delta$,
    \end{enumerate}
    and runs in time $\poly(|C|, \frac{1}{\tau}, \log(\frac{1}{\delta}))$.
\end{corollary}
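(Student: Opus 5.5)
This corollary is obtained by instantiating \Cref{prop:heavy_hitters} with the weight vector $W_j \coloneqq \overline{W}_j(C) = \Ex_{\bmcO}[W_j(C^{\bmcO})]$ and the estimation oracle supplied by \Cref{claim:approx_set_sum}. Set $\textsc{Est}(S,\eps,\gamma) \coloneqq \ApproxQueryWeights(C,S,\eps,\gamma)$ for intervals $S=[a,b]$. An interval is efficiently representable by its two endpoints, each written with $\log N$ bits. So the projector onto $S$ used in the construction of $\tilde{C}^{(k,S)}$ is a comparison circuit of size $\poly(\log N)\le \poly(|C|)$. This is because $C$ must at least write down query indices of length $\log N$. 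It follows that $T(\eps,\gamma)=\poly(|C|,1/\eps,\log(1/\gamma))$.

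Next I check the hypotheses of \Cref{prop:heavy_hitters}. We need $\sum_j W_j \le M$ with $M = t \le |C|$. This follows from \Cref{eq:t_parallel_index_query_weights_sum_to_t} applied with $1$-parallel queries: each of the $t$ pre-query states contributes total weight $1$, and averaging over $\bmcO$ preserves the bound. We also need $W_j\in[0,1]$, which may fail, since $W_j$ can be as large as $t$. I would handle this in one of two ways. The first is to normalize, running the procedure on $W_j/t$ with threshold $\tau/t$. The second is to note that the binary-search argument never uses the per-coordinate bound, only nonnegativity and the total mass bound. Normalizing costs only $\poly(|C|)$ factors in the precision, so it is the cleaner choice.

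Applying \Cref{prop:heavy_hitters} with threshold $\tau$ (or $\tau/t$ after normalization) and failure probability $\delta$ then yields items (i) and (ii) directly, with the factor-of-two gap between $\tau$ and $\tau/2$ matching exactly. The running time is $\poly(M,\log N, T(\tau/4,\Omega(\delta\tau/(M\log N))))$. Substituting $M\le|C|$ and $\log N\le|C|$ gives $\poly(|C|,1/\tau,\log(1/\delta))$, since $\log(M\log N/(\delta\tau))$ is polynomial in these quantities.

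The only delicate point is this bookkeeping: the estimator's failure probability must be set to $\Omega(\delta\tau/(M\log N))$, and the circuit-size blowup from implementing interval projectors and the $2t$-wise independent hash inside \Cref{claim:approx_expectation} must remain polynomial in $|C|$. Both reduce to the observation $\log N\le |C|$, so I expect no real obstacle.
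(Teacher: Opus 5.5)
Your proposal is correct and follows the paper's argument: instantiate \Cref{prop:heavy_hitters} with $W_j = \overline{W}_j(C)$, use $\ApproxQueryWeights$ from \Cref{claim:approx_set_sum} as the interval estimator, and take $M \le |C|$. Your additional bookkeeping fills in details the paper leaves implicit: the hypothesis $W_j \in [0,1]$ can fail, since $\overline{W}_j(C)$ may be as large as $t$, but the branch-and-prune proof never uses it (and rescaling by $t$ would also restore it), and the bound $\log N \le |C|$ keeps the interval projectors and running time polynomial.
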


Using \Cref{claim:find_heavy} we prove the main result of this subsection.
\begin{proof}[Proof of \Cref{cor:algorithmic_regularity_lemma}]
    We run the algorithm $\mathsf{Simulate}$ (\Cref{fig:regularity-alg}), replacing the statement of the existence of a coordinate of high query weight ($\exists~ W_i(\pi_{j-1}) > \eta$) with a call to $\FindHeavy(C_{\pi_{j-1}}, \eta, \delta')$ from \Cref{claim:find_heavy}, where $\delta' = \frac{\delta}{2D}$ and $D = \poly(t, \frac{1}{\eta}, \log(\frac{1}{\delta}))$ is the depth bound from \Cref{lem:regularity lemma}. Note that this requires constructing the circuit $C_{\pi_{j-1}}$ which can be done in time $\poly(|C|, |\pi_{j-1}|)$. 
    
    If $\FindHeavy$ returns $\perp$, the algorithm terminates; otherwise, it queries the returned index and continues.

    By a union bound over all $D$ steps, all calls to $\FindHeavy$ succeed simultaneously with probability at least $1 - D \cdot \delta' \geq 1 - \frac{\delta}{2}$. 

    It remains to show that the algorithm terminates within $D$ steps with high probability. By \Cref{remark:robust-regularity}, the proof of \Cref{lem:regularity lemma} extends to the setting where, at each step, we select a coordinate with expected query weight $\geq \frac{\eta}{2}$ rather than $> \eta$. 

    Combining both failure events, $C_\pi$ is $\eta$-regular with probability at least $1 - \delta$. Since $|\pi_j|< D$ for each $j < D$, each of the $D$ iterations calls $\FindHeavy$ in time $\poly(|C|, D, \frac{1}{\eta}, \log(\frac{1}{\delta}))$, so the total runtime is $\poly(|C|,  \frac{1}{\eta}, \log(\frac{1}{\delta}))$.
\end{proof}

The exact same proof as that of~\Cref{cor:algorithmic_regularity_lemma} gives the following.
\begin{claim}
\label{claim:QNC-regularity}
    If $\PromiseQNC \subseteq \PromiseQuasi$, there is a classical randomized algorithm that takes as input a quantum oracle circuit $C$ and access to an oracle $\mcO \in \zo^N$, runs in time $\poly(n^{\polylog(n)}, 1/\eta, \log(1/\delta))$, and outputs a restriction $\pi$ consistent with $\mcO$, satisfying, with probability at least $1-\delta$ over a uniform $\bmcO$ and the randomness of the algorithm, that $C_{\pi}$ is $\eta$-regular.
\end{claim}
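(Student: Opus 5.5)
The plan is to rerun the proof of \Cref{cor:algorithmic_regularity_lemma} verbatim, replacing the single use of $\PromiseBPP=\PromiseBQP$ (namely \Cref{claim:classical_estimation}) by a quasipolynomial-time analogue for $\QNC$ circuits. The first step is to establish this analogue. Suppose $\PromiseQNC\subseteq\PromiseQuasi$. Then there is a classical randomized algorithm that, given a polylogarithmic-depth quantum circuit $C$ (without oracle gates), an accuracy $\eps$ and a confidence $\delta$, estimates $\Pr[C\text{ accepts}]$ to within $\pm\eps$ with probability $1-\delta$. Its running time is $\poly(|C|^{\polylog|C|},1/\eps,\log(1/\delta))$. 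The argument is the one used for \Cref{claim:classical_estimation}. The promise problem ``given a polylog-depth circuit, decide whether it accepts with probability $\ge 2/3$ or $\le 1/3$'' lies in $\PromiseQNC$, because a uniform family of polylog-depth circuits can evaluate the input circuit gate by gate. Amplification to thresholds $\tau$ versus $\tau+\eps$ runs $O(1/\eps^2)$ copies in parallel and computes a threshold on their outputs. This keeps the depth polylogarithmic, since a majority or threshold of polynomially many bits can be computed in $O(\log)$ depth. Binary search then recovers the estimate.

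The second step is to check that every circuit we feed to this simulator has polylogarithmic depth, given that the original oracle circuit $C$ does. Three transformations need checking.
\begin{itemize}
\item In \Cref{claim:approx_expectation}, each oracle gate is replaced by an evaluation of a $2t$-wise independent hash $h_s$. The standard polynomial-over-$\mathbb{F}_{2^m}$ construction of \Cref{cor:twise_hash} evaluates a degree-$(2t)$ polynomial. This can be done in polylog depth via parallel powering and iterated multiplication/addition in $\mathbb{F}_{2^m}$, or we could choose a construction explicitly in $\mathsf{NC}$.
\item The circuits $\tilde{C}^{(k,S)}$ of \Cref{claim:approx_set_sum} are truncations of $C$ followed by a comparison of the query register against an interval $[a,b]$. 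Both operations preserve polylog depth.
\item The restricted circuits $C_\pi$ check each query against the list $\pi$ of length at most $D$, which takes $O(\log D)$ extra depth per query.
\end{itemize}
With these checks, \Cref{claim:approx_expectation}, \Cref{claim:approx_set_sum}, \Cref{prop:heavy_hitters} and \Cref{claim:find_heavy} go through unchanged, except that $|C|$ is replaced by $n^{\polylog n}$ in the running times.

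The final step repeats the proof of \Cref{cor:algorithmic_regularity_lemma}. We run $\mathsf{Simulate}$ with $\FindHeavy$ at failure probability $\delta/(2D)$ and take a union bound over the $D$ steps. \Cref{remark:robust-regularity} guarantees termination within $D=\poly(t,1/\eta,\log(1/\delta))$ steps, except with probability $\delta/2$. The total running time is $D\cdot\poly(n^{\polylog n},1/\eta,\log(1/\delta))$, matching the claim.

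The main obstacle is the depth bookkeeping in the second step, and in particular the hash-family evaluation. If the polynomial-evaluation construction is awkward to place in polylog depth, I would fall back on a $2t$-wise independent family with an evident $\mathsf{NC}^1$ implementation, such as one based on inner products with small-bias generators. Alternatively, I would note that only classical reversible computation of $h_s$ is needed, so any $\mathsf{NC}$ circuit for it suffices.
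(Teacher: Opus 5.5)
Your proposal is correct and follows the paper's route: the paper's proof is only the remark that the proof of \Cref{cor:algorithmic_regularity_lemma} carries over verbatim, and you supply the depth bookkeeping it leaves implicit, namely a $\QNC$ analogue of \Cref{claim:classical_estimation} together with depth preservation under hashing, truncation, and restriction. Two points need tightening. First, the universal simulator must process the input circuit layer by layer, routing qubits with classically controlled sorting or permutation networks; gate by gate would cost depth $|C|$. Second, every circuit handed to the quasipolynomial simulator has size $L=\poly(n,D,1/\eta)$, so its cost is $L^{\polylog(L)}$ rather than $\poly(|C|^{\polylog|C|},1/\eps)$. This matches the stated bound only when $1/\eta$ and $\log(1/\delta)$ are at most quasipolynomial in $n$, which is the regime used for \Cref{thm:QNC equivalence} and a restriction the paper's own statement implicitly shares.
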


\subsection{Putting things together: Proofs of~\Cref{thm:random oracle unrelativized equivalence,thm:QNC equivalence,thm:implication-weak-conjecture}}
\label{subsec:oracle-implications}
We first show that if~\Cref{conj:our conjecture strong} (the strong version of~\Cref{conj:our conjecture} where the dependence on $\delta$ is only logarithmic) holds, then $\PromiseBPP = \PromiseBQP$ iff they are equal relative to a random oracle. %

\begin{theorem}[\Cref{thm:random oracle unrelativized equivalence} restated] %
Assuming \Cref{conj:our conjecture strong}, $\PromiseBPP^{\bmcO} \neq \PromiseBQP^{\bmcO}$ for a random oracle $\bmcO$ with probability $1$ if and only if $\PromiseBPP \neq \PromiseBQP$.
\end{theorem}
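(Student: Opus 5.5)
The reverse direction is the standard one: if $\PromiseBPP = \PromiseBQP$ fails unrelativized, then a $\PromiseBQP$ problem outside $\PromiseBPP$ stays outside relative to a random oracle, by the Bennett--Gill style argument~\cite{BG81,FR99}. I would only cite it. The substance is the forward direction, which I prove in the contrapositive. Assume $\PromiseBPP = \PromiseBQP$ and \Cref{conj:our conjecture strong}. Fix a $\PromiseBQP^{\bmcO}$ machine, i.e.\ a uniform family of quantum oracle circuits $C_x$ of size $\poly(n)$ making $t=\poly(n)$ queries. I will build a $\PromiseBPP^{\bmcO}$ simulation that errs with probability at most $1/n^{c}$ over $\bmcO$ on each input $x$, and then finish with a Borel--Cantelli argument.

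The simulation on input $x$ has three steps. First, run the algorithmic regularity lemma (\Cref{cor:algorithmic_regularity_lemma}) on $C_x$ with $\eta = \poly(\eps/(t\log(1/\delta)))$. Here $\eta$ is the lower bound from the strong form of \Cref{claim:eps-dependent-conjecture}, and I take $\eps = 1/10$ and $\delta = 1/n^{c+2}$. This runs in time $\poly(n, 1/\eta, \log(1/\delta)) = \poly(n)$, since $\eta$ is only polylogarithmically small in $1/\delta$; this is exactly where the strong conjecture is needed. It returns a restriction $\pi$ consistent with $\bmcO$ such that, with probability $1-\delta$, $C_{x,\pi}$ is $\eta$-regular.

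Second, for regular $C_{x,\pi}$, the contrapositive of \Cref{claim:eps-dependent-conjecture} under \Cref{conj:our conjecture strong} gives
\[ \Pr_{\bmcO'}\bigl[\,|f_\pi(\bmcO') - \mu(C_{x,\pi})| \ge \eps\,\bigr] \le \delta , \]
where $\bmcO'$ ranges over oracles consistent with $\pi$. Because the true $\bmcO$ conditioned on the random path $\pi$ is uniform among consistent oracles, the acceptance probability $C_x^{\bmcO}$ is within $\eps$ of $\mu(C_{x,\pi})$ except with probability $O(\delta)$ overall.

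Third, estimate $\mu(C_{x,\pi})$ to accuracy $\eps$ with \ApproxExpectation\ (\Cref{claim:approx_expectation}), which is available because $\PromiseBPP=\PromiseBQP$, and accept iff the estimate is at least $1/2$. On promise inputs the true acceptance probability is $\ge 2/3$ or $\le 1/3$, so with margin $2\eps < 1/6$ the answer is correct except with probability $O(\delta)$ over $\bmcO$ and the internal randomness.

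One subtlety is that correctness holds only with high probability over $\bmcO$, not for every oracle. To get a $\PromiseBPP^{\bmcO}$ algorithm with probability $1$, I take $\delta = 1/n^{c+2}$ and argue as in~\cite{BG81}. The expected fraction of inputs of length $n$ on which the simulation fails is at most $n^{-c-2}$. By Markov, the probability over $\bmcO$ that more than a $n^{-2}$ fraction fail is summable, so Borel--Cantelli gives that almost surely the machine is correct on all but a vanishing fraction of inputs. This already yields $\PromiseHeurBPP$-type conclusions. To upgrade to a worst-case guarantee I would use the standard trick of applying the simulation to the oracle shifted by a hash of the input, a random-self-reduction that reuses fresh oracle regions, so that the per-input failure probability is $2^{-\Omega(n)}$ uniformly. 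Then a union bound over the $2^n$ inputs of length $n$, with $\delta = 2^{-2n}$, is affordable, because the running time depends only on $\log(1/\delta) = O(n)$. So I would in fact pick $\delta = 2^{-2n}$ from the start. Then the per-input error is at most $2^{-2n}$ for each fixed $x$, a union bound over $x \in \zo^n$ gives failure probability $2^{-n}$ at length $n$, and summing over $n$ with Borel--Cantelli shows the simulation is correct on every promise input for almost every $\bmcO$. Finally, countably many machines preserve probability $1$.

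I expect the main obstacle to be this step: making the ``with probability $1-\delta$ over $\bmcO$'' guarantee survive the quantifier swap to ``for almost all $\bmcO$, for all inputs.'' The $\log(1/\delta)$-only dependence supplied by the strong conjecture is what makes $\delta = 2^{-2n}$ affordable. I also need to check that the internal randomness of \FindHeavy\ can be amplified to $2^{-2n}$ failure at polynomial cost; it can, via the $\log(1/\delta)$ dependence in \Cref{claim:find_heavy}.
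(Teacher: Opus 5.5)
Your proposal is correct and follows essentially the paper's proof. The ingredients are the same: the algorithmic regularity lemma, the strong conjecture via the $\eps$-dependent claim, and $\ApproxExpectation$. Both arguments take a per-input failure probability exponentially small in $n$ (you use $2^{-2n}$, the paper $5^{-n}$), which is affordable because of the $\log(1/\delta)$ dependence, and then union bound over inputs. The only real difference is the last step. You close with Borel--Cantelli plus hardcoding the finitely many failing inputs; Borel--Cantelli alone gives correctness on all but finitely many lengths, not on every input as you wrote. The paper instead bounds the total failure probability by $2/3$ and invokes Kolmogorov's zero--one law.

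Two small fixes. With $\eps = 1/10$ your margin is $2\eps = 1/5$, which exceeds $1/6$, so take $\eps$ smaller; the paper uses $0.01$, which also absorbs the approximation of the machine by an oracle circuit. The hash-shift ``random self-reduction'' detour should be dropped, since choosing $\delta = 2^{-2n}$ directly already does the job.
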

\begin{proof}
    The easy direction is essentially the same argument as in~\cite{FR99} which showed that if $\BPP \neq \BQP$ then they are also unequal relative to a random oracle with probability $1$. We use the promise version of this statement. Briefly, if $\PromiseBPP \neq \PromiseBQP$, there is some promise problem $L \in \PromiseBQP$ but not $\PromiseBPP$. This problem is clearly in $\PromiseBQP^{\mcO}$ for any oracle $\mcO$. Furthermore, an argument of \cite{BG81} (originally for languages but easily extending to promise problems) shows that if $L \notin \PromiseBPP$, then it is also not in $\PromiseBPP^{\bmcO}$ with probability $1$ over the randomness of $\bmcO$.

    The harder direction is to show that if $\PromiseBPP = \PromiseBQP$ then $\PromiseBPP^{\bmcO} = \PromiseBQP^{\bmcO}$ with probability one over a random oracle $\bmcO$. We use \Cref{cor:algorithmic_regularity_lemma} and \Cref{conj:our conjecture strong} to show this holds. Let $Q$ be any efficient quantum oracle Turing machine. We will construct an efficient classical oracle Turing machine $A$ such that, with probability strictly greater than $0$ over $\bmcO$, $A^{\bmcO}$ solves any promise problem that $Q^{\bmcO}$ solves, meaning for all inputs $x$,
    \begin{equation}
    \label{eq:promise-equiv}
    \begin{split}
        \Pr_{Q}[Q^{\bmcO}(x) = 1] \geq 2/3 \implies \Pr_A[A^{\bmcO}(x) = 1] \geq 2/3,\\
        \Pr_Q[Q^{\bmcO}(x) =1] \leq 1/3 \implies \Pr_A[A^{\bmcO}(x) = 1]\leq 1/3.
    \end{split}
    \end{equation}
     Kolmogorov's zero–one law then implies $\PromiseBPP^{\bmcO} = \PromiseBQP^{\bmcO}$ with probability $1$ over the randomness of $\bmcO$.

     It suffices to show that, for any fixed $x$, the probability that \Cref{eq:promise-equiv} holds is at least $1 -  5^{-n}$. Then, a union bound over all possible strings $x$ gives a failure probability of at most $\sum_{n \in \N} 2^{n} \cdot 5^{-n} \leq 2/3$ which is strictly less than $1$. 

    For any $x$, there is an efficient classical algorithm that constructs a $\poly(n)$-sized oracle circuit $C$ such that the acceptance probability of $C^{\mcO}$ is within a constant ($1/100$ will suffice) of $Q^{\mcO}(x)$ for all oracles $\mcO$ \cite{Yao93,AW19}. Then, we use \Cref{cor:algorithmic_regularity_lemma} to efficiently find some restriction $\pi$ such that, with probability at least $1 - 1/3 \cdot 5^{-n}$ over a random oracle $\bmcO$, the circuit $C_{\pi}$ is $\eta = 1/\poly(n)$ regular. By \Cref{conj:our conjecture strong} and \Cref{claim:eps-dependence} this suffices to ensure that, for $f(\mcO) = \Pr[C^{\mcO}\text{ accepts}]$,
    \begin{equation*}
        \Prx_{\bmcO}[\abs*{f(\bmcO) - \mu(C_{\pi})} \geq 0.01\mid \text{$\bmcO$ consistent with $\pi$}] \leq \lfrac{5^{-n}}{3} %
    \end{equation*}
    Finally, we use \Cref{claim:approx_expectation} with accuracy $0.01$ and failure probability at most $1/3 \cdot 5^{-n}$ to estimate $\mu(C_{\pi})$ and have the classical algorithm output $\Ind[\mu(C_{\pi}) \geq 1/2]$. A union bound over the three failure probabilities gives that \Cref{eq:promise-equiv} holds with probability at least $1-5^{-n}$ over a random oracle $\bmcO$, as desired.
\end{proof}

The weaker form of our conjecture has a similar, but slightly weaker implication.
\begin{theorem}
    \label{thm:implication-weak-conjecture}
    Assuming \Cref{conj:our conjecture}, if $\PromiseBPP = \PromiseBQP$ then $\PromiseBQP^{\bmcO} \subseteq \PromiseHeurBPP^{\bmcO}$ with probability $1$ over a random oracle $\bmcO$.
\end{theorem}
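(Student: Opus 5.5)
We follow the proof of \Cref{thm:random oracle unrelativized equivalence}, but with the weaker \Cref{conj:our conjecture}. Its $\poly(\eps\delta/t)$ lower bound (via \Cref{claim:eps-dependent-conjecture}) only allows failure probability $\delta = 1/\poly(n)$ per input length while keeping the regularity parameter $\eta = 1/\poly(n)$. The simulation therefore works only heuristically. That is fine, since $\PromiseHeurBPP$ requires, for each constant $c$, an algorithm erring on at most a $n^{-c}$ fraction of inputs.

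\textbf{Construction of the classical algorithm.} Fix a quantum oracle machine $Q$ running in time $\poly(n)$ and a constant $c$; set $\delta_n := n^{-c-2}$. On input $x$:
\begin{itemize}
\item[(i)] Build the $\poly(n)$-size oracle circuit $C$ approximating $Q^{\mcO}(x)$ to within $1/100$ \cite{Yao93,AW19}.
\item[(ii)] Run \Cref{cor:algorithmic_regularity_lemma} with regularity $\eta = \poly(\delta_n/|C|)$, the bound from \Cref{claim:eps-dependent-conjecture} with $\eps = 0.01$. This runs in time $\poly(n)$ and outputs $\pi$ with $C_\pi$ $\eta$-regular, except with probability $\delta_n$.
\item[(iii)] By the conjecture, $\Pr_{\bmcO}[|f_\pi(\bmcO)-\mu(C_\pi)|\ge 0.01 \mid \bmcO \text{ consistent with } \pi]\le \delta_n$, where $f_\pi$ is the acceptance probability of $C_\pi$.
\item[(iv)] Estimate $\mu(C_\pi)$ with \Cref{claim:approx_expectation} to accuracy $0.01$ and failure probability $\delta_n$, and output $\Ind[\hat\mu \ge 1/2]$.
\end{itemize}
For each fixed $x$, a union bound shows $A^{\bmcO}$ agrees with $Q^{\bmcO}$ on $x$ (in the sense of \Cref{eq:promise-equiv}) except with probability $O(\delta_n)$ over $\bmcO$ and internal randomness. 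The internal randomness can be amplified to error $<1/3$, since only the $\Ind[\cdot]$ output matters.

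\textbf{From per-input to heuristic guarantees.} Let $\bE_x$ be the event (over $\bmcO$) that $A^{\bmcO}$ is incorrect on $x$, i.e.\ its acceptance probability violates \Cref{eq:promise-equiv}. By Markov's inequality applied to the randomness of the algorithm, $\Pr_{\bmcO}[\bE_x] = O(\delta_n)$. Let $\bF_n$ be the fraction of length-$n$ inputs with $\bE_x$. Then $\E_{\bmcO}[\bF_n] = O(n^{-c-2})$, so $\Pr_{\bmcO}[\bF_n > n^{-c}] = O(n^{-2})$. This is summable, so by Borel--Cantelli, with probability $1$ only finitely many $n$ have $\bF_n > n^{-c}$. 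Hardcoding those finitely many lengths gives an algorithm (depending on $\bmcO$) meeting the heuristic requirement at every $n$. Then take a countable union over machines $Q$ and constants $c$. For promise problems, only the $x$ satisfying the promise matter, and the above bounds the error on those.

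\textbf{Main obstacle.} The difficulty is the $\bmcO$-dependent hardcoding and the quantifier order: $\PromiseHeurBPP^{\bmcO}$ needs a single machine per $c$, and the exceptional lengths depend on $\bmcO$. I would handle this by noting that finitely many lengths can be handled by table lookup, so for each $(Q,c)$ the set of good oracles has measure $1$; intersecting countably many measure-one events keeps measure $1$. A subtler point is that the efficient algorithm must never depend on $N$. This holds because \Cref{claim:find_heavy} and \Cref{claim:approx_expectation} run in time $\poly(|C|)$, with only the $\log N = \poly(n)$ dependence.
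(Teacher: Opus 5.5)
Your proposal is correct and matches the paper's proof, which reruns the argument for \Cref{thm:random oracle unrelativized equivalence} with a per-input failure probability of $n^{-c}$ instead of $5^{-n}$. That argument builds the circuit $C$, applies the algorithmic regularity lemma, invokes \Cref{conj:our conjecture} through \Cref{claim:eps-dependence}, and then estimates $\mu(C_\pi)$. The one difference is the final measure-theoretic step, which the paper leaves implicit: its template takes a union bound to get positive probability and then applies Kolmogorov's zero--one law. Your Markov plus Borel--Cantelli argument, with the finitely many bad lengths hardcoded, instead gives probability $1$ for each pair $(Q,c)$ directly, so it never needs the per-$Q$ event to be a tail event.
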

Note that if $\PromiseBQP^{\bmcO} \subseteq \PromiseHeurBPP^{\bmcO}$ then $\BQP^{\bmcO} \subseteq \HeurBPP^{\bmcO}$. %
The proof of~\Cref{thm:implication-weak-conjecture} is identical to that of \Cref{thm:random oracle unrelativized equivalence} except we will only aim for a failure probability of $n^{-c}$ rather than $5^{-n}$ and can therefore use \Cref{conj:our conjecture} in place of \Cref{conj:our conjecture strong}.

Likewise, the proof of~\Cref{thm:QNC equivalence} is again identical to that of \Cref{thm:random oracle unrelativized equivalence} except we use \Cref{claim:QNC-regularity} in place of \Cref{cor:algorithmic_regularity_lemma} and \Cref{thm:rsquared with eps} in place of \Cref{conj:our conjecture strong}.

\section{Final comments}

After the submission of this manuscript, the authors obtained a few straightforward improvements of~\Cref{thm:d squared intro}. First,  the simulation can be made round preserving, where the classical algorithm is itself also a $d$-round parallel algorithm. Second, a tighter analysis results in an improved classical query complexity of $t^{O(d)}$. These proofs will appear in an accompanying  note~\cite{BDST26}.

\section{AI disclosure}
ChatGPT 5.4 was used for literature search, conversions of  handwritten notes into electronic ones, proof checking, and copyediting.

\section{Acknowledgments}

We thank the FOCS reviewers for helpful feedback and suggestions.

Guy, Carmen, and Li-Yang are supported by NSF awards 1942123, 2211237, 2224246, a Sloan Research
Fellowship, and a Google Research Scholar Award. Guy is also supported by a Jane Street Graduate
Research Fellowship and Omer Reingold’s Simons Foundation investigators award. Carmen is also supported by a Stanford
Computer Science Distinguished Fellowship, an NSF GRFP, Omer Reingold’s Simons Foundation investigators award, and a Jump Trading PhD Fellowship. 
This material is based upon work partially supported by the National Science Foundation under award No.~2016245 and 2440805 and by the Air Force Office of Scientific Research under grant agreement FA9550-21-1-0392. Jordan Docter is also supported in part by the Shoucheng Zhang Graduate Fellowship.
Any opinions, findings, and 
conclusions or recommendations expressed in this material are those of the authors and do not reflect 
the views of Jump Trading.

\bibliography{ref}
\bibliographystyle{alpha}

\appendix

\section{Aaronson--Ambainis implies~\Cref{conj:our conjecture}}
\label{sec:AA implies us}

\begin{claim}
    \label{claim:AA-to-us}
    If the Aaronson--Ambainis conjecture is true, then \Cref{conj:our conjecture} is also true.
\end{claim}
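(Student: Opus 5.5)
The plan is to combine the Aaronson--Ambainis conjecture with a bound showing that influences are controlled by query weights; this is the ``suitable normalization'' referred to as \Cref{prop:query-weight-to-inf}. Concretely, let $\mcA$ be a $t$-query algorithm with acceptance probability $f$, a polynomial of degree at most $2t$ with values in $[0,1]$. Write $\delta=\min\{\Pr[\Acc_f],\Pr[\Rej_f]\}$. First I lower bound the variance. Since at least a $\delta$ fraction of inputs has $f\ge 2/3$ and at least a $\delta$ fraction has $f\le 1/3$, one of these sets lies at distance at least $1/6$ from $\E[f]$. Hence
\[
\Var(f)\ge \delta/36 .
\]
Applying the Aaronson--Ambainis conjecture to $f$ (with degree $2t$) then gives a coordinate $i$ with $\Inf_i(f)\ge \poly(\delta/t)$.

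The main step is to show $\Inf_i(f)\le O(\sqrt{\E_{\bx}[W_i(\bx)]})$, or some comparable polynomial relation. I will use the hybrid method for a single-coordinate change. For any $x$, compare the runs of $\mcA$ on $x$ and on $x^{\oplus i}$. The two oracles differ only on index $i$. Each query step therefore perturbs the state by at most $2\sqrt{w_i^{(k)}(x)}$ in $\ell_2$, by the same computation underlying \Cref{cor:quantum-model}\ref{item:distinguishibility_constraint_cor} in the case $\Delta(x,y)=\{i\}$. This gives
\[
|\sqrt{f(x)}-\sqrt{f(x^{\oplus i})}| \le 2\sum_k \sqrt{w_i^{(k)}(x)} .
\]
More directly, the final states satisfy $\|\Psi(x)-\Psi(x^{\oplus i})\|\le 2\sum_k\sqrt{w^{(k)}_i(x)}\le 2\sqrt{t\,W_i(x)}$ by Cauchy--Schwarz. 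Because $f$ is a measurement probability, this yields $|f(x)-f(x^{\oplus i})|\le 2\|\Psi(x)-\Psi(x^{\oplus i})\|\le 4\sqrt{t W_i(x)}$.

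Taking the influence to be $\Inf_i(f)=\E[(f(\bx)-f(\bx^{\oplus i}))^2]$ (any standard normalization differs by a constant), and using that $|f(x)-f(x^{\oplus i})|\le 1$, I get
\[
\Inf_i(f)\le \E\bigl[\min\{1,16\,t\,W_i(\bx)\}\bigr]\le 16\,t\,\E[W_i(\bx)] .
\]
Here squaring removes the square root, which is convenient. Combining this with the lower bound on $\Inf_i(f)$ gives $\E[W_i(\bx)]\ge \poly(\delta/t)/(16t)=\poly(\delta/t)$, which is \Cref{conj:our conjecture}.

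The main obstacle is getting the hybrid bound with the correct input. The weights $w^{(k)}_i$ are defined on a single input's trajectory, so I must check that the telescoping hybrid sum uses the weights of $x$ alone, which is the standard BBBV argument. Once that holds, the averaging over $\bx$ is immediate, because $\bx$ and $\bx^{\oplus i}$ are both uniform.
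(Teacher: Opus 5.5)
Your proof is correct and follows essentially the same route as the paper. Like the paper, you lower-bound $\Var(f)\ge\delta/36$, apply Aaronson--Ambainis to the degree-$2t$ polynomial $f$, and then show $\Inf_i(f)\le 16t\,\E[W_i(\bx)]$ via the hybrid method with $\Delta(x,x^{\oplus i})=\{i\}$ and Cauchy--Schwarz; this last step is exactly \Cref{prop:query-weight-to-inf} with $d=t$. The only cosmetic difference is that you pass from state distance to $|f(x)-f(x^{\oplus i})|$ directly, while the paper goes through $|\sqrt{f(x)}-\sqrt{f(x^{\oplus i})}|$ and the inequality $(p-q)^2\le 4(\sqrt{p}-\sqrt{q})^2$; both give the same constant.
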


We will use the following simple relation between query weights and influences.
\begin{proposition}
    \label{prop:query-weight-to-inf}
Let $\mathcal{A}$ be a $d$-round quantum query algorithm with query weights $W_1(x),\ldots,W_N(x)$ and let $f: \zo^N \to [0,1]$ denote its acceptance probabilities. Then for each $i\in [N]$, 
\[ \Ex_{\bx \sim \zo^N}[W_i(\bx)] \ge \Omega\paren*{\frac{\Inf_i(f)}{d}} 
\] 
where $\Inf_i(f) \coloneqq \Ex_{\bx\sim\zo^N}\big[(f(\bx)-f(\bx^{\oplus i}))^2\big]$ and $\bx^{\oplus i}$ denotes $\bx$ with its $i$-th bit flipped. 
\end{proposition}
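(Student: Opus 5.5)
The plan is to apply the distinguishability constraint, \Cref{item:distinguishibility_constraint_cor} of \Cref{cor:quantum-model}, pointwise to the neighbouring pair $x$ and $x^{\oplus i}$, and then average over $\bx$. Since $\Delta(x,x^{\oplus i})=\{i\}$, we have $\dist_{w^{(r)}(x)}(x,x^{\oplus i}) = w_i^{(r)}(x)$. The constraint therefore gives
\[ \big(\sqrt{f(x)}-\sqrt{f(x^{\oplus i})}\big)^2 \le 4\Big(\sum_{r\in[d]}\sqrt{w_i^{(r)}(x)}\Big)^2 \le 4d\sum_{r\in[d]} w_i^{(r)}(x) = 4d\, W_i(x), \]
where the second inequality is Cauchy--Schwarz.

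Next I convert the square-root gap into the actual gap. Write $|f(x)-f(x^{\oplus i})| = |\sqrt{f(x)}-\sqrt{f(x^{\oplus i})}|\cdot(\sqrt{f(x)}+\sqrt{f(x^{\oplus i})})$. Because $f\in[0,1]$, the second factor is at most $2$, so $(f(x)-f(x^{\oplus i}))^2 \le 4(\sqrt{f(x)}-\sqrt{f(x^{\oplus i})})^2 \le 16 d\,W_i(x)$.

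Finally, take expectation over uniform $\bx$, which gives $\Inf_i(f) \le 16 d\,\E[W_i(\bx)]$, i.e.\ $\E[W_i(\bx)]\ge \Inf_i(f)/(16d)$. There is one subtlety: the hybrid bound uses the weights at one endpoint, namely $x$, and that is acceptable since we average over $\bx$ anyway. Symmetrizing is not even needed. The sum of query weights over rounds is $W_i(x)$ exactly by definition, for any parallelism $t$.

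I expect no real obstacle. The only point to check is that \Cref{cor:quantum-model} indeed holds with the distance measured by $x$'s own weights, for an arbitrary pair $x\ne y$ regardless of their acceptance probabilities. As stated, it does.
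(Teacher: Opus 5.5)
Your proposal is correct and matches the paper's proof essentially step for step. The paper also applies the distinguishability constraint of \Cref{cor:quantum-model} to the pair $x$ and $x^{\oplus i}$, uses Cauchy--Schwarz over the $d$ rounds and the bound $(p-q)^2 \le 4(\sqrt{p}-\sqrt{q})^2$, and averages over $\bx$, arriving at the same bound $\Inf_i(f) \le 16d\cdot\E[W_i(\bx)]$.
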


\begin{proof}
    For any $x \in \zo^N$, \Cref{eq:final_condition_implies_total_progress_2} gives that
    \begin{align*}
        \left|\sqrt{f(x)} - \sqrt{f(x^{\oplus i})} \right| &\leq 2 \sum_{r \in [d]} \sqrt{w_i^{(r)}(x)} \\
        &\leq 2 \sqrt{d}\sqrt{\sum_{r \in [d]} w_i^{(r)}(x)} \tag{Cauchy-Schwarz}\\
        &= 2 \sqrt{d}\sqrt{W_i(x)}.
    \end{align*}
    Next, observe that for any $p,q \in [0,1]$,
    \begin{equation*}
        (p-q)^2 =(\sqrt{p} + \sqrt{q})^2 (\sqrt{p} - \sqrt{q})^2  \leq 4(\sqrt{p} - \sqrt{q})^2.
    \end{equation*}
    Combining this with the prior equation completes the proof: 
    \begin{equation*}
        \Inf_i(f) \leq 4 \Ex\bracket*{\paren*{\sqrt{f(\bx)} - \sqrt{f(\bx^{\oplus i})}}^2} \leq 16d\cdot  \Ex\big[W_i(\bx)\big]. \qedhere
    \end{equation*}
\end{proof}

\begin{proof}[Proof of \Cref{claim:AA-to-us}]
    Let $\mcA$ be a $t$-query quantum algorithm with acceptance probabilities $f:\zo^N \to [0,1]$ and define
    \begin{equation*}
        \delta \coloneqq \min\big\{{\Pr[f(\bx) \ge \lfrac{2}{3}]}, \Pr[f(\bx) \le \lfrac{1}{3}]\big\}.
    \end{equation*}
    Since $f$ has degree at most $2t$~\cite{BBCMdW01}, the  Aaronson--Ambainis conjecture gives an $i \in [N]$ for which
    \begin{equation*}
        \Inf_i(f)
            \ge \poly\left(\frac{\Var(f)}{t}\right).
    \end{equation*}
    Next we relate $\Var(f)$ to $\delta$. 
    Writing $\Var(f) = \Ex[(f(\bx) - \mu)^2]$ where $\mu \coloneqq \Ex[f]$, we see that if $\mu \leq 1/2$ then $\Var(f) \geq \Pr[f(\bx) \geq 2/3] \cdot (1/6)^2$, and if $\mu \geq 1/2$ then $\Var(f) \geq \Pr[f(\bx) \leq 1/3] \cdot (1/6)^2$. In either case, $\Var(f) \geq \Omega(\delta)$. Since every $t$-query algorithm is trivially also a $t$-round algorithm: 
    \begin{align*}
\Ex_{\bx\sim\zo^N}[W_i(\bx)] &\ge \Omega\left(\frac{\Inf_i(f)}{t}\right) \tag{\Cref{prop:query-weight-to-inf} with $d = t$} \\
&\ge  \poly\left(\frac{\Var(f)}{t}\right)\tag{Aaronson--Ambainis} \\ 
&\ge \poly\left(\frac{\delta}{t}\right). \tag{$\Var(f)\ge \Omega(\delta)$}
    \end{align*}
This completes the proof. 
\end{proof}

\section{Hybrid method}
With respect to the $t$-parallel index and set query weights, a modified hybrid method bounds the distance of the final states of a quantum algorithm on differing oracle inputs. For a set $B \sse [N]$ we define $\Pi_{\cap B}$ as the projector given by the sum of the orthogonal projectors $\Pi_S^{(t)}$ for any $S\subseteq [N]$ which have nonempty intersection with $B$, such that
\begin{align*}
    \Pi_{\cap \Delta(x,y)}^{(t)} \coloneqq \sum_{S \cap \Delta(x,y) \ne \emptyset} \Pi_S^{(t)}.
\end{align*}
We write $\Pi_{\cap \Delta(x,y)}$ without the superscript $t$ when it is clear from context.

\begin{proposition}[Hybrid Method (parallel query variation) \cite{BBBV97}\label{prop:hybrid_method}]
    Consider a $t$-parallel $d$-round query algorithm with intermediate pre-query states $\psix{r}$ for each $r \in [d]$ and final state $\ket{\Psi(x)}$. For any two oracle inputs $x, y \in \zo^N$, the following properties hold:
    \begin{enumerate}[label=(\roman*)]
        \item Initial condition: $\pnorm{2}{\psix{1} - \psiy{1}} = 0$.
        \item Progress evolution, for any $r \in \set{2,\ldots, d}$: 
        \begin{align*}
            \pnorm{2}{\psix{r} - \psiy{r}} \leq \pnorm{2}{\psix{r-1} - \psiy{r-1}} + 2 \pnorm{2}{\Pi_{\cap \Delta(x,y)} \psix{r-1}},\\
            \shortintertext{\centering and}
            \pnorm{2}{\psix{r} - \psiy{r}} \leq \pnorm{2}{\psix{r-1} - \psiy{r-1}} + 2 \pnorm{2}{\Pi_{\cap \Delta(x,y)} \psiy{r-1}}.
        \end{align*}
        \item Distinguishability constraint: If a measurement $M_0$ distinguishes the final states such that $f(x) = \pnorm{2}{M_0 \ket{\Psi(x)}}^2$ and $f(y) = \pnorm{2}{M_0 \ket{\Psi(y)}}^2$, then 
        \begin{equation*}
            \pnorm{2}{\ket{\Psi(x)} - \ket{\Psi(y)}}^2 \geq (\sqrt{f(x)} - \sqrt{f(y)})^2 + (\sqrt{1-f(x)} - \sqrt{1-f(y)})^2.
        \end{equation*}
    \end{enumerate}
\end{proposition}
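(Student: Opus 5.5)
We prove the three items of \Cref{prop:hybrid_method} by the standard BBBV telescoping argument, adapted to the $t$-parallel oracle $\mcO_x^{\otimes t}$.

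Item (i) is immediate: $\psix{1} = U_1\ket{0}$ does not depend on the oracle, so $\psix{1}=\psiy{1}$.

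Item (ii) is the main step. Write $\psix{r} = U_r \mcO_x^{\otimes t}\psix{r-1}$ and similarly for $y$. Since $U_r$ is unitary,
\[
\pnorm{2}{\psix{r}-\psiy{r}} = \pnorm{2}{\mcO_x^{\otimes t}\psix{r-1} - \mcO_y^{\otimes t}\psiy{r-1}}.
\]
Add and subtract $\mcO_y^{\otimes t}\psix{r-1}$. Since $\mcO_y^{\otimes t}$ is unitary, the triangle inequality bounds this by
\[
\pnorm{2}{\psix{r-1}-\psiy{r-1}} + \pnorm{2}{(\mcO_x^{\otimes t}-\mcO_y^{\otimes t})\psix{r-1}}.
\]
The key observation concerns the operator $\mcO_x^{\otimes t}-\mcO_y^{\otimes t}$. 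Both oracles are diagonal in the computational basis $\ket{i_1,b_1}\cdots\ket{i_t,b_t}$ of the $t$ query registers, with phases $(-1)^{\sum_j x_{i_j}b_j}$ and $(-1)^{\sum_j y_{i_j} b_j}$. These phases agree whenever $\{i_1,\ldots,i_t\}\cap\Delta(x,y)=\emptyset$. Such basis states are exactly the range of $\Id - \Pi_{\cap\Delta(x,y)}$, because $\Pi_{\cap\Delta(x,y)}$ is the sum of the $\Pi_S$ with $S$ equal to the set of queried indices and $S\cap\Delta(x,y)\neq\emptyset$. Hence
\[
\mcO_x^{\otimes t}-\mcO_y^{\otimes t} = (\mcO_x^{\otimes t}-\mcO_y^{\otimes t})\,\Pi_{\cap\Delta(x,y)},
\]
where $\Pi_{\cap\Delta(x,y)}$ acts as the identity on the ancilla and bit registers and commutes with both oracles. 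The operator norm of the difference is at most $2$, which gives the first inequality. The second inequality follows symmetrically by adding and subtracting $\mcO_x^{\otimes t}\psiy{r-1}$ instead. The one point to check carefully is that $\Pi_{\cap\Delta(x,y)}$ is diagonal in the same basis as the oracles, so it commutes with them. This holds because both are functions of the index registers $i_1,\ldots,i_t$ alone.

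Item (iii) is the standard fidelity bound. Let $P=M_0^\dagger M_0$ and $Q=\Id-P$, and assume $M_0$ is a projector (or pass to its Naimark dilation). Expanding gives
\[
\pnorm{2}{\Psi(x)-\Psi(y)}^2 = 2 - 2\,\mathrm{Re}\langle\Psi(x)|\Psi(y)\rangle.
\]
Split the inner product as $\langle\Psi(x)|\Psi(y)\rangle = \langle\Psi(x)|P|\Psi(y)\rangle + \langle\Psi(x)|Q|\Psi(y)\rangle$. By Cauchy--Schwarz, each term is bounded in absolute value by the product of the norms of the projected states, so
\[
\mathrm{Re}\langle\Psi(x)|\Psi(y)\rangle \le \sqrt{f(x)f(y)}+\sqrt{(1-f(x))(1-f(y))}.
\]
Substituting and rearranging gives exactly $(\sqrt{f(x)}-\sqrt{f(y)})^2+(\sqrt{1-f(x)}-\sqrt{1-f(y)})^2$, which proves (iii).
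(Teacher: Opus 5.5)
Your proposal is correct and follows the paper's proof: (i) holds because the initial state does not depend on the oracle, (ii) uses the same add-and-subtract of $\mathcal{O}_y^{\otimes t}|\psi_{r-1}(x)\rangle$ together with the fact that $\mathcal{O}_x^{\otimes t}-\mathcal{O}_y^{\otimes t}$ is supported on the range of $\Pi_{\cap\Delta(x,y)}$, and (iii) expands the distance as $2-2\,\mathrm{Re}\langle\Psi(x)|\Psi(y)\rangle$. The only difference is cosmetic: in (iii) you bound the overlap directly by Cauchy--Schwarz on the two measurement branches, whereas the paper cites monotonicity of fidelity under the measurement channel. For a two-outcome measurement these are the same fact.
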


The argument follows the standard hybrid method closely \cite{BBBV97}, but we give it here for completeness.
\begin{proof}
    The initial condition follows by definition, as the starting state is independent of the oracle input. The proof of progress evolution follows from the triangle inequality and unitary invariance of the $\ell_2$-norm:
    \begin{align*}
        \pnorm{2}{\psirx - \psiry} &= \pnorm{2}{U_r (\mcO_x^{\otimes t} - \mcO_y^{\otimes t})\psix{r-1} + U_r\mcO_y^{\otimes t}\ltup{\psix{r-1}-\psiy{r-1}}} \\
        &\le \pnorm{2}{(\mcO_x^{\otimes t} - \mcO_y^{\otimes t})\psix{r-1}} + \pnorm{2}{\psix{r-1}-\psiy{r-1}} \\
        &\le 2\pnorm{2}{\Pi_{\cap \Delta(x,y)}\psix{r-1}} + \pnorm{2}{\psix{r-1}-\psiy{r-1}}.
    \end{align*}
    For the distinguishability constraint, we relate the Euclidean distance to the quantum fidelity, and subsequently to the classical Hellinger distance. By the monotonicity of fidelity under completely positive trace-preserving maps (such as the measurement $M_0$), we have:
    \begin{align*}
        \abs{\braket{\Psi(y)}{\Psi(x)}} \le \sqrt{f(x) f(y)} + \sqrt{(1-f(x))(1-f(y))}.
    \end{align*}
    Thus, the squared Euclidean distance is bounded by:
    \begin{align*}
        \pnorm{2}{\ket{\Psi(x)} - \ket{\Psi(y)}}^2 &= 2 - 2 \Re \braket{\Psi(y)}{\Psi(x)} \\
        &\ge 2 - 2\abs{\braket{\Psi(y)}{\Psi(x)}} \\
        &\ge 2 - 2\ltup{\sqrt{f(x) f(y)} + \sqrt{(1-f(x))(1-f(y))}} \\
        &= \ltup{\sqrt{f(x)} - \sqrt{f(y)}}^2 + \ltup{\sqrt{1-f(x)} - \sqrt{1-f(y)}}^2. \qedhere
    \end{align*}
\end{proof}

\section{Proof of \Cref{cor:quantum-model}}
\begin{proof}\label{proof:test_implies_distinguishing}
    Note that the weighted distance measures have a physical interpretation with respect to pre-query states at each round $r \in [d]$,
    \begin{align}
    \dist_{\mcD_x^{(r)}}(x,y) = \sum_{S \cap \Delta(x, y)\neq \emptyset} w_S^{(r)}(x) =  \pnorm{2}{\Pi^{(t)}_{\cap \Delta(x,y)} \psix{r}}^2, \label{eq:distance_vs_states_1}\\
     \shortintertext{\centering and}
    \dist_{\mcD_y^{(r)}}(x,y) = \sum_{S \cap \Delta(x, y)\neq \emptyset} w_S^{(r)}(y) = \pnorm{2}{\Pi^{(t)}_{\cap \Delta(x,y)} \psiy{r}}^2, \label{eq:distance_vs_states_2}
    \end{align}
    \Cref{item:inital_condition_cor} follows directly from the initial condition of \Cref{prop:hybrid_method}. We address \Cref{item:distinguishibility_constraint_cor} first.
     Applying the distinguishability constraint of \Cref{prop:hybrid_method},  
    \begin{align*}
        \pnorm{2}{\ket{\Psi(x)} - \ket{\Psi(y)}}^2 \ge \ltup{\sqrt{f(x)} - \sqrt{f(y)}}^2 + \ltup{\sqrt{1-f(x)} - \sqrt{1-f(y)}}^2  \ge \ltup{\sqrt{f(x)} - \sqrt{f(y)}}^2.
    \end{align*}
    By the initial condition and progress measure of \Cref{prop:hybrid_method}, 
    \begin{align*}
        \pnorm{2}{\ket{\Psi(x)} - \ket{\Psi(y)}} \le 2 \sum_{r \in [d]}\pnorm{2}{ \Pi_{\cap \Delta(x,y)} \psirx}.
    \end{align*}
    Combining these two inequalities and with \Cref{eq:distance_vs_states_1,eq:distance_vs_states_2}, we obtain
    \begin{equation}
        \ltup{2 \sum_{r \in [d]}\pnorm{2}{ \Pi^{(t)}_{\cap \Delta(x,y)} \psix{r}}}^2 \ge \ltup{\sqrt{f(x)} - \sqrt{f(y)}}^2. \label{eq:final_condition_implies_total_progress_1}
    \end{equation}
    The last inequality of \Cref{item:distinguishibility_constraint_cor} follows from \Cref{def:weighted_dist_subset} and the first follows from \Cref{eq:query_weights_vs_differentiating_set}.
    \Cref{item:progress_evolution_cor} follows by applying  \Cref{item:distinguishibility_constraint_cor} to the $r-1$-round quantum algorithm given by the first $r$ intermediate states whose final measurement is $\sum_{S \in \supp(\phi)} \Pi_S$ and 
    acceptance probability for each $x \in \zo^N$ given by,
    \begin{equation*}
        \pnorm{2}{\sum_{S \in \supp(\phi)} \Pi_S \psirx}^2 = \sum_{S \in \supp(\phi)} w_S^{(r)}(x) = \Ex_{\bS \sim \mcD^{(r)}_x}[\phi(\bS)].
    \end{equation*}
    \end{proof}

\section{Proof of \Cref{prop:heavy_hitters}}\label{appendix:proof_heavy_hitters}

The proof is a straightforward application of the branch-and-prune technique for finding heavy hitters~\cite{GL89, KM93, ODBook}, which we include for completeness.

\begin{proof}
    Set $L = \lceil \log N \rceil$, $\epsilon = \frac{\tau}{4}$, and $\gamma = \violet{O}(\frac{\delta \tau}{M L})$. The algorithm proceeds as follows.
    \begin{enumerate}
        \item Set the active list $\mcL_0 = \{[N]\}$.
        \item For each level $\ell = 0, 1, \ldots, L - 1$:
        \begin{enumerate}
            \item For each $S \in \mcL_\ell$, partition $S$ into two halves $S_1, S_2$ of size $\lfloor \frac{|S|}{2} \rfloor$ and $\lceil \frac{|S|}{2} \rceil$.
            \item For each half $S_b,~b \in \{1,2\}$, compute $\hat{W}(S_b) \gets \textsc{Est}(S_b, \epsilon, \gamma)$.
            \item Set $\mcL_{\ell+1} = \{ S_b : \hat{W}(S_b) > \frac{\tau}{2} \}$.
        \end{enumerate}
        \item In the last layer, only singletons will remain. For each such singleton $\{i\} \in \mcL_L$, compute $\hat{W}(\{i\}) \gets \textsc{Est}(\{i\}, \frac{\tau}{4}, \gamma)$. If $\hat{W}(\{i\}) \geq \frac{3\tau}{4}$, return $i$.
        \item If no singleton has high enough weight, return $\perp$.
    \end{enumerate}

    \emph{Correctness.} We condition on the event $\mcE$ that all calls to $\textsc{Est}$ return estimates within their prescribed precision. We first bound the branching. Any set $S$ that enters $\mcL_{\ell+1}$ satisfies $\hat{W}(S) > \frac{\tau}{2}$, so $W(S) > \frac{\tau}{2} - \epsilon > 0$. At each level, the active sets are disjoint subsets of $[N]$, so $|\mcL_\ell| \leq \frac{M}{\frac{\tau}{2} - \epsilon} = O\bigl(\frac{M}{\tau}\bigr)$. Over $L$ levels, the total number of calls to $\textsc{Est}$ with parameters $\epsilon = \frac{\tau}{4}$, and $\gamma = \frac{\delta \tau}{2 M L}$ is $O\bigl(\frac{M}{\tau} \log N\bigr)$, giving the run time.

    \Cref{item:heavy_hitters_1}: suppose $W_{i^\star} \geq \tau$ for some $i^\star \in [N]$. At each level, the half containing $i^\star$ has true weight $\geq W_{i^\star} \geq \tau$, so its estimate is $\geq \tau - \epsilon > \frac{\tau}{2}$. Thus $i^\star$ is tracked to a leaf. At verification, $\hat{W}(\{i^\star\}) \geq W_{i^\star} - \frac{\tau}{4} \geq \frac{3\tau}{4}$, so $i^\star$ (or some other heavy index reached first) is returned.

    \Cref{item:heavy_hitters_2}: if no index has $W_i \geq \frac{\tau}{2}$, then at any leaf $\{i\}$, $\hat{W}(\{i\}) \leq W_i + \frac{\tau}{4} < \frac{3\tau}{4}$, so no leaf passes verification and the algorithm returns $\perp$.

    \emph{Failure probability.} The total number of calls to $\textsc{Est}$ is at most $O\bigl(\frac{M}{\tau} L\bigr)$. By a union bound, $\Pr[\overline{\mcE}] \leq O\bigl(\frac{M}{\tau} L\bigr) \cdot \gamma =\delta$, as required.
\end{proof}

\end{document}